\definecolor{codegreen}{rgb}{0,0.6,0}
\definecolor{codegray}{rgb}{0.5,0.5,0.5}
\definecolor{codepurple}{rgb}{0.58,0,0.82}
\definecolor{backcolour}{rgb}{0.95,0.95,0.92}
\lstdefinestyle{mystyle}{
    commentstyle=\color{codegreen},
    keywordstyle=\color{magenta},
    numberstyle=\tiny\color{codegray},
    stringstyle=\color{codepurple},
    basicstyle=\ttfamily\footnotesize,
    breakatwhitespace=false,         
    breaklines=true,                 
    captionpos=b,                    
    keepspaces=true,                 
    numbers=left,                    
    numbersep=5pt,                  
    showspaces=false,                
    showstringspaces=false,
    showtabs=false,                  
    tabsize=2
}
\newcounter{todos}
\newcommand{\mc}[1]{\ensuremath{\mathcal{#1}}}
\newcommand{\mb}[1]{\ensuremath{\mathbf{#1}}}
\newcommand{\ms}[1]{\ensuremath{\mathsf{#1}}}
\newcommand{\nats}{\mathbb{N}}
\newcommand{\lto}{:}
\newcommand{\linl}[1]{\iota_l\;{#1}}
\newcommand{\linr}[1]{\iota_r\;{#1}}
\newcommand{\labort}[1]{\ms{abort}\;{#1}}
\newcommand{\letexpr}[3]{\ensuremath{\ms{let}\;#1 = #2;\;#3}}
\newcommand{\caseexpr}[5]{\ms{case}\;#1\;\{\linl{#2} \lto #3, \linr{#4} \lto #5\}}
\newcommand{\letstmt}[3]{\ensuremath{\ms{let}\;#1 = #2; #3}}
\newcommand{\brb}[2]{\ms{br}\;#1\;#2}
\newcommand{\ite}[3]{\ms{if}\;#1\;\{#2\}\;\ms{else}\;\{#3\}}
\newcommand{\casestmt}[5]{\ms{case}\;#1\;\{\linl{#2} \lto #3, \linr{#4} \lto #5\}}
\newcommand{\where}[2]{#1\;\ms{where}\;#2}
\newcommand{\wbranch}[3]{#1(#2) \lto \{#3\}}
\newcommand{\cfgsubst}[1]{\ms{cfgs}\;\{#1\}}
\newcommand{\rupg}[1]{{#1}^\upharpoonright}
\newcommand{\lupg}[1]{{#1}^\upharpoonleft}
\newcommand{\thyp}[3]{#1 : {#2}^{#3}}
\newcommand{\bhyp}[2]{#1 : #2}
\newcommand{\lhyp}[2]{#1(#2)}
\newcommand{\rle}[1]{{\scriptsize\textsf{#1}}}
\newcommand{\hasty}[4]{#1 \vdash_{#2} #3: {#4}}
\newcommand{\haslb}[3]{#1 \vdash #2 \rhd #3}
\newcommand{\ahasty}[4]{#1 \vdash_{#2}^{\ms{anf}} #3 : {#4}}
\newcommand{\thaslb}[3]{#1 \vdash^{\ms{t}}_{\ms{ssa}} #2 \rhd #3}
\newcommand{\ahaslb}[3]{#1 \vdash^{\ms{anf}} #2 \rhd #3}
\newcommand{\bhaslb}[3]{#1 \vdash^{\ms{b}}_{\ms{ssa}} #2 \rhd #3}
\newcommand{\shaslb}[3]{#1 \vdash^{\ms{s}} #2 \rhd #3}
\newcommand{\isop}[4]{#1 \in \mc{I}_{#4}(#2, #3)}
\newcommand{\issubst}[3]{#1: #2 \mapsto #3}
\newcommand{\lbsubst}[4]{#1 \vdash #2: #3 \rightsquigarrow #4}
\newcommand{\teqv}{\approx}
\newcommand{\tmeq}[5]{#1 \vdash_{#2} #3 \teqv #4 : {#5}}
\newcommand{\lbeq}[4]{#1 \vdash #2 \teqv #3 \rhd {#4}}
\newcommand{\tmseq}[4]{\issubst{#1 \teqv #2}{#3}{#4}}
\newcommand{\lbseq}[5]{\lbsubst{#1 \teqv #2}{#3}{#4}{#5}}
\newcommand{\brle}[1]{{\textsf{#1}}}
\newcommand{\toanf}[1]{\ms{ANF}(#1)}
\newcommand{\letanf}[3]{\ms{ANF}_{\ms{let}}(#1, #2, #3)}
\newcommand{\tossa}[1]{\ms{SSA}(#1)}
\newcommand{\ssawhere}[2]{\ms{SSA}_{\ms{a}}(#1, #2)}
\newcommand{\toentry}[1]{\ms{entry}(#1)}
\newcommand{\todom}[1]{\ms{children}(#1)}
\newcommand{\tocfg}[1]{\ms{cfg}(#1)}
\newcommand{\adddom}[2]{\ms{bb}(#1, #2)}
\newcommand{\toreg}[1]{\ms{reg}(#1)}
\newcommand{\towhile}[2]{\ms{WH}_{#1}(#2)}
\newcommand{\topwhile}[2]{\ms{PW}_{#1}(#2)}
\newcommand{\dnt}[1]{\llbracket{#1}\rrbracket}
\newcommand{\tmor}[1]{{!}_{#1}}
\newcommand{\dmor}[1]{{\Delta}_{#1}}
\newcommand{\entrymor}[3]{\ms{esem}_{#1, #3}(#2)}
\newcommand{\loopmor}[3]{\ms{lsem}_{#1, #3}(#2)}
\newcommand{\substpure}[1]{#1\;\ms{pure}}
\newcommand{\lmor}[1]{\ms{let}(#1)}
\newcommand{\envcom}[2]{{#1}_{#2 \otimes \cdot}}
\newcommand{\rlmor}[1]{\ms{rlet}(#1)}
\newcommand{\rcase}[1]{\ms{rcase}(#1)}
\newcommand{\rfix}[1]{\ms{rfix}(#1)}
\newcommand{\rseq}[3]{#2 \gg_{#1} #3}
\newcommand{\toenv}[2]{\ms{env}_{#1}(#2)}
\newcommand{\envcop}[3]{[#2, #3]_{#1}}
\newcommand{\envinr}[1]{\iota^{#1}_{r}}
\newcommand{\envinl}[1]{\iota^{#1}_{l}}
\newcommand{\envtn}[3]{{#2} \otimes_{#1} {#3}}
\newcommand{\invar}{\square}
\newcommand{\outlb}{\blacksquare}
\newcommand{\pckd}[1]{\langle #1 \rangle}
\newcommand{\bufloc}[1]{\overline{#1}}
\newcommand{\isotopessa}{\(\lambda_{\ms{SSA}}\)}
\begin{document}

\title{The Denotational Semantics of SSA}

\author{Jad Ghalayini}
\email{jeg74@cl.cam.ac.uk}
\orcid{0000-0002-6905-1303}

\author{Neel Krishnaswami}
\email{nk480@cl.cam.ac.uk}
\orcid{0000-0003-2838-5865}

\begin{abstract}
  Static single assignment form, or SSA, has been the dominant
  compiler intermediate representation for decades. In this paper, we
  give a type theory for a variant of SSA, including its equational
  theory, which are strong enough to validate a variety of control and
  data flow transformations. We also give a categorical semantics for
  SSA, and show that the type theory is sound and complete with
  respect to the categorical axiomatization. We demonstrate the
  utility of our model by exhibiting a variety of concrete models
  satisfying our axioms, including in particular a model of TSO weak
  memory. The correctness of the syntactic metatheory, as well as the
  completeness proof has been mechanized in the Lean proof assistant.
\end{abstract}

\begin{CCSXML}
  <ccs2012>
  <concept>
  <concept_id>10003752.10010124.10010131.10010133</concept_id>
  <concept_desc>Theory of computation~Denotational semantics</concept_desc>
  <concept_significance>500</concept_significance>
  </concept>
  <concept>
  <concept_id>10003752.10010124.10010131.10010137</concept_id>
  <concept_desc>Theory of computation~Categorical semantics</concept_desc>
  <concept_significance>500</concept_significance>
  </concept>
  <concept>
  <concept_id>10003752.10003790.10011740</concept_id>
  <concept_desc>Theory of computation~Type theory</concept_desc>
  <concept_significance>500</concept_significance>
  </concept>
  </ccs2012>
\end{CCSXML}

\ccsdesc[500]{Theory of computation~Denotational semantics}
\ccsdesc[500]{Theory of computation~Categorical semantics}
\ccsdesc[500]{Theory of computation~Type theory}

\keywords{SSA, Categorical Semantics, Elgot Structure, Effectful Category}


\maketitle

\emph{This paper is dedicated to the memory of Alan Jeffrey, who
  taught us about both premonoidal categories and the semantics of weak memory, and
who never shied away from either theory or implementation.}

\section{Introduction}

Static single assignment form, or SSA form, has been the dominant compiler intermediate
representation since its introduction by \citet{alpern-ssa-original-88} and \citet{rosen-gvn-1988}
in the late 1980s. Most major compilers -- GCC, Clang, MLIR, Cranelift -- use this representation,
because it makes many optimizations much easier to do than traditional 3-address code IRs.

The key idea behind SSA is to adapt an idea from functional programming: namely, every variable is
defined only once. This means that substitution is unconditionally valid, without first requiring a
dataflow analysis to compute where definitions reach. Unlike in functional programming, though,
scoping of definitions in SSA is traditionally not lexical. Instead, scoping is determined by
\emph{dominance}: every variable occurrence must be dominated by a single assignment in the control
flow graph.

The semantics of SSA has traditionally been handled quite informally, because conceptually, it is a
simple first-order imperative programming language. As a result, whether a rewrite is sound or not
is usually obvious, without having to do a complex correctness argument.

Unfortunately, computers are no longer as simple as they were in the late 1980s. Modern computers
are typically multicore, and feature many levels of caching, and as a result the semantics of memory
is no longer correctly modelled as a big array of bytes. Finding good semantics for modern weak
memory systems remains an ongoing challenge.

As a result, it is not correct to justify compiler optimizations in terms of a simple imperative
model, and it is an open question which equations should hold of an SSA program. This is a
particularly fraught question, because it is also unclear which equations weak memory models should
satisfy.

What we would like to know is which equations any SSA representation should satisfy. This would let
us establish a contract between compiler writers and hardware designers. The compiler writers could
rely upon the equational theory of SSA when justifying optimizations, without needing to know all
the details of the memory model at all times.  Conversely, memory models could be validated by
seeing if they satisfy the equations of SSA, without needing to study every possible compiler
optimization.

Concretely, our contributions are as follows: 

\begin{itemize}
\item First, we give a type-theoretic presentation of SSA, with both typing rules (in
  Section~\ref{sec:typing}) and an equational theory (in Section~\ref{sec:equations}) for
  well-typed terms. We also prove the correctness of suitable substitution properties for this
  calculus. 
  
\item Next, in Section~\ref{sec:densem}, we give a categorical semantics for this type theory, in
  terms of distributive Elgot categories. We show that any denotational model with this categorical
  structure is also a model of SSA. This shows that all of the equations we give are sound with
  respect to the categorical structure. 

\item We also show, in Section~\ref{ssec:completeness}, that syntax quotiented by the equational
  theory yields the initial distributive Elgot category. This establishes that our set of syntactic
  equations is complete, and that there are no equations which the denotational semantics validates,
  but which cannot be proved syntactically. 

\item We proceed in Section~\ref{sec:concrete} to show that this denotational axiomatization is
  useful in practice, by giving a variety of concrete models, including a model of TSO weak memory
  based on~\citet{sparky} in Section~\ref{ssec:tso}. This demonstrates that it is possible to give
  realistic weak memory models which do not disturb the structure of SSA in fundamental ways.

\item Finally, we have substantially mechanized our proofs using the Lean 4 proof assistant. We have
  mechanized proofs of substitution for our type theory, as well as proofs that the syntax forms the
  initial model, and that the SPARC TSO semantics forms a valid model of SSA. The denotational
  semantics and its proof of the soundness of substitution are done on paper. 

\end{itemize}

\section{Static Single Assignment Form}

In this section, we describe SSA form and the isomorphism between the standard $\phi$-node-based
presentation and the more functional \emph{basic blocks with arguments} format. We then discuss
standard dominance-based scoping and how it can be recast as lexical scoping to make it more
amenable to standard type-theoretic treatment. We further generalize this format to allow branching
to arbitrary code rather than only labels, obtaining \emph{A-normal form}
(ANF)~\cite{flanagan-93-anf}, analogously to the transformation described by
\citet{chakravarty-functional-ssa-2003}. Finally, to allow for substitution, we relax our syntax to
permit arbitrary expression nesting and \ms{let}-expressions, resulting in \emph{type-theoretic
SSA}, or \isotopessa{}, which will be the focus of the rest of this paper. A straightforward
argument shows that these extensions add no expressive power; we give a more formal version of this
argument, as well as algorithms to interconvert between \isotopessa{}, standard SSA, and ANF, in
Section~\ref{ssec:ssa-normal}.

As a running example, consider the simple imperative program to compute $10!$ given in
Figure~\ref{fig:fact-program}. Operating directly on an imperative language can be challenging,
since having a uniform representation of code friendly to mechanized optimization and analysis is
often in tension with features designed to improve readability and programmer productivity, such as
syntactic sugar. Early work on compiler intermediate representations, notably by Frances
Allen~\cite{allen-70-cfa}, introduced \emph{three-address code}, also known as \emph{register
transfer language (RTL)},  to normalize programs into a form more suitable for analysis and
optimization. We can normalize our code into 3-address code, as in Figure~\ref{fig:fact-3addr}, by:
\begin{itemize}
  \item Converting structured control flow (e.g., \ms{while}) into unstructured jumps between basic
  blocks labelled \ms{start}, \ms{loop}, and \ms{body}.
  \item Replacing subexpressions like $i + 1$ in $a * (i + 1)$ with \ms{let}-bindings so that every
  expression in our program is atomic.
\end{itemize}

\begin{figure}
  \begin{subfigure}[t]{.5\textwidth}
    \begin{align*}
      & \ms{let}\;n = 10; \\
      & \ms{let\;mut}\;i = 1; \\
      & \ms{let\;mut}\;a = 1; \\
      & \ms{while}\;i_0 < n\;\{ \\
      & \quad a = a * (i + 1) \\
      & \quad i = i + 1; \\
      & \} \\
      & \ms{ret}\;a \\
    \end{align*}
    \caption{As an imperative program}
    \label{fig:fact-imp}
  \end{subfigure}%
  \begin{subfigure}[t]{.5\textwidth}
    \begin{align*}
      \ms{start}:\quad  & \ms{let}\;n = 10; \\
                        & \ms{let\;mut}\;i = 1; \\
                        & \ms{let\;mut}\;a = 1; \\
                        & \ms{br}\;\ms{loop} \\
      \ms{loop}: \quad  & \ms{if}\;i < n\;
                          \{\;\ms{br}\;\ms{body}\;\}\;
                          \ms{else}\;\{\;\ms{ret}\;a\;\} \\
      \ms{body}: \quad  & \ms{let}\;t = i + 1; \\
                        & a = a * t; \\
                        & i = i + 1; \\
                        & \ms{br}\;\ms{loop}
    \end{align*}
    \caption{As 3-address code}
    \label{fig:fact-3addr}
  \end{subfigure}
  \caption{
    A simple, slightly suboptimal program to compute $10!$ via multiplication in a loop, represented
    as typical imperative code and in 3-address code.
  }
  \Description{}
  \label{fig:fact-program}
\end{figure}

While functional languages typically rely on \emph{lexical scoping}, where the scope of a variable
is determined by its position within the code's nested structure, 3-address code uses a different
scoping mechanism based on \emph{dominance}. In particular, a variable $x$ is considered to be in
scope at a specific point $P$ if and only if all execution paths from the program's entry point to
$P$ pass through a definition $D$ for $x$. In this case, we say that the definition $D$
\emph{strictly dominates} $P$. The relation on basic blocks ``$A$ strictly dominates $B$"
intersected with ``$A$ is a \emph{direct predecessor} of $B$" forms a tree called the
\emph{dominance tree} of the CFG -- this can be computed in nearly linear time
\cite{cytron-ssa-intro-91}.

Despite this normalization, many optimizations remain difficult to express in this format because a
variable's value may be set by multiple definitions throughout the program's execution. To improve
our ability to reason about programs, we introduce the \emph{static single assignment} restriction,
originally proposed by \citet{alpern-ssa-original-88}, which states that every variable must be
defined at exactly one point in the program. We can intuitively represent this as every variable
being given by an immutable \ms{let}-binding. In particular, given a variable $x$, which we can now
associate to its unique definition $D_x$, $x$ is in scope at a point $P$ if and only if $D_x$
strictly dominates $P$.

One might attempt to convert programs to SSA form by numbering each definition of a variable,
effectively changing references to $x$ to references to $x_t$, i.e. ``$x$ at time $t$.'' For
example, we could rewrite
\begin{multline}
  \letexpr{x}{3y + 5}{\letexpr{x}{3x + 2}{\letexpr{x}{3x + 1}{\ms{ret}\;x}}}
  \\ \teqv \letexpr{x_0}{3y + 5}{\letexpr{x_1}{3x_0 + 2}{\letexpr{x_2}{3x_1 + 1}{\ms{ret}\;x_2}}}
\end{multline}
This transformation enables algebraic reasoning about expressions involving each $x_t$. However,
since we can only define a variable once in SSA form, expressing programs with loops and branches
becomes challenging. For example, na\"ively trying to lower the program in
Figure~\ref{fig:fact-3addr} into SSA form would not work, since the $i$ definition $i = i + 1$ can
refer to \emph{either} the previous value of $i$ from the last iteration of the loop \emph{or} the
original value $i = 1$. The classical solution is to introduce \emph{$\phi$-nodes}, which select a
value based on the predecessor block from which control arrived. We give the lowering of our program
into SSA with $\phi$-nodes in Figure~\ref{fig:fact-ssa}. 

\citet{cytron-ssa-intro-91} introduced the first efficient algorithm to lower a program in 3-address
code to valid SSA while introducing a minimum number of $\phi$-nodes, making SSA practical for
widespread use as an intermediate representation. Unfortunately, $\phi$-nodes do not have an obvious
operational semantics. Additionally, they require us to adopt more complex scoping rules than simple
dominance-based scoping. For example, in basic block \ms{loop} in Figure~\ref{fig:fact-ssa}, $i_0$
evaluates to 1 if we came from \ms{start} and to $i_1$ if we came from \ms{body}. Similarly, $a_0$
evaluates to either 1 or $a_1$ based on the predecessor block. This does not obey dominance-based
scoping, since $i_0$ and $i_1$ are defined \emph{after} the $\phi$-nodes $i_0$, $a_0$ that reference
them, which seems counterintuitive -- after all, variables are typically used after they are
defined. In fact, since the value of a $\phi$-node is determined by which basic block is our
immediate predecessor, we instead need to use the rule that expressions in $\phi$-node branches with
source $S$ can use any variable $y$ defined at the \emph{end} of $S$. Note that this is a strict
superset of the variables visible for a normal instruction $x$, which can only use variables $y$
which \emph{dominate} $x$ -- i.e., such that \emph{every} path from the entry block to the
definition of $x$ goes through $y$, rather than only those paths which also go through $S$.

\begin{figure}
  \begin{subfigure}[t]{.5\textwidth}
    \begin{align*}
      \ms{start}:\quad  & \ms{let}\;n = 10; \\
                        & \ms{let\;mut}\;i = 1; \\
                        & \ms{let\;mut}\;a = 1; \\
                        & \ms{br}\;\ms{loop} \\
      \ms{loop}: \quad  & \ms{if}\;i < n\;
                          \{\;\ms{br}\;\ms{body}\;\}\;
                          \ms{else}\;\{\;\ms{ret}\;a\;\} \\
      \ms{body}: \quad  & \ms{let}\;t = i + 1; \\
                        & a = a * t; \\
                        & i = i + 1; \\
                        & \ms{br}\;\ms{loop}
    \end{align*}
    \caption{3-address code}
  \end{subfigure}%
  \begin{subfigure}[t]{.5\textwidth}
    \begin{align*}
      \ms{start}:\quad & \ms{let}\;n = 10; \\
      & \ms{br}\;\ms{loop} \\
      \ms{loop}: \quad  & \ms{let}\;i_0 = \phi(\ms{start}: 1, \ms{body}: i_1) \\
                        & \ms{let}\;a_0 = \phi(\ms{start}: 1, \ms{body}: a_1) \\
                        & \ms{if}\;i_0 < n\;
                          \{\;\ms{br}\;\ms{body}\;\}\;
                          \ms{else}\;\{\;\ms{ret}\;a_0\;\} \\
      \ms{body}: \quad  & \ms{let}\;t = i_0 + 1 \\
                        & \ms{let}\;a_1 = a_0 * t \\
                        & \ms{let}\;i_1 = i_0 + 1 \\
                        & \ms{br}\;\ms{loop}
    \end{align*}
    \caption{Converted to SSA form}
    \label{fig:fact-ssa}
  \end{subfigure}
  \caption{
    Conversion of three address code for the program in Figure~\ref{fig:fact-program} to SSA 
    form, requring the insertion of $\phi$-nodes for $i$ and $a$ due to control-flow dependent
    updates. Note how SSA-form can be viewed as ``three address code in which all 
    \ms{let}-bindings are immutable.''
  }
  \Description{}
\end{figure}
 
While this rule can be quite confusing, and in particular makes it non-obvious how to assign an
operational semantics to $\phi$-nodes, the fact that the scoping for $\phi$-node branches is based
on the source block, rather than the block in which the $\phi$-node itself appears, hints at a
possible solution. By \emph{moving} the expression in each branch to the \emph{call-site}, we can
transition to an isomorphic syntax called basic blocks with arguments (BBA), as illustrated in
Figure \ref{fig:fact-bba}. In this approach, each $\phi$-node -- since it lacks side effects and has
scoping rules independent of its position in the basic block, depending only on the source of each
branch -- can be moved to the top of the block. This reorganization allows us to treat each
$\phi$-node as equivalent to an argument for the basic block, with the corresponding values passed
at the jump site. Converting a program from BBA format back to standard SSA form with $\phi$-nodes
is straightforward: introduce a $\phi$-node for each argument of a basic block, and for each branch
corresponding to the $\phi$-node, add an argument to the jump instruction from the appropriate
source block. 

This allows us to use standard dominance-based scoping without any special cases for $\phi$-nodes.
When considering basic blocks, this means that a variable is visible within the block $D$ where it
is defined, starting from the point of its definition. It continues to be visible in all subsequent
blocks $P$ that are strictly dominated by $D$ in the control-flow graph (CFG). For example, in
Figure~\ref{fig:fact-bba}:
\begin{itemize}
  \item \ms{start} strictly dominates \ms{loop} and \ms{body}; thus, the variable $n$ defined in
  \ms{start} is visible in \ms{loop} and \ms{body}.
  \item \ms{loop} strictly dominates \ms{body}; therefore, the parameters $i_0$, $a_0$ to \ms{loop}
  are visible in \ms{body} without the need to pass them as parameters.
  \item \ms{body} does \emph{not} strictly dominate \ms{loop}, since there is a path from \ms{start}
  to \ms{loop} that does not pass through \ms{body}.
\end{itemize}

\begin{figure}
  \begin{subfigure}[t]{.5\textwidth}
    \centering
    \begin{align*}
      \ms{start}:\quad  & \ms{let}\;n = 10; \\
                        & \ms{br}\;\ms{loop} \\
      \ms{loop}: \quad  & \begingroup \color{red}
                          \ms{let}\;i_0 = \phi(\ms{start}: 1, \ms{body}: i_1) 
                          \endgroup \\
                        & \begingroup \color{blue}
                          \ms{let}\;a_0 = \phi(\ms{start}: 1, \ms{body}: a_1) 
                          \endgroup \\
                        & \ms{if}\;i_0 < n\;\{\;\ms{br}\;\ms{body}\;\} \\
                        & \ms{else}\;\{\;\ms{ret}\;a_0\;\} \\
      \ms{body}: \quad  & \ms{let}\;t = i_0 + 1 \\
                        & \ms{let}\;a_1 = a_0 * t \\
                        & \ms{let}\;i_1 = i_0 + 1 \\
                        & \ms{br}\;\ms{loop}
    \end{align*}
    \caption{With $\phi$-nodes}
    \label{fig:fact-phi}
  \end{subfigure}%
  \begin{subfigure}[t]{.5\textwidth}
    \centering
    \begin{align*}
      \ms{start}:\quad            & \ms{let}\;n = 10; \\
                                  & \ms{br}\;\ms{loop}(\textcolor{red}{1}, \textcolor{blue}{1}) \\
      \ms{loop}(\textcolor{red}{i_0}, \textcolor{blue}{a_0}): \quad  
                                  & \ms{if}\;i_0 < n\; \{\;\ms{br}\;\ms{body}\;\} \\
                                  & \ms{else}\;\{\;\ms{ret}\;a_0\;\} \\
      \ms{body}: \quad            & \ms{let}\;t = i_0 + 1 \\
                                  & \ms{let}\;a_1 = a_0 * t \\
                                  & \ms{let}\;i_1 = i_0 + 1 \\
                                  & \ms{br}\;\ms{loop}(\textcolor{red}{i_1}, \textcolor{blue}{a_1}) 
                                  \\ \\
    \end{align*}
    \caption{Basic-blocks with arguments}
    \label{fig:fact-bba}
  \end{subfigure}
  
  \caption{
    The program in Figure \ref{fig:fact-program} written in standard SSA (using $\phi$ nodes),
    like in LLVM \cite{llvm}, and in basic-blocks with arguments SSA, like in MLIR \cite{mlir} and
    Cranelift \cite{cranelift}. The arguments $i_0, a_0$ corresponding to the $\phi$-nodes $i_0,
    a_0$ are colored in \textcolor{red}{red} and \textcolor{blue}{blue}, respectively.
  }

  \Description{}
\end{figure}

An important insight provided by the BBA format, as discussed by \citet{appel-ssa} and
\citet{kelsey-95-cps}, is that a program in SSA form can in this way be interpreted as a collection
of tail-recursive functions, where each basic block and branch correspond to a function and tail
call, respectively. This interpretation offers a natural framework for defining the semantics of SSA
and reasoning about optimizations. However, there is a subtle difference between the scoping rules
in this format and the actual scoping used in traditional SSA, which requires careful consideration.
By topologically sorting the basic blocks in the CFG according to this partial order and inserting
brackets based on the dominance tree, we can convert dominance-based scoping to lexical scoping. In
this arrangement, a variable is in lexical scope if and only if it is in scope under dominance-based
scoping, as shown in Figure~\ref{fig:dominance-to-lexical}. This transformation is straightforward,
and standard SSA can be recovered by removing the inserted \ms{where}-blocks.

\begin{figure}
  \centering
  \begin{subfigure}[t]{.5\textwidth}
    \begin{align*}
      \ms{start}:\quad            & \ms{let}\;n = 10; \\
                                  & \ms{br}\;\ms{loop}(1, 1) \\
      \ms{loop}(i_0, a_0): \quad  & \ms{if}\;i_0 < n\; \{\;\ms{br}\;\ms{body}\;\} \\
                                  & \ms{else}\;\{\;\ms{ret}\;a_0\;\} \\
      \ms{body}: \quad            & \ms{let}\;t = i_0 + 1 \\
                                  & \ms{let}\;a_1 = a_0 * t \\
                                  & \ms{let}\;i_1 = i_0 + 1 \\
                                  & \ms{br}\;\ms{loop}(i_1, a_1) \\ \\ \\ \\ 
    \end{align*}
    \caption{Dominance-based scoping}
  \end{subfigure}%
  \begin{subfigure}[t]{.5\textwidth}
    \begin{align*}
      & \ms{let}\;n = 10; \\
      & \ms{br}\;\ms{loop}(1, 1) \\
      & \ms{where}\;\ms{loop}(i_0, a_0): \{ \\
      & \quad \ms{if}\;i_0 < n\;\{\;\ms{br}\;\ms{body}\;\} \\
      & \quad \ms{else}\;\{\;\ms{ret}\;a_0\;\} \\
      & \quad \ms{where}\;\ms{body}: \{\\ 
      & \qquad \ms{let}\;t = i_0 + 1 \\
      & \qquad \ms{let}\;a_1 = a_0 * t \\
      & \qquad \ms{let}\;i_1 = i_0 + 1 \\
      & \qquad \ms{br}\;\ms{loop}(i_1, a_1) \\
      & \quad \} \\
      & \}
    \end{align*}
    \caption{Lexical scoping}
  \end{subfigure}
  \caption{Conversion of an SSA program from dominance-based scoping to explicit lexical scoping}
  \Description{}
  \label{fig:dominance-to-lexical}
\end{figure}

Lexical scoping allows us to apply many of the techniques developed in theoretical computer science
and functional programming for reasoning about and developing optimizations and analysis passes --
in particular, the result of our conversion to lexical scoping looks a lot like the correspondence
between SSA and CPS described in \citet{kelsey-95-cps}. We can use this correspondence to guide us
in developing an \textit{equational theory} for SSA programs, with the goal of enabling
compositional reasoning about program transformations such as:
\begin{itemize}
  \item \textit{Control-flow rewrites}, such as jump-threading or fusing two identical branches of
  an \ms{if}-statement
  \item \textit{Algebraic rewrites}, such as simplifying arithmetic expressions
  \item Combinations of the two, such as rewriting $\ms{if}\;x > 0\;\ms{then}\;1 - x\;\ms{else}\;1 +
  x$ to $1 + \ms{abs}(x)$.
\end{itemize}
We can work towards making these easier to express by generalizing our syntax to allow the branches
of if-statements to contain arbitrary code, rather than just unconditional branches, as in
Figure~\ref{fig:bba-to-anf}. This clearly adds no additional expressive power, since:
\begin{itemize}
  \item This syntax clearly generalizes the previous syntax, so no conversion into it is necessary
  \item To revert back to the less general syntax, one must simply introduce new anonymous basic
  blocks for each branch of the if-statement, likeso:
  \begin{equation}
    \ms{if}\;e\;\{s\}\;\ms{else}\;\{t\}
    \to (\ms{if}\;e\;\{\ms{br}\;\ell_\top\}\;\ms{else}\;\{\ms{br}\;\ell_\bot\})\;
        \ms{where}\;\ell_\top: \{s\},\;\ell_\bot: \{t\}
  \end{equation}
\end{itemize}
What we end up with is something which looks a lot like \textit{administrative normal form} (ANF),
with our transformation analogous to that described in \citet{chakravarty-functional-ssa-2003}. The
key difference is that, in our format (which is strictly first order), we require an explicit
\ms{ret} instruction (rather than adopting an expression-oriented language), and write
``$\ms{let\;rec}\;f(x) = e; t$" as ``$\where{t}{\wbranch{f}{x}{e}}$."

\begin{figure}
  \centering
  \begin{subfigure}[t]{.5\textwidth}
    \begin{align*}
      & \ms{let}\;n = 10; \\
      & \ms{br}\;\ms{loop}(1, 1) \\
      & \ms{where}\;\ms{loop}(i_0, a_0): \{ \\
      & \quad \ms{if}\;i_0 < n\;\{\;\ms{br}\;\ms{body}\;\} \\
      & \quad \ms{else}\;\{\;\ms{ret}\;a_0\;\} \\
      & \quad \ms{where}\;\ms{body}: \{\\ 
      & \qquad \ms{let}\;t = i_0 + 1 \\
      & \qquad \ms{let}\;a_1 = a_0 * t \\
      & \qquad \ms{let}\;i_1 = i_0 + 1 \\
      & \qquad \ms{br}\;\ms{loop}(i_1, a_1) \\
      & \quad \} \\
      & \}
    \end{align*}
  \end{subfigure}%
  \begin{subfigure}[t]{.5\textwidth}
    \begin{align*}
      & \ms{let}\;n = 10; \\
      & \ms{br}\;\ms{loop}(1, 1) \\
      & \ms{where}\;\ms{loop}(i_0, a_0): \{\\
      & \quad \ms{if}\;i_0 < n\;\{ \\
      & \qquad \ms{let}\;t = i_0 + 1 \\
      & \qquad \ms{let}\;a_1 = a_0 * t \\
      & \qquad \ms{let}\;i_1 = i_0 + 1 \\
      & \qquad \ms{br}\;\ms{loop}(i_1, a_1) \\
      & \quad \}\;\ms{else}\;\{ \\
      & \qquad \ms{ret}\;a_0 \\
      & \quad \} \\
      & \}
    \end{align*}
  \end{subfigure}
  \caption{Allowing if-statements to jump to arbitrary instructions, rather than a terminator}
  \Description{}
  \label{fig:bba-to-anf}
\end{figure}

ANF, however, lacks a good substitution property, since substituting a value for a variable can take
you out of ANF, making it difficult to express optimizations like $(i + 1) - 1 \to i$ as rewrite
rules. To fix this, we can simply relax the restriction that expressions in a program must be
atomic. This can again trivially be seen to add no excessive power, since we can always introduce
temporary variables via \ms{let}-bindings to make any expression atomic. For full generality, we
will also allow \ms{let}-bindings and \ms{if}-statements \textit{inside} expressions, which again
can be eliminated in the obvious manner, such as by taking
\begin{align*}
  \ms{let}\;x = (\ms{if}\;e\;\{a\}\;\ms{else}\;\{b\}); t &
    \to \ms{if}\;e\;\{\ms{let}\;x = a; t\}\;\ms{else}\;\{\ms{let}\;x = b; t\} \\ 
  & \to \ms{if}\;e\;\{\ms{br}\;\ell(x)\}\;\ms{else}\;\{\ms{br}\;\ell(x)\}\;
        \ms{where}\;\ell(x): \{t\}
\end{align*}

\begin{figure}
  \centering
  \begin{subfigure}[t]{.31\textwidth}
    \begin{align*}
      & \ms{let}\;n = 10; \\
      & \ms{br}\;\ms{loop}(1, 1) \\
      & \ms{where}\;\ms{loop}(i_0, a_0): \{\\
      & \quad \ms{if}\;i_0 < n\;\{ \\
      & \qquad \ms{let}\;t = i_0 + 1 \\
      & \qquad \ms{let}\;a_1 = a_0 * t \\
      & \qquad \ms{let}\;i_1 = i_0 + 1 \\
      & \qquad \ms{br}\;\ms{loop}(i_1, a_1) \\
      & \quad \}\;\ms{else}\;\{ \\
      & \qquad \ms{ret}\;a_0 \\
      & \quad \} \\
      & \}
    \end{align*}
    \caption{Program in ANF}
    \label{fig:fact-anf}
  \end{subfigure}%
  \begin{subfigure}[t]{.35\textwidth}
    \begin{align*}
      & \ms{let}\;n = 10; \\
      & \ms{br}\;\ms{loop}(1, 1) \\
      & \ms{where}\;\ms{loop}(i_0, a_0): \{\\
      & \quad \ms{if}\;i_0 < n\;\{ \\
      & \qquad \ms{br}\;\ms{loop}(i_0 + 1, a_0 * (i_0 + 1)) \\
      & \quad \}\;\ms{else}\;\{ \\
      & \qquad \ms{ret}\;a_0 \\
      & \quad \} \\
      & \}  \\ \\ \\
    \end{align*}
    \caption{
      Programs \ref{fig:fact-anf} and \ref{fig:fact-subst} after substitution;
      since the result is the same, both programs must be equivalent.
    }
    \label{fig:fact-subst}
  \end{subfigure}\hspace{1em}%
  \begin{subfigure}[t]{.31\textwidth}
    \begin{align*}
      & \ms{let}\;n = 10; \\
      & \ms{br}\;\ms{loop}(1, 1) \\
      & \ms{where}\;\ms{loop}(i_0, a_0): \{\\
      & \quad \ms{if}\;i_0 < n\;\{ \\
      & \qquad \ms{let}\;i_1 = i_0 + 1 \\
      & \qquad \ms{let}\;a_1 = a_0 * i_1 \\
      & \qquad \ms{br}\;\ms{loop}(i_1, a_1) \\
      & \quad \}\;\ms{else}\;\{ \\
      & \qquad \ms{ret}\;a_0 \\
      & \quad \} \\
      & \} \\
    \end{align*}
    \caption{Optimized ANF program}
    \label{fig:fact-opt}
  \end{subfigure}
  \caption{
    Adding support for expressions, allowing us to perform substitutions of (pure) expressions.
    Optimizations such as common subexpression elimination can be built using substitution as a
    building block.
  }
  \Description{}
  \label{fig:fact-cse}
\end{figure}

This gives us our final \isotopessa{} calculus, which we will call \emph{type-theoretic SSA}. In
this language, we can safely perform \textit{substitutions}, like in Figure~\ref{fig:fact-cse}.
These can then be used to build up optimizations such as \textit{common-subexpression elimination}.
More generally, substitution lets us do \textit{algebra}. For example, since we know that:
\begin{align*}
  (i_0 + 1, a_0 * (i_0 + 1)) &= (\ms{let}\;(x, y) = (i_0, a_0)\;\ms{in}\;(x + 1, y * (x + 1))) \\
  (1, 1) &= (\ms{let}\;(x, y) = (0, 1)\;\ms{in}\;(x + 1, y * (x + 1)))
\end{align*} 
we can rewrite the program in Figure~\ref{fig:fact-subst-2} to that in Figure~\ref{fig:fact-dinat}.
We can then apply general rewrite rules such as \textit{dinaturality} to rewrite
Figure~\ref{fig:fact-dinat} to Figure~\ref{fig:fact-zero}. This allows us to build up justifications
for complex optimizations, such as rewriting \ref{fig:fact-zero} to \ref{fig:fact-opt}, in terms of
simple rewriting steps. In particular, we can do \textit{complex}, \textit{error-prone} loop and
control-flow graph optimizations by breaking them down into closed set of simple algebraic steps,
with each step rigorously justified via our denotational semantics

\begin{figure}
  \begin{minipage}{.5\textwidth}
    \begin{subfigure}{\textwidth}
      \begin{align*}
        & \ms{let}\;n = 10; \\
        & \ms{br}\;\ms{loop}(1, 1) \\
        & \ms{where}\;\ms{loop}(i_0, a_0): \{\\
        & \quad \ms{if}\;i_0 < n\;\{ \\
        & \qquad \ms{br}\;\ms{loop}(i_0 + 1, a_0 * (i_0 + 1)) \\
        & \quad \}\;\ms{else}\;\{ \\
        & \qquad \ms{ret}\;a_0 \\
        & \quad \} \\
        & \}
      \end{align*}
      \caption{Substituted program from Figure \ref{fig:fact-subst}}
      \label{fig:fact-subst-2}
    \end{subfigure}
    \begin{subfigure}{\textwidth}
      \begin{align*}
        & \ms{let}\;n = 10; \\
        & \ms{br}\;\ms{loop}\;(0, 1) \\
        & \ms{where}\;\ms{loop}(x, y): \{\\
        & \quad \ms{let}\;(i_0, a_0) = (x + 1, y * (x + 1)); \\
        & \quad \ms{if}\;i_0 < n\;\{ \\
        & \qquad \ms{br}\;\ms{loop}(i_0, a_0) \\
        & \quad \}\;\ms{else}\;\{ \\
        & \qquad \ms{ret}\;a_0 \\
        & \quad \} \\
        & \}
      \end{align*}
      \caption{Equivalent to Figure \ref{fig:fact-zero} by \textit{dinaturality}}
      \label{fig:fact-dinat}
    \end{subfigure}
  \end{minipage}%
  \begin{subfigure}[c]{.5\textwidth}
    \begin{align*}
      & \ms{let}\;n = 10; \\
      & \ms{br}\;\ms{loop}( \\
      & \quad \ms{let}\;(x, y) = (0, 1); \\
      & \quad(x + 1, y * (x + 1)) \\
      & ) \\
      & \ms{where}\;\ms{loop}(i_0, a_0): \{\\
      & \quad \ms{if}\;i_0 < n\;\{ \\
      & \qquad \ms{br}\;\ms{loop}( \\
      & \qquad \quad \ms{let}\;(x, y) = (i_0, a_0); \\
      & \qquad \quad (x + 1, y * (x + 1)) \\ 
      & \qquad ) \\
      & \quad \}\;\ms{else}\;\{ \\
      & \qquad \ms{ret}\;a_0 \\
      & \quad \} \\
      & \}
    \end{align*}
    \caption{Equivalent to Figure \ref{fig:fact-subst-2} by substitution}
    \label{fig:fact-zero}
  \end{subfigure}
  \caption{
    Decomposing multi-block rewrites (from \ref{fig:fact-zero} to
    \ref{fig:fact-subst-2}, and therefore to the more optimal program 
    \ref{fig:fact-opt}) into simple algebraic steps. By verifying each step, we can
    verify complex optimizations through decomposition.
  } 
  \Description{}
  \label{fig:fact-dinat-rewrites}
\end{figure}

\section{Type Theory}

\label{sec:typing}

We now give a formal account of \isotopessa{}, starting with the types. Our types are first order,
and consists of binary sums $A + B$, products $A \otimes B$, the unit type $\mathbf{1}$, and the
empty type $\mb{0}$, all parameterised over a set of base types $X \in \mc{T}$. We write our set of
types as $\ms{Ty}(X)$. We also parameterise over:
\begin{itemize}
  
  \item A set of effects $\epsilon \in \mc{E}$, forming a join-semilattice with bottom element $\bot
  \in \mc{E}$
  
  \item For each pair $A, B \in \ms{Ty}(X)$ and effect $\epsilon \in \mc{E}$, a
  set of \textit{primitive instructions} $f \in \mc{I}_\epsilon(A, B)$, where
  $\epsilon \leq \epsilon' \implies \mc{I}_\epsilon(A, B) \subseteq
  \mc{I}_{\epsilon'}(A, B)$. 
  
  We write $\mc{I}(A, B) = \bigcup_\epsilon\mc{I}_\epsilon(A, B)$,
  $\mc{I}_\epsilon = \bigcup_{A, B}\mc{I}_\epsilon(A, B)$, and $\mc{I} =
  \bigcup_\epsilon\mc{I}_\epsilon$.

\end{itemize}
We'll call a tuple $Sg = (\mc{E}, \mc{T}, \mc{I})$ of types and instructions over these types an
\emph{\isotopessa{}-signature}.

A (variable )\textit{context} $\Gamma$ is a list of \textit{typing hypotheses}
$\thyp{x}{A}{\epsilon}$, where $x$ is a variable name, $A$ is the type of that variable, and
$\epsilon$ is the effect of using that variable (used when filling holes with effectful
expressions). If $\epsilon = \bot$, we often omit it, writing $\bhyp{x}{A}$. Similarly, we define a
\textit{label-context} to be a list of \textit{labels} $\lhyp{\ell}{A}$, where $A$ is the parameter
type that must be passed on a jump to the label $\ell$.

\begin{figure}[H]
  \begin{center}
    \begin{grammar}
      <\(A, B, C\)> ::= 
      \(X\)
      \;|\; \(A \otimes B\)
      \;|\; \(\mathbf{1}\)
      \;|\; \(A + B\)
      \;|\; \(\mathbf{0}\)

      <\(a, b, c, e\)> ::= \(x\) 
      \;|\;  \(f\;a\)
      \;|\; \(\letexpr{x}{a}{e}\)
      \alt  \(()\)
      \;|\; \((a, b)\)
      \;|\; \(\letexpr{(x, y)}{a}{e}\)
      \alt  \(\linl{a}\) 
      \;|\; \(\linr{a}\)
      \;|\; \(\labort{a}\)
      \;|\; \(\caseexpr{e}{x}{s}{y}{t}\)
      
      <\(s, t\)> ::= \(\brb{\ell}{a}\) 
      \alt  \(\letstmt{x}{a}{t}\)
      \;|\; \(\letstmt{(x, y)}{a}{t}\)
      \;|\; \(\casestmt{e}{x}{s}{y}{t}\)
      \alt  \(\where{t}{(\wbranch{\ell_i}{x_i}{t_i},)_i}\)

      <\(\Gamma\)> ::= \(\cdot\) \;|\; \(\Gamma, \thyp{x}{A}{\epsilon}\)

      <\(\ms{L}\)> ::= \(\cdot\) \;|\; \(\ms{L}, \lhyp{\ell}{A}\)
    \end{grammar}
  \end{center}
  \caption{Grammar for \isotopessa{}, parametrized over an \isotopessa{} signature}
  \Description{}
  \label{fig:ssa-grammar}
\end{figure}

As shown in Figure~\ref{fig:ssa-grammar}, \isotopessa{} terms are divided into two syntactic
categories, each of with associated with a judgement:
\begin{itemize}
  \item \emph{Expressions} $a, b, c, e$ typed with the judgement $\hasty{\Gamma}{\epsilon}{a}{A}$,
  which says that under the typing context $\Gamma$, the expression $a$ has type $A$ and effect
  $\epsilon$. We say a term is \emph{pure} if it has effect $\bot$; note that whether an expression
  is pure or not depends both on the expression itself and on the purity of the variables used in
  the expression; this is to allow reasoning about impure substitutions.
  \item \emph{Regions} $r, s, t$, which recursively define a lexically-scoped SSA program with a
  single entry and (potentially) multiple exits. This is typed with the judgement
  $\haslb{\Gamma}{r}{\ms{L}}$, which states that given that $\Gamma$ is live at the unique entry
  point, $r$ will either loop forever or branch to one of the exit labels in $\ell(A) \in \ms{L}$
  with an argument of type $A$.
\end{itemize}

The typing rules for expressions are given in Figure~\ref{fig:ssa-expr-rules}. In particular,
expressions may be built up from the following fairly standard primitives:
\begin{itemize}
  \item A variable $x$ in the context $\Gamma$, as typed by \brle{var}. We write $(A, \epsilon) \leq
  (B, \epsilon') \iff A = B \and \epsilon \leq \epsilon'$.
  \item An \emph{primitive instruction} $f \in \mc{I}_\epsilon(A, B)$ applied to an expression
  $\hasty{\Gamma}{\epsilon}{a}{A}$, typed by \brle{op}
  \item Unary and binary \emph{let-bindings}, typed by \brle{let$_1$} and \brle{let$_2$}
  respectively
  \item A \emph{pair} of expressions $\hasty{\Gamma}{\epsilon}{a}{A}$,
  $\hasty{\Gamma}{\epsilon}{b}{B}$, typed by \brle{pair}. Operationally, we interpret this as
  executing $a$, and then $b$, and returning the pair of their values.
  \item An empty tuple $()$, which types in any context by \brle{unit}
  \item Injections, typed by \brle{inl} and \brle{inr}
  \item Pattern matching on sum types, typed by \brle{case}. Operationally, we interpret this as
  executing $e$, and then, if $e$ is a left injection $\iota_l\;x$, executing $a$ with its value
  ($x$), otherwise executing $b$.
  \item An operator $\ms{abort}\;e$ allowing us to abort execution if given a value of the empty type. Since the empty type is a 0-ary sum type, $\ms{abort}$ can be seen as a $\ms{case}$ with no branches. Since the empty type is uninhabited, execution can never reach an $\ms{abort}$. This can be viewed as a typesafe version of the \texttt{unreachable} instruction in LLVM IR. 
\end{itemize}

Traditional presentations of SSA use a boolean type instead of sum types. Naturally, booleans can be encoded with sum types as $\mb{1} + \mb{1}$. If-then-else is then a $\ms{case}$ which ignores the unit payloads, so that
$\ite{e_1}{e_2}{e_3} := \caseexpr{e_1}{()}{e_2}{()}{e_3}$.

\begin{figure}
  \begin{gather*}
    \boxed{\hasty{\Gamma}{\epsilon}{a}{A}} \\
    \prftree[r]{\rle{var}}{\Gamma\;x \leq (A, \epsilon)}{\hasty{\Gamma}{\epsilon}{x}{A}} \qquad
    \prftree[r]{\rle{op}}{\isop{f}{A}{B}{\epsilon}}{\hasty{\Gamma}{\epsilon}{a}{A}}
      {\hasty{\Gamma}{\epsilon}{f\;a}{B}} \qquad
    \prftree[r]{\rle{let$_1$}}
      {\hasty{\Gamma}{\epsilon}{a}{A}}
      {\hasty{\Gamma, \bhyp{x}{A}}{\epsilon}{b}{B}}
      {\hasty{\Gamma}{\epsilon}{\letexpr{x}{a}{b}}{B}} \\
    \prftree[r]{\rle{unit}}{\hasty{\Gamma}{\epsilon}{()}{\mb{1}}} \qquad
    \prftree[r]{\rle{pair}}{\hasty{\Gamma}{\epsilon}{a}{A}}{\hasty{\Gamma}{\epsilon}{b}{B}}
      {\hasty{\Gamma}{\epsilon}{(a, b)}{A \otimes B}} \\
    \prftree[r]{\rle{let$_2$}}
      {\hasty{\Gamma}{\epsilon}{e}{A \otimes B}}
      {\hasty{\Gamma, \bhyp{x}{A}, \bhyp{y}{B}}{\epsilon}{c}{C}}
      {\hasty{\Gamma}{\epsilon}{\letexpr{(x, y)}{e}{c}}{C}} \\
    \prftree[r]{\rle{inl}}{\hasty{\Gamma}{\epsilon}{a}{A}}
      {\hasty{\Gamma}{\epsilon}{\linl{a}}{A + B}} \qquad
    \prftree[r]{\rle{inr}}{\hasty{\Gamma}{\epsilon}{b}{B}}
      {\hasty{\Gamma}{\epsilon}{\linr{b}}{A + B}} \qquad
    \prftree[r]{\rle{abort}}{\hasty{\Gamma}{\epsilon}{a}{\mb{0}}}
      {\hasty{\Gamma}{\epsilon}{\labort{a}}{A}} \\
    \prftree[r]{\rle{case}}
      {\hasty{\Gamma}{\epsilon}{e}{A + B}}
      {\hasty{\Gamma, \bhyp{x}{A}}{\epsilon}{a}{C}}
      {\hasty{\Gamma, \bhyp{y}{A}}{\epsilon}{b}{C}}
      {\hasty{\Gamma}{\epsilon}{\caseexpr{e}{x}{a}{y}{b}}{C}}
  \end{gather*}
  \caption{Rules for typing \isotopessa{} expressions}
  \Description{}
  \label{fig:ssa-expr-rules}
\end{figure}

We now move on to \emph{regions}, which can be built up as follows:
\begin{itemize}
  \item A branch to a label $\ell$ with pure argument $a$, typed with \brle{br}.
  
  \item Unary and binary \emph{let-bindings}, typed by \brle{let$_1$} and \brle{let$_2$}
  respectively
  
  \item Pattern matching on sum types, typed by \brle{case}. Operationally, we interpret this as
  executing the (potentially effectful) expression $e$, and then, if $e$ is a left injection
  $\iota_l\;x$, executing $r$ with its value ($x$), otherwise executing $s$.
  
  \item \emph{\ms{where}-statements} of the form ``$\where{r}{(\wbranch{\ell_i}{x_i}{t_i})_i}$",
  which consist of a collection of mutually recursive regions $\wbranch{\ell_i}{x_i}{t_i}$ and a
  \emph{terminator region} $r$ which may branch to one of $\ell_i$ or an exit label.
\end{itemize}
We previously described SSA programs as being made up out of \emph{basic blocks}, each of which is
made up of a sequence of instructions followed by a \emph{terminator} and, potentially, a list of
strictly dominated basic blocks this terminator may jump to. Basic blocks are an \emph{implicit}
feature of our grammar: we can view each as a list of unary or binary let-bindings, until we reach a
terminator, which is either an unconditional branch, a \ms{case}-statement or a
\ms{where}-statement, as follows:
\begin{gather*}
  \ms{defs}(\letstmt{x}{e}{r}) = (x, e)::r \qquad 
  \ms{defs}(\letstmt{(x, y)}{e}{r}) = ((x, y), e)::r \\
  \ms{defs}(r) = [] \quad \text{otherwise} \\
  \ms{terminator}(\letstmt{x}{e}{r}) 
  = \ms{terminator}(\letstmt{(x, y)}{e}{r}) 
  = \ms{terminator}(r) \\
  \ms{terminator}(r) = r \quad \text{otherwise} \\
  \ms{bb}(r) = (\ms{defs}(r), \ms{terminator}(r))
\end{gather*}
Note in particular that the region $r$ in a \ms{where}-statement
$\where{r}{(\wbranch{\ell_i}{x_i}{t_i})_i}$ is best interpreted as a terminator rather than an
entry-block even if of the form, e.g., $r = \letstmt{x}{e}{s}$. The entry block is instead the
\emph{implicit} basic-block made up of any let-bindings surrounding the \ms{where}-statement. The
key difference is that variables defined in $r$ are \emph{not} visible in the blocks $t_i$, whereas
variables defined in the entry-block are. While at first glance this can be unintuitive, the
additional generality greatly simplifies rewriting, and we can always ``normalize away'' this
feature, since our equational theory generally admits that, \emph{if} both sides are well-typed and
there is no shadowing,
\begin{align*}
  \where{(\letexpr{x}{e}{r})}{(\wbranch{\ell_i}{x_i}{t_i})_i}
  &= (\letexpr{x}{e}{(\where{r}{(\wbranch{\ell_i}{x_i}{t_i})_i})}) \\
  \where{(\letexpr{(x, y)}{e}{r})}{(\wbranch{\ell_i}{x_i}{t_i})_i}
  &= (\letexpr{(x, y)}{e}{(\where{r}{(\wbranch{\ell_i}{x_i}{t_i})_i})})
\end{align*}

\begin{figure}
  \begin{gather*}
    \boxed{\haslb{\Gamma}{r}{\ms{L}}} \\
    \prftree[r]{\rle{br}}{\hasty{\Gamma}{\bot}{a}{A}}{\ms{L}\;\ell = A}
      {\haslb{\Gamma}{\brb{\ell}{a}}{\ms{L}}} \qquad
    \prftree[r]{\rle{let$_1$-r}}
      {\hasty{\Gamma}{\epsilon}{a}{A}}
      {\haslb{\Gamma, \bhyp{x}{A}}{r}{\ms{L}}}
      {\haslb{\Gamma}{\letstmt{x}{a}{r}}{\ms{L}}} \\
    \prftree[r]{\rle{let$_2$-r}}
      {\hasty{\Gamma}{\epsilon}{e}{A \otimes B}}
      {\haslb{\Gamma, \bhyp{x}{A}, \bhyp{y}{B}}{r}{\ms{L}}}
      {\haslb{\Gamma}{\letstmt{(x, y)}{e}{r}}{\ms{L}}} \\
    \prftree[r]{\rle{case-r}}
      {\hasty{\Gamma}{\epsilon}{e}{A + B}}
      {\haslb{\Gamma, \bhyp{x}{A}}{r}{\ms{L}}}
      {\haslb{\Gamma, \bhyp{y}{B}}{s}{\ms{L}}}
      {\haslb{\Gamma}{\casestmt{e}{x}{r}{y}{s}}{\ms{L}}} \\
    \prftree[r]{\rle{cfg}}
      {\haslb{\Gamma}{r}{\ms{L}, (\lhyp{\ell_i}{A_i},)_i}}
      {\forall i. \haslb{\Gamma, \bhyp{x_i}{A_i}}{t_i}{\ms{L}, (\lhyp{\ell_j}{A_j},)_j}}
      {\haslb{\Gamma}{\where{r}{(\wbranch{\ell_i}{x_i}{t_i},)_i}}{\ms{L}}}
  \end{gather*}
  \caption{Rules for typing \isotopessa{} regions}
  \Description{}
  \label{fig:ssa-reg-rules}
\end{figure}

\subsection{Metatheory}

We can now begin to state the syntactic metatheory of \isotopessa{}. One of the
most important metatheorems, and a basic sanity check of our type theory, is
\emph{weakening}; essentially, if things typecheck in a context $\Delta$, and
$\Gamma$ contains all the variables of $\Delta$ (written $\Gamma \leq \Delta$,
pronounced ``$\Gamma$ \emph{weakens} $\Delta$''), then these things should
typecheck in the context $\Gamma$ as well. That is, ``$\Gamma$ typechecks more
terms than $\Delta$''.
 
Making things more formal, we introduce the rules for weakening $\Gamma \leq
\Delta$ in the first part of Figure \ref{fig:ssa-meta-rules}: \brle{wk-nil} says
that the empty context weakens itself, \brle{wk-skip} says that if $\Gamma$
weakens $\Delta$, then $\Gamma$ with an arbitrary variable added also weakens
$\Delta$, and \brle{wk-cons} says that if $\Gamma$ weakens $\Delta$ and
$\epsilon \leq \epsilon'$, then $\Gamma$ with $\thyp{x}{A}{\epsilon}$ added
weakens $\Delta, \thyp{x}{A}{\epsilon'}$. It is easy to see that weakening
defined in this manner induces a partial order on contexts.

We may go further and introduce weakening for \emph{label contexts} $\ms{L} \leq
\ms{K}$ analogously, except that we will flip the ordering such that $\ms{L}$
weakens $\ms{K}$ if it contains \emph{less}, rather than \emph{more}, labels
than $\ms{K}$. This is a bit unconventional, since in this case $\ms{L}$ types
\emph{less} regions than $\ms{K}$, but it will make our metatheory and
denotational semantics come out more clearly, since it corresponds to
label-contexts being ``on the right'' (with variable contexts ``on the left'').
In particular, in Figure \ref{fig:ssa-meta-rules}, we introduce the rules
\brle{lwk-nil}, which says that the empty label context weakens itself,
\brle{lwk-skip}, which says that if $\ms{L}$ weakens $\ms{K}$, then $\ms{L}$
weakens $\ms{K}$ with an arbitrary label added, and \brle{lwk-cons}, which says
that if $\ms{L}$ weakens $\ms{K}$, then $\ms{L}$ with a label $\lhyp{\ell}{A}$
added weakens $\ms{K}$ with the same label added. It is easy to see that this
induces a partial order on label contexts.

\begin{figure}
  \begin{gather*}
    \boxed{\Gamma \leq \Delta} \\
    \prftree[r]{\rle{wk-nil}}{}{\cdot \leq \cdot} \qquad
    \prftree[r]{\rle{wk-skip}}{\Gamma \leq \Delta}{\Gamma, \thyp{x}{A}{\epsilon} \leq \Delta} \qquad
    \prftree[r]{\rle{wk-cons}}{\Gamma \leq \Delta}{\epsilon \leq \epsilon'}
      {\Gamma, \thyp{x}{A}{\epsilon} \leq \Delta, \thyp{x}{A}{\epsilon'}} \\
    \boxed{\ms{L} \leq \ms{K}} \\
    \prftree[r]{\rle{lwk-nil}}{}{\cdot \leq \cdot} \qquad
    \prftree[r]{\rle{lwk-skip}}{\ms{L} \leq \ms{K}}{\ms{L} \leq \ms{K}, \lhyp{\ell}{A}} \qquad
    \prftree[r]{\rle{lwk-cons}}{\ms{L} \leq \ms{K}}
      {\ms{L}, \lhyp{\ell}{A} \leq \ms{K}, \lhyp{\ell}{A}} \\
    \boxed{\issubst{\gamma}{\Gamma}{\Delta}} \\
    \prftree[r]{\rle{sb-nil}}{}{\issubst{\cdot}{\Gamma}{\cdot}} \qquad
    \prftree[r]{\rle{sb-cons}}{\issubst{\gamma}{\Gamma}{\Delta}}{\hasty{\Gamma}{\epsilon}{e}{A}}
      {\issubst{\gamma, x \mapsto e}{\Gamma}{\Delta, \thyp{x}{A}{\epsilon}}}
       \\
    \boxed{\lbsubst{\Gamma}{\sigma}{\ms{L}}{\ms{K}}} \\
    \prftree[r]{\rle{ls-nil}}{}{\lbsubst{\Gamma}{\cdot}{\cdot}{\ms{K}}}
    \\
    \prftree[r]{\rle{ls-cons}}
      {\lbsubst{\Gamma}{\sigma}{\ms{L}}{\ms{K}}}{\haslb{\Gamma, \bhyp{x}{A}}{r}{\ms{K}}}
      {\lbsubst{\Gamma}{\sigma}{\ms{L}}{\ms{K}}}
      {\lbsubst{\Gamma}{\sigma, \ell(x) \mapsto r}{\ms{L}, \lhyp{\ell}{A}}{\ms{K}}}
  \end{gather*}
  \caption{Rules for typing \isotopessa{} weakening and substitution}
  \Description{}
  \label{fig:ssa-meta-rules}
\end{figure}

We can now state weakening formally as follows:
\begin{lemma}[Weakening]
  Given $\Gamma \leq \Delta$, $\epsilon \leq \epsilon'$, and $\ms{L} \leq \ms{K}$, we have that:
  \begin{enumerate}[label=(\alph*)]
    \item $\hasty{\Delta}{\epsilon}{a}{A} \implies \hasty{\Gamma}{\epsilon'}{a}{A}$
    \item $\haslb{\Delta}{r}{\ms{L}} \implies \haslb{\Gamma}{r}{\ms{K}}$
    \item $\issubst{\sigma}{\Delta}{\Xi} \implies \issubst{\sigma}{\Gamma}{\Xi}$
    \item $\lbsubst{\Delta}{\sigma}{\ms{L}}{\ms{K}} \implies \lbsubst{\Gamma}{\sigma}{\ms{L}}{\ms{K}}$
  \end{enumerate}
\end{lemma}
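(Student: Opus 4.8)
The plan is to prove all four parts by structural induction on the relevant typing or substitution derivation, carried out in the order (a), (b), (c), (d) so that each part may freely invoke the earlier ones. This layering is legitimate because expressions never mention regions: part (a) is therefore a self-contained induction on $\hasty{\Delta}{\epsilon}{a}{A}$; part (b) is an induction on $\haslb{\Delta}{r}{\ms{L}}$ that appeals to (a) for its expression premises; and parts (c), (d) are inductions on the substitution derivations that appeal to (a) and (b) respectively. Note the hypotheses are used unevenly: (a) consumes $\Gamma \leq \Delta$ and $\epsilon \leq \epsilon'$, (b) consumes $\Gamma \leq \Delta$ and $\ms{L} \leq \ms{K}$, while (c) and (d) only ever weaken the variable context from $\Delta$ to $\Gamma$ (in (d) the label contexts $\ms{L}, \ms{K}$ of the judgement itself stay fixed).

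Before the main induction I would establish four auxiliary facts, each by induction on the appropriate weakening derivation: (i) \emph{variable lookup monotonicity} --- if $\Gamma \leq \Delta$ and $\Delta\;x \leq (A, \epsilon)$, then $\Gamma\;x \leq (A, \epsilon')$ whenever $\epsilon \leq \epsilon'$; (ii) \emph{label lookup monotonicity} --- if $\ms{L} \leq \ms{K}$ and $\ms{L}\;\ell = A$, then $\ms{K}\;\ell = A$; (iii) \emph{context extension} --- $\Gamma \leq \Delta$ implies $\Gamma, \bhyp{x}{A} \leq \Delta, \bhyp{x}{A}$ (and likewise for two hypotheses), which is just \brle{wk-cons} together with reflexivity of $\leq$ on effects; and (iv) \emph{label-context extension} --- $\ms{L} \leq \ms{K}$ implies $\ms{L}, \lhyp{\ell}{A} \leq \ms{K}, \lhyp{\ell}{A}$, which is exactly \brle{lwk-cons} and hence, by iteration, extends both sides by a whole block $(\lhyp{\ell_i}{A_i})_i$ of labels.

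With these in hand the inductive cases are largely mechanical. For part (a): \brle{var} is discharged by (i); \brle{op} uses the IH together with the signature monotonicity $\mc{I}_\epsilon(A, B) \subseteq \mc{I}_{\epsilon'}(A, B)$ to re-tag the instruction at the larger effect $\epsilon'$; and every binding form (\brle{let$_1$}, \brle{let$_2$}, \brle{case}, and the remaining congruence cases) applies the IH to each subexpression, using (iii) to weaken the extended context and carrying $\epsilon \leq \epsilon'$ into the body. Part (b) is analogous, invoking (a) at reflexive effect on each expression premise: \brle{br} additionally uses (ii) to relocate the target label into $\ms{K}$, and \brle{cfg} uses the iterated form of (iv) to weaken $\ms{L}, (\lhyp{\ell_i}{A_i})_i$ to $\ms{K}, (\lhyp{\ell_i}{A_i})_i$ before applying the IH to the terminator $r$ and to each $t_i$. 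Parts (c) and (d) reduce immediately to their non-nil cases \brle{sb-cons} and \brle{ls-cons}, each of which weakens its stored expression (via (a)) or region (via (b)) and recurses.

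The hard part will be the variable case, i.e.\ sub-lemma (i): it is the one place where the shape of the weakening relation, the definition of context lookup, and effect subsumption all interact simultaneously. One must check that \brle{wk-skip} --- which inserts a hypothesis absent from $\Delta$ --- does not perturb the lookup of an already-present variable, and that \brle{wk-cons} propagates the effect bound so that $\Gamma\;x$ stays below $(A, \epsilon')$. In a named presentation this rests on the usual freshness/no-shadowing discipline; in the de Bruijn formulation of the Lean development it becomes the statement that the renaming induced by $\Gamma \leq \Delta$ commutes with index lookup, which is precisely the content that must be got exactly right. Everything else is bookkeeping driven by the extension lemmas (iii) and (iv).
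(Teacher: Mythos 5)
Your proposal is correct: the four-part layered induction, the lookup-monotonicity lemmas (i)--(ii), and the context-extension lemmas (iii)--(iv) are exactly what is needed, and your identification of the \brle{var} case (weakening vs.\ lookup vs.\ effect subsumption, with \brle{wk-skip} and shadowing as the delicate point) is the right place to put the care. The paper itself gives no prose argument here --- its ``proof'' is a list of pointers into the Lean development --- but those pointers reveal a mild difference in decomposition for the last two parts. For (a) and (b) the cited lemmas (\texttt{Term.Wf.wk}, \texttt{Region.Wf.wk}) are direct weakening lemmas, matching your inductions on the typing derivations, carried out on de Bruijn indices where the weakening $\Gamma \leq \Delta$ becomes a renaming and your sub-lemma (i) becomes the statement that this renaming commutes with index lookup, just as you anticipate. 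For (c) and (d), however, the paper does \emph{not} do a fresh induction on the substitution derivation: it derives them as corollaries of the substitution machinery (\texttt{Term.Subst.Wf.comp} and \texttt{Region.Subst.Wf.vsubst}), i.e.\ the weakening $\Gamma \leq \Delta$ is recast as a (pure, identity-on-variables) substitution and composed with $\sigma$. Your route --- a direct induction on \brle{sb-nil}/\brle{sb-cons} and \brle{ls-nil}/\brle{ls-cons}, discharging the stored expression or region premise via (a) or (b) --- is more elementary and keeps the weakening lemma self-contained, at the cost of not reusing the composition lemmas; the paper's route avoids the (admittedly trivial) extra inductions but makes (c) and (d) depend on results that, in the paper's own exposition, are only stated afterwards. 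Both are sound; the difference is purely one of dependency structure.
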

\begin{proof}
  These are formalized as:
  \begin{enumerate}[label=(\alph*)]
    \item \texttt{Term.Wf.wk} in \texttt{Typing/Term/Basic.lean}
    \item \texttt{Region.Wf.wk} in \texttt{Typing/Region/Basic.lean}
    \item Follows from \texttt{Term.Subst.Wf.comp} in \texttt{Typing/Term/Subst.lean}
    \item Follows from \texttt{Region.Subst.Wf.vsubst} in \texttt{Typing/Region/LSubst.lean}
  \end{enumerate}
\end{proof}

If we look at the proof of variable weakening, we might arrive at the alternate statement that all
the variables in $\Delta$ are also available with the same type in $\Gamma$, i.e., if
$\hasty{\Delta}{\epsilon}{x}{A} \implies \hasty{\Gamma}{\epsilon}{x}{A}$, then anything which can be
typed in $\Delta$ can be typed in $\Gamma$. More generally, we might ask if every variable
$\hasty{\Delta}{\epsilon}{x}{A}$ in $\Delta$ can be associated with a term
$\hasty{\Gamma}{\epsilon}{\gamma_x}{A}$ which is well-typed in $\Gamma$. An assignment of such
variables $\gamma : x \mapsto \gamma_x$ is called a \emph{substitution}, which we can type with the
judgement $\issubst{\gamma}{\Gamma}{\Delta}$ as per the rules given in Figure
\ref{fig:ssa-reg-rules}. In particular,
\begin{itemize}
  \item \brle{sb-nil} says that the empty substitution takes every context to the empty context.
  \item \brle{sb-cons} says that if $\gamma$ takes $\Gamma$ to $\Delta$ and
  $\hasty{\Gamma}{\epsilon}{e}{A}$, then $\gamma$ with the additional substitution $x \mapsto e$
  adjoined takes $\Gamma$ to $\Delta, \thyp{x}{A}{\epsilon}$
\end{itemize}
To \emph{use} a substitution, we simply need to perform standard capture-avoiding substitution, as
made explicit in Figure \ref{fig:ssa-subst-def}.
\begin{figure}
  \begin{gather*}
    (\gamma, x \mapsto e)(x) = e \qquad
    (\gamma, y \mapsto e)(x) = \gamma(x) \qquad
    (\cdot)(x) = x
    \\ \\
    [\gamma]x = \gamma(x) \qquad
    [\gamma](\letexpr{x}{a}{e}) = \letexpr{x}{[\gamma]a}{[\gamma]e} \qquad
    [\gamma](a, b) = ([\gamma]a, [\gamma]b) \qquad
    [\gamma]() = () \\
    [\gamma](\letexpr{(x, y)}{a}{e})
    = \letexpr{(x, y)}{[\gamma]a}{[\gamma]e} \qquad
    [\gamma](\linl{a}) = \linl{[\gamma]a} \qquad
    [\gamma](\linr{b}) = \linr{[\gamma]b} \\
    [\gamma](\caseexpr{e}{x}{a}{y}{b}) =
    \caseexpr{[\gamma]e}{x}{[\gamma]a}{y}{[\gamma]b} \\
    [\gamma](\labort{a}) = \labort{[\gamma]a} 
    \\ \\
    [\gamma](\brb{\ell}{a}) = \brb{\ell}{[\gamma]a} \qquad
    [\gamma](\letstmt{x}{a}{r}) = \letstmt{x}{[\gamma]a}{[\gamma]r} \\
    [\gamma](\letstmt{(x, y)}{e}{r}) = \letstmt{(x, y)}{[\gamma]e}{[\gamma]r} \\
    [\gamma](\casestmt{e}{x}{r}{y}{s}) 
    = \casestmt{[\gamma]e}{x}{[\gamma]r}{y}{[\gamma]s} \\
    [\gamma](\where{r}{(\wbranch{\ell_i}{x_i}{t_i},)_i}) =
    \where{[\gamma]r}{(\wbranch{\ell_i}{x_i}{[\gamma]t_i},)_i} 
    \\ \\
    [\gamma](\cdot) = \cdot \qquad
    [\gamma](\gamma', x \mapsto e) 
    = ([\gamma]\gamma', x \mapsto [\gamma]e)
    \\ \\
    [\gamma](\cdot) = \cdot \qquad
    [\gamma](\sigma, \ell(x) \mapsto r) 
    = ([\gamma]\sigma, \ell(x) \mapsto [\gamma]r)
  \end{gather*}
  \caption{ 
    Capture-avoiding substititon for \isotopessa{} terms, regions, and
    (label) substitutions; in particular, we assume bound variables and labels
    are $\alpha$-converted so as not to appear in $\gamma$/$\sigma$. 
  }
  \Description{}
  \label{fig:ssa-subst-def}
\end{figure}

This gives us everything we need to state the \emph{substitution lemma}, which
is as follows:
\begin{lemma}[Substitution]
  Given $\issubst{\gamma}{\Gamma}{\Delta}$, we have that:
  \begin{enumerate}[label=(\alph*)]
    \item $\hasty{\Delta}{\epsilon}{a}{A} \implies \hasty{\Gamma}{\epsilon}{[\gamma]a}{A}$ 
    \item $\haslb{\Delta}{r}{\ms{L}} \implies \haslb{\Gamma}{[\gamma]r}{\ms{L}}$
    \item $\issubst{\rho}{\Delta}{\Xi} \implies \issubst{[\gamma]\rho}{\Gamma}{\Xi}$
    \item $\lbsubst{\sigma}{\Gamma}{\ms{L}}{\ms{K}} \implies \lbsubst{[\gamma]\sigma}{\Delta}{\ms{L}}{\ms{K}}$
  \end{enumerate}
\end{lemma}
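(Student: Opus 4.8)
The plan is to prove the four parts in sequence, each by induction on the relevant typing derivation, with the earlier parts serving as lemmas for the later ones. For part~(a) I would induct on the derivation of $\hasty{\Delta}{\epsilon}{a}{A}$, crucially keeping the substitution $\gamma$ (and hence its source context $\Gamma$) universally quantified in the induction hypothesis, since $\gamma$ must grow as we pass under binders. Part~(b) is then an induction on $\haslb{\Delta}{r}{\ms{L}}$ that invokes~(a) wherever an expression is embedded (in \brle{br}, \brle{let$_1$-r}, \brle{let$_2$-r}, and \brle{case-r}); note that the label context $\ms{L}$ is untouched by a variable substitution and simply threads through unchanged. Finally, parts~(c) and~(d) follow by induction on the derivations of $\issubst{\rho}{\Delta}{\Xi}$ and of the label-substitution judgement, applying~(a) at each \brle{sb-cons} and~(b) at each \brle{ls-cons}.

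The only genuinely computational leaf is the variable case. There $\Delta\;x \leq (A, \epsilon)$, so the annotation $\epsilon_x$ of $x$ in $\Delta$ satisfies $\epsilon_x \leq \epsilon$; inverting $\issubst{\gamma}{\Gamma}{\Delta}$ at $x$ yields $\hasty{\Gamma}{\epsilon_x}{\gamma_x}{A}$ with effect \emph{exactly} $\epsilon_x$, and one application of effect-weakening (Weakening~(a) with $\Gamma \leq \Gamma$ and $\epsilon_x \leq \epsilon$) lifts this to $\hasty{\Gamma}{\epsilon}{\gamma_x}{A}$, which is precisely $[\gamma]x$. Because \brle{sb-cons} records the effect of each replacement exactly, purity is preserved under substitution, so the \brle{br} case --- whose argument is required to be pure --- goes through by instantiating~(a) at $\epsilon = \bot$ with no extra work.

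The main obstacle, as usual for a substitution lemma, is the handling of binders, which occur in \brle{let$_1$}, \brle{let$_2$}, \brle{case}, and most delicately in \brle{cfg}, where each stored region $t_i$ binds its own parameter $x_i : A_i$. The key device is a \emph{lifting} lemma: from $\issubst{\gamma}{\Gamma}{\Delta}$ I would build $\issubst{\uparrow_A\gamma}{\Gamma, \bhyp{x}{A}}{\Delta, \bhyp{x}{A}}$, where $\uparrow_A\gamma$ first weakens $\gamma$ along $\Gamma, \bhyp{x}{A} \leq \Gamma$ --- which is exactly Weakening~(c) --- and then adjoins the entry $x \mapsto x$, well-typed by \brle{var} since the fresh binder carries effect $\bot$. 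Since bound variables are $\alpha$-renamed to avoid $\gamma$ (Figure~\ref{fig:ssa-subst-def}), applying $\gamma$ underneath a binder agrees with applying $\uparrow_A\gamma$, so the induction hypothesis fires on each body in the extended context and the corresponding typing rule reassembles the conclusion.

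The \brle{cfg} case simply iterates this lifting once per parameter $x_i$ while reusing the induction hypothesis on the terminator $r$ and on each branch $t_i$ under the enlarged label context, and the very same lifting drives the \brle{ls-cons} step of part~(d), where each stored continuation region is typed under one extra parameter. I expect the bookkeeping around \brle{cfg} --- simultaneously managing the lifted variable substitution and the extended label context --- to be the most delicate point, though none of it is conceptually hard once the lifting lemma is in place.
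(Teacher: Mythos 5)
Your proposal is correct. Note that the paper gives no pen-and-paper proof of this lemma at all: it simply cites the Lean mechanization (\texttt{Term.Wf.subst}, \texttt{Region.Wf.vsubst}, \texttt{Term.Subst.Wf.comp}, \texttt{Region.Subst.Wf.vsubst}), so your argument supplies the standard structural induction that the formalization implements. Your key device --- extending $\issubst{\gamma}{\Gamma}{\Delta}$ to $\issubst{\uparrow_A\gamma}{\Gamma, \bhyp{x}{A}}{\Delta, \bhyp{x}{A}}$ by weakening $\gamma$ along $\Gamma, \bhyp{x}{A} \leq \Gamma$ and adjoining $x \mapsto x$ --- is precisely the paper's right-extension operation $\rupg{\gamma}$, which the paper introduces immediately after stating the lemma, so the two developments agree on the essential lemma structure. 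Your handling of the variable case (inverting \brle{sb-cons} to obtain the replacement term at the variable's declared effect, then closing the gap with effect-weakening) is also the right way to exploit the effect annotations, and it correctly explains why purity of \brle{br}-arguments survives substitution. No gaps.
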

\begin{proof}
  These are formalized as:
  \begin{enumerate}[label=(\alph*)]
    \item \texttt{Term.Wf.subst} in \texttt{Typing/Term/Subst.lean}
    \item \texttt{Region.Wf.vsubst} in \texttt{Typing/Region/VSubst.lean}
    \item \texttt{Term.Subst.Wf.comp} in \texttt{Typing/Term/Subst.lean}
    \item \texttt{Region.Subst.Wf.vsubst} in \texttt{Typing/Region/LSubst.lean}
  \end{enumerate}
\end{proof}
Note in particular that this allows us to take the \emph{composition}
$\issubst{[\gamma']\gamma}{\Gamma'}{\Delta}$ of substitutions $\issubst{\gamma'}{\Gamma'}{\Gamma}$
and $\issubst{\gamma}{\Gamma}{\Delta}$; the composition associates as expected:
$[[\gamma_1]\gamma_2]\gamma_3 = [\gamma_1]([\gamma_2]\gamma_3)$, and has identity $[\ms{id}]\gamma =
\gamma$, yielding a category of substitutions with variable contexts $\Gamma$ as objects.

We will define a ``left extension'' operation $\lupg{\cdot}_\Xi$ yielding
$\issubst{\lupg{\gamma}_{\Xi}}{\Xi, \Gamma}{\Xi, \Delta}$ which appends the identity substitution
for each variable in $\Xi$ in the obvious manner:
\begin{equation}
  \lupg{\gamma}_{\cdot} = \gamma \qquad 
  \lupg{\gamma}_{\Xi, \thyp{x}{A}{\epsilon}} = x \mapsto x, \lupg{\gamma}_{\Xi}
\end{equation}
We may similarly define a ``right extension'' operation $\rupg{\cdot}_\Xi$ yielding
$\issubst{\rupg{\gamma}_{\Xi}}{\Gamma, \Xi}{\Delta, \Xi}$ as follows:
\begin{equation}
  \rupg{\gamma}_{\cdot} = \gamma \qquad 
  \rupg{\gamma}_{\Xi, \thyp{x}{A}{\epsilon}} = \rupg{\gamma}_{\Xi}, x \mapsto x
\end{equation}
We will usually infer $\Xi$ from context; sometimes, for convenience, we may even omit the
$\rupg{\cdot}, \lupg{\cdot}$ operators entirely, due to the fact that for any $\gamma$, we have
$[\gamma]a = [\lupg{\Gamma}_\Xi]a = [\rupg{\gamma}_\Xi]a$. The identity substitution on $\Gamma$ can
therefore be written as $\rupg{\cdot}_{\Gamma}$. One other particularly important form of
substitution is that of substituting an expression $a$ for an individual variable $x$, which we will
write $[a/x] := \lupg{(x \mapsto a)}$.

Finally, just as we can generalize weakening by substituting expressions for variables via
substitution, we can generalize label weakening by substituting \emph{labels} for
\emph{(parametrized) regions} via \emph{label substitution}. In particular, a label-substitution
$\lbsubst{\sigma}{\Gamma}{\ms{L}}{\ms{K}}$ maps every label $\ell(A) \in \ms{L}$ to a region
$\haslb{\Gamma, x : A}{r}{\ms{K}}$ parametrized by $x : A$. As shown in Figure
\ref{fig:ssa-label-subst-def}, we may then define label-substitution recursively in the obvious
manner, mapping $\ms{br}\;\ell\;a$ to $[a/x]r$ as a base case. Composition of label-substitutions is
pointwise. This allows us to state \emph{label substitution} as follows:

\begin{lemma}[Label substitution]
  Given $\lbsubst{\Gamma}{\sigma}{\ms{L}}{\ms{K}}$, we have that
  \begin{enumerate}[label=(\alph*)]
    \item $\haslb{\Gamma}{r}{\ms{L}} \implies \haslb{\Gamma}{[\sigma]r}{\ms{K}}$
    \item $\lbsubst{\Gamma}{\kappa}{\ms{L}}{\ms{J}} 
      \implies \lbsubst{\Gamma}{[\sigma]\kappa}{\ms{K}}{\ms{J}}$
  \end{enumerate}
\end{lemma}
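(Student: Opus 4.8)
The plan is to prove (a) by induction on the derivation of $\haslb{\Gamma}{r}{\ms{L}}$, with the statement generalized over $\Gamma$, $\ms{L}$, $\ms{K}$, and $\sigma$ simultaneously (since all four change as we descend through binders and \ms{where}-blocks), and then to read off (b) as an essentially immediate consequence. Throughout I use that $[\sigma]$ is the identity on expressions, since expressions contain no labels or branches; thus $[\sigma]$ only rewrites branch terminators and otherwise recurses structurally.

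The base case is \brle{br}, where $r = \brb{\ell}{a}$ with $\hasty{\Gamma}{\bot}{a}{A}$ and $\lhyp{\ell}{A} \in \ms{L}$. Because $\ms{L}$ is exactly the domain of $\sigma$, the substitution contains an entry $\ell(x) \mapsto s$ with $\haslb{\Gamma, \bhyp{x}{A}}{s}{\ms{K}}$, and by definition $[\sigma]r = [a/x]s$. Since $a$ is pure, $[a/x]$ is a well-typed variable substitution $\issubst{[a/x]}{\Gamma}{\Gamma, \bhyp{x}{A}}$, so the substitution lemma, part (b), gives $\haslb{\Gamma}{[a/x]s}{\ms{K}}$ immediately. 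The binder cases \brle{let$_1$-r}, \brle{let$_2$-r}, and \brle{case-r} are routine: $[\sigma]$ leaves the bound expression fixed and recurses into the body (or the two branches), and before appealing to the induction hypothesis under a new binder I weaken $\sigma$ along $\Gamma, \bhyp{x}{A} \leq \Gamma$ using the weakening lemma, part (d), which leaves $\ms{L}$ and $\ms{K}$ untouched.

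The one case that carries real content is \brle{cfg}, where $r = \where{r_0}{(\wbranch{\ell_i}{x_i}{t_i})_i}$ and the premises type $r_0$ and each $t_i$ against the enlarged label context $\ms{L}, (\lhyp{\ell_i}{A_i})_i$. Here I must extend $\sigma$ to a substitution $\sigma'$ with domain $\ms{L}, (\lhyp{\ell_i}{A_i})_i$ and codomain $\ms{K}, (\lhyp{\ell_i}{A_i})_i$ that behaves like $\sigma$ on the old labels and fixes the freshly bound $\ell_i$. I build $\sigma'$ in two steps: first I weaken the codomain of $\sigma$ from $\ms{K}$ to $\ms{K}, (\lhyp{\ell_i}{A_i})_i$ by applying region weakening (the weakening lemma, part (b), with $\ms{K} \leq \ms{K}, (\lhyp{\ell_i}{A_i})_i$) to each region of $\sigma$; then I adjoin, via \brle{ls-cons}, the identity branches $\ell_i(x_i) \mapsto \brb{\ell_i}{x_i}$, each of which types by \brle{br} now that $\lhyp{\ell_i}{A_i}$ sits in the codomain. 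Applying the induction hypothesis to $r_0$ with $\sigma'$, and to each $t_i$ with $\sigma'$ further weakened along $\Gamma, \bhyp{x_i}{A_i} \leq \Gamma$, and reassembling with \brle{cfg}, yields $\haslb{\Gamma}{[\sigma]r}{\ms{K}}$. I expect this to be the main obstacle, and the only subtlety beyond bookkeeping is $\alpha$-freshness of the $x_i$ and $\ell_i$ with respect to $\sigma$, which is guaranteed by the capture-avoidance convention of Figure~\ref{fig:ssa-subst-def}.

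Finally, (b) follows by induction on the derivation of the transformed label substitution: the \brle{ls-nil} case is trivial, and in the \brle{ls-cons} case each constituent region is retyped by part (a) after weakening $\sigma$ into the relevant parameter context $\Gamma, \bhyp{x}{A}$ (again via the weakening lemma, part (d)), since composition of label substitutions is defined pointwise as $[\sigma]$ applied to each constituent region. Reassembling the transformed regions with \brle{ls-nil} and \brle{ls-cons} gives the required typing.
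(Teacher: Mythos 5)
Your proof is correct, but it cannot be compared step-by-step with the paper's, because the paper gives no pen-and-paper argument for this lemma at all: its ``proof'' is a citation to the Lean mechanization (\texttt{Region.Wf.lsubst} and \texttt{Region.Subst.Wf.comp} in \texttt{Typing/Region/LSubst.lean}). What you have written is the natural structural induction that the mechanization presumably carries out, and all of its ingredients are sound: the \brle{br} case correctly reduces to the variable-substitution lemma via $\issubst{[a/x]}{\Gamma}{\Gamma, \bhyp{x}{A}}$ (purity of $a$ is exactly what \brle{br} supplies, matching the pure hypothesis $\bhyp{x}{A}$ demanded by \brle{ls-cons}); the binder cases correctly use weakening part (d) to push $\sigma$ under $\Gamma, \bhyp{x}{A} \leq \Gamma$; and in the \brle{cfg} case your two-step construction of $\sigma'$ --- codomain-weakening each region of $\sigma$ along $\ms{K} \leq \ms{K}, (\lhyp{\ell_i}{A_i},)_i$ and then adjoining identity branches $\ell_i(x_i) \mapsto \brb{\ell_i}{x_i}$ --- is precisely the paper's right-extension $\rupg{\sigma}$, whose typing the paper asserts without proof. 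The only step you leave tacit is the syntactic identification $[\sigma']r_0 = [\sigma]r_0$, i.e.\ that applying the extended substitution agrees with applying $\sigma$ itself (lookup misses on the fresh $\ell_i$ return $\brb{\ell_i}{a}$, as do the adjoined identity branches); this is the paper's convention $[\sigma]r = [\rupg{\sigma}_{\ms{K}}]r$ and is worth one explicit sentence.

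One further remark: part (b) as printed in the paper has its label contexts garbled --- with $\lbsubst{\Gamma}{\sigma}{\ms{L}}{\ms{K}}$ and the pointwise definition of $[\sigma]\kappa$, the meaningful statement is $\lbsubst{\Gamma}{\kappa}{\ms{J}}{\ms{L}} \implies \lbsubst{\Gamma}{[\sigma]\kappa}{\ms{J}}{\ms{K}}$ (compare the semantic soundness theorem for label substitution, where $[\sigma]\sigma'$ composes $\sigma' : \ms{M} \rightsquigarrow \ms{L}$ with $\sigma : \ms{L} \rightsquigarrow \ms{K}$). Your induction on $\kappa$, retyping each constituent region by part (a) after weakening $\sigma$ into $\Gamma, \bhyp{x}{A}$, proves exactly this corrected version; you should state explicitly that this is the statement you are proving.
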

\begin{proof}
  These are formalized as:
  \begin{enumerate}[label=(\alph*)]
    \item \texttt{Region.Wf.lsubst} in \texttt{Typing/Region/LSubst.lean}
    \item \texttt{Region.Subst.Wf.comp} in \texttt{Typing/Region/LSubst.lean}
  \end{enumerate}
\end{proof}

We may similarly define left and right extensions $\lbsubst{\Gamma}{\lupg{\sigma}_{\ms{K}}}{\ms{L},
\ms{J}}{\ms{K}, \ms{J}}$ and $\lbsubst{\Gamma}{\rupg{\sigma}_{\ms{K}}}{\ms{L}, \ms{J}}{\ms{K},
\ms{J}}$ and for label substitutions $\lbsubst{\Gamma}{\sigma}{\ms{L}}{\ms{K}}$ in the obvious
manner:
\begin{gather}
  \rupg{\sigma}_{\cdot} = \sigma \qquad 
  \rupg{\sigma}_{\ms{K}, \ell(A)} = \rupg{\sigma}_{\ms{K}}, \ell(x) \mapsto \brb{\ell}{x} \\
  \lupg{\sigma}_{\cdot} = \sigma \qquad
  \lupg{\sigma}_{\ms{K}, \ell(A)} = \ell(x) \mapsto \brb{\ell}{x}, \lupg{\sigma}_{\ms{K}}
\end{gather}
As for variable substitutions, we will often omit $\ms{L}$ when it is clear from the context, or
even omit the $\rupg{\cdot}$ and $\lupg{\cdot}$ operators entirely, due to the fact that for any
$\sigma$, we have $[\sigma]r = [\rupg{\sigma}_{\ms{K}}]r = [\lupg{\sigma}_{\ms{K}}]r$. We also
define the shorthand $[\ell / \kappa] = [\lupg{\kappa(x) \mapsto \brb{\ell}{x}}]$.

\begin{figure}
  \begin{gather*}
    (\sigma, \ell(x) \mapsto r)(\ell, a) = [a/x]r \qquad
    (\sigma, \kappa(x) \mapsto r)(\ell, a) = \sigma(\ell, a) \qquad
    (\cdot)(\ell, a) = \brb{\ell}{a}
    \\ \\
    [\sigma](\brb{\ell}{a}) = \sigma(\ell, a) \qquad
    [\sigma](\letstmt{x}{a}{r}) = \letstmt{x}{a}{[\sigma]r} \\
    [\sigma](\letstmt{(x, y)}{e}{r}) = \letstmt{(x, y)}{e}{[\sigma]r} \\
    [\sigma](\casestmt{e}{x}{r}{y}{s}) = \casestmt{e}{x}{[\sigma]r}{y}{[\sigma]s} \\
    [\sigma](\where{r}{(\wbranch{\ell_i}{x_i}{t_i},)_i}) =
    \where{([\sigma]r)}{(\wbranch{\ell_i}{x_i}{[\sigma]t_i},)_i} 
    \\ \\
    [\sigma](\cdot) = \cdot \qquad
    [\sigma](\sigma', \ell(x) \mapsto r) 
    = ([\sigma]\sigma', \ell(x) \mapsto [\sigma]r)
  \end{gather*}
  \caption{ 
    Capture-avoiding label substititon for \isotopessa{} regions and label substitutions; in 
    particular, we assume bound variables and labels are $\alpha$-converted  so as not to appear in 
    $\sigma$. 
  } 
  \Description{}
  \label{fig:ssa-label-subst-def}
\end{figure}

\section{Equational Theory}

\label{sec:equations}

\subsection{Expressions}

We can now give an equational theory for \isotopessa{} expressions. In particular,
we will inductively define an equivalence relation
$
\tmeq{\Gamma}{\epsilon}{a}{a'}{A}
$
on terms $a, a'$ for each context $\Gamma$, effect $\epsilon$, and type $A$. For each of the rules
we will present, we assume the rule is valid if and only if \emph{both sides} of the rule are
well-typed. We also assume that variables are $\alpha$-converted as appropriate to avoid shadowing;
our formalization uses de Bruijn indices, but we stick with names in this exposition for simplicity.

The rules for this relation can be roughly split into \emph{rewriting rules}, which denote when two
particular expressions have equivalent semantics, and \emph{congruence rules}, which govern how
rewrites can be composed to enable equational reasoning. In particular, our congruence rules, given
in Figure~\ref{fig:ssa-expr-congr-rules}, consist of:
\begin{itemize}
  \item \brle{refl}, \brle{symm}, \brle{trans}, which state that
  $\tmeq{\Gamma}{\epsilon}{\cdot}{\cdot}{A}$ is reflexive, transitive, and symmetric respectively
  for each choice of $\Gamma, \epsilon, A$, and therefore an equivalence relation.
  \item \brle{let$_1$}, \brle{let$_2$}, \brle{pair}, \brle{inl}, \brle{inr}, \brle{case}, and
  \brle{abort}, which state that $\tmeq{\Gamma}{\epsilon}{\cdot}{\cdot}{A}$ is a \emph{congruence}
  with respect to the corresponding expression constructor, and, in particular, that the expression
  constructors are well-defined functions on the quotient of expressions up to $\teqv$.
\end{itemize} 
We also include the following \emph{type-directed} rules as part of our congruence relation:
\begin{itemize}
  \item \brle{initial}, which equates \emph{all} terms in a context containing the empty type
  $\mb{0}$, since we will deem any such context to be \emph{unreachable} by control flow. In
  particular, any instruction or function call returning $\mb{0}$ is assumed to diverge, similarly
  to Rust's ``never type'' ``$!$".
  \item \brle{terminal}, which equates all \emph{pure} terms of unit type $\mb{1}$. Note that
  \emph{impure} terms may be disequal, since while their result values are the same, their side
  effects may differ!
\end{itemize}

\begin{figure}
  \begin{gather*}
    \prftree[r]{\rle{refl}}{\hasty{\Gamma}{\epsilon}{a}{A}}{\tmeq{\Gamma}{\epsilon}{a}{a}{A}} \qquad
    \prftree[r]{\rle{trans}}
      {\tmeq{\Gamma}{\epsilon}{a}{b}{A}}
      {\tmeq{\Gamma}{\epsilon}{b}{c}{A}} 
      {\tmeq{\Gamma}{\epsilon}{a}{c}{A}} \qquad
    \prftree[r]{\rle{symm}}
      {\tmeq{\Gamma}{\epsilon}{a}{b}{A}}
      {\tmeq{\Gamma}{\epsilon}{b}{a}{A}}
    \\
    \prftree[r]{\rle{let$_1$}}
      {\tmeq{\Gamma}{\epsilon}{a}{a'}{A}}
      {\tmeq{\Gamma, \bhyp{x}{A}}{\epsilon}{b}{b'}{B}}
      {\tmeq{\Gamma}{\epsilon}{\letexpr{x}{a}{b}}{\letexpr{x}{a'}{b'}}{B}} 
    \\
    \prftree[r]{\rle{pair}}
      {\tmeq{\Gamma}{\epsilon}{a}{a'}{A}}
      {\tmeq{\Gamma}{\epsilon}{b}{b'}{B}}
      {\tmeq{\Gamma}{\epsilon}{(a, b)}{(a', b)}{A \otimes B}}
    \\
    \prftree[r]{\rle{let$_2$}}
      {\tmeq{\Gamma}{\epsilon}{e}{e'}{A \otimes B}}
      {\tmeq{\Gamma, \bhyp{x}{A}, \bhyp{y}{B}}{\epsilon}{c}{c'}{C}}
      {\tmeq{\Gamma}{\epsilon}{\letexpr{(x, y)}{e}{c}}{\letexpr{(x, y)}{e'}{c'}}{C}}
    \\
    \prftree[r]{\rle{inl}}
      {\tmeq{\Gamma}{\epsilon}{a}{a'}{A}}
      {\tmeq{\Gamma}{\epsilon}{\linl{a}}{\linl{a'}}{A + B}} \qquad
    \prftree[r]{\rle{inr}}
      {\tmeq{\Gamma}{\epsilon}{b}{b'}{B}}
      {\tmeq{\Gamma}{\epsilon}{\linr{b}}{\linr{b'}}{A + B}} \qquad
    \\
    \prftree[r]{\rle{case}}
      {\tmeq{\Gamma}{\epsilon}{e}{e'}{A + B}}
      {\tmeq{\Gamma, \bhyp{x}{A}}{\epsilon}{a}{a'}{C}}
      {\tmeq{\Gamma, \bhyp{y}{B}}{\epsilon}{b}{b'}{C}}
      {\tmeq{\Gamma}{\epsilon}{\caseexpr{e}{x}{a}{y}{b}}{\caseexpr{e'}{x}{a'}{y}{b'}}{C}}
    \\
    \prftree[r]{\rle{abort}}
      {\tmeq{\Gamma}{\epsilon}{a}{a'}{\mb{0}}}
      {\tmeq{\Gamma}{\epsilon}{\labort{a}}{\labort{a'}}{A}}
    \\
    \prftree[r]{\rle{initial}} 
      {\hasty{\Gamma}{\epsilon}{a}{A}}
      {\hasty{\Gamma}{\epsilon}{a'}{A}}
      {\exists x, \Gamma\;x = (\mb{0}, \bot)}
      {\tmeq{\Gamma}{\epsilon}{a}{a'}{A}}
      \qquad
    \prftree[r]{\rle{terminal}}
      {\hasty{\Gamma}{\bot}{a}{\mb{1}}}
      {\hasty{\Gamma}{\bot}{a'}{\mb{1}}}
      {\tmeq{\Gamma}{\epsilon}{a}{a'}{\mb{1}}}
  \end{gather*}
  \caption{Congruence rules for \isotopessa{} expressions}
  \Description{}
  \label{fig:ssa-expr-congr-rules}
\end{figure}

We may group the rest of our rules according to the relevant constructor, i.e. $\ms{let}$ (unary and
binary) and $\ms{case}$. In particular, for unary $\ms{let}$, we have the following rules,
summarized in Figure~\ref{fig:ssa-unary-let-expr}:
\begin{itemize}
  \item \brle{let$_1$-$\beta$}, which allows us to substitute the bound variable in $x$ the
  let-statement $\letexpr{x}{a}{b}$ with its definition $a$, yielding $[a/x]b$. Note that we require
  $\hasty{\Gamma}{\bot}{a}{A}$; i.e., $a$ must be \emph{pure}.
  \brle{let$_1$-$\beta$} can be combined with \brle{initial} to derive the more convenient
  rule \brle{initial-expr}:
  \begin{gather*}
    \prftree[r]{\rle{initial-expr}} 
      {\hasty{\Gamma}{\epsilon}{a}{A}}
      {\hasty{\Gamma}{\epsilon}{a'}{A}}
      {\exists e, \hasty{\Gamma}{\bot}{e}{\mb{0}}}
      {\tmeq{\Gamma}{\epsilon}{a}{a'}{A}}
  \end{gather*}
  This states that any context in which a \emph{pure} term of empty type can be constructed equates
  all terms. Note in particular that this means divergent instructions must have impure effect!

  \item \brle{let$_1$-$\eta$}, which is the standard $\eta$-rule for \ms{let}. This is included as a
  separate rule since, while it follows trivially from $\beta$ for pure $a$, we also want to
  consider \emph{impure} expressions with some effect $e \neq \bot$.
  
  \item Rules \brle{let$_1$-op}, \brle{let$_1$-let$_1$}, \brle{let$_1$-let$_2$},
  \brle{let$_1$-abort}, and \brle{let$_1$-case} which allow us to ``pull'' a let-statement out of
  any of the other expression constructors; operationally, this is saying that the bound expression
  we pull out is evaluated before the rest of the \ms{let}-binding.
  
  For example, \brle{let$_1$-case} says that, if both
  $\letexpr{z}{\caseexpr{e}{x}{a}{y}{b}}{d}$ and
  $\caseexpr{e}{x}{\letexpr{z}{a}{d}}{y}{\letexpr{z}{b}{d}}{y}$,
  are well typed, then both must have the same behaviour:
  \begin{enumerate}
    \item Compute $e$
    \item If $e = \linl{e_l}$, compute $[e_l/x]a$, else, if $e = \linr{e_r}$, compute $[e_r/y]b$;
          store this value as $z$
    \item Compute $d$ given our value for $z$
  \end{enumerate}
  Note in particular that, since both sides are well-typed, $d$ cannot depend on either $x$ or $y$.
\end{itemize}

\begin{figure}
  \begin{gather*}
    \prftree[r]{\rle{let$_1$-$\beta$}}
      {\hasty{\Gamma}{\bot}{a}{A}}
      {\hasty{\Gamma, \bhyp{x}{A}}{\epsilon}{b}{B}}
      {\tmeq{\Gamma}{\epsilon}{\letexpr{x}{a}{b}}{[b/x]a}{B}}
    \qquad
    \prftree[r]{\rle{let$_1$-$\eta$}}
      {\hasty{\Gamma}{\epsilon}{a}{A}}
      {\tmeq{\Gamma}{\epsilon}{\letexpr{x}{a}{x}}{a}{A}} 
    \\
    \prftree[r]{\rle{let$_1$-op}}
      {\isop{f}{A}{B}{\epsilon}}
      {\hasty{\Gamma}{\epsilon}{a}{A}}
      {\hasty{\Gamma, \bhyp{y}{B}}{\epsilon}{c}{C}}
      {\tmeq{\Gamma}{\epsilon}{\letexpr{y}{f\;a}{c}}
      {\letexpr{x}{a}{\letexpr{y}{f\;x}{c}}}{C}}
    \\
    \prftree[r]{\rle{let$_1$-let$_1$}}
      {\hasty{\Gamma}{\epsilon}{a}{A}}
      {\hasty{\Gamma, \bhyp{x}{A}}{\epsilon}{b}{B}}
      {\hasty{\Gamma, \bhyp{y}{B}}{\epsilon}{c}{C}}
      {\tmeq{\Gamma}{\epsilon}
        {\letexpr{y}{(\letexpr{x}{a}{b})}{c}}
        {\letexpr{x}{a}{\letexpr{y}{b}{c}}}{C}}
    \\ 
    \prftree[r]{\rle{let$_1$-let$_2$}}
      {\hasty{\Gamma}{\epsilon}{e}{A \times B}}
      {\hasty{\Gamma, \bhyp{x}{A}, \bhyp{y}{C}}{\epsilon}{c}{C}}
      {\hasty{\Gamma, \bhyp{z}{C}}{\epsilon}{d}{D}}
      {\tmeq{\Gamma}{\epsilon}
        {\letexpr{z}{(\letexpr{(x, y)}{e}{c})}{d}}
        {\letexpr{(x, y)}{e}{\letexpr{z}{c}{d}}}{D}}
    \\
    \prftree[r]{\rle{let$_1$-abort}}
      {\hasty{\Gamma}{\epsilon}{a}{\mb{0}}}
      {\hasty{\Gamma, \bhyp{y}{A}}{\epsilon}{b}{B}}
      {\tmeq{\Gamma}{\epsilon}
        {\letexpr{y}{\labort{b}}{b}}
        {\letexpr{x}{a}{\letexpr{y}{\labort{x}}{b}}}{B}}
    \\
    \prftree[r]{\rle{let$_1$-case}}
      {\hasty{\Gamma}{\epsilon}{e}{A + B}}
      {\hasty{\Gamma, \bhyp{x}{A}}{\epsilon}{a}{C}}
      {\hasty{\Gamma, \bhyp{y}{B}}{\epsilon}{b}{C}}
      {\hasty{\Gamma, \bhyp{z}{C}}{\epsilon}{d}{D}}
      { 
        \prfStackPremises
        {\Gamma \vdash_\epsilon \letexpr{z}{(\caseexpr{e}{x}{a}{y}{b})}{d}}
        {\hspace{6em} \teqv \caseexpr{e}{x}{\letexpr{z}{a}{d}}{y}{\letexpr{z}{b}{d}} : D}
      }
  \end{gather*}
  \Description{}
  \caption{Rewriting rules for \isotopessa{} unary \ms{let} expressions}
  \label{fig:ssa-unary-let-expr}
\end{figure}

Handling the other type constructors is a little simpler: by providing a ``binding'' rule, we
generally only need to specify how to interact with $\ms{let}_1$, as well as an $\eta$ and $\beta$
rule; interactions with the other constructors can then be derived. For example, consider the rules
for $\ms{let}_2$ given in \ref{fig:ssa-let2-case-expr}; we have:
\begin{itemize}
  \item \brle{let$_2$-$\eta$}, which is the standard $\eta$-rule for binary \ms{let}-bindings
  \item \brle{let$_2$-pair}, which acts like a slightly generalized $\beta$-rule, since we can
  derive $\beta$ reduction as follows: given pure $\hasty{\Gamma}{\bot}{a}{A}$ and
  $\hasty{\Gamma}{\bot}{b}{B}$, we have
  $$
  (\letexpr{(x, y)}{(a, b)}{c}) 
  \teqv (\letexpr{x}{a}{\letexpr{y}{b}{c}})
  \teqv ([a/x](\letexpr{y}{b}{c}))
  \teqv ([a/x][b/y]c)
  $$
  We state the rule in a more general form to allow for impure $a$ and $b$, as well as to simplify
  certain proofs.
  \item \brle{let$_2$-bind}, which allows us to ``pull'' out the bound value of a binary
  \ms{let}-expression into its own unary \ms{let}-expression; operationally, this just says that
  we execute the bound value before executing the binding itself.
\end{itemize}
This is enough to allow us to define our interactions with the other expression constructors: for
example, to show that we can lift an operation $f$ out of a binary $\ms{let}$-binding, rather than
adding a separate rule, we can instead derive (types omitted for simplicity) it from
\brle{let$_2$-bind} and \brle{let$_1$-op} as follows:
\begin{align*}
  (\letexpr{(x, y)}{f\;a}{b})
  &\teqv (\letexpr{z_f}{f\;a}{\letexpr{(x, y)}{z}{b}}) \\
  &\teqv (\letexpr{z_a}{a}{\letexpr{z_f}{f\;z_a}{\letexpr{(x, y)}{z}{b}}}) \\
  &\teqv (\letexpr{z_a}{a}{\letexpr{(x, y)}{f\;z_a}{b}})
\end{align*}

\begin{figure}
  \begin{gather*}
    \prftree[r]{\rle{let$_2$-pair}}
      {\hasty{\Gamma}{\epsilon}{a}{A}}
      {\hasty{\Gamma}{\epsilon}{b}{B}}
      {\hasty{\Gamma, \bhyp{x}{A}, \bhyp{y}{B}}{\epsilon}{c}{C}}
      {\tmeq{\Gamma}{\epsilon}{\letexpr{(x, y)}{(a, b)}{c}}{\letexpr{x}{a}{\letexpr{y}{b}{c}}}{C}}
    \\
    \prftree[r]{\rle{let$_2$-$\eta$}}
      {\hasty{\Gamma}{\epsilon}{e}{A \otimes B}}
      {\tmeq{\Gamma}{\epsilon}{\letexpr{(x, y)}{e}{(x, y)}}{e}{A \otimes B}} 
    \\
    \prftree[r]{\rle{let$_2$-bind}}
      {\hasty{\Gamma}{\epsilon}{e}{A \otimes B}}
      {\hasty{\Gamma, \bhyp{x}{A}, \bhyp{y}{B}}{\epsilon}{c}{C}}
      {\tmeq{\Gamma}{\epsilon}
        {\letexpr{(x, y)}{e}{c}}
        {\letexpr{z}{e}{\letexpr{(x, y)}{z}{c}}}{C}}
    \\
    \prftree[r]{\rle{case-inl}}
      {\hasty{\Gamma}{\epsilon}{a}{A}}
      {\hasty{\Gamma, \bhyp{x}{A}}{\epsilon}{c}{C}}
      {\hasty{\Gamma, \bhyp{y}{B}}{\epsilon}{d}{C}}
      {\tmeq{\Gamma}{\epsilon}{\caseexpr{\linl{a}}{x}{c}{y}{d}}{\letexpr{x}{a}{c}}{C}}
    \\
    \prftree[r]{\rle{case-inr}}
      {\hasty{\Gamma}{\epsilon}{b}{B}}
      {\hasty{\Gamma, \bhyp{x}{A}}{\epsilon}{c}{C}}
      {\hasty{\Gamma, \bhyp{y}{B}}{\epsilon}{d}{C}}
      {\tmeq{\Gamma}{\epsilon}{\caseexpr{\linr{b}}{x}{c}{y}{d}}{\letexpr{y}{b}{d}}{C}}
    \\
    \prftree[r]{\rle{case-$\eta$}}
      {\hasty{\Gamma}{\epsilon}{e}{A + B}}
      {\tmeq{\Gamma}{\epsilon}{\caseexpr{e}{x}{\linl{x}}{y}{\linr{y}}}{e}{A + B}}
    \\
    \prftree[r]{\rle{case-bind}}
      {\hasty{\Gamma}{\epsilon}{e}{A + B}}
      {\hasty{\Gamma, \bhyp{x}{A}}{\epsilon}{c}{C}}
      {\hasty{\Gamma, \bhyp{y}{B}}{\epsilon}{d}{C}}
      {\tmeq{\Gamma}{\epsilon}{\caseexpr{e}{x}{c}{y}{d}}
      {\letexpr{z}{e}{\caseexpr{z}{x}{c}{y}{d}}}{C}}
  \end{gather*}
  \Description{}
  \caption{Rewriting rules for \isotopessa{} binary \ms{let} and \ms{case} expressions}
  \label{fig:ssa-let2-case-expr}
\end{figure}

Similarly, it is enough to give $\eta$, $\beta$, and binding rules for \brle{case} expressions. 
In particular, we have that
\begin{itemize}
  \item \brle{case-inl} and \brle{case-inr} serve as $\beta$-reduction rules, telling us that
  \ms{case}-expressions given an injection as an argument have the expected operational behaviour.
  Note that we reduce to a \ms{let}-expression rather than perform a substitution to allow for
  impure discriminants.
  \item \brle{case-$\eta$} is the standard $\eta$-rule for \ms{case}-expressions.
  \item \brle{case-bind} allows us to ``pull'' out the bound value of the discriminant into
  it's own \ms{let}-expression; again, operationally, this just says that we need to evaluate
  the discriminant before executing the \ms{case}-expression.
\end{itemize}
It's interesting that this is enough, along with the \brle{let-case} rule and friends, to derive the
distributivity properties we would expect well-behaved \ms{case}-expressions to have. For example,
we have that
\begin{align*}
  f(\caseexpr{e}{x}{a}{y}{b}) 
  &\teqv (\letexpr{z}{\caseexpr{e}{x}{a}{y}{b}}{f\;z}) \\
  &\teqv \caseexpr{e}{x}{\letexpr{z}{a}{f\;z}}{y}{\letexpr{z}{b}{f\;z}} \\
  &\teqv \caseexpr{e}{x}{f\;a}{y}{f\;b}
\end{align*}
and likewise for more complicated distributivity properties involving, e.g., \ms{let}-bindings.

The case for other the other constructors is even more convenient: no additional rules are required
at all to handle operations, pairs, and injections. For example, we can derive the expected
bind-rule for operations as follows:
\begin{align*}
  f\;a \teqv (\letexpr{y}{f\;a}{y})
  \teqv (\letexpr{x}{a}{\letexpr{y}{f\;x}{y}})
  \teqv (\letexpr{x}{a}{f\;x})
\end{align*}

This completes the equational theory for \isotopessa{} terms; in Section \ref{ssec:completeness}, we
will show that this is enough to state a relatively powerful completeness theorem.

\subsection{Regions}

We now come to the equational theory for regions, which is similar to that for terms, except that we
also need to support control-flow graphs. As before, we will split our rules into a set of
\emph{congruence rules} and, for each region constructor, \emph{rewriting rules} based on that
constructor's semantics. Our congruence rules, given in Figure~\ref{fig:ssa-reg-congr-rules}, are
quite standard; we have:
\begin{itemize}
  \item As for terms, \brle{refl}, \brle{trans}, and \brle{symm} state that
  $\lbeq{\Gamma}{\cdot}{\cdot}{\ms{L}}$ is an equivalence relation for all $\Gamma$, $\ms{L}$.
  \item Similarly, \brle{let$_1$}, \brle{let$_2$}, \brle{case}, and \brle{cfg} state that
  $\lbeq{\Gamma}{\cdot}{\cdot}{\ms{L}}$ is a congruence over the respective region constructors;
  \emph{as well as} the equivalence relation on terms $\tmeq{\Gamma}{\epsilon}{\cdot}{\cdot}{A}$.
  \item \brle{initial} states that any context containing the empty type $\mb{0}$ equates all
  regions, by a similar reasoning to the rules for terms. Note that we do not require an analogue to
  the \brle{terminal} rule (for example, for regions targeting $\ms{L} = \lhyp{\ell}{\mb{1}}$),
  since it will follow from the version for terms; this is good, since the concept of a ``pure''
  region has not yet been defined.
\end{itemize}
Our rewriting rules for unary \ms{let}-statements, given in Figure~\ref{fig:ssa-reg-unary-let}, are
analogous to those for unary \ms{let}-expressions:
\begin{itemize}
  \item \brle{let$_1$-$\beta$} allows us to perform $\beta$-reduction of \emph{pure} expressions
  into regions; unlike for terms, we do not need an $\eta$-rule
  \item Exactly like for \ms{let}-expressions, \brle{let$_1$-op}, \brle{let$_1$-let$_1$},
  \brle{let$_1$-let$_2$}, \brle{let$_1$-abort}, and \brle{let$_1$-case} allow us to pull out nested
  subexpressions of the bound value of a \ms{let}-statement into their own unary \ms{let}-statement
\end{itemize}
Similarly to expressions, binary \ms{let}-statements and \ms{case}-statements need only the obvious
$\beta$ rule and binding rule, with all the interactions with other constructors derivable; these
rules are given in Figure~\ref{fig:ssa-reg-let2-case-expr}. Note in particular that $\eta$-rules are
not necessary, as these are derivable from binding and the $\eta$-rules for expressions.

\begin{figure}
  \begin{gather*}
    \prftree[r]{\rle{refl}}{\haslb{\Gamma}{r}{\ms{L}}}{\lbeq{\Gamma}{r}{r}{\ms{L}}} \qquad
    \prftree[r]{\rle{trans}}{\lbeq{\Gamma}{r}{s}{\ms{L}}}{\lbeq{\Gamma}{s}{t}{\ms{L}}}
      {\lbeq{\Gamma}{r}{t}{\ms{L}}} \qquad
    \prftree[r]{\rle{symm}}{\lbeq{\Gamma}{r}{s}{\ms{L}}}{\lbeq{\Gamma}{s}{r}{\ms{L}}}
    \\
    \prftree[r]{\rle{let}$_1$}
      {\tmeq{\Gamma}{\epsilon}{a}{a'}{A}}
      {\lbeq{\Gamma, \bhyp{x}{A}}{r}{r'}{\ms{L}}}
      {\lbeq{\Gamma}{\letstmt{x}{a}{r}}{\letstmt{x}{a'}{r'}}{\ms{L}}}
    \qquad
    \prftree[r]{\rle{let}$_2$}
      {\tmeq{\Gamma}{\epsilon}{e}{e'}{A \otimes B}}
      {\lbeq{\Gamma, \bhyp{x}{A}, \bhyp{y}{B}}{r}{r'}{\ms{L}}}
      {\lbeq{\Gamma}{\letstmt{(x, y)}{e}{r}}{\letstmt{(x, y)}{e'}{r'}}{\ms{L}}}
    \\
    \prftree[r]{\rle{case}}
      {\tmeq{\Gamma}{\epsilon}{e}{e'}{A + B}}
      {\lbeq{\Gamma, \bhyp{x}{A}}{r}{r'}{\ms{L}}}
      {\lbeq{\Gamma, \bhyp{y}{B}}{s}{s'}{\ms{L}}}
      {\lbeq{\Gamma}{\caseexpr{e}{x}{r}{y}{s}}{\caseexpr{e'}{x}{r'}{y}{s'}}{\ms{L}}}
    \\
    \prftree[r]{\rle{cfg}}
      {\lbeq{\Gamma}{r}{r'}{\ms{L}, (\lhyp{\ell_i}{A_i},)_i}}
      {\forall i. \lbeq{\Gamma, \bhyp{x_i}{A_i}}{t_i}{t_i'}{\ms{L}, (\lhyp{\ell_j}{A_j},)_j}}
      {\lbeq{\Gamma}
        {\where{r}{(\wbranch{\ell_i}{x_i: A_i}{t_i},)_i}}
        {\where{r'}{(\wbranch{\ell_i}{x_i: A_i}{t_i'},)_i}}
        {\ms{L}}
      }
    \\
    \prftree[r]{\rle{initial}}
      {\haslb{\Gamma}{r}{\ms{L}}}
      {\haslb{\Gamma}{s}{\ms{L}}}
      {\exists x, \Gamma\;x = \mb{0}}
      {\lbeq{\Gamma}{r}{s}{\ms{L}}}
  \end{gather*}
  \Description{}
  \caption{Congruence rules for \isotopessa{} regions}
  \label{fig:ssa-reg-congr-rules}
\end{figure}

\begin{figure}
  \begin{gather*}
    \prftree[r]{\rle{let$_1$-$\beta$}}
      {\hasty{\Gamma}{\bot}{a}{A}}
      {\haslb{\Gamma, \bhyp{x}{A}}{r}{\ms{L}}}
      {\lbeq{\Gamma}{\letstmt{x}{a}{r}}{[r/x]a}{\ms{L}}}
    \\
      \prftree[r]{\rle{let$_1$-op}}
      {\isop{f}{A}{B}{\epsilon}}
      {\hasty{\Gamma}{\epsilon}{a}{A}}
      {\haslb{\Gamma, \bhyp{y}{B}}{r}{\ms{L}}}
      {\lbeq{\Gamma}{\letstmt{y}{f\;a}{r}}{\letstmt{x}{a}{\letstmt{y}{f\;x}{r}}}{\ms{L}}}
    \\
    \prftree[r]{\rle{let$_{1}$-let$_1$}}
      {\hasty{\Gamma}{\epsilon}{a}{A}}
      {\hasty{\Gamma, \bhyp{x}{A}}{\epsilon}{b}{B}}
      {\haslb{\Gamma, \bhyp{y}{B}}{r}{\ms{L}}}
      {\lbeq{\Gamma}{\letstmt{y}{(\letexpr{x}{a}{b})}{r}}{\letstmt{x}{a}{\letstmt{y}{b}{r}}}{\ms{L}}}
    \\
    \prftree[r]{\rle{let$_{1}$-let$_2$}}
      {\hasty{\Gamma}{\epsilon}{e}{A \otimes B}}
      {\hasty{\Gamma, \bhyp{x}{A}, \bhyp{y}{B}}{\epsilon}{c}{C}}
      {\haslb{\Gamma, \bhyp{z}{C}}{r}{\ms{L}}}
      {\lbeq{\Gamma}
        {\letstmt{z}{(\letexpr{(x, y)}{e}{c})}{r}}
        {\letstmt{(x, y)}{e}{\letstmt{z}{c}{r}}}
        {\ms{L}}}
    \\
    \\
    \prftree[r]{\rle{let$_1$-case}}
      {\hasty{\Gamma}{\epsilon}{e}{A + B}}
      {\hasty{\Gamma, \bhyp{x}{A}}{\epsilon}{a}{C}}
      {\hasty{\Gamma, \bhyp{y}{B}}{\epsilon}{b}{C}}
      {\haslb{\Gamma, \bhyp{z}{C}}{r}{\ms{L}}}
      { 
        \prfStackPremises
        {\Gamma \vdash \letstmt{z}{(\caseexpr{e}{x}{a}{y}{b})}{r}}
        {\hspace{6em} \teqv \casestmt{e}{x}{\letstmt{z}{a}{r}}{y}{\letstmt{z}{b}{r}} \rhd \ms{L}}
      }
    \\
    \prftree[r]{\rle{let$_1$-abort}}
      {\hasty{\Gamma}{\epsilon}{a}{\mb{0}}}
      {\haslb{\Gamma, \bhyp{y}{A}}{r}{\ms{L}}}
      {\lbeq{\Gamma}{\letstmt{y}{\labort{a}}{r}}
        {\letstmt{x}{a}{\letstmt{y}{\labort{x}}{r}}}{\ms{L}}}
  \end{gather*}
  \Description{}
  \caption{Rewriting rules for \isotopessa{} unary \ms{let}-statements}
  \label{fig:ssa-reg-unary-let}
\end{figure}

\begin{figure}
  \begin{gather*}
    \prftree[r]{\rle{let$_2$-pair}}
      {\hasty{\Gamma}{\epsilon}{a}{A}}
      {\hasty{\Gamma}{\epsilon}{b}{B}}
      {\haslb{\Gamma, \bhyp{x}{A}, \bhyp{y}{B}}{r}{\ms{L}}}
      {\lbeq{\Gamma}{\letstmt{(x, y)}{(a, b)}{r}}{\letstmt{x}{a}{\letstmt{y}{b}{r}}}{\ms{L}}}
    \\
    \prftree[r]{\rle{let$_2$-bind}}
      {\hasty{\Gamma}{\epsilon}{e}{A \otimes B}}
      {\haslb{\Gamma, \bhyp{x}{A}, \bhyp{y}{B}}{r}{\ms{L}}}
      {\lbeq{\Gamma}{\letstmt{(x, y)}{e}{r}}{\letstmt{z}{e}{\letstmt{(x, y)}{z}{r}}}{\ms{L}}}
    \\
    \prftree[r]{\rle{case-inl}}
      {\hasty{\Gamma}{\epsilon}{a}{A}}
      {\haslb{\Gamma, \bhyp{x}{A}}{r}{\ms{L}}}
      {\haslb{\Gamma, \bhyp{y}{B}}{s}{\ms{L}}}
      {\lbeq{\Gamma}{\caseexpr{\linl{a}}{x}{r}{y}{s}}{\letstmt{x}{a}{r}}{\ms{L}}}
    \\
    \prftree[r]{\rle{case-inr}}
      {\hasty{\Gamma}{\epsilon}{b}{B}}
      {\haslb{\Gamma, \bhyp{x}{A}}{r}{\ms{L}}}
      {\haslb{\Gamma, \bhyp{y}{B}}{s}{\ms{L}}}
      {\lbeq{\Gamma}{\caseexpr{\linr{b}}{x}{r}{y}{s}}{\letstmt{y}{b}{s}}{\ms{L}}}
    \\
    \prftree[r]{\rle{case-bind}}
    {\hasty{\Gamma}{\epsilon}{e}{A + B}}
    {\haslb{\Gamma, \bhyp{x}{A}}{r}{\ms{L}}}
    {\haslb{\Gamma, \bhyp{y}{B}}{s}{\ms{L}}}
    {\lbeq{\Gamma}{\caseexpr{e}{x}{r}{y}{s}}{\letstmt{z}{e}{\caseexpr{z}{x}{r}{y}{s}}}{\ms{L}}}
  \end{gather*}
  \Description{}
  \caption{Rewriting rules for \isotopessa{} binary \ms{let}-statements and \ms{case}-statements}
  \label{fig:ssa-reg-let2-case-expr}
\end{figure}

Dealing with \ms{where}-blocks, on the other hand, is a little bit more complicated, as shown by the
number of rules in Figure~\ref{fig:ssa-where-rules}. One difficulty is that, unlike the other region
constructors, we will need an $\eta$-rule as well as \emph{two} $\beta$-rules. The latter are simple
enough to state:
\begin{itemize}
  \item For $\ell_k$ defined in a \ms{where}-block, \brle{cfg-$\beta_1$} says that we can replace a
  branch to $\ell_k$ with argument $a$ with a \ms{let}-statement binding $a$ to the corresponding
  body $t_k$'s argument $x_k$.
  \item For $\kappa$ \emph{not} defined in a \ms{where}-block, \brle{cfg-$\beta_2$} says that
  a branch to $\kappa$ within the \ms{where}-block has the same semantics as if the \ms{where}-block
  was not there; hence, it can be removed.
\end{itemize}
To state our $\eta$-rule, however, we will need to introduce some more machinery. Given a mapping
from a set of labels $\ell_i$ to associated regions $t_i$, we may define the \emph{control-flow
graph substitution} $\cfgsubst{(\wbranch{\ell_i}{x_i}{t_i},)_i}$ pointwise as follows:
\begin{equation}
  \cfgsubst{(\wbranch{\ell_i}{x_i}{t_i},)_i}\;\kappa\;a
  := (\where{\brb{\kappa}{a}}{(\wbranch{\ell_i}{x_i}{t_i},)_i})
\end{equation}
In general, we may derive, for any label-context $\ms{L}$ (assuming $\cfgsubst{\cdot}$ acts uniformly
on the labels $\kappa$ in $\ms{L}$ as described above), the following rule:
\begin{equation}
  \prftree[r]{\rle{cfgs}}
    {\forall i. \haslb{\Gamma, \bhyp{x_i}{A_i}}{t_i}{\ms{L}, (\lhyp{\ell_j}{A_j},)_j}}
    {\lbsubst{\Gamma}
      {\cfgsubst{(\wbranch{\ell_i}{x_i}{t_i},)_i}}{\ms{L}, (\lhyp{\ell_j}{A_j},)_j}{\ms{L}}}
\end{equation}
Our $\eta$-rule, \brle{cfg-$\eta$}, says that any \ms{where}-block of the form
$\where{r}{(\wbranch{\ell_i}{x_i}{t_i},)_i}$ has the same semantics as the label-substitution
$[\cfgsubst{(\wbranch{\ell_i}{x_i}{t_i},)_i}]r$, which in effect propagates the where-block to the
branches of $r$, if any. While we called this rule \brle{cfg-$\eta$}, it also functions similarly
to a binding rule in that it allows us to derive many of the expected commutativity properties of
\ms{where}; for example, we have that
\begin{align*}
  \where{\letexpr{y}{a}{r}}{(\wbranch{\ell_i}{x_i}{\brb{\ell_j}{a_j}},)_i}
  &\teqv [\cfgsubst{(\wbranch{\ell_i}{x_i}{\brb{\ell_j}{a_j}},)_i}](\letexpr{y}{a}{r}) \\
  &\teqv \letexpr{y}{a}{[\cfgsubst{(\wbranch{\ell_i}{x_i}{\brb{\ell_j}{a_j}},)_i}]r} \\
  &\teqv \letexpr{y}{a}{\where{r}{(\wbranch{\ell_i}{x_i}{\brb{\ell_j}{a_j}},)_i}}
\end{align*}
One particularly important application of the $\eta$-rule for control-flow graphs is in validating
the rewrite
\begin{equation}
  \prftree[r]{\rle{case2cfg}}
    {\hasty{\Gamma}{\epsilon}{a}{A + B}}
    {\haslb{\Gamma, \bhyp{x}{A}}{s}{\ms{L}}}
    {\haslb{\Gamma, \bhyp{y}{B}}{t}{\ms{L}}}
    {
      \prfStackPremises{\Gamma \vdash \casestmt{a}{x}{s}{y}{t} 
        \teqv (\casestmt{a}{x}{\brb{\ell}{x}}{y}{\brb{\ell'}{y}})}{
        \where{\hspace{16em}}
          {\wbranch{\ell}{x}{s}, \wbranch{\ell'}{y}{t}} \rhd \ms{L}}
    }
\end{equation}
However, on it's own, it's not quite enough to get rid of nested \ms{where}-blocks. To enable this,
it turns out that we only need to add as an axiom the ability to get rid of a single, trivially
nested \ms{where}-block; this is given as the rule \brle{codiag}.

To be able to soundly perform equational rewriting, we will need the \emph{uniformity} property,
which is described by the rule \brle{uni}. In essence, this lets us commute pure expressions with
loop bodies, enabling rewrites like
\begin{equation}
  \ms{loop}\;\{ x = x + 1; \ms{if}\;p\;3x\;\{\ms{ret}\;3x\} \}
  \qquad \teqv \qquad 
  y = 3x; \ms{loop}\;\{ y = y + 3; \ms{if}\;p\;y\;\{\ms{ret}\;y\} \}
  \label{eqn:simple-loop-comm} 
\end{equation}
Note that substitution alone would not allow us to derive Equation~\ref{eqn:simple-loop-comm} above,
since $x$ and $y$ change each iteration, and hence, in SSA, would need to become parameters as
follows:
\begin{multline}
  \where{\brb{\ell}{x}}{\wbranch{\ell}{x}
    {\letexpr{x'}{x + 1}{\ms{if}\;p\;3x'\;\{\ms{ret}\;3x'\}\;\ms{else}\;\{\brb{\ell}{x'}\}}}}
  \\ \teqv
  \where{\letexpr{y}{3x}{\brb{\kappa}{y}}}{\wbranch{\kappa}{y}
    {\letexpr{y'}{y + 3}{\ms{if}\;p\;y'\;\{\ms{ret}\;y'\}\;\ms{else}\;\{\brb{\kappa}{y'}\}}}}
\end{multline}
The actual rule is quite complicated, so let's break it down point by point. Assume we are given:
\begin{itemize}
  \item A region $\haslb{\Gamma, \bhyp{y}{B}}{s}{\ms{L}, \kappa(B)}$ taking ``input'' $y$ of type
    $B$ and, as ``output,'' jumping to a label $\kappa$ with an argument of type $B$. We'll
    interpret branches to any other label as a (divergent) ``side effect.''
  \item A region $\haslb{\Gamma, \bhyp{x}{A}}{t}{\ms{L}, \ell(A)}$ taking ``input'' $x$ of type
    $A$ and, as ``output,'' jumping to a label $\ell$ with an argument of type $A$.
  \item A \emph{pure} expression $\hasty{\Gamma, \bhyp{x}{A}}{\bot}{e}{B}$ parameterised by a value
    $x$ of type $A$
\end{itemize}
Suppose further that the following condition holds:
$$
  \lbeq{\Gamma, \bhyp{x}{A}}{[e/y]s}{\where{t}{\wbranch{\ell}{x}{\brb{\kappa}{e}}}}
    {\ms{L}, \kappa(B)}
$$
That is, the following two programs are equivalent:
\begin{enumerate}[label=(\alph*)]
  \item Given input $x$, evaluate $e$ and, taking it's output to be input $y$, evaluate $s$,
  (implicitly) yielding as output a new value of $y$. In imperative pseudocode,
  $$
    y = e; x = s
  $$
  \item Given input $x$, evaluate $t$ and, taking it's output to be the \emph{new} value of $x$,
  evaluate $e$, (implicitly) yielding as output a new value $y$. In imperative pseudocode,
  $$
    x = t; y = e
  $$
\end{enumerate}

\emph{Then}, for any well-typed entry block $\haslb{\Gamma}{r}{\ms{L}, \ell(A)}$ (which can produce
an appropriate input $x : A$ at label $\ell$), we have that
$$
  \lbeq{\Gamma}{\where{(\where{r}{\wbranch{\ell}{x}{\brb{\kappa}{e}}})}
    {\wbranch{\kappa}{y}{s}}}{\where{r}{t}}{\ms{L}}
$$
i.e., in imperative pseudocode,
\begin{align}
  x = r; y = e; \ms{loop}\;\{ y = s \}  & \teqv x = r; \ms{loop}\;\{ x = t \} \\ 
  \intertext{
    where $s$ and $t$ may branch out of the loop. The reason why we require $e$ to be \emph{pure} is
    that impure expressions do not necessarily commute with infinite loops, even if they commute 
    with any finite number of iterations of the loop. For example, if $\ms{hi}$ is some effectful 
    operation (say, printing ``hello''), it is quite obvious that,
  }
  \ms{hi} ; x = x + 1 ; \ms{if}\;x = y\;\{\ms{ret}\;y\}
  & \teqv 
  x = x + 1 
  ; \ms{if}\;x = y\;\{\ms{hi} ; \ms{ret}\;y\} 
  ;  \ms{hi} \\
  \intertext{whereas}
  \ms{hi} ; \ms{loop} \{ x = x + 1 ; \ms{if}\;x = y\;\{\ms{ret}\;y\} \} 
  &\not\teqv
  \ms{loop} \{ x = x + 1 ;  \ms{if}\;x = y\;\{\ms{hi} ; \ms{ret}\;y\} \} ;
\end{align}
since, in particular, we may have $y \leq x$, in which case the loop will never exit and hence
$\ms{hi}$ will never be executed. 
Note that, due to \brle{let$_1$-$\beta$}, \brle{cfg-$\eta$}, and \brle{cfg-$\beta_1$}, this is
equivalent to
\begin{equation}
\prftree[r]{\rle{uni'}}
{
  \haslb{\Gamma}{r}{\ms{L}, \ell(A)}
}
{
  \lbeq{\Gamma, \bhyp{x}{A}}
    {[e/y]s}
    {[\ell(x) \mapsto \brb{\kappa}{e}]t}
    {\ms{L}, \kappa(B)}
}
{
  \lbeq{\Gamma}
    {(\where{([\ell(x) \mapsto \brb{\kappa}{e}]r)}{\wbranch{\kappa}{y}{s}})}
    {(\where{r}{t})}
    {\ms{L}}
}
\end{equation}

\begin{figure}
  \begin{gather*}
      \prftree[r]{\rle{cfg-$\beta_1$}}
        {\hasty{\Gamma}{\bot}{a}{A_k}}
        {\forall i. \haslb{\Gamma, \bhyp{x_i}{A_i}}{t_i}{\ms{L}, (\lhyp{\ell_j}{A_j},)_j}}
        {\lbeq{\Gamma}
          {\where{\brb{\ell_k}{a}}{(\wbranch{\ell_i}{x_i}{t_i},)_i}}
          {\where{(\letstmt{x_k}{a}{t_k})}{(\wbranch{\ell_i}{x_i}{t_i},)_i}}
          {\ms{L}}}
      \\
      \prftree[r]{\rle{cfg-$\beta_2$}}
        {\hasty{\Gamma}{\bot}{b}{B}}
        {\forall i. \haslb{\Gamma, \bhyp{x_i}{A_i}}{t_i}{\ms{L}, (\lhyp{\ell_j}{A_j},)_j}}
        {\ms{L}\;\kappa = B}
        {\kappa \notin \{(\ell_i,)_i\}}
        {\lbeq{\Gamma}
          {\where{\brb{\kappa}{b}}{(\wbranch{\ell_i}{x_i}{t_i},)_i}}
          {\brb{\kappa}{b}}
          {\ms{L}}}
      \\
        \prftree[r]{\rle{cfg-$\eta$}}
        {\haslb{\Gamma}{r}{\ms{L}, (\lhyp{\ell_i}{A_i},)_i}}
        {\forall i. \haslb{\Gamma, \bhyp{x_i}{A_i}}{t_i}{\ms{L}, (\lhyp{\ell_j}{A_j},)_j}}
        {
          \lbeq{\Gamma}
            {\where{r}{(\wbranch{\ell_i}{x_i}{t_i},)_i}}
            {[\cfgsubst{(\wbranch{\ell_i}{x_i}{t_i},)_i}]r}
            {\ms{L}}
        }
      \\
      \prftree[r]{\rle{codiag}}
        {\haslb{\Gamma}{r}{\ms{L}, \ell(A)}}
        {\haslb{\Gamma, \bhyp{y}{A}}{s}{\ms{L}, \ell(A), \kappa(A)}}
        {\lbeq{\Gamma}{\where{r}{\wbranch{\ell}{x}{\where{\brb{\kappa}{x}}
          {\wbranch{\kappa}{y}{s}}}}}
        {\where{r}{\wbranch{\ell}{y}{[\ell/\kappa]s}}}
        {\ms{L}}} 
      \\
      \\
      \prftree[r]{\rle{uni}}
        {
          \haslb{\Gamma}{r}{\ms{L}, \ell(A)}
        }
        {
          \lbeq{\Gamma, \bhyp{x}{A}}
            {\letexpr{y}{e}{s}}
            {\where{t}{\wbranch{\ell}{x}{\brb{\kappa}{e}}}}
            {\ms{L}, \kappa(B)}
        }
        {
          \lbeq{\Gamma}
            {\where{(\where{r}{\wbranch{\ell}{x}{\brb{\kappa}{e}}})}
              {\wbranch{\kappa}{y}{s}}}
            {\where{r}{t}}
            {\ms{L}}
        }
      \\
      \text{where} \qquad
      {\hasty{\Gamma, \bhyp{x}{A}}{\bot}{e}{B}}, \quad
      {\haslb{\Gamma, \bhyp{y}{B}}{s}{\ms{L}, \kappa(B)}}, \quad \text{and} \quad
      {\haslb{\Gamma, \bhyp{x}{A}}{t}{\ms{L}, \ell(A)}}
      \\
      \\
      \prftree[r]{\rle{dinat}}
        {
          \haslb{\Gamma}{r}{\ms{L}, (\lhyp{\ell_i}{A_i},)_i}
        }{
          \lbsubst{\Gamma}{\sigma}{(\lhyp{\ell_i}{A_i},)_i}{(\lhyp{\kappa_j}{B_j},)_j}
        }{
          \forall i. \haslb{\Gamma, \bhyp{x_i}{B_i}}{t_i}{\ms{L}, (\lhyp{\ell_j}{A_j},)_j}
        }{
          \lbeq{\Gamma}
            {\where{([\lupg{\sigma}]r)}{(\wbranch{\kappa_i}{x_i}{[\lupg{\sigma}]t_i},)_i}}
            {\where{r}
              {(\wbranch{\ell_i}{x_i}{[\lupg{(\kappa_j(x_j) \mapsto t_j,)}](\sigma_i\;x_i)},)_i}}
            {\ms{L}}
        }
  \end{gather*}
  \Description{}
  \caption{Rewriting rules for \isotopessa{} \ms{where}-blocks}
  \label{fig:ssa-where-rules}
\end{figure}

This format illuminates a very important potential use for uniformity; namely, formalizing rewrites
like those in Figure~\ref{fig:fact-dinat-rewrites}. In particular, consider a program of the form
\begin{equation*}
  \haslb{\Gamma}{\where{([\ell(x) \mapsto \brb{\kappa}{e}]r)}{
    \wbranch{\kappa}{y}{[\ell(x) \mapsto \brb{\kappa}{e}]s}
  }}{\ms{L}}
\end{equation*}
where $\haslb{\Gamma}{r}{\ms{L}, \ell(A)}$, $\haslb{\Gamma, y : B}{s}{\ms{L}, \ell(A)}$, and
$\hasty{\Gamma, \bhyp{x}{A}}{\bot}{e}{B}$ is pure. Then we have that
\begin{multline*}
  [e/y][\ell(x) \mapsto \brb{\kappa}{(e)}]s 
  \\ \teqv
  [\ell(x) \mapsto \brb{\kappa}{[e/y](e)}][e/y]s
  \\ \teqv
  [\ell(x) \mapsto \brb{\kappa}{(e)}][e/y]s
\end{multline*}
and therefore that
\begin{multline*}
  \Gamma \vdash \where{([\ell(x) \mapsto \brb{\kappa}{e}]r)}{
    \wbranch{\kappa}{z}{[\ell(x) \mapsto \brb{\kappa}{e}]s}} \\ \teqv
    \where{r}{
    \wbranch{\ell}{x}{[e/y]s}} \\ \teqv
    \where{r}{
    \wbranch{\ell}{x}{\letstmt{y}{e}{s}}} 
\end{multline*}
In particular, for example, we can then easily derive the rewrite from Figure~\ref{fig:fact-dinat}
to Figure~\ref{fig:fact-zero} by noting the \emph{equalities} (an equivalence would be enough, of
course)
\begin{multline*}
  \ms{if}\;i_0 < n\;\{
    \ms{br}\;\ms{loop}(
      \letexpr{(x, y)}{(i_0, a_0)}{(x + 1, y * x + 1)}
    )
  \}\;\ms{else}\;\{
    \ms{ret}(a_0)
  \} \\
  =
  [\ms{loop}(i_0, a_0) \mapsto \letexpr{(x, y)}{(i_0, a_0)}{(x + 1, y * x + 1)}](
    \ms{if}\;i_0 < n\;\{
      \ms{br}\;\ms{loop}(i_0, a_0)
    \}\;\ms{else}\;\{
      \ms{ret}(a_0)
    \}
  )
\end{multline*}
and
\begin{multline*}
  \letstmt{n}{10}{\ms{br}\;\ms{loop}(
    \letstmt{(x, y)}{(0, 1)}{(x + 1, y * (x + 1))}
  )} \\
  =
  [\ms{loop}(i_0, a_0) \mapsto \letexpr{(x, y)}{(i_0, a_0)}{(x + 1, y * x + 1)}](
    \letstmt{n}{10}{\ms{br}\;\ms{loop}(0, 1)}
  )
\end{multline*}
Rewrites like this are an instance of the principle we call \emph{dinaturality}, which, for
structured control-flow, can be best expressed as an equivalence between the control-flow graphs
in Figure~\ref{fig:dinat-struct-cfg}. Unlike in the case of uniformity, however, this is true even
when the program fragment $P$ is \emph{impure}, since we do not commute $P$ over an infinite number
of iterations; hence, we cannot simply use uniformity to derive this rewrite rule in its most
general form. We instead postulate our final rewriting rule, \brle{dinat}, which generalises the
above rewrite from sequential composition on a structured control-flow graph to label substitution
on an arbitrary control-flow graph. 

\begin{figure}
  \begin{tikzpicture}
    \node[] (Al) at (0, 0) {};
    \node[box] (Pl1) at (0, -1) {P};
    \node[dot] (cdl) at (0, -1.5) {};
    \node[box=1/0/2/0] (Ql) at (0, -2) {\quad Q \quad};
    \node[box] (Pl2) at (0.5, -3) {P};
    \coordinate[] (cupl) at (1, -4) {};
    \coordinate[] (capl) at (1, -1.2) {};
    \node[] (Bl) at (0, -5) {};
    \wires{
      Al = { south = Pl1.north },
      Pl1 = { south = cdl.north, },
      cdl = { south = Ql.north },
      Ql = { south.1 = Bl.north, south.2 = Pl2.north},
      Pl2 = { south = cupl.west },
      cupl = { east = capl.east },
      capl = { west = cdl.east },
    }{}

    \node[] (eq) at (3.5, -2.5) {=};
    
    \node[] (Al) at (6, 0) {};
    \node[dot] (cdl) at (6, -1.25) {};
    \node[box] (Pl) at (6, -2) {P};
    \node[box=1/0/2/0] (Ql) at (6, -3) {\quad Q \quad};
    \coordinate[] (cupl) at (7, -4) {};
    \coordinate[] (capl) at (7, -0.5) {};
    \node[] (Bl) at (6, -5) {};
    \wires{
      Al = { south = cdl.north },
      cdl = { south = Pl.north },
      Pl = { south = Ql.north, },
      Ql = { south.1 = Bl.north, south.2 = cupl.west },
      cupl = { east = capl.east },
      capl = { west = cdl.east },
    }{}
  \end{tikzpicture}
  \caption{
    Dinaturality on a structured loop
  }
  \Description{}
  \label{fig:dinat-struct-cfg}
\end{figure}
It turns out that this being able to do this is essential, as it allows us to relate unary and
$n$-ary control-flow graphs and, in particular, use this relationship to interconvert between
data-flow and control-flow. This means we now have enough machinery to justify the elimination of
arbitrary nested \ms{where}-blocks, and hence to justify the rule
\begin{equation}
  \prftree[r]{\rle{cfg-fuse}}
    {\haslb{\Gamma}{r}{\ms{L}, (\lhyp{\ell_i}{A_i},)_i, (\lhyp{\kappa_j}{B_j},)_j}}
    {
      \prfStackPremises{
        \forall i. \haslb{\Gamma, \bhyp{x_i}{A_i}}{t_i}{
        \ms{L}, (\lhyp{\ell_j}{A_j},)_j}
      }{
        \forall i. \haslb{\Gamma, \bhyp{y_i}{B_i}}{s_i}{
        \ms{L}, (\lhyp{\ell_j}{A_j},)_j, (\lhyp{\kappa_k}{B_k},)_k}
      }
    }
    {
      \prfStackPremises{
        \Gamma \vdash 
          \where{(\where{r}{(\wbranch{\kappa_i}{y_i}{s_i}),})_i}{(\wbranch{\ell_j}{x_j}{t_j}),)_j}
      }{
        \hspace{8em}
        \teqv \where{r}{(\wbranch{\kappa_i}{y_i}{s_i},)_i, (\wbranch{\ell_j}{x_j}{t_j}),)_j}
        \rhd \ms{L}
      }
    }
    \label{eqn:where-fusion-1}
\end{equation}
Also very convenient is the variant
\begin{equation}
  \prftree[r]{\rle{cfg-fuse$_2$}}
    {\haslb{\Gamma}{r}{\ms{L}, (\lhyp{\ell_i}{A_i},)_i, \kappa(B)}}
    {
      \prfStackPremises{
        \forall i. \haslb{\Gamma, \bhyp{x_i}{A_i}}{t_i}{
          \ms{L}, (\lhyp{\ell_j}{A_j},)_j, \kappa(B)}
      }{
        \haslb{\Gamma, \bhyp{y}{B}}{s}{
          \ms{L}, (\lhyp{\ell_j}{A_j},)_j, \kappa(B), (\lhyp{\ell_j'}{A_j'},)_j,}
      }{
        \forall i. \haslb{\Gamma, \bhyp{x_i'}{A_i'}}{t_i'}{
          \ms{L}, (\lhyp{\ell_j}{A_j},)_j, \kappa(B), (\lhyp{\ell_j'}{A_j'},)_j}
      }
    }
    {
      \prfStackPremises{
        \Gamma \vdash 
          \where{r}{(\wbranch{\ell_i}{x_i}{t_i},)_i, 
            \wbranch{\kappa}{y}{\where{s}{(\wbranch{\ell_i'}{x_i'}{t_i'},)_i}}}
      }{
        \hspace{8em}
        \teqv \where{r}{(\wbranch{\ell_i}{x_i}{t_i},)_i, 
            \wbranch{\kappa}{y}{s}, (\wbranch{\ell_i'}{x_i'}{t_i'},)_i}
        \rhd \ms{L}
      }
    }
    \label{eqn:where-fusion-2}
\end{equation}
Rather than justify these rules directly, it turns out to be much more convenient to do so using our
denotational semantics, which, due to our completeness result in Section~\ref{ssec:completeness}, is
sound. A proof can be found in Lemma~\ref{lem:where-fusion} in the appendix.

There are some other basic rules we may want to use which turn out to be
derivable from our existing set. For example, while re-ordering labels in a \ms{where}-block looks
like a no-op in our named syntax, to rigorously justify the following rule actually requires
dinaturality (with the permutation done via a label-substitution):
\begin{equation}
  \prftree[r]{\rle{perm-cfg}}
    {\haslb{\Gamma}{r}{\ms{L}, (\lhyp{\ell_i}{A_i},)_i}}
    {\forall i. \haslb{\Gamma, \bhyp{x_i}{A_i}}{t_i}{\ms{L}, (\lhyp{\ell_j}{A_j},)_j}}
    {\sigma\;\text{permutation}}
    {\lbeq{\Gamma}
      {\where{r}{(\wbranch{\ell_i}{x_i}{t_i},)_i}}
      {\where{r}{(\wbranch{\ell_{\sigma_i}}{x_{\sigma_i}}{t_{\sigma_i}},)_i}}{\ms{L}}}
\end{equation}
Note the implicit use of the fact that if some region $r$ typechecks in some label-context $\ms{L}$,
then it typechecks in any permutation of $\ms{L}$, which is again proven by label-substitution.

\subsection{Metatheory}

We can now begin to investigate the metatheoretic properties of our equational theory. As a first
sanity check, we can verify that weakening, label-weakening, and loosening of effects all respect
our equivalence relation, as stated in the following lemma:
\begin{lemma}[Weakening (Rewriting)]
  Given $\Gamma \leq \Delta$, $\ms{L} \leq \ms{K}$, and $\epsilon \leq \epsilon'$, we have that
  \begin{enumerate}[label=(\alph*)]
    \item $\tmeq{\Delta}{\epsilon}{a}{a'}{A} \implies \tmeq{\Gamma}{\epsilon'}{a}{a'}{A}$
    \item $\lbeq{\Delta}{r}{r'}{\ms{L}} \implies \lbeq{\Gamma}{r}{r'}{\ms{L}'}$
  \end{enumerate}
\end{lemma}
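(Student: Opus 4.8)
The plan is to proceed by \emph{mutual rule induction} on the derivations of $\tmeq{\Delta}{\epsilon}{a}{a'}{A}$ and $\lbeq{\Delta}{r}{r'}{\ms{L}}$, establishing (a) and (b) simultaneously. At each step I carry along the three weakenings $\Gamma \leq \Delta$, $\epsilon \leq \epsilon'$, and $\ms{L} \leq \ms{K}$, and the goal is to reconstruct the same derivation with the weakened data. (I read the conclusion of (b) as $\lbeq{\Gamma}{r}{r'}{\ms{K}}$, i.e. the right-hand label context is the one supplied by $\ms{L} \leq \ms{K}$.) The mutual induction is forced on us because the region congruence rules (\brle{let$_1$}, \brle{let$_2$}, \brle{case}) embed the \emph{term} equivalence $\tmeq{\Gamma}{\epsilon}{\cdot}{\cdot}{A}$ as a premise, so parts (a) and (b) cannot be separated.

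For the equivalence rules (\brle{refl}, \brle{symm}, \brle{trans}) and the congruence rules (\brle{let$_1$}, \brle{let$_2$}, \brle{pair}, \brle{inl}, \brle{inr}, \brle{case}, \brle{abort}, \brle{cfg}, and their region analogues) the argument is uniform: apply the induction hypothesis to each equivalence premise and reapply the same rule. The only bookkeeping is that the weakenings must extend over binders. When a premise lives in an extended context $\Delta, \bhyp{x}{A}$ (or $\Delta, \bhyp{x_i}{A_i}$), the matching weakening $\Gamma, \bhyp{x}{A} \leq \Delta, \bhyp{x}{A}$ follows from $\Gamma \leq \Delta$ by \brle{wk-cons} with $\bot \leq \bot$ for the freshly bound variable; premises living under an extended label context $\ms{L}, (\lhyp{\ell_j}{A_j})_j$ are handled by extending $\ms{L} \leq \ms{K}$ to $\ms{L}, (\lhyp{\ell_j}{A_j})_j \leq \ms{K}, (\lhyp{\ell_j}{A_j})_j$ via repeated \brle{lwk-cons}. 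The \brle{refl} case just invokes the ordinary Weakening Lemma.

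The rewriting rules carry \emph{typing} side-conditions rather than equivalence premises, and these are discharged using the already-established (non-equational) Weakening and Label Substitution lemmas together with two monotonicity facts. Instruction premises $\isop{f}{A}{B}{\epsilon}$ survive effect-loosening because $\epsilon \leq \epsilon' \implies \mc{I}_\epsilon(A,B) \subseteq \mc{I}_{\epsilon'}(A,B)$; purity premises such as $\hasty{\Delta}{\bot}{a}{A}$ in \brle{let$_1$-$\beta$} remain pure under context-weakening (the Weakening Lemma with $\bot \leq \bot$), so the syntactic shape of each reduct is unchanged and the same rule instance applies in $\Gamma$. The two type-directed rules deserve a word. \brle{initial} requires a variable of empty type, and since $\Gamma \leq \Delta$ \emph{adds} variables while only ever making the effects of existing variables \emph{smaller} (by \brle{wk-cons}, the $\Gamma$-side effect sits below the $\Delta$-side effect), a witness with $\Delta\;x = (\mb{0}, \bot)$ persists as $\Gamma\;x = (\mb{0}, \bot)$, using that $\bot$ is least; \brle{terminal} is analogous, its pure unit-typed premises weakening directly.

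The main obstacle will be the \ms{where}-block rules of Figure~\ref{fig:ssa-where-rules}, in particular \brle{uni} and \brle{dinat}, where context-weakening and label-weakening must be threaded through premises that simultaneously mention label substitutions, freshly introduced local labels, and purity constraints. For \brle{uni} I must extend $\ms{L} \leq \ms{K}$ to $\ms{L}, \kappa(B) \leq \ms{K}, \kappa(B)$ and $\ms{L}, \ell(A) \leq \ms{K}, \ell(A)$ in order to feed the induction hypothesis to the equivalence premise, weaken the typing judgments for $r$, $s$, and $t$ by the Weakening Lemma, and verify that the side expression $e$ stays \emph{pure} in $\Gamma, \bhyp{x}{A}$ so that the loop-commutation remains sound. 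For \brle{dinat} the label substitution $\sigma$ must be re-typed in the weakened context using clause (d) of the Weakening Lemma, and the well-typedness of the substituted reducts $[\lupg{\sigma}]r$ and $[\lupg{\sigma}]t_i$ in $\Gamma$ follows from the Label Substitution lemma. Once these obligations are arranged, each rule reapplies verbatim, closing the induction.
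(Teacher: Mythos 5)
Your proposal is correct and matches the paper's approach: the paper's proof of this lemma is simply a citation of the Lean mechanization (\texttt{Term.InS.wk\_congr}, \texttt{Term.InS.wk\_eff\_congr}, \texttt{Region.InS.vwk\_congr}, \texttt{Region.InS.lwk\_congr}), and that mechanization proceeds by exactly the rule induction you describe --- reconstructing each congruence and rewrite rule under the weakened context, effect, and label context, with the (non-equational) Weakening Lemma discharging the typing side conditions, the monotonicity of $\mc{I}_\epsilon$ handling instruction premises, and the leastness of $\bot$ preserving the \rle{initial}/\rle{terminal} witnesses. The only quibble is that the mutual induction is not actually forced: the term equivalence judgement never refers to the region equivalence judgement, so part (a) can be proved first as a standalone induction and then invoked in the induction for part (b); running the two inductions simultaneously is nevertheless harmless.
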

\begin{proof}
  These are formalized as:
  \begin{enumerate}[label=(\alph*)]
    \item \texttt{Term.InS.wk_congr} and \texttt{Term.InS.wk_eff_congr} in 
    \texttt{Rewrite/Term/Setoid.lean}
    \item \texttt{Region.InS.vwk_congr} and \texttt{Region.InS.lwk_congr} in
    \texttt{Rewrite/Region/Setoid.lean}
  \end{enumerate}
\end{proof}
In particular, note that this lemma uses an equivalence relation on substitutions and
label-substitutions: this is just the obvious pointwise extension of the equivalence relation on
terms and regions respectively. We give the rules for this relation in
Figure~\ref{fig:ssa-subst-equiv} in the interests of explicitness. It is straightforward to verify
that these are indeed equivalence relations.  In fact, it turns out that substitution and
label-substitution both respect these equivalences, in the following precise sense:
\begin{lemma}[Congruence (Substitution)]
  Given $\tmseq{\gamma}{\gamma'}{\Gamma}{\Delta}$, we have that
  \begin{enumerate}[label=(\alph*)]
    \item $\tmeq{\Delta}{\epsilon}{a}{a'}{A} 
      \implies \tmeq{\Gamma}{\epsilon}{[\gamma]a}{[\gamma']a'}{A}$
    \item $\lbeq{\Delta}{r}{r'}{\ms{L}} 
      \implies \lbeq{\Gamma}{[\gamma]r}{[\gamma']r'}{\ms{L}}$
    \item $\tmseq{\rho}{\rho'}{\Delta}{\Xi}
      \implies \tmseq{[\gamma]\rho}{[\gamma']\rho'}{\Gamma}{\Xi}$
    \item $\lbseq{\sigma}{\sigma'}{\Delta}{\ms{L}}{\ms{K}}
      \implies \lbseq{[\gamma]\sigma}{[\gamma']\sigma'}{\Gamma}{\ms{L}}{\ms{K}}$
  \end{enumerate}
\end{lemma}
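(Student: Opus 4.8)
The plan is to factor the statement into two independent congruence principles and then combine them by transitivity, reading off (c) and (d) pointwise at the end. Since $\tmseq{\gamma}{\gamma'}{\Gamma}{\Delta}$ entails that both $\gamma$ and $\gamma'$ are well-typed substitutions $\Gamma \mapsto \Delta$, I would first prove \emph{Lemma~A}, that a \emph{single} substitution preserves equivalence, i.e.\ $\tmeq{\Delta}{\epsilon}{a}{a'}{A} \implies \tmeq{\Gamma}{\epsilon}{[\gamma]a}{[\gamma]a'}{A}$ and the region analogue $\lbeq{\Delta}{r}{r'}{\ms{L}} \implies \lbeq{\Gamma}{[\gamma]r}{[\gamma]r'}{\ms{L}}$; and \emph{Lemma~B}, that equivalent substitutions agree on a \emph{fixed} well-typed term, i.e.\ $\hasty{\Delta}{\epsilon}{a}{A} \implies \tmeq{\Gamma}{\epsilon}{[\gamma]a}{[\gamma']a}{A}$ and likewise for regions. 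Given these, part (a) follows by \brle{trans} from $[\gamma]a \teqv [\gamma]a'$ (Lemma~A) and $[\gamma]a' \teqv [\gamma']a'$ (Lemma~B applied to the fixed term $a'$), the intermediate term $[\gamma]a'$ being well-typed in $\Gamma$ by the Substitution lemma; part (b) is identical.

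I would prove Lemma~A by induction on the derivation of the equivalence. The congruence rules (\brle{refl}, \brle{symm}, \brle{trans}, \brle{let$_1$}, \brle{let$_2$}, \brle{pair}, \brle{inl}, \brle{inr}, \brle{case}, \brle{abort}, and the region congruences of Figure~\ref{fig:ssa-reg-congr-rules}) lift directly, applying the induction hypothesis under each binder with $\gamma$ replaced by its right extension $\rupg{\gamma}$; \brle{initial} and \brle{terminal} are immediate since their premises are preserved by substitution. Each rewriting rule reduces to a syntactic \emph{commutation identity} between $[\gamma]$ and the operation named in the rule: for \brle{let$_1$-$\beta$} one uses $[\gamma]([a/x]b) = [[\gamma]a/x]([\rupg{\gamma}]b)$, and for the region rules \brle{cfg-$\eta$}, \brle{codiag}, \brle{uni}, and \brle{dinat} one uses that $[\gamma]$ commutes with the control-flow-graph substitution $\cfgsubst{\cdot}$ and with label substitution, which is exactly the content of the Substitution and Label substitution lemmas. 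For rules carrying a purity side condition (the pure $e$ in \brle{uni}), I would note that purity is preserved: a pure term in $\Delta$ uses only $\bot$-effect variables, and by \brle{sb-cons} $\gamma$ sends each such variable to a $\bot$-effect term, so $[\gamma]e$ is again pure and the substituted instance of the rule applies.

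Lemma~B I would prove by induction on the typing derivation of the fixed term (resp.\ region). The base case is the variable $x$, where $[\gamma]x = \gamma(x)$ and $[\gamma']x = \gamma'(x)$, and $\tmeq{\Gamma}{\epsilon}{\gamma(x)}{\gamma'(x)}{A}$ holds by the pointwise definition of $\tmseq{\gamma}{\gamma'}{\Gamma}{\Delta}$ (Figure~\ref{fig:ssa-subst-equiv}). The structural cases follow from the matching congruence rules applied to the induction hypotheses, those for $f\;a$ and $\ms{br}$ being derivable as in Section~\ref{sec:equations}; when passing under a binder we substitute with $\rupg{\gamma}$ and $\rupg{\gamma'}$, which remain equivalent since they differ from $\gamma, \gamma'$ only by an entry $x \mapsto x$, equal to itself by \brle{refl}. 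Finally, (c) and (d) are immediate: substitution- and label-substitution equivalence are defined pointwise, so applying (a) (resp.\ (b)) componentwise to $\rho$ and $\sigma$ yields the result.

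The main obstacle is the region fragment of Lemma~A, specifically the \brle{uni} and \brle{dinat} cases. These rules carry equivalence premises that themselves interleave variable substitution with label substitution (e.g.\ $[e/y]s$ and $[\ell(x) \mapsto \brb{\kappa}{e}]t$ in the \brle{uni'} form, and $[\rupg{\sigma}]r$ in \brle{dinat}), so transporting them under $[\gamma]$ requires the full commutation calculus between variable and label substitution together with preservation of purity, and one must verify that the induction hypothesis on each premise lands on exactly the substituted side condition the rule demands. Discharging this bookkeeping cleanly — as opposed to the structural congruences, which are routine — is where the real work lies.
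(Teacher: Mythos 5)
Your overall strategy is sound, and it is worth noting that the paper itself gives no on-paper argument for this lemma: its ``proof'' consists of pointers to the Lean mechanization (\texttt{Term.InS.subst\_congr}, \texttt{Region.InS.vsubst\_congr}, etc.), so there is no prose proof to compare against line by line. Your decomposition --- (A) a single well-typed substitution preserves $\teqv$, by induction on the equivalence derivation, and (B) pointwise-equivalent substitutions agree on a fixed well-typed term, by induction on typing, combined by \brle{trans} through the intermediate term $[\gamma]a'$ --- is the standard way to show a binary operation is a congruence in each argument separately, and is almost certainly the shape of the mechanized proof as well. Your treatment of the purity side conditions (a pure term in $\Delta$ uses only $\bot$-annotated variables, which \brle{sb-cons} forces $\gamma$ to map to $\bot$-effect terms) is correct, as is your identification of where the real work lies: the commutation identities between variable substitution, label substitution, and $\cfgsubst{\cdot}$ needed to transport \brle{cfg-$\eta$}, \brle{uni}, and \brle{dinat}, together with the fact that parts (c) and (d) are then just pointwise applications of (a) and (b).

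One step, as you state it, would fail: you claim \brle{initial} is ``immediate since [its] premises are preserved by substitution.'' They are not. The premise of \brle{initial} is $\exists x,\ \Delta\;x = (\mb{0}, \bot)$, a property of the \emph{context}, and the target context $\Gamma$ need not contain any variable of type $\mb{0}$ at all. The repair is exactly the derived rule the paper provides: since $\Delta$ contains a pure variable $x$ of type $\mb{0}$, the well-typed substitution $\gamma$ supplies a pure term $\hasty{\Gamma}{\bot}{\gamma(x)}{\mb{0}}$, and \brle{initial-expr} (and its evident region analogue, derived the same way from region \brle{let$_1$-$\beta$}, weakening, and region \brle{initial}) then equates the substituted terms in $\Gamma$. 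By contrast \brle{terminal} genuinely is preserved, since purity of $[\gamma]a$ follows from purity of $a$. A second, smaller omission: in your Lemma B, extending $\gamma \teqv \gamma'$ under a binder requires first weakening the substitution equivalence from target context $\Gamma$ to $\Gamma, \bhyp{x}{A}$ before appending $x \mapsto x$ via \brle{sb-cons}; this follows pointwise from the Weakening (Rewriting) lemma but should be stated. Neither issue threatens the architecture of your proof.
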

\begin{proof}
  These are formalized as:
  \begin{enumerate}[label=(\alph*)]
    \item \texttt{Term.InS.subst_congr} in \texttt{Rewrite/Term/Setoid.lean}
    \item \texttt{Region.InS.vsubst_congr} in \texttt{Rewrite/Region/Setoid.lean}
    \item \texttt{Term.Subst.InS.comp_congr} in \texttt{Rewrite/Term/Setoid.lean}
    \item \texttt{Region.Subst.InS.vsubst_congr} in \texttt{Rewrite/Region/LSubst.lean}
  \end{enumerate}
\end{proof}
\begin{lemma}[Congruence (Label Substitution)]
  Given $\lbseq{\sigma}{\sigma'}{\Gamma}{\ms{L}}{\ms{K}}$, we have that
  \begin{enumerate}[label=(\alph*)]
    \item $\lbeq{\Gamma}{r}{r'}{\ms{L}} \implies \lbeq{\Gamma}{[\sigma]r}{[\sigma']r'}{\ms{K}}$
    \item $\lbseq{\kappa}{\kappa'}{\Gamma}{\ms{L}}{\ms{J}}
      \implies \lbseq{[\sigma]\kappa}{[\sigma']\kappa'}{\Gamma}{\ms{K}}{\ms{J}}$
  \end{enumerate}
\end{lemma}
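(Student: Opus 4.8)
The plan is to prove part (a) by a ``diagonal'' argument and then obtain part (b) as a pointwise corollary. Since $\teqv$ is an equivalence relation (rule \brle{trans}), it suffices to interpose the common region $[\sigma]r'$ and establish the two halves $\lbeq{\Gamma}{[\sigma]r}{[\sigma]r'}{\ms{K}}$ and $\lbeq{\Gamma}{[\sigma]r'}{[\sigma']r'}{\ms{K}}$, whence $\lbeq{\Gamma}{[\sigma]r}{[\sigma']r'}{\ms{K}}$ follows by transitivity. The first half asserts that a \emph{fixed} label substitution respects region equivalence; the second asserts that \emph{pointwise-equivalent} label substitutions act equivalently on a fixed region. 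These are the label-substitution analogues of the two ingredients of the Congruence (Substitution) lemma, and I would structure the proof to mirror it.

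For the first half I would induct on the derivation of $\lbeq{\Gamma}{r}{r'}{\ms{L}}$, proving the statement uniformly in $\sigma$ so that in the \ms{where} case the induction hypothesis can be instantiated at the extension $\rupg{\sigma}$, which is exactly the substitution $[\sigma]$ applies to the body regions. The congruence rules of Figure~\ref{fig:ssa-reg-congr-rules} (\brle{refl}, \brle{symm}, \brle{trans}, \brle{let$_1$}, \brle{let$_2$}, \brle{case}, \brle{cfg}, \brle{initial}) are discharged immediately, because $[\sigma]$ commutes with every region constructor by its defining clauses in Figure~\ref{fig:ssa-label-subst-def}, so the induction hypothesis closes under the matching congruence rule. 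For each rewriting axiom I must show $[\sigma](\mathrm{LHS}) \teqv [\sigma](\mathrm{RHS})$; the recurring move is to push $[\sigma]$ inward using commutation lemmas relating label substitution to (i) variable substitution $[a/x]$, (ii) the extensions $\lupg{\cdot}$ and $\rupg{\cdot}$, and (iii) the control-flow-graph substitution $\cfgsubst{\cdot}$, after which the label-substituted statement is visibly another instance of the \emph{same} axiom over transported data. For axioms whose premises are themselves equivalences (\brle{uni}, \brle{dinat}), the induction hypothesis supplies the label-substituted premises before the axiom is re-applied.

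For the second half I would induct on the region $r'$, again uniformly in $\sigma$ and $\sigma'$. In the base case $r' = \brb{\ell}{a}$ the two sides unfold to $\sigma(\ell, a)$ and $\sigma'(\ell, a)$; since the pointwise rules of Figure~\ref{fig:ssa-subst-equiv} make the components of $\sigma$ and $\sigma'$ equivalent and each side is the variable substitution $[a/x]$ applied to the respective component, the Congruence (Substitution) lemma gives $\lbeq{\Gamma}{\sigma(\ell, a)}{\sigma'(\ell, a)}{\ms{K}}$. The constructor cases follow from the region congruence rules and the induction hypothesis, using in the \ms{where} case that $\rupg{\sigma}$ and $\rupg{\sigma'}$ remain pointwise-equivalent since they adjoin identical identity branches. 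Finally, part (b) is immediate: composition of label substitutions is pointwise, so $[\sigma]\kappa$ and $[\sigma']\kappa'$ have components $[\sigma]\kappa_\ell$ and $[\sigma']\kappa'_\ell$, and part (a) applied to each component equivalence $\kappa_\ell \teqv \kappa'_\ell$ (read off from $\kappa \teqv \kappa'$) together with $\sigma \teqv \sigma'$ yields componentwise equivalence, hence $\lbseq{[\sigma]\kappa}{[\sigma']\kappa'}{\Gamma}{\ms{K}}{\ms{J}}$.

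The main obstacle is the rewriting-axiom cases of the first half for \ms{where}-blocks, and above all \brle{dinat}. Its statement already contains label substitutions and the extension $\lupg{\cdot}$, so commuting the outer $[\sigma]$ through it requires an interchange law between $[\sigma]$ and $[\lupg{\tau}]$ — essentially an associativity/naturality statement for label-substitution composition — that I would have to establish as a separate commutation lemma. The \brle{uni} case is comparably delicate: I must check that $[\sigma]$ preserves the purity side condition on $e$ and commutes with the variable substitution $[e/y]$ appearing in its premise, so that the transported premise is exactly what \brle{uni} requires of the substituted data. Once these commutation lemmas are in hand, every axiom case reduces to re-applying the corresponding rule, and the remaining bookkeeping is routine.
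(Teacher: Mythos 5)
Your proposal is sound, and its structure --- splitting $[\sigma]r \teqv [\sigma']r'$ by transitivity into (i) a fixed substitution acting on an equivalence derivation and (ii) pointwise-equivalent substitutions acting on a fixed region, then obtaining part (b) componentwise from part (a) --- is the natural one. Note, however, that the paper offers no prose proof to compare against: it discharges this lemma entirely by citation to the Lean mechanization (\texttt{Region.InS.lsubst\_congr} and \texttt{Region.LSubst.InS.comp\_congr} in \texttt{Rewrite/Region/LSubst.lean}), whose lemma names indicate precisely the decomposition you describe. One simplification to your list of obstacles: in the \brle{uni} case the purity side condition on $e$ transports for free, since label substitution never touches expressions (it rewrites only branches $\brb{\ell}{a}$); the genuine work is, as you say, the interchange laws between $[\sigma]$ and $[e/y]$, $\lupg{\cdot}$, and $\cfgsubst{\cdot}$ --- i.e., associativity and naturality of label-substitution composition --- which must be (and in the mechanization are) established as separate lemmas.
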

\begin{proof}
  These are formalized as:
  \begin{enumerate}[label=(\alph*)]
    \item \texttt{Region.InS.lsubst_congr} in \texttt{Rewrite/Region/LSubst.lean}
    \item \texttt{Region.LSubst.InS.comp_congr} in \texttt{Rewrite/Region/LSubst.lean}
  \end{enumerate}
\end{proof}
This means, in particular, that, substitution and label-substitution are well-defined operators on
equivalence classes of terms, which will come in handy later as we set out to prove completeness
in Section~\ref{ssec:completeness}.

\begin{figure}
  \begin{gather*}
    \prftree[r]{\rle{sb-nil}}{\tmseq{\cdot}{\cdot}{\Gamma}{\cdot}} \qquad
    \prftree[r]{\rle{sb-cons}}
      {\tmeq{\Gamma}{\epsilon}{a}{a'}{A}}{\tmseq{\gamma}{\gamma'}{\Gamma}{\Delta}}
      {\tmseq{\gamma, x \mapsto a}{\gamma', x \mapsto a'}{\Gamma}{\Delta, \thyp{x}{A}{\epsilon}}}
    \\
    \prftree[r]{\rle{sb-skip-l}}
      {\tmseq{\gamma}{\gamma'}{\Gamma}{\Delta}}
      {\tmseq{\gamma, x \mapsto a}{\gamma'}{\Gamma}{\Delta}} \qquad
    \prftree[r]{\rle{sb-skip-r}}
      {\tmseq{\gamma}{\gamma'}{\Gamma}{\Delta}}
      {\tmseq{\gamma}{\gamma', x \mapsto a'}{\Gamma}{\Delta}}
    \\
    \prftree[r]{\rle{ls-nil}}{\lbseq{\cdot}{\cdot}{\cdot}{\ms{K}}} \qquad
    \prftree[r]{\rle{ls-cons}}
      {\lbeq{\Gamma, \bhyp{x}{A}}{r}{r'}{\ms{K}}}
      {\lbseq{\sigma}{\sigma'}{\Gamma}{\ms{L}}{\ms{K}}}
      {\lbseq
        {\sigma, \ell(x) \mapsto r}{\sigma', \ell(x) \mapsto r'}{\Gamma}
        {\ms{L}, \ell(A)}{\ms{K}}}
    \\
    \prftree[r]{\rle{ls-skip-l}}
      {\lbseq{\sigma}{\sigma'}{\Gamma}{\ms{L}}{\ms{K}}}
      {\lbseq{\sigma, \ell(x) \mapsto r}{\sigma'}{\Gamma}{\ms{L}}{\ms{K}}}
      \qquad
    \prftree[r]{\rle{ls-skip-r}}
      {\lbseq{\sigma}{\sigma'}{\Gamma}{\ms{L}}{\ms{K}}}
      {\lbseq{\sigma}{\sigma', \ell(x) \mapsto r'}{\Gamma}{\ms{L}}{\ms{K}}}
    \\
    \prftree[r]{\rle{sb-id}}
      {\tmseq{\gamma}{\gamma'}{\Gamma, \thyp{x}{A}{\epsilon}}{\Delta, \thyp{x}{A}{\epsilon}}}
      {\tmseq{\gamma}{\gamma'}{\Gamma}{\Delta}} \qquad
    \prftree[r]{\rle{ls-id}}
      {\lbseq{\sigma}{\sigma'}{\Gamma}{\ms{L}, \lhyp{\ell}{A}}{\ms{K}, \lhyp{\ell}{A}}}
      {\lbseq{\sigma}{\sigma'}{\Gamma}{\ms{L}}{\ms{K}}}
  \end{gather*}
  \Description{}
  \caption{Rules for the equivalence relation on \isotopessa{} substitutions and label-substitutions}
  \label{fig:ssa-subst-equiv}
\end{figure}

\subsection{Strict SSA}

\label{ssec:ssa-normal}

The relaxation of SSA to \isotopessa{} allows us to state our equational theory and handle
substitution more conveniently. However, this approach may lead readers to question whether we have
truly provided a type-theoretic presentation of SSA as it is used in practice. To address this
concern, we introduce a subset of \isotopessa{} regions called \emph{strict regions}. We make the
following claims about these strict regions:
\begin{enumerate}
  \item Every \isotopessa{} region can be converted to an equivalent strict region 
  \label{claim:ssa-conv}
  \item Every strict region can be erased to a well-formed SSA program by removing
  ``\ms{where}'' \label{claim:ssa-erase}
  \item Every well-formed SSA program can be typed as an strict region purely by adding
  ``\ms{where}'' \label{claim:ssa-wf}
  \item Two strict regions which erase to the same SSA program are equivalent
  \label{claim:ssa-inj}
\end{enumerate}
We assert that strict regions correspond exactly to SSA, with the \ms{where}-bracketing serving as
syntactic sugar to make our typing rules syntax-directed. Specifically, \ms{where}-blocks can be
added algorithmically by computing dominance relationships between basic blocks.

Converting directly to strict SSA can be unwieldy. Therefore, we do
this in two stages, first converting \isotopessa{} into to a subset
corresponding to A-normal form (ANF) extended with mutually recursive
\ms{where}-bindings, and then converting the ANF regions into strict
regions.

\subsubsection{From \isotopessa{} to A-Normal Form}

Our first step is to extend the work of \citet{chakravarty-functional-ssa-2003}, by providing an
algorithm for converting between \isotopessa{} and ANF in an equivalence preserving way. We begin by
defining a typing judgement $\ahasty{\Gamma}{\epsilon}{a}{A}$ for \emph{atomic expressions} in
Figure~\ref{fig:ssa-ops}. An atomic expression is either a single variable, a function application,
an injection, a pair of variables, or the unit expression. We specifically forbid
\ms{case}-expressions and \ms{let}-bindings at this level; they are required to appear only at the
region level. It is obvious that $\ahasty{\Gamma}{\epsilon}{a}{A} \implies
\hasty{\Gamma}{\epsilon}{a}{A}$. 

We now present typing rules for ANF regions in Figure~\ref{fig:ssa-anf}. These rules mirror those
for regions but with additional constraints: each \ms{let}-binding in a region must bind an atomic
expression, and case-statements and branches must operate on variables. This ensures that each
binding corresponds to a single SSA instruction. It is similarly obvious that
$\ahaslb{\Gamma}{r}{\ms{L}}$. We can now define a syntactic function to convert an \isotopessa{}
region to ANF inductively as follows:
\begin{equation}
  \begin{aligned}
    \toanf{\brb{\ell}{a}} &= \letanf{x}{a}{\brb{\ell}{x}} \\
    \toanf{\letexpr{x}{a}{r}} &= \letanf{x}{a}{\toanf{r}} \\
    \toanf{\letexpr{(x, y)}{a}{r}} &= \letanf{z}{a}{\letstmt{(x, y)}{z}{\toanf{r}}} \\
    \toanf{\casestmt{a}{x}{r}{y}{s}} &= \letanf{z}{a}{\casestmt{z}{x}{\toanf{r}}{y}{\toanf{s}}} \\
    \toanf{\where{r}{(\wbranch{\ell_i}{x_i}{t_i},)_i}}
      &= \where{\toanf{r}}{(\wbranch{\ell_i}{x_i}{\toanf{t_i}},)_i}
  \end{aligned}
\end{equation}
where we define $\letanf{x}{a}{r}$ by induction on expressions $a$ as follows
\begin{equation}
  \begin{aligned}
    \letanf{x}{a}{r} &= (\letstmt{x}{a}{r}) \hspace{8em} \text{if}\;a\;\text{atomic} \\
    \letanf{x}{f\;e}{r} &= \letanf{y}{e}{\letstmt{x}{f\;y}{r}} \\
    \letanf{x}{(\letexpr{y}{e}{a})}{r} &= \letanf{y}{e}{\letanf{x}{a}{r}} \\
    \letanf{x}{(e_1, e_2)}{r} &= \letanf{y_1}{e_1}{\letanf{y_2}{e_2}{\letanf{x}{(y_1,y_2)}{r}}} \\
    \letanf{x}{(\letexpr{(y, z)}{e}{a})}{r} 
      &= \letanf{w}{e}{(\letstmt{(y, z)}{w}{\letanf{x}{a}{r}})} \\
    \letanf{x}{\linl{e}}{r} &= \letanf{y}{e}{(\letstmt{x}{\linl{y}}{r})} \\
    \letanf{x}{\linr{e}}{r} &= \letanf{y}{e}{(\letstmt{x}{\linr{y}}{r})} \\
    \letanf{x}{\caseexpr{e}{y}{a}{z}{b}}{r} &= 
      \letanf{w}{e}{\\ & \qquad \casestmt{w}{y}{\letanf{x}{a}{r}}{z}{\letanf{x}{b}{r}}} \\
    \letanf{x}{\labort{e}}{r} &= \letanf{y}{e}{(\letstmt{x}{\labort{y}}{r})}
  \end{aligned}
\end{equation}
We state the correctness of these functions in the following lemma:
\begin{lemma}[name=A-normalization, restate=anfconversion]
  For all $\haslb{\Gamma}{r}{\ms{L}}$, we have that that $\ahaslb{\Gamma}{\ms{ANF}(r)}{\ms{L}}$ and
  $\lbeq{\Gamma}{r}{\ms{ANF}(r)}{\ms{L}}$. Furthermore, given $\hasty{\Gamma}{\epsilon}{a}{A}$, we
  have that
  \begin{itemize}
    \item $\haslb{\Gamma, \bhyp{x}{A}}{r}{\ms{L}} \implies 
      \lbeq{\Gamma}{\letstmt{x}{a}{r}}{\letanf{x}{a}{r}}{\ms{L}}$
    \item $\ahaslb{\Gamma, \bhyp{x}{A}}{r}{\ms{L}} \implies
      \ahaslb{\Gamma}{\letanf{x}{a}{r}}{\ms{L}}$
  \end{itemize}
\end{lemma}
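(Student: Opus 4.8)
The plan is to prove the statement by a \emph{stratified} induction that follows the call structure of the two conversion functions. The helper $\letanf{x}{a}{r}$ recurses only on subexpressions of its bound expression $a$ and never invokes $\ms{ANF}(\cdot)$, whereas $\toanf{r}$ recurses on subregions of $r$ and calls $\letanf{}{}{}$ only at leaves (branches) and at \ms{let}/\ms{case} heads. I would therefore first establish the two bulleted claims about $\letanf{}{}{}$ by induction on the structure of $a$, with the continuation region $r$ (and the name $x$, type $A$, effect $\epsilon$, and contexts) universally quantified, and only then establish $\ahaslb{\Gamma}{\toanf{r}}{\ms{L}}$ and $\lbeq{\Gamma}{r}{\toanf{r}}{\ms{L}}$ by induction on $r$, invoking the helper claims at each leaf. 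Because every rewriting rule is licensed only when both sides typecheck, I would carry the typing claim and the equivalence claim in lockstep at each level.

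For the $\letanf{}{}{}$ claims, the base case is when $a$ is atomic, where $\letanf{x}{a}{r} = \letstmt{x}{a}{r}$: the typing claim follows from the ANF rule for unary \ms{let}-bindings together with the observation that a syntactically atomic $a$ admits an ANF typing $\ahasty{\Gamma}{\epsilon}{a}{A}$ obtained by inversion on its ordinary derivation, and the equivalence claim is just \brle{refl}. In each recursive case I peel off one constructor of $a$, apply the corresponding commuting-conversion rewriting rule for regions, normalize the now strictly smaller subexpression(s) by the induction hypothesis, and re-assemble with \brle{trans} and the congruence rules. Concretely: $f\;e$ uses \brle{let$_1$-op}; a nested unary \ms{let} uses \brle{let$_1$-let$_1$} (and then a congruence step to normalize the inner body under the new binder before normalizing $e$); a nested binary \ms{let} uses \brle{let$_1$-let$_2$}; a \ms{case} uses \brle{let$_1$-case}; and \ms{abort} uses \brle{let$_1$-abort}. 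The typing claim mirrors this peeling against the ANF typing rules, using inversion on $\hasty{\Gamma}{\epsilon}{a}{A}$ to type the freshly bound subexpressions and the analogue of the Weakening lemma for the ANF judgements (which holds by the same induction) to move the continuation into the enlarged context.

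The $\toanf{r}$ induction is then routine given the helper. A branch $\brb{\ell}{a}$ unfolds to $\letanf{x}{a}{\brb{\ell}{x}}$, so the equivalence follows since, the branch argument being pure, \brle{let$_1$-$\beta$} read right-to-left identifies $\brb{\ell}{a}$ with $\letstmt{x}{a}{\brb{\ell}{x}}$, after which the helper equivalence finishes; a unary \ms{let} applies the helper equivalence directly to the recursively converted body under a \brle{let}$_1$ congruence; a binary \ms{let} first uses \brle{let$_2$-bind} to expose a variable discriminee and a \ms{case} uses \brle{case-bind} similarly, after which the helper equivalence together with the \brle{let}$_2$ and \brle{case} congruences close the case; and a \ms{where}-block is handled purely by the \brle{cfg} congruence and the induction hypotheses on the entry region and each $t_i$. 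The typing claim $\ahaslb{\Gamma}{\toanf{r}}{\ms{L}}$ is the same case analysis read against the ANF typing rules.

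The main obstacle I anticipate is the equivalence bookkeeping rather than any single deep step. Two of the bound-expression forms, namely pairs $(e_1,e_2)$ and injections $\linl{e}$, $\linr{e}$, have \emph{no} primitive region-level ``bind'' rule; for these I must first derive the expression-level equivalences $(e_1, e_2) \teqv \letexpr{y_1}{e_1}{\letexpr{y_2}{e_2}{(y_1, y_2)}}$ and $\linl{e} \teqv \letexpr{y}{e}{\linl{y}}$ (in the same way the paper derives the bind rule for operations), lift them through the region's \ms{let} via the \brle{let}$_1$ congruence, and then flatten the resulting nested \ms{let}s with \brle{let$_1$-let$_1$}. Because each rewriting step requires both sides to typecheck, the equivalence and typing inductions genuinely must advance together, and I must track the fresh names introduced at every peel (discharged by $\alpha$-conversion in this exposition, by de Bruijn indices in the mechanization) along with the weakenings and exchanges needed to keep the growing continuation regions well-typed in both the ordinary and the ANF judgements.
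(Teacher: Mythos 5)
Your proposal is correct and follows essentially the same route as the paper's proof: a first induction on the bound expression $a$ (with the continuation $r$, the bound name, and the contexts generalized) establishing both claims about $\letanf{x}{a}{r}$ via the commuting-conversion rules \brle{let$_1$-op}, \brle{let$_1$-let$_1$}, \brle{let$_1$-let$_2$}, \brle{let$_1$-case}, and \brle{let$_1$-abort}, followed by an induction on $r$ for $\toanf{\cdot}$ using \brle{let$_1$-$\beta$} read backwards at branches (valid since branch arguments are pure), \brle{let$_2$-bind} and \brle{case-bind} at binary lets and cases, and the \brle{cfg} congruence at \ms{where}-blocks. If anything you are more careful than the paper, which dismisses the pair and injection cases as ``analogous'' to the \brle{let$_1$-op} case even though they lack primitive region-level bind rules and genuinely require the detour you describe: deriving the expression-level bind equivalences (from \brle{let$_2$-$\eta$}/\brle{let$_2$-pair} for pairs and \brle{case-inl}/\brle{case-$\eta$} for injections), lifting them through the region-level \brle{let}$_1$ congruence, and flattening with \brle{let$_1$-let$_1$}.
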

\begin{proof}
  See Appendix~\ref{proof:anf-conversion}
\end{proof}

\begin{figure}
  \begin{equation*}
    \boxed{\ahasty{\Gamma}{\epsilon}{a}{A}}
  \end{equation*}
  \begin{gather*}    
    \prftree[r]{\rle{var}}{\Gamma\;x \leq (A, \epsilon)}{\ahasty{\Gamma}{\epsilon}{x}{A}} \qquad
    \prftree[r]{\rle{op}}{\isop{f}{A}{B}{\epsilon}}{\Gamma\;x \leq (A, \epsilon)}
      {\ahasty{\Gamma}{\epsilon}{f\;x}{B}} \\
    \prftree[r]{\rle{unit}}{\ahasty{\Gamma}{\epsilon}{()}{\mb{1}}} \qquad
    \prftree[r]{\rle{pair}}{\Gamma\;x \leq (A, \epsilon)}{\Gamma\;y \leq (B, \epsilon)}
      {\hasty{\Gamma}{\epsilon}{(x, y)}{A \otimes B}} \\
    \prftree[r]{\rle{inl}}{\Gamma\;x \leq (A, \epsilon)}
      {\hasty{\Gamma}{\epsilon}{\linl{x}}{A + B}} \qquad
    \prftree[r]{\rle{inr}}{\Gamma\;y \leq (B, \epsilon)}
      {\hasty{\Gamma}{\epsilon}{\linr{y}}{A + B}} \qquad
    \prftree[r]{\rle{abort}}{\Gamma\;x \leq (\mb{0}, \epsilon)}
      {\hasty{\Gamma}{\epsilon}{\labort{x}}{A}}
  \end{gather*}
  \caption{Typing rules for atomic expressions}
  \Description{}
  \label{fig:ssa-ops}
\end{figure}

\begin{figure}
  \begin{equation*}
    \boxed{\ahaslb{\Gamma}{r}{\ms{L}}}
  \end{equation*}
  \begin{gather*}
    \prftree[r]{\rle{br}}{\Gamma\;x \leq (A, \epsilon)}{\ms{L}\;\ell = A}
      {\ahaslb{\Gamma}{\brb{\ell}{x}}{\ms{L}}} \qquad
    \prftree[r]{\rle{let$_1$-r}}
      {\ahasty{\Gamma}{\epsilon}{a}{A}}
      {\ahaslb{\Gamma, \bhyp{x}{A}}{r}{\ms{L}}}
      {\ahaslb{\Gamma}{\letstmt{x}{a}{r}}{\ms{L}}} \\
    \prftree[r]{\rle{let$_2$-r}}
      {\ahasty{\Gamma}{\epsilon}{e}{A \otimes B}}
      {\ahaslb{\Gamma, \bhyp{x}{A}, \bhyp{y}{B}}{r}{\ms{L}}}
      {\ahaslb{\Gamma}{\letstmt{(x, y)}{e}{r}}{\ms{L}}} \\
    \prftree[r]{\rle{case-r}}
      {\Gamma\;x \leq (A + B, \epsilon)}
      {\ahaslb{\Gamma, \bhyp{y}{A}}{r}{\ms{L}}}
      {\ahaslb{\Gamma, \bhyp{z}{B}}{s}{\ms{L}}}
      {\ahaslb{\Gamma}{\casestmt{x}{y}{r}{z}{s}}{\ms{L}}} \\
    \prftree[r]{\rle{cfg}}
      {\ahaslb{\Gamma}{r}{\ms{L}, (\lhyp{\ell_i}{A_i},)_i}}
      {\forall i. \ahaslb{\Gamma, \bhyp{x_i}{A_i}}{t_i}{\ms{L}, (\lhyp{\ell_j}{A_j},)_j}}
      {\ahaslb{\Gamma}{\where{r}{(\wbranch{\ell_i}{x_i}{t_i},)_i}}{\ms{L}}}
  \end{gather*}
  \caption{Typing rules for \isotopessa{} ANF regions}
  \Description{}
  \label{fig:ssa-anf}
\end{figure}

\subsubsection{From ANF to SSA}

We now wish to give typing rules for strict regions. Since SSA consists of \emph{basic blocks}, that is,
linear sequences of instructions followed by a terminator, we will introduce two mutually
recursive judgements:
\begin{itemize}
  \item $\thaslb{\Gamma}{r}{\ms{L}}$, which states that $r$ is a terminator, which we model
  as a tree of case-statements with unconditional branches at the leaves
\item $\bhaslb{\Gamma}{r}{\ms{L}}$, which states that $r$ is a basic block \emph{plus} the blocks
  which are its children in the dominator tree. The basic block itself is a linear
  sequence of unary and binary let-bindings followed by a terminator, which is wrapped in a where-block
  containing the child blocks. 
\end{itemize}
A strict region, then, is a basic block, with the block itself (made up of its component
let-bindings followed by its terminator) the distinguished \emph{entry block}. We give typing rules
for these judgements in Figure~\ref{fig:ssa-strict}. Note in particular that we require every block
to have an associated \ms{where}-block of dominated blocks; if there are no blocks dominated, then
we simply use the empty \ms{where}-block. This hints at the natural inductive representation
\begin{lstlisting}
    struct BasicBlock {
      instructions : List LetBinding,
      terminator : Terminator,
      children : List BasicBlock
    }
\end{lstlisting}
In particular, given a strict region $r$, we can define the following functions to extract the
region's entry block and a list of its children as follows
\begin{equation}
  \begin{aligned}
    \toentry{\letstmt{x}{a}{r}} &= (\letstmt{x}{a}{\toentry{r}}) \\
    \toentry{\letstmt{(x, y)}{a}{r}} &= (\letstmt{(x, y)}{a}{\toentry{r}}) \\
    \toentry{\where{r}{(\wbranch{\ell_i}{x_i}{t_i},)_i}} &= r
  \end{aligned}
\end{equation}
\begin{equation}
  \begin{aligned}
    \todom{\letstmt{x}{a}{r}} &= \todom{r} \\
    \todom{\letstmt{(x, y)}{a}{r}} &= \todom{r} \\
    \todom{\where{r}{(\wbranch{\ell_i}{x_i}{t_i},)_i}} &= [\wbranch{\ell_i}{x_i}{\todom{t_i}},]_i
  \end{aligned}
\end{equation}
We can similarly define a function to construct a strict region from an entry block and a list
of children as follows:
\begin{equation}
  \begin{aligned}
  \adddom{\letstmt{x}{a}{r}}{G} &= \letstmt{x}{a}{\adddom{r}}{G} \\
  \adddom{\letstmt{(x, y)}{a}{r}}{G} &= \letstmt{(x, y)}{a}{\adddom{r}}{G} \\
  \adddom{r}{(\wbranch{\ell_i}{x_i}{t_i},)_i} &= \where{r}{(\wbranch{\ell_i}{x_i}{t_i},)_i}
  \end{aligned}
\end{equation}
It is easy to see that these functions are mutually inverse: for any strict region $r$, we have
\begin{equation}
  r = \adddom{\toentry{r}}{\todom{r}}
\end{equation}
Some other useful facts about $\adddom{\cdot}{\cdot}$ include:
\begin{itemize}
  \item It is a congruence: if $r \teqv r'$ and each $t_i \teqv t_i'$,
  $\adddom{r}{(\wbranch{\ell_i}{x_i}{t_i},)_i} \teqv \adddom{r'}{(\wbranch{\ell_i}{x_i}{t_i'},)_i}$
  \item It is invariant up to permutations
  $\sigma$: $\adddom{r}{(\wbranch{\ell_i}{x_i}{t_i},)_i} \teqv 
    \adddom{r}{(\wbranch{\ell_{\sigma_i}}{x_{\sigma_i}}{t_{\sigma_i}},)_i}$
  \item Similarly, \emph{if both sides of the equation are well-typed}, i.e.,
  \begin{itemize}
    \item All $t_i$ do not use $y$ or any of the variables defined in $s$
    \item All branches to $\ell_i$ come from $\kappa$ or $\ell_j$ (and not from either $r$ or $G$)
  \end{itemize}
  \begin{equation}
    \adddom{r}{(G, \wbranch{\kappa}{y}{s}, (\wbranch{\ell_i}{x_i}{t_i},))} \teqv
    \adddom{r}{(G, \wbranch{\kappa}{y}{\where{s}{(\wbranch{\ell_i}{x_i}{t_i},)}})}
  \end{equation} 
  and hence
  \begin{equation}
    \adddom{r}{(G, \wbranch{\kappa}{y}{s}, (\wbranch{\ell_i}{x_i}{t_i},))} \teqv
    \adddom{r}{(G, \wbranch{\kappa}{y}{\adddom{s}{(\wbranch{\ell_i}{x_i}{t_i},)}})}
  \end{equation}
  In particular, we may apply this rule twice to obtain
  \begin{equation}
    \adddom{r}{(G, \wbranch{\kappa}{y}{\adddom{s}{G'}}, (\wbranch{\ell_i}{x_i}{t_i},))} \teqv
    \adddom{r}{(G, \wbranch{\kappa}{y}{\adddom{s}{G', (\wbranch{\ell_i}{x_i}{t_i},)}})}
    \label{eqn:pull-where}
  \end{equation}
\end{itemize}
Now that we have defined strict regions, we can give a function $\tossa{r}$ to convert a region $r$
to a strict SSA as follows:
\begin{equation}
  \begin{aligned}
    \tossa{r} &= \ssawhere{\toanf{r}}{\cdot}
      \\
    \ssawhere{t}{(\wbranch{\ell_i}{x_i}{t_i},)_i} 
      &= \where{t}{(\wbranch{\ell_i}{x_i}{t_i},)_i} \qquad \text{where}\;t\;\text{is a terminator}
      \\
    \ssawhere{(\letstmt{x}{a}{r})}{G} &= (\letstmt{x}{a}{\ssawhere{r}{G}}) \\
    \ssawhere{(\letstmt{(x, y)}{a}{r})}{G} &= (\letstmt{(x, y)}{a}{\ssawhere{r}{G}}) \\
    \ssawhere{(\casestmt{a}{x}{s}{y}{t})}{G} &=
     \where{(\casestmt{a}{x}{\brb{\ell_l}{x}}{y}
                    {\brb{\ell_r}{y}})  \\ & \qquad }
                    {G, \wbranch{\ell_l}{x}{\tossa{s}}, \wbranch{\ell_r}{y}{\tossa{t}}} \\
    \ssawhere{(\where{r}{(\wbranch{\ell_i}{x_i}{t_i},)_i})}{G} 
      &= \ssawhere{r}{(G, (\wbranch{\ell_i}{x_i}{\tossa{t_i}},)_i)}
  \end{aligned}
\end{equation}
We may state its correctness as follows:
\begin{lemma}[name=SSA conversion, restate=ssaconversion]
  If $\haslb{\Gamma}{r}{\ms{L}}$, then $\bhaslb{\Gamma}{\tossa{r}}{\ms{L}}$ and
  $\lbeq{\Gamma}{r}{\tossa{r}}{\ms{L}}$. In particular, given $\ahaslb{\Gamma}{r}{\ms{L},
  (\ell_i(A_i),)_i}$ and $\forall i, \bhaslb{\Gamma}{t_i}{\ms{L}, (\ell_i(A_i),)_i}$, we have that
  $\bhaslb{\Gamma}{\ssawhere{r}{(\wbranch{\ell_i}{x_i}{t_i},)_i}}{\ms{L}}$ and $\lbeq{\Gamma}
  {(\where{r}{(\wbranch{\ell_i}{x_i}{t_i},)_i})  }
  {\ssawhere{r}{(\wbranch{\ell_i}{x_i}{t_i},)_i}}{\ms{L}}$
\end{lemma}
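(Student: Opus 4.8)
The plan is to derive the first assertion from the ``in particular'' claim and to prove the latter by induction. Given $\haslb{\Gamma}{r}{\ms{L}}$, the A-normalization lemma already supplies $\ahaslb{\Gamma}{\toanf{r}}{\ms{L}}$ and $\lbeq{\Gamma}{r}{\toanf{r}}{\ms{L}}$; since $\tossa{r} = \ssawhere{\toanf{r}}{\cdot}$, I would instantiate the ``in particular'' claim at $\toanf{r}$ with the \emph{empty} family of children. This yields $\bhaslb{\Gamma}{\tossa{r}}{\ms{L}}$ and $\lbeq{\Gamma}{\where{\toanf{r}}{\cdot}}{\tossa{r}}{\ms{L}}$. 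It then remains to observe that an empty \ms{where}-block is equivalent to its body, which follows from \brle{cfg-$\eta$} together with \brle{cfg-$\beta_2$}; composing with the A-normalization equivalence by \brle{trans} gives $\lbeq{\Gamma}{r}{\tossa{r}}{\ms{L}}$.

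For the ``in particular'' claim I would use well-founded induction on $r$, with measure the lexicographic pair consisting of the number of \ms{case}- and \ms{where}-nodes followed by the total size. The point that makes this legitimate is that on an \emph{already-ANF} region the operation $\toanf{\cdot}$ only inserts \ms{let}-bindings and leaves the \ms{case}/\ms{where}-count unchanged, so each nested invocation $\tossa{s} = \ssawhere{\toanf{s}}{\cdot}$ occurs at a strictly smaller value of the dominant component. The easy cases are the leaves and the \ms{let}-bindings: when $r$ is a terminator, $\ssawhere{r}{G} = \where{r}{G}$ by definition, the children in $G$ are strict by hypothesis, and the required equivalence is reflexivity; when $r = \letstmt{x}{a}{r'}$ (and likewise for binary \ms{let}), I would push the binding outward using the \ms{where}/\ms{let}-commutation derived from \brle{cfg-$\eta$} (valid since the bodies in $G$ do not mention $x$), then close the case with the \ms{let}-congruence and the inductive hypothesis for $r'$, typing the displaced children under $\Gamma, \bhyp{x}{A}$ by weakening.

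The substantive cases are \ms{case} and \ms{where}. For $r = \casestmt{x_0}{x}{s}{y}{t}$ I would first rewrite the \ms{case}-statement, via the derived rule \brle{case2cfg}, into a \ms{case} branching to two fresh labels $\ell_l,\ell_r$ wrapped in a two-block \ms{where}; then replace $s$ and $t$ by $\tossa{s}$ and $\tossa{t}$ using the inductive hypothesis; and finally fuse the resulting nested \ms{where}-block into the accumulator $G$ using the \ms{where}-fusion rule \brle{cfg-fuse}. For $r = \where{r'}{(\wbranch{\kappa_i}{y_i}{u_i})_i}$ the definition recurses on $r'$ with $G$ extended by $(\wbranch{\kappa_i}{y_i}{\tossa{u_i}})_i$, and the matching equivalence comes from flattening the nested \ms{where}-blocks with \brle{cfg-fuse}, rewriting each $u_i$ to $\tossa{u_i}$ by the inductive hypothesis, and then applying the inductive hypothesis to $r'$ under the enlarged child family.

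The hard part will be these last two cases. Their equivalences rest on the \ms{where}-fusion rule \brle{cfg-fuse}, which is itself established only denotationally in the appendix, and on delicate bookkeeping of label-contexts -- choosing $\ell_l,\ell_r$ fresh for $\ms{L}$ and for the labels already bound in $G$, and threading the growing child family through the correct contexts so that both the strictness judgement $\bhaslb{\Gamma}{\cdot}{\ms{L}}$ and the equivalence remain well-typed at every step. A secondary obstacle is the well-foundedness noted above: because each nested $\tossa{\cdot}$ silently re-runs A-normalization, the naive size measure increases, so it is essential that the \ms{case}/\ms{where}-count be taken as the dominant component of the induction measure.
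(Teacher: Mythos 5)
Your proposal matches the paper's proof essentially step for step: the first claim is obtained from the ``in particular'' claim applied to $\toanf{r}$ with the empty child family (using the A-normalization lemma and elimination of the empty \ms{where}-block), and the latter is proved by induction on ANF regions, with the \ms{let} cases handled by the \brle{cfg-$\eta$}-derived \ms{let}/\ms{where} commutation, the \ms{case} case via \brle{case2cfg} together with \ms{where}-fusion, and the \ms{where} case by flattening nested \ms{where}-blocks with \brle{cfg-fuse}. The only real difference is that you make the induction measure explicit (\ms{case}/\ms{where}-count dominant, then size) to account for the recursive calls passing through $\toanf{\cdot}$ — a well-foundedness point that the paper's ``induction on ANF regions'' silently elides — which is a sound refinement of, not a departure from, the same argument.
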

\begin{proof}
  See Appendix~\ref{proof:ssa-conversion}
\end{proof}

\begin{figure}
  \begin{equation*}
    \boxed{\thaslb{\Gamma}{r}{\ms{L}}}
  \end{equation*}
  \begin{gather*}
    \prftree[r]{\rle{br}}{\Gamma\;x \leq (A, \epsilon)}{\ms{L}\;\ell = A}
      {\thaslb{\Gamma}{\brb{\ell}{x}}{\ms{L}}} \qquad
    \prftree[r]{\rle{case-t}}
      {\Gamma\;x \leq (A + B, \epsilon)}
      {\thaslb{\Gamma, \bhyp{y}{A}}{r}{\ms{L}}}
      {\thaslb{\Gamma, \bhyp{z}{B}}{s}{\ms{L}}}
      {\thaslb{\Gamma}{\casestmt{x}{y}{r}{z}{s}}{\ms{L}}} \\
  \end{gather*}
  \begin{equation*}
    \boxed{\bhaslb{\Gamma}{r}{\ms{L}}}
  \end{equation*}
  \begin{gather*}
    \prftree[r]{\rle{cfg}}
      {\thaslb{\Gamma}{r}{\ms{L}, (\lhyp{\ell_i}{A_i},)_i}}
      {\forall i. \bhaslb{\Gamma, \bhyp{x_i}{A_i}}{t_i}{\ms{L}, (\lhyp{\ell_j}{A_j},)_j}}
      {\bhaslb{\Gamma}{\where{r}{(\wbranch{\ell_i}{x_i}{t_i},)_i}}{\ms{L}}} \\
    \prftree[r]{\rle{let$_1$-c}}
      {\ahasty{\Gamma}{\epsilon}{a}{A}}
      {\bhaslb{\Gamma, \bhyp{x}{A}}{r}{\ms{L}}}
      {\bhaslb{\Gamma}{\letstmt{x}{a}{r}}{\ms{L}}} \\
    \prftree[r]{\rle{let$_2$-c}}
      {\ahasty{\Gamma}{\epsilon}{e}{A \otimes B}}
      {\bhaslb{\Gamma, \bhyp{x}{A}, \bhyp{y}{B}}{r}{\ms{L}}}
      {\bhaslb{\Gamma}{\letstmt{(x, y)}{e}{r}}{\ms{L}}} \\
  \end{gather*}
  \caption{Typing rules for \isotopessa{} terminators and strict regions}
  \Description{}
  \label{fig:ssa-strict}
\end{figure}

We now come to the final part of our argument: that strict regions are equivalent to standard SSA.
We begin by giving a grammar for standard SSA in Figure~\ref{fig:ssa-standard}. It is easy to see
that removing \ms{where}-blocks from a strict region, gives us a function $\tocfg{\cdot}$ from
strict regions $r$ to control-flow graphs $G$ defined as follows:
\begin{equation}
    \tocfg{r} = \toentry{r}, (\wbranch{\ell_i}{x_i}{\tocfg{t_i}},)_{
      (\wbranch{\ell_i}{x_i}{t_i}) \in \todom{r}}
\end{equation}
where
\begin{equation}
  G, \wbranch{\ell}{x}{\beta, G'} := G, \wbranch{\ell}{x}{\beta}, G' 
\end{equation}
An SSA program is \emph{well-formed} if:
\begin{itemize}
  \item All variable uses are well-typed
  \item All variable uses respect dominance-based scoping
\end{itemize}
It is easy to see that any erased program is well-formed, since our typing rules guarantee every
expression is well-typed, while our lexical scoping for values ensures that variables are only
visible in the children of the block $\beta$ in which they are defined, which the lexical scoping of
labels guarantees are dominated by $\beta$. On the other hand, we may give an algorithm to convert
any well-formed SSA program $G$ into a well-typed strict region $r = \toreg{G}$ as follows:
\begin{enumerate}
  \item Compute the dominance tree of $G$, rooted at its entry block $\beta$.
  \item For each child $\wbranch{\ell_i}{x_i}{\beta_i}$ of $\beta$, let $G_i$ denote the CFG
  composed of the descendants of $\beta_i$ (with $\beta_i$ as entry block), given in the order they
  appear in $G$. Recursively compute $r_i = \toreg{G_i}$.
  \item Return the program $\adddom{\beta}{(\wbranch{\ell_i}{x_i}{r_i},)_i}$
\end{enumerate}
We will write $G \simeq G'$ to mean ``$G$ is a permutation of $G'$'' (in particular, $G$ and $G'$
must have the same entry block!). It is easy to see that this algorithm gives a strict SSA region
which erases to a permutation of $G$, i.e., $\tocfg{\toreg{G}} \simeq G$, as desired. To complete
our argument, it hence suffices to show that, given strict regions $\shaslb{\Gamma}{r}{\ms{L}}$,
$\shaslb{\Gamma}{r'}{\ms{L}}$, such that $\tocfg{r} \simeq \tocfg{r'}$, we have that
$\lbeq{\Gamma}{r}{r'}{\ms{L}}$. We break this down into two lemmas:
\begin{lemma}[name=Permutation Invariance, restate=cfgperminvar]
  If $G \simeq G'$ and $\shaslb{\Gamma}{\toreg{G}}{\ms{L}}$, then 
    $\shaslb{\Gamma}{\toreg{G'}}{\ms{L}}$ and $\lbeq{\Gamma}{\toreg{G}}{\toreg{G'}}{\ms{L}}$
\end{lemma}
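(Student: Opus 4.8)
The plan is to prove both conclusions simultaneously by well-founded induction on the number of basic blocks of $G$, using the two structural properties of $\adddom{\cdot}{\cdot}$ recorded above: that it is a congruence, and that it is invariant under permutation of its list of dominator-tree children. No separate base case is needed, since the case of zero children is handled uniformly by the same argument (both properties degenerate to reflexivity there).

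The first step is to observe that dominance is a purely graph-theoretic property, depending only on the entry block and the edge set, never on the order in which blocks happen to be listed. Hence $G \simeq G'$ forces $G$ and $G'$ to have \emph{identical} dominance trees: the common entry block $\beta$ has the same set of children $\{\wbranch{\ell_i}{x_i}{\beta_i}\}$ in both, with the same attached parameter types $A_i$, and for each $i$ the sub-CFG $G_i$ of descendants of $\beta_i$ in $G$ and the corresponding $G_i'$ in $G'$ differ only in the order their blocks are listed, sharing the entry block $\beta_i$. Thus $G_i \simeq G_i'$, and since $\beta$ lies in none of the $G_i$, each $G_i$ has strictly fewer blocks than $G$, so the induction hypothesis applies to the pair $(G_i, G_i')$.

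For the inductive step I would unfold the definition of $\toreg{\cdot}$: with children listed in the order induced by $G$ we have $\toreg{G} = \adddom{\beta}{(\wbranch{\ell_i}{x_i}{\toreg{G_i}},)_i}$, while listing them in the order induced by $G'$ gives $\toreg{G'} = \adddom{\beta}{(\wbranch{\ell_{\sigma_i}}{x_{\sigma_i}}{\toreg{G_{\sigma_i}'}},)_i}$ for the permutation $\sigma$ relating the two listings; crucially the terminator part $\beta$ is literally identical on both sides. Inverting \brle{cfg} on $\shaslb{\Gamma}{\toreg{G}}{\ms{L}}$ yields well-typedness of each $\toreg{G_i}$ in context $\Gamma, \bhyp{x_i}{A_i}$; the induction hypothesis then gives well-typedness of each $\toreg{G_i'}$ together with $\lbeq{\Gamma, \bhyp{x_i}{A_i}}{\toreg{G_i}}{\toreg{G_i'}}{\ms{L}, (\lhyp{\ell_j}{A_j},)_j}$. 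Reassembling $\toreg{G'}$ via \brle{cfg}, using that strict typing is invariant under permutation of the label-context (the same fact invoked for \brle{perm-cfg}), establishes $\shaslb{\Gamma}{\toreg{G'}}{\ms{L}}$. For the equivalence, the congruence property of $\adddom{\cdot}{\cdot}$ together with the induction hypotheses gives $\toreg{G} \teqv \adddom{\beta}{(\wbranch{\ell_i}{x_i}{\toreg{G_i'}},)_i}$, and then permutation-invariance of $\adddom{\cdot}{\cdot}$ reorders the children by $\sigma$ to yield $\adddom{\beta}{(\wbranch{\ell_i}{x_i}{\toreg{G_i'}},)_i} \teqv \toreg{G'}$; transitivity closes the proof.

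The hard part will not be the equational manipulation, which is a two-line combination of the congruence and permutation-invariance lemmas for $\adddom{\cdot}{\cdot}$, but rather the graph-theoretic bookkeeping underlying the first step: making precise that the dominance tree, the induced partition of $G$ into the entry block together with the child sub-CFGs $G_i$, and the parameter types $A_i$ are all preserved by the permutation $G \simeq G'$, so that the recursive calls $\toreg{G_i}$ and $\toreg{G_i'}$ genuinely line up block-for-block and the well-founded recursion is correctly structured.
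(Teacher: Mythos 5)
Your proposal is correct and follows essentially the same route as the paper's proof: induction on the number of blocks, decomposition of $G$ and $G'$ along the (permutation-invariant) dominance tree into the shared entry block $\beta$ and sub-CFGs $G_i \simeq G_i'$, application of the induction hypothesis to these pairs, and then the congruence and permutation-invariance properties of $\adddom{\cdot}{\cdot}$ to reassemble the equivalence. The only difference is presentational — the paper phrases the matching of children via explicit permutations $\rho$ and $\tau_i$ rather than ``identical dominance trees,'' and you spell out the typing half (inversion of \brle{cfg} plus label-context permutation) which the paper leaves implicit — but this is the same argument.
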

\begin{proof}
  See Appendix~\ref{proof:cfg-perm-invar}
\end{proof}
\begin{lemma}[name=CFG Conversion, restate=cfgconversion]
  If $\shaslb{\Gamma}{r}{\ms{L}}$, then $\lbeq{\Gamma}{r}{\toreg{\tocfg{r}}}{\ms{L}}$.
\end{lemma}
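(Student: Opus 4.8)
The plan is to argue by well-founded induction on the number of basic blocks of $r$, following the recursion of $\toreg{\cdot}$ and using \brle{perm-cfg} together with the ``pull-where'' identity (Equation~\ref{eqn:pull-where}, a consequence of the $\ms{where}$-fusion rule in Equation~\ref{eqn:where-fusion-2}) to reconcile the $\ms{where}$-nesting of $r$ with the dominator tree that $\toreg{\cdot}$ recomputes from $\tocfg{r}$. Write $r = \adddom{\beta}{(\wbranch{m_k}{y_k}{s_k})_k}$ with $\beta = \toentry{r}$; the leading $\ms{let}$-bindings of $\beta$ are preserved verbatim by both $\tocfg{\cdot}$ and $\toreg{\cdot}$ and by the $\adddom{\cdot}{\cdot}$-identities, so I treat them as part of $\beta$ throughout. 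The base case, where $r$ has no children, is immediate, since $\toreg{\tocfg{r}}$ then returns $r$ up to an empty $\ms{where}$-block.

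The inductive step rests on one structural fact: the nesting of a strict region already refines dominance. Since the label of any block $Q$ occurring inside $s_k$ is out of scope outside $s_k$, every path in $\tocfg{r}$ reaching $Q$ must first enter $s_k$ through its entry $m_k$, so $m_k$ dominates $Q$. It follows that the blocks immediately dominated by $\beta$ are exactly those top-level labels $\ell_1,\dots,\ell_n$ among the $m_k$ that no sibling strictly dominates, and that every remaining top-level block is dominated by a \emph{unique} such $\ell_i$, which I take to be the group it belongs to. Writing $G_i$ for the descendants of $\ell_i$ (with $\ell_i$ as entry), we have $\toreg{\tocfg{r}} = \adddom{\beta}{(\wbranch{\ell_i}{x_i}{\toreg{G_i}})_i}$ up to reordering of the $\ell_i$.

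I then transform $r$ into this shape. First, \brle{perm-cfg} reorders the top-level blocks so that $\ell_1,\dots,\ell_n$ come first. Next, for each remaining (``misplaced'') top-level block, a single application of Equation~\ref{eqn:pull-where} sinks it, together with its whole subtree, to become a direct child of the entry of its group root $\ell_i$. A short case analysis shows that every out-edge of a top-level block targets only an exit label of $\ms{L}$, a group root $\ell_j$ (which stays in scope everywhere), or another block of its own group; hence the sunk blocks type-check in their new position and the result $\adddom{\beta}{(\wbranch{\ell_i}{x_i}{r_i''})_i}$ is again a strict region, with $r_i''$ collecting exactly the blocks dominated by $\ell_i$, so that $\tocfg{r_i''} \simeq G_i$. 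Applying the induction hypothesis to each (strictly smaller) $r_i''$ gives $r_i'' \teqv \toreg{\tocfg{r_i''}}$, and the Permutation Invariance lemma rewrites $\toreg{\tocfg{r_i''}}$ to $\toreg{G_i}$; the congruence and permutation properties of $\adddom{\cdot}{\cdot}$ noted above then assemble these into $\toreg{\tocfg{r}}$, completing the step.

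The main obstacle is discharging the side-conditions of the pull-where applications, i.e.\ translating the purely syntactic scoping discipline of strict regions (a label is visible precisely in the siblings of its defining block and their descendants) into the dominance statements used above. Concretely, I must check that a sunk block never refers to the locally bound variables of the block it is sunk under (automatic, since as top-level siblings in $r$ the two blocks shared only the variables of $\Gamma$), that all of its predecessors already lie inside the target group (the edge-target case analysis), and finally that each assembled $r_i''$ flattens to a permutation of $G_i$. Once these bookkeeping facts are in place, the equivalence is just a composition of \brle{perm-cfg}, Equation~\ref{eqn:pull-where}, and congruence, all of which are available.
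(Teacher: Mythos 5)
Your proposal is correct and takes essentially the same approach as the paper's proof: both identify the immediate dominance children of the entry block among the top-level labels (using the fact that lexical nesting of strict regions refines dominance), group the remaining siblings under the unique group root dominating them, discharge the same scoping/predecessor side conditions, and combine \brle{perm-cfg}, Equation~\ref{eqn:pull-where}, Permutation Invariance, and the induction hypothesis on the strictly smaller group regions. The only cosmetic differences are direction (the paper builds $\adddom{\toentry{r}}{(\wbranch{\kappa_i}{y_i}{r_i},)_i}$ from $\toreg{\tocfg{r}}$ via the IH and then rewrites back to $r$ through the chain $T_N \teqv \cdots \teqv T_0$, whereas you rewrite $r$ forward into grouped form first) and granularity (the paper sinks each whole group with a single application of pull-where, which avoids the ordering issue your one-block-at-a-time sinking would face when misplaced blocks of the same group branch to one another).
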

\begin{proof}
  See Appendix~\ref{proof:cfg-conversion}
\end{proof}

We may now conclude that, given $\shaslb{\Gamma}{r}{\ms{L}}$ and $\shaslb{\Gamma}{r'}{\ms{L}}$, if
$\tocfg{r} \simeq \tocfg{r'}$, we have that $\lbeq{\Gamma}{r}{r'}{\ms{L}}$, since in particular
\begin{equation}
  r \teqv \toreg{\tocfg{r}} \teqv \toreg{\tocfg{r'}} \teqv r'
\end{equation}

\begin{figure}[H]
  \begin{center}
    \begin{grammar}
      <\(a, b, c, e\)> ::= \(x\) 
      \;|\;  \(f\;x\)
      \alt  \(()\)
      \;|\; \((x, y)\)
      \alt  \(\linl{x}\) 
      \;|\; \(\linr{y}\)
      \;|\; \(\labort{x}\)
      
      <\(\tau\)> ::= \(\brb{\ell}{x}\) 
      \;|\; \(\casestmt{x}{y}{\tau}{z}{\tau'}\)

      <\(\beta\)> := \(\tau\)
      \;|\; \(\letstmt{x}{a}{\beta}\)
      \;|\; \(\letstmt{(x, y)}{a}{\beta}\)

      <\(G\)> ::= \(\beta\) \;|\; \(G; \wbranch{\ell}{x}{\beta}\)
    \end{grammar}
  \end{center}
  \caption{ Grammar for standard SSA, parametrized over an \isotopessa{} signature. A standard SSA
    program is made up of \emph{basic blocks} \(\beta\), and can be thought of as a
    \emph{control-flow graph} \(G\) (with a distinguished entry block $\beta$). } \Description{}
  \label{fig:ssa-standard}
\end{figure}

\subsection{Records and Enums}

\label{ssec:records-enums}

For simplicity, we defined our language to have only binary products and sums. However, we would
like to support records (named $n$-ary products) and enums (named $n$-ary sums). We will implement
these on top of our language using contexts; the resulting machinery will turn out to be crucial in
proving the B\"ohm-Jacopini theorem in Section~\ref{ssec:data-control} and completeness in
Section~\ref{ssec:completeness}. We begin by defining \emph{packing} of variable contexts as
follows:
\begin{equation}
  \pckd{\cdot} = \mb{1} \qquad \pckd{\Gamma, \thyp{x}{A}{\epsilon}} = \pckd{\Gamma} \otimes A
\end{equation}
While in general we destructure pairs using binary \ms{let}-bindings, it will be more convenient
to start defining operations on records in terms of projections. We begin by defining projections
on pairs in the obvious manner:
\begin{equation}
  \prftree[r]{\rle{$\pi_l$}}
    {\hasty{\Gamma}{\epsilon}{e}{A \otimes B}}
    {\hasty{\Gamma}{\epsilon}{\pi_l\;e := (\letexpr{(x, y)}{e}{x})}{A}}
  \qquad
  \prftree[r]{\rle{$\pi_r$}}
    {\hasty{\Gamma}{\epsilon}{e}{A \otimes B}}
    {\hasty{\Gamma}{\epsilon}{\pi_r\;e := (\letexpr{(x, y)}{e}{y})}{A}}
\end{equation}
We may then define projections on records as follows:
\begin{equation}
  \pi_{(\Delta, \thyp{x}{A}{\epsilon}), y}\;e = \pi_{\Delta, y}(\pi_l\;e) \qquad
  \pi_{(\Delta, \thyp{x}{A}{\epsilon}), x}\;e = \pi_r\;e
\end{equation}
with typing rule
\begin{equation}
  \prftree[r]{\rle{$\pi_{\ms{rec}}$}}
    {\hasty{\Gamma}{\epsilon}{e}{\pckd{\Delta}}}
    {\Delta(x) = A}
    {\hasty{\Gamma}{\epsilon}{\pi_{\Delta, x}\;e}{A}}
\end{equation}
We will omit $\Delta$ when it is clear from context. We can define a term
$\hasty{\Gamma}{\ms{eff}(\Gamma)}{\ms{packed}(\Gamma)}{[\Gamma]}$ to ``pack'' up the current context
into a record as follows:
\begin{equation}
  \ms{packed}(\cdot) = () \qquad
  \ms{packed}(\Gamma, \thyp{x}{A}{\epsilon}) = (\ms{packed}(\Gamma), x)
\end{equation}
where we define the \emph{effect} of a context $\Gamma$ as follows
\begin{equation}
  \ms{eff}(\cdot) = \bot \qquad 
  \ms{eff}(\Gamma, \thyp{x}{A}{\epsilon}) = \ms{eff}(\Gamma) \sqcup \epsilon
\end{equation}
In particular, we say $\Gamma$ is \emph{pure} if $\ms{eff}(\Gamma) = \bot$. We may now define
packing and unpacking substitutions
$\issubst{\ms{pack}_y^\otimes(\Gamma)}{\Gamma}{\bhyp{y}{\pckd{\Gamma}}}$,
$\issubst{\ms{unpack}_y^\otimes(\Gamma)}{\bhyp{y}{\pckd{\Gamma}}}{\Gamma}$ as follows:
\begin{equation}
  \ms{pack}_y^\otimes(\Gamma) = y \mapsto \ms{packed}(\Gamma) \qquad
  \ms{unpack}_y^\otimes(\Gamma) = (x \mapsto \pi_{\Gamma, x}\;y)_{x \in \Gamma}
\end{equation}
It turns out that our equational theory is sufficient to show that the pack and unpack substitutions
are inverses of each other \emph{for pure contexts}, i.e.
\begin{equation}
  \begin{aligned}
  \issubst
    {[\ms{unpack}_y^\otimes(\Gamma)]\ms{pack}_y^\otimes(\Gamma) 
      &\teqv \ms{id}_{\Gamma}}{\Gamma}{\Gamma}
  \\
  \issubst
    {[\ms{pack}_y^\otimes(\Gamma)]\ms{unpack}_y^\otimes(\Gamma) &\teqv \ms{id}_{\bhyp{y}{[\Gamma]}}}
    {\bhyp{y}{\Gamma}}{\bhyp{y}{\Gamma}}
  \end{aligned}
\end{equation}
Similarly, we may define a ``packing'' operation $\pckd{\cdot}$ on label-contexts as follows:
\begin{equation}
  \pckd{\cdot} = \mb{0} \qquad \pckd{\ms{L}, \ell(A)} = \pckd{\ms{L}} + A
\end{equation}
We may then define injections on enums as follows:
\begin{equation}
  \iota_{(\ms{L}, \ell(A)), \kappa}\;e = \iota_{\ms{L}, \kappa}(\iota_l\;e) \qquad
  \iota_{(\ms{L}, \ell(A)), \ell}\;e = \iota_r\;e
\end{equation}
with typing rule
\begin{equation}
  \prftree[r]{\rle{$\iota_{\ms{enum}}$}}
    {\hasty{\Gamma}{\epsilon}{e}{A}}
    {\ms{L}(\ell) = A}
    {\hasty{\Gamma}{\epsilon}{\iota_{\ms{L}, \ell}\;e}{\pckd{\ms{L}}}}
\end{equation}
Similarly, we can define $n$-ary case-statements on enums as follows:
\begin{equation}
  \begin{aligned}
  \ms{case}_\cdot\;e\;\{\} &= \infty \\
  \ms{case}_{\ms{L}, \kappa(A)}\;e\;\{(\ell_i(x_i): t_i)_i, \kappa(y): s\}
  &= \casestmt{e}{x}{\ms{case}_{\ms{L}}\;e\;\{(\ell_i(A_i): t_i)_i\}}{y}{s}
  \end{aligned}
\end{equation}
with typing rule
\begin{equation}
  \prftree[r]{\rle{$\ms{case}_{\ms{enum}}$}}
    {\hasty{\Gamma}{\epsilon}{e}{[\ms{L}]}}
    {\forall i. \haslb{\Gamma, \bhyp{x_i}{A_i}}{t_i}{\ms{K}}}
    {\haslb{\Gamma}{\ms{case}_{\ms{L}}\;e\;\{(\ell_i(x_i): t_i)_i\}}{\ms{K}}}
\end{equation}
where
\begin{equation}
  \prftree[r]{$\infty$}
    {\haslb{\Gamma}{\infty := \where{\brb{\ell}{()}}{\wbranch{\ell}{x}{\brb{\ell}{x}}}}{\ms{L}}}
\end{equation}
We may now define ``packing'' and ``unpacking'' label-substitutions
$\lbsubst{\cdot}{\ms{pack}_\kappa^+(\ms{L})}{\ms{L}}{\kappa([\ms{L}])}$,
$\lbsubst{\cdot}{\ms{unpack}_\kappa^+(\ms{L})}{\kappa([\ms{L}])}{\ms{L}}$ as follows:
\begin{equation}
  \ms{pack}_\kappa^+(\cdot) = \cdot \qquad
  \ms{pack}_\kappa^+(\ms{L}, \ell(A)) 
  = [\kappa(x) \mapsto \brb{\kappa}{\iota_l\;x}]\ms{pack}_\kappa^+(\ms{L}), 
    \ell(x) \mapsto \brb{\kappa}{\iota_r\;x}  
\end{equation}
\begin{equation}
  \ms{unpack}_\kappa^+(\ms{L}) = \kappa(x) \mapsto \ms{unpack}^+(\ms{L})\;x
\end{equation}
where we have
\begin{equation}
  \prftree[r]{\rle{unpack$^+$}}
    {\hasty{\Gamma}{\epsilon}{e}{\ms{L}}}
    {\haslb{\Gamma}{\ms{unpack}^+(\ms{L})\;e}{\ms{L}}} \qquad
  \prftree[r]{$\infty$}
    {\haslb{\Gamma}{\infty := \where{\brb{\ell}{()}}{\wbranch{\ell}{x}{\brb{\ell}{x}}}}{\ms{L}}}
\end{equation}
defined as follows:
\begin{equation}
  \ms{unpack}^+(\cdot)\;e = \infty \qquad
  \ms{unpack}^+(\ms{L}, \ell(A))\;e = \casestmt{e}{x}{\ms{unpack}^+(\ms{L})\;x}{y}{\brb{\ell}{y}}
\end{equation}
We can further show that, in this case for all label contexts $\ms{L}$, these substitutions are
mutually inverse, i.e., that
\begin{equation}
  \begin{aligned}
  \lbsubst{\Gamma}
    {[\ms{unpack}_\kappa^+(\ms{L})]\ms{pack}_\kappa^+(\ms{L}) 
      &\teqv \ms{id}_{\ms{L}}}{\ms{L}}{\ms{L}}
  \\
  \lbsubst{\Gamma}
    {[\ms{pack}_\kappa^+(\ms{L})]\ms{unpack}_\kappa^+(\ms{L}) &\teqv \ms{id}_{\kappa([\ms{L}])}}
    {\kappa([\ms{L}])}{\kappa([\ms{L}])}
  \end{aligned}
\end{equation}
Finally, fixing a distinguished variable $\invar$, it will be very useful to define 
a ``\emph{variable packing}'' operation on expressions and regions as follows:
\begin{equation}
  \haslb{\Gamma}{r}{\ms{L}} \implies 
  \haslb{\invar : [\Gamma]}{\pckd{r}^\otimes := [\ms{unpack}_{\invar}^\otimes(\Gamma)]r}{\ms{L}}
\end{equation}
In particular, for $\Gamma$ pure, the packing operation $[\cdot]$ is an injection on expressions and
regions w.r.t. the equational theory, since $\ms{unpack}_{\invar}^\otimes(\Gamma)$ has an inverse.
Similarly, fixing a distinguished label $\outlb$, we may define a ``label packing'' operation on
regions as follows:
\begin{equation}
  \haslb{\Gamma}{r}{\ms{L}} \implies 
  \haslb{\Gamma}{\pckd{r}^+ := [\ms{pack}_{\outlb}^+(\ms{L})]r}{\outlb([\ms{L}])}
\end{equation}
Since the packing substitution has an inverse, it similarly follows that label packing is an
injection w.r.t. the equational theory, i.e.
\begin{equation}
  \lbeq{\Gamma}{r}{r'}{\ms{L}} \iff \lbeq{\Gamma}{\pckd{r}^+}{\pckd{r'}^+}{\ms{L}}
\end{equation}
Finally, we can define a ``packing'' operation on regions to be given by label-packing followed by
variable-packing, or vice versa (since it turns out the operations commute), as follows:
\begin{equation}
  \haslb{\Gamma}{r}{\ms{L}} \implies 
  \haslb{\invar : [\Gamma]}
    {\pckd{r} := \pckd{\pckd{r}^+}^\otimes = \pckd{\pckd{r}^\otimes}^+}
    {\outlb(\ms{L})}
\end{equation}
We similarly have that this is an injection for $\Gamma$ pure, i.e.
\begin{equation}
  \lbeq{\Gamma}{r}{r'}{\ms{L}} \iff \lbeq{\Gamma}{\pckd{r}}{\pckd{r'}}{\ms{L}}
\end{equation}

\subsection{B\"ohm-Jacopini for SSA}

\label{ssec:data-control}

Now that we have given our equational theory, we want to show it is ``good enough'' to reason about
the properties inherent to all SSA-based languages. In particular, we
wish to show that we have enough power to reason about interconversion between data-flow and
control-flow. For example, a state machine can be implemented either as a switch on a state value,
or as a set of mutually-tail-recursive functions (i.e., the state can be encoded in the program
counter). We demonstrate this by using some of the machinery from the previous section to state and prove a form
of the B\"ohm-Jacopini theorem \cite{bohm-jacopini} for SSA.

The B\"ohm-Jacopini theorem states that every general control-flow graph program can be rewritten to
an equivalent program which uses only structured control-flow: i.e., it can be rewritten to a program
using only conditional branching, sequencing, and loops. To adapt this result to SSA, we
need to express branching, sequencing and loops as SSA regions $\haslb{\Gamma}{r}{\ms{L}}$, 
so that we can build up an inductive set of structured regions. We will be maximally strict, and
allow branching to an exit label in $\ms{L}$ only as the terminal statement in a structured program
(and, in particular, not from within a loop!). It is obvious that we can represent conditional
branching using \ms{case}-statements, but we need convenient primitives for sequencing and looping.
Sequencing can be expressed using a \ms{where}-block as follows:
\begin{equation}
  \prftree[r]{\rle{seq}}
    {\haslb{\Gamma}{r}{\outlb(A)}}
    {\haslb{\Gamma, \bhyp{\invar}{A}}{s}{\ms{L}}}
    {\haslb{\Gamma}{\ms{seq}(r, s) 
      := (\where{[\outlb(x) \mapsto \brb{\ell}{x}]r}{\wbranch{\ell}{\invar}{s}})}{\ms{L}}}
\end{equation}
where ``$\invar$" is a distinguished input variable, and ``$\outlb$'' is a distinguished output
label. Another, equivalent, way of writing sequencing is: 
\begin{equation}
  \lbeq{\Gamma}{\ms{seq}(r, s)}{[\outlb(\invar) \mapsto s]r}{\ms{L}}
\end{equation}
That is, when we exit $r$, we jump to (a copy of) $s$. 

Expressing structured looping is a bit more complicated, since we do not have
mutable variables and hence cannot directly express while loops. If we recall that loops
can be expressed as tail-recursive procedures which carry the loop state in the argument,
then we can define a ``functional do-while loop'' as follows: 
\begin{equation}
  \prftree[r]{\rle{loop}}
    {\hasty{\Gamma}{\epsilon}{e}{A}}
    {\haslb{\Gamma, \bhyp{\invar}{A}}{r}{\outlb(B + A)}}
    {\haslb{\Gamma}{\ms{loop}(e, r) 
      := (\where{\brb{\ell}{e}}{\wbranch{\ell}{\invar}
        {\ms{seq}(r, \casestmt{\invar}{x}{\brb{\outlb}{x}}{y}{\brb{\ell}{y}})}})}{\outlb(B)}}
\end{equation}
We can now define the inductive predicate
$\shaslb{\Gamma}{r}{\ms{L}}$, which says that $r$ is a structured
region with input variables $\Gamma$ and output labels $\ms{L}$, as in
Figure~\ref{fig:structured-regions}. Note that this defines a subset of the
well-typed regions: every region $r$ such that
$\shaslb{\Gamma}{r}{\ms{L}}$ is also well-typed as $\haslb{\Gamma}{r}{\ms{L}}$.

It now remains to give an algorithm to convert a region $r$ targeting label-context $\ms{L}$ into an
equivalent structured region $\towhile{\ms{L}}{r}$. In particular, we define
\begin{equation}
  \towhile{\ms{L}}{r} = [\ms{unpack}^+_\outlb(\ms{L})]\topwhile{L}{r}
\end{equation}
where we define
\begin{equation}
  \begin{aligned}
    \topwhile{\ms{L}}{\ms{br}\;\ell\;a} &= \ms{pack}^+(\ms{L})_\ell(a) \\
    \topwhile{\ms{L}}{\letstmt{x}{a}{r}} &= \letstmt{x}{a}{\topwhile{\ms{L}}{r}} \\
    \topwhile{\ms{L}}{\letstmt{(x, y)}{a}{r}} &= \letstmt{(x, y)}{a}{\topwhile{\ms{L}}{r}} \\
    \topwhile{\ms{L}}{\casestmt{e}{x}{r}{y}{s}} 
      &= \casestmt{e}{x}{\topwhile{\ms{L}}{r}}{y}{\topwhile{\ms{L}}{s}} \\
    \topwhile{\ms{L}}{\where{r}{(\wbranch{\ell_i}{x_i}{t_i},)_i}} 
      &=
      \ms{seq}(\topwhile{\ms{L}}{r}, \caseexpr{\ms{ua}\;\invar}{x}{\brb{\outlb}{x}\\ & \qquad}{y}
        {\ms{loop}(y, 
        \ms{case}_{\ms{R}}\;\invar\;
          \{\ell_i(x_i) : \ms{seq}(\topwhile{\ms{L}}{t_i}, \brb{\outlb}{(\ms{ua}\;\invar)}\})})
      \\ \qquad \text{where} \qquad \ms{R} = (\ell_i(A_i),)
  \end{aligned}
\end{equation}
and
\begin{equation}
  \ms{ua}_{\ms{L}, \cdot}\;e = e \qquad
  \ms{ua}_{\ms{L}, (\ms{R}, \ell(A))}\;e 
    = \caseexpr{e}{x}{
        \caseexpr{\ms{ua}_{\ms{L}, \ms{R}}\;x}{z}{\iota_l\;z}{w}{\iota_r\;\iota_l\;w}}
        {y}{\iota_r\;(\iota_r\;y)}
\end{equation}
The $\topwhile{\ms{L}}{r}$ function does the actual transformation, and we can see that most of the
cases are trivial except for the $\where{r}{(\wbranch{\ell_i}{x_i}{t_i},)_i}$ case. In this case, we replace the set of labels bound
by the $\ms{where}$ clause with a tag and a while loop containing a case statement branching on the
tag. (This uses the $\ms{ua}$ function, which implements associativity of n-ary coproducts. That is, 
given $\hasty{\Gamma}{\bot}{e}{[L, R]}$, we have that
$\hasty{\Gamma}{\bot}{\ms{ua}_{\ms{L}}\;e}{[L] + [R]}$.)

The B\"ohm-Jacopini theorem for SSA can then
be written:
\begin{theorem}[name=B\"ohm-Jacopini for SSA, restate=bohmjacopini]
  For all $\haslb{\Gamma}{r}{\ms{L}}$, $\shaslb{\Gamma}{\towhile{\ms{L}}{r}}{\ms{L}}$ is structured,
  and $\lbeq{\Gamma}{r}{\towhile{\ms{L}}{r}}{\ms{L}}$. In particular, we have that
  $\shaslb{\Gamma}{\topwhile{\ms{L}}{r}}{\outlb(\pckd{\ms{L}})}$ is structured, and
  $\lbeq{\Gamma}{\pckd{r}^+}{\topwhile{r}}{\outlb(\pckd{\ms{L}})}$.
\end{theorem}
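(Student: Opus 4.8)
The plan is to prove the ``in particular'' statement about $\topwhile{\ms{L}}{r}$ first, by structural induction on $r$, and then to recover the statement about $\towhile{\ms{L}}{r}$ by applying the unpacking label-substitution $\ms{unpack}^+_\outlb(\ms{L})$ and invoking the fact, established in Section~\ref{ssec:records-enums}, that $\ms{pack}^+_\outlb(\ms{L})$ and $\ms{unpack}^+_\outlb(\ms{L})$ are mutually inverse. Concretely, once we know $\shaslb{\Gamma}{\topwhile{\ms{L}}{r}}{\outlb(\pckd{\ms{L}})}$ and $\lbeq{\Gamma}{\pckd{r}^+}{\topwhile{\ms{L}}{r}}{\outlb(\pckd{\ms{L}})}$, applying $[\ms{unpack}^+_\outlb(\ms{L})]$ to both sides and using the composition and congruence lemmas for label-substitution gives
\begin{multline*}
  \towhile{\ms{L}}{r} = [\ms{unpack}^+_\outlb(\ms{L})]\topwhile{\ms{L}}{r}
  \teqv [\ms{unpack}^+_\outlb(\ms{L})]\pckd{r}^+ \\
  = [[\ms{unpack}^+_\outlb(\ms{L})]\ms{pack}^+_\outlb(\ms{L})]r
  \teqv r,
\end{multline*}
with structuredness of $\towhile{\ms{L}}{r}$ following because substituting the case-tree $\ms{unpack}^+(\ms{L})$ for the single exit label $\outlb$ turns each terminal exit of the structured region into a structured case-tree terminal, which the structured grammar permits.

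For the induction itself, the cases for $\brb{\ell}{a}$, unary and binary \ms{let}, and \ms{case} are routine: in each non-branch case the definition of $\topwhile{\ms{L}}{\cdot}$ commutes with the region constructor, so structuredness is immediate from the corresponding structured formation rule together with the induction hypotheses, and the equivalence $\pckd{\cdot}^+ \teqv \topwhile{\ms{L}}{\cdot}$ follows by the congruence rules (using that label-packing also commutes with these constructors). The base case $\brb{\ell}{a}$ unfolds both sides to the same branch into $\outlb$, the one picked out by $\ms{pack}^+(\ms{L})$.

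The crux is the \ms{where} case. Here $\topwhile{\ms{L}}{\cdot}$ applied to $\where{r}{(\wbranch{\ell_i}{x_i}{t_i},)_i}$ replaces the $n$ mutually-recursive labels $\ell_i$ by a single ``do-while'' $\ms{loop}$ whose state is a tag of sum type $\pckd{\ms{R}}$, $\ms{R} = (\ell_i(A_i),)_i$, dispatched on by an $n$-ary \ms{case}; the $\ms{ua}$ combinator reassociates the coproduct so that the exit labels $\ms{L}$ can be separated from the internal labels $\ms{R}$ at the point where the loop decides whether to leave through $\outlb$ or continue. The strategy is to pack the \emph{internal} labels $\ell_i$ into this single loop-carried tag, using the pack/unpack isomorphism for label-contexts, which collapses the $n$-ary \ms{where}-block into a one-label \ms{where}-block; a one-label \ms{where}-block whose body loops back to itself or exits is definitionally a $\ms{loop}$. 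We then rewrite $r$ and each $t_i$ to their $\topwhile{\ms{L}}{\cdot}$ forms via the induction hypotheses and the congruence rule \brle{cfg}, and use \brle{codiag} to eliminate the nested \ms{where}-block produced by sequencing the loop body with its dispatch, \brle{dinat} to push the tag-encoding coprojections through the recursion, and \brle{uni}, \brle{cfg-$\eta$}, and \brle{cfg-$\beta_1$} to align the resulting region with the explicit syntactic shape of $\topwhile{\ms{L}}{\cdot}$ on a \ms{where}-block.

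I expect this \ms{where} case to be the main obstacle, for two reasons. First, the bookkeeping around $\ms{ua}$ and the coproduct reassociations is delicate: one must show that the reassociated dispatch on $\pckd{\ms{L}, \ms{R}}$ agrees, up to the equational theory, with the composite of the per-block label-substitutions, which requires the coproduct-associativity equations to interact correctly with \brle{dinat}. Second, identifying exactly which instance of \brle{uni} (equivalently \brle{uni'}) discharges the loop---that is, exhibiting the pure commuting expression and verifying its uniformity premise against the dispatch logic---is the step where the purity side-condition matters, and getting the premise into the precise form demanded by the rule is where most of the work lies. The easier parts, namely the non-\ms{where} cases and the final unpacking argument, are essentially congruence and the pack/unpack isomorphism, so the entire difficulty is concentrated in reconciling the combinator-level iteration laws with the concrete syntactic output of $\topwhile{\ms{L}}{\cdot}$.
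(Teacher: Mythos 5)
Your top-level decomposition matches the paper's: prove the $\topwhile{\ms{L}}{\cdot}$ statement by induction on $r$, dispatch the base and \ms{let}/\ms{case} cases by congruence, and recover the $\towhile{\ms{L}}{\cdot}$ statement by applying $\ms{unpack}^+_\outlb(\ms{L})$ and the pack/unpack inverse property. Where you genuinely diverge is the crux \ms{where}-case, and the divergence is worth spelling out. The paper does \emph{not} attempt the syntactic derivation you sketch: it converts the induction hypotheses into denotational equalities via soundness, computes the denotations of $\ms{seq}$, $\ms{loop}$, $\ms{case}_{\ms{R}}$ and $\ms{ua}$ (packaged in an auxiliary lemma expressing, e.g., $\dnt{\ms{loop}(e,r)}$ in terms of $\ms{rfix}$), carries out the whole argument as an equation between morphisms, and then transfers back to the equational theory by invoking completeness (Theorem~\ref{thm:complete-reg}); this is non-circular because completeness is proved independently of B\"ohm-Jacopini. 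Your route --- collapsing the $n$-ary \ms{where}-block to a unary one by instantiating \brle{dinat} with the label-packing substitution (the unary body being the $n$-ary dispatch), then aligning with the $\ms{loop}$ combinator via \brle{codiag}, \brle{uni}, \brle{cfg-$\eta$}, and \brle{cfg-$\beta_1$} --- uses the right toolkit and must succeed in principle, precisely \emph{because} completeness guarantees every semantically valid equation is derivable. What it buys is independence from the semantic apparatus; what the paper's detour buys is that the two points you yourself flag as hardest (the $\ms{ua}$/coproduct-reassociation bookkeeping and pinning down the \brle{uni} premise) reduce to coproduct coherence and the $\ms{rfix}$/$\ms{rcase}$ calculus, where string-diagram isotopy does the work. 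Be aware, though, that as written your \ms{where}-case is a plan rather than a derivation: you still owe the explicit \brle{dinat} instance and the verification of the uniformity premise, and the paper contains no syntactic version of this argument to fall back on --- if you get stuck, the honest shortcut is to do exactly what the paper does and argue that case denotationally.
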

\begin{proof}
  See Appendix~\ref{proof:bohm-jacopini}
\end{proof}

\begin{figure}
  \begin{gather*}
    \prftree[r]{\rle{s-br}}{\hasty{\Gamma}{\bot}{a}{A}}{\ms{L}\;\ell = A}
      {\shaslb{\Gamma}{\brb{\ell}{a}}{\ms{L}}} \qquad
    \prftree[r]{\rle{s-let$_1$-r}}
      {\hasty{\Gamma}{\epsilon}{a}{A}}
      {\shaslb{\Gamma, \bhyp{x}{A}}{r}{\ms{L}}}
      {\shaslb{\Gamma}{\letstmt{x}{a}{r}}{\ms{L}}} \\
    \prftree[r]{\rle{s-let$_2$-r}}
      {\hasty{\Gamma}{\epsilon}{e}{A \otimes B}}
      {\shaslb{\Gamma, \bhyp{x}{A}, \bhyp{y}{B}}{r}{\ms{L}}}
      {\shaslb{\Gamma}{\letstmt{(x, y)}{e}{r}}{\ms{L}}} \\
    \prftree[r]{\rle{s-case-r}}
      {\hasty{\Gamma}{\epsilon}{e}{A + B}}
      {\shaslb{\Gamma, \bhyp{x}{A}}{r}{\ms{L}}}
      {\shaslb{\Gamma, \bhyp{y}{B}}{s}{\ms{L}}}
      {\shaslb{\Gamma}{\casestmt{e}{x}{r}{y}{s}}{\ms{L}}} \\
    \prftree[r]{\rle{s-seq}}
      {\shaslb{\Gamma}{r}{\outlb(A)}}
      {\shaslb{\Gamma, \bhyp{\invar}{A}}{s}{\ms{L}}}
      {\shaslb{\Gamma}{\ms{seq}(r, s)}{\ms{L}}} \qquad
    \prftree[r]{\rle{s-loop}}
      {\hasty{\Gamma}{\epsilon}{e}{\outlb(A)}}
      {\shaslb{\Gamma, \bhyp{\invar}{A}}{r}{\outlb(B + A)}}
      {\shaslb{\Gamma}{\ms{loop}(e, r)}{\outlb(B)}}
  \end{gather*}
  \caption{Typing rules for structured regions}
  \Description{}
  \label{fig:structured-regions}
\end{figure}

\section{Denotational Semantics}

\label{sec:densem}

\subsection{Freyd Elgot Categories}

\citet{moggi-91-monad} showed that the Kleisli category of a strong
monad over a CCC interprets effectful higher-order functional
programs. There are two mismatches in using this as a semantics for
SSA. On one hand, SSA has features not necessarily supported by Moggi
models such as arbitrary, unstructured cyclic control-flow. On the
other hand Moggi models support features SSA (as a first-order
language) does not have, such as higher-order functions and
first-class computation values.

Given that we want to model SSA with some category $\mc{C}$, we hence have to think about what
structure we need $\mc{C}$ to possess so that it has exactly our desired features.
Obviously, we need a way to take the product of two objects, to be able to model contexts as well as
pairs. The usual way to do this is via \emph{monoidal categories}. However, monoidal categories
typically have too many
equations: computations operating on independent data must always commute
(this is the ``sliding rule''), which is obviously not true for effects such as printing, since
$$
\ms{print}(x) ; \ms{print}(y) \neq \ms{print}(y) ; \ms{print}(x)
$$
Instead, we will only require that our category be \emph{premonoidal}: ``monoidal, without
sliding.'' Indeed, the Kleisli category of a strong monad on a CCC is not always monoidal, as
demonstrated by the writer monad on $\ms{Set}$ (which exposes $\ms{print}$), but \emph{is} always
premonoidal. We define a premonoidal category as follows:
\begin{definition}[Symmetric Premonoidal Category]
  We define a \emph{binoidal category} to be a category $\mc{C}$ equipped with a binary operation
  $\otimes : |\mc{C}| \times |\mc{C}| \to |\mc{C}|$ on the objects of $\mc{C}$ and, for each $A, B
  \in |\mc{C}|$, functors $A \otimes -, - \otimes B : \mc{C} \to \mc{C}$. We say a morphism $f : A
  \to A'$ in a binoidal category is \emph{central} if, for all $g : B \to B'$, it satisfies
  \emph{sliding}:
  $$
  f \otimes B ; A' \otimes g = A \otimes g ; f \otimes B' \qquad
  B \otimes f ; g \otimes A' = g \otimes A ; B' \otimes f
  $$
  in which case we may write these morphisms as $f \otimes g : A \otimes B \to A' \otimes B'$ and $g
  \otimes f : B \otimes A \to B' \otimes A'$ respectively. A \emph{premonoidal category} is, then, a
  binoidal category equipped with:
  \begin{itemize}
    \item An \emph{identity} object $I \in |\mc{C}|$
    \item For each triple of objects $A, B, C \in |\mc{C}|$, a central, natural isomorphism
    $\alpha_{A, B, C} : (A \otimes B) \otimes C \to A \otimes (B \otimes C)$, the \emph{associator}
    \item For each object $A$, central, natural isomorphisms $\lambda_A : A \otimes I \to A$ and
    $\rho_A : I \otimes A \to A$, the \emph{left} and \emph{right unitors}
  \end{itemize}
  satisfying the \emph{triangle} and \emph{pentagon identity}
  $$
  \alpha_{A, I, B} ; A \otimes \lambda_B = \rho_A \otimes B \qquad
  \alpha_{A \otimes B, C, D} ; \alpha_{A, B, C \otimes D}
  = \alpha_{A, B, C} \otimes D ; \alpha_{A, B \otimes C, D} ; A \otimes \alpha_{A, B, C}
  $$
  We say a premonoidal category is \emph{symmetric} if it is also equipped with a central, natural
  involution $\sigma_{A, B} : A \otimes B \to B \otimes A$, the \textit{symmetry}, satisfying the
  \emph{hexagon identity}
  $$
  \alpha_{A, B, C} ; \sigma_{A, B \otimes C} ; \alpha_{B, C, A}
  = \sigma_{A, B} \otimes C ; \alpha_{B, A, C} ; B \otimes \sigma_{A, C}
  $$
  We say a premonoidal category is \emph{monoidal} if every morphism is central.
\end{definition}
One important theorem about premonoidal categories is \emph{coherence}:
\begin{theorem}
  The subcategory $\mc{A}$ generated by associators, unitors, and their tensor products is an
  equivalence relation, i.e., for all $A, B : |\mc{A}|$, if
  $f, g : A \to B$ can be constructed using only identity, composition, associators, unitors, and
  their tensor products, then:
  \begin{enumerate}[label=(\alph*)]
    \item $f = g$
    \item $f, g$ are isomorphisms in $\mc{A}$
  \end{enumerate}
  \label{thm:monoidal-coherence}
\end{theorem}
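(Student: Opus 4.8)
The plan is to reduce the statement to the classical Mac Lane coherence theorem for ordinary monoidal categories by observing that the subcategory $\mc{A}$ never leaves the centre of $\mc{C}$. By definition the associators $\alpha_{A,B,C}$, the unitors $\lambda_A, \rho_A$, and their inverses are all central, and a routine check shows that centrality is preserved by composition and by whiskering (tensoring with a fixed object on either side). Since every morphism of $\mc{A}$ is obtained from these generators by composition and tensor, every morphism of $\mc{A}$ is central; hence sliding $f\otimes B\,;\,A'\otimes g = A\otimes g\,;\,f\otimes B'$ holds between any two of them, so on $\mc{A}$ the binoidal operation $\otimes$ is an honest bifunctor and $\mc{A}$, equipped with $\alpha,\lambda,\rho$ subject to the triangle and pentagon identities, is a genuine monoidal category in the ordinary sense. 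It is essential here that $\mc{A}$ is generated only by $\alpha,\lambda,\rho$ and \emph{not} by the symmetry $\sigma$: were $\sigma$ admitted, parallel structural morphisms would agree only up to the permutation they induce, and part~(a) would fail.

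With this reduction in hand the remaining content is exactly Mac Lane's theorem, which I would prove by normalization. First I would assign to each object $A$ a normal form $|A|$ by fully left-associating the tensor and deleting every occurrence of the unit $I$, and then build, by induction on $A$, a canonical isomorphism $\mathrm{can}_A : A \to |A|$ out of $\alpha,\lambda,\rho$ and their tensor products. The heart of the argument is the lemma that every structural morphism $f : A \to B$ in $\mc{A}$ satisfies $f \,;\, \mathrm{can}_B = \mathrm{can}_A$ (and in particular $|A| = |B|$), proved by induction on the construction of $f$. Because $\mathrm{can}_B$ is invertible, this pins down $f = \mathrm{can}_A \,;\, \mathrm{can}_B^{-1}$, a morphism depending only on the endpoints $A,B$; hence any two parallel structural morphisms coincide, giving~(a). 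Part~(b) is then immediate, since each generator is an isomorphism and isomorphisms are closed under composition and tensor.

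The hard part will be the inductive step of this coherence lemma, in which one must show that composing a single generator — an $\alpha_{A,B,C}$, a unitor, or a whiskering of one of these — with the canonical map to normal form reproduces the canonical map of its source. This is precisely where, and the only place where, the pentagon and triangle identities, together with the naturality of $\alpha,\lambda,\rho$, are consumed; organizing the induction so that these two axioms (and no further relations) suffice is the genuine mathematical content. By contrast the premonoidal-specific bookkeeping — verifying that centrality propagates so that $\otimes$ acts functorially on $\mc{A}$ — is routine. As the monoidal version of the result is entirely standard, in the writeup I would cite Mac Lane for the normalization and spell out in full only the centrality reduction that makes his theorem applicable in the premonoidal setting.
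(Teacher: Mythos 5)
The paper never proves this theorem: it is stated as background---it is the known coherence theorem for premonoidal categories, due to Power and Robinson---and is used only to license writing ``$\alpha$'' for canonical structural morphisms. So there is no proof in the paper to compare against; your proposal supplies the standard argument, and its overall shape is correct. The reduction you describe (the generators $\alpha, \lambda, \rho$ are central by the paper's definition; centrality is closed under composition, inverses, and whiskering; hence $\mc{A}$ lies in the centre, where $\otimes$ is an honest bifunctor, making $\mc{A}$ an ordinary monoidal category to which Mac Lane's theorem applies) is exactly how this result is proved in the literature. Your observation that the exclusion of the symmetry $\sigma$ is essential for part (a) is also correct.

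Two steps deserve more care than you give them. First, the ``routine check'' that whiskering preserves centrality is the only genuinely premonoidal step, and it is not purely formal: to show $f \otimes C$ slides past an arbitrary $g$ you must conjugate by associators and use their naturality (which in a binoidal category holds only in each variable separately) to reduce to centrality of $f$ against $C \otimes g$; the cases $f \otimes C$ and $C \otimes f$ feed into each other. This is precisely Power and Robinson's lemma that the centre of a premonoidal category is monoidal, and in a full writeup it should be spelled out, since it consumes the naturality hypotheses on $\alpha, \lambda, \rho$. Second, and more substantively: the statement you set out to prove (like the paper's own phrasing) is only true under the usual \emph{formal} reading of coherence. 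Your normal form $|A|$ and canonical map $\mathrm{can}_A$ are defined by recursion on a chosen syntactic presentation of $A$, so your key lemma $f ; \mathrm{can}_B = \mathrm{can}_A$ really quantifies over composites in the free monoidal category, not over morphisms between objects of $\mc{C}$ that merely happen to be parallel. In a category with accidental object coincidences the literal statement is false: take the one-object monoidal category with morphism monoid $(\mathbb{Z}, +)$, tensor given by addition, $\alpha = 0$, and $\lambda = \rho = 1$; then $\lambda$ and the identity are parallel structural morphisms that differ. This caveat afflicts the paper's statement equally and is standard, but your induction is only well-founded if you phrase it in the free category (equivalently, for formally-typed composites) and then transport along the interpretation into $\mc{C}$; as written, recursion ``on the construction of $f$'' and ``on the object $A$'' is not well-defined when distinct syntax trees denote the same object.
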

In particular, as a syntactic convenience, we will often simply write ``$\alpha$" for the (unique)
morphism between objects $A$ and $B$, when it exists, satisfying the requirements of
Theorem~\ref{thm:monoidal-coherence}. For example, given $f : A \to B \otimes ((C \otimes I)
\otimes D)$ and $g : ((B \otimes (I \otimes C)) \otimes D) \to E$, we have
$$
f;\alpha;g := 
  f ; 
  B \otimes (\lambda_C \otimes D) ; 
  \alpha_{B, C, D}^{-1} ; 
  (B \otimes \rho_C^{-1}) \otimes D ;
  g 
$$
Just like for higher-order functional languages, we can interpret types $A$ as objects $\dnt{A} :
|\mc{C}|$. Similarly, we can interpret variable contexts $\Gamma$ by taking products of objects, as
follows:
$$
\boxed{\dnt{\Gamma} : |\mc{C}|} \qquad 
  \dnt{\cdot} = I \qquad \dnt{\Gamma, \bhyp{x}{A}} = \dnt{\Gamma} \otimes \dnt{A}
$$
We would like to interpret an expression-in-context $\hasty{\Gamma}{\epsilon}{a}{A}$ as a morphism in
$\mc{C}$ from $\dnt{\Gamma}$ to $\dnt{A}$. However, in standard SSA, it is possible for a variable
to be unused, or used multiple times. Our premonoidal structure, however, does not give us any way
to \emph{project} out of a product type, making it impossible to interpret expressions-in-context
like
$
\hasty{\bhyp{x}{A}, \bhyp{y}{B}}{}{x}{A}
$. 
In order to project out individual variables, we need Cartesian structure, but a premonoidal
category can only interpret \emph{linear} expressions, that is, those which
use every variable exactly once. However, it is too much to require the whole premonoidal category to be Cartesian, because that would validate the sliding rule, which is what we set out to avoid.

In the case of the Kleisli category over a CCC, the Cartesian structure of the CCC becomes the premonoidal structure of the Kleisli category. This lets us use the Cartesian structure to project out the variables, while the product in the Kleisli category still does not satisfy sliding. By analogy with this case, we can suppose that there is a subcategory $\mc{C}_\bot$ of the premonoidal category $\mc{C}$, which has the property that the premonoidal structure in $\mc{C}$ behaves like a Cartesian product in $\mc{C}_\bot$. 

This is called a \emph{Freyd category}, and is defined as follows:

\begin{definition}[Freyd category]
  A \emph{Freyd category} is a premonoidal category $\mc{C}$ equipped with a wide subcategory
  $\mc{C}_\bot \subseteq \mc{C}$ of \emph{pure} morphisms such that
  \begin{itemize}
    \item $\mc{C}_\bot$ contains all associators, unitors, and symmetries
    \item $I$ is a terminal object in $\mc{C}_\bot$
    \item For each $A, B$, $A \otimes B$ is a cartesian product of $A, B$ in $\mc{C}_\bot$
    \item For pure morphisms $f, g$, $f \otimes g = \langle \pi_l; f, \pi_r ; g  \rangle$
  \end{itemize}
  Alternatively, it is equivalent to require
  \begin{itemize}
    \item $\mc{C}_\bot$ contains all associators, unitors, and symmetries
    \item $I$ is a terminal object in $\mc{C}_\bot$
    \item For each $A$, there exists a pure morphism $\dmor{A} : A \to A \otimes A$ forming a
          comonoid with the terminal morphism $\tmor{A} : A \to I$, i.e: $\dmor{A} ; \tmor{A}
          \otimes A ; \rho = \dmor{A} ; A \otimes \tmor {A} ; \lambda = \ms{id}_A$
    \item For every pure morphism $f : A \to_\bot B$, $f ; \dmor{B} = \dmor{A} ; f \otimes f$
          and $f ; !_B = !_A$.  
  \end{itemize}
  In both cases, we have that $\pi_l = A \otimes \tmor{B} ; \lambda$, $\pi_r = \tmor{A}
  \otimes B ; \rho$, $\langle f, g \rangle = \dmor{A} ; f \otimes g$, and $\dmor{A} = \langle
  \ms{id}_A, \ms{id}_A \rangle$
\end{definition}
For convenience, given $f : A \to B$ in a Freyd category, we will define the notation
\begin{equation}
  \lmor{f} := \dmor{A} ; A \otimes f : A \to A \otimes B
\end{equation}
Note that this is pure if and only if $f$ is. This has the following useful properties:
\begin{itemize}
  \item $\lmor{f} ; \pi_r = f$ and, for $f$ pure, $\lmor{f} ; \pi_l = \ms{id}$
  \item $\lmor{f;g} = \lmor{f} ; A \otimes g$ and $\lmor{\ms{id}_A} = \dmor{A}$
  \item $\lmor{\lmor{f}} = \lmor{f} ; \dmor{A} \otimes B ; \alpha_{A, A, B}$, and, therefore, given
  $g: A \otimes B \to C$, we have $\lmor{\lmor{f};g} = \lmor{f} ; \lmor{g} ; \pi_l \otimes C$
\end{itemize}

We now have everything we need to model effectful first-order expressions. For reasoning about
substitution, we will also demand that the denotation of ``\emph{pure}" expressions
$\hasty{\Gamma}{\bot}{a}{A}$ lies in $\mc{C}_\bot(\dnt{\Gamma}, \dnt{A})$; in general, we will write
morphisms in $\mc{C}_\bot$ as $A \to_\bot B$.

At this point, we still have no way to interpret control-flow, i.e. $\ms{case}$-expressions.
Furthermore, if we want to model regions as morphisms, we need some way of modelling label-contexts
$\ms{L}$. At first glance, it seems sufficient for branching control-flow to require the existence
of coproducts, and indeed, assuming the existence of all coproducts and an initial object, we may
model label-contexts as follows:
$$
\boxed{\dnt{\ms{L}} : |\mc{C}|} \qquad 
  \dnt{\cdot} = \mb{0} \qquad \dnt{\ms{L}, \lhyp{\ell}{A}} = \dnt{\ms{L}} + \dnt{A}
$$
Regions can now be interpreted as morphisms in $\mc{C}$ from $\dnt{\Gamma}$ to $\dnt{\ms{L}}$, as
desired. Just like for products, we will write ``$\alpha^+$'' for the (unique) morphism between
objects $A$ and $B$, when it exists, satisfying the requirements of
Theorem~\ref{thm:monoidal-coherence} where coproducts are taken as the monoidal structure; we will
sometimes also write $\alpha^+_B$ for clarity.

It turns out that our coproducts must be \emph{distributive} to allow us to use variables in scope
before a branch. In particular, we define a \emph{distributive} premonoidal category as follows:
\begin{definition}[Distributive category]
  A premonoidal category $\mc{C}$ with all coproducts is \emph{distributive} if, for all $A, B, C$,
  the obvious morphism
  $$
  \delta_{A, B, C} 
    = [(A + \iota_l), (A + \iota_r)] : (A \otimes B) + (A \otimes C) \to A \otimes (B + C)
  $$
  has an inverse $\delta^{-1}$. 
  We will say a Freyd category $\mc{C}$ is distributive if it has all coproducts and the subcategory
  of pure morphisms $\mc{C}_\bot$ is distributive (which implies, in particular, that $\mc{C}$ is
  distributive when taken as a premonoidal category).
\end{definition}
We note in particular that every cartesian closed category with coproducts is a distributive Freyd
category, as is the Kleisli category of a monad over a CCC with coproducts. 
For any finite coproduct $\Sigma_iB_i$, we will introduce the notation $\delta_{\Sigma} : \Sigma_i
(A \otimes B_i) \to A \otimes \Sigma_i B_i$ and $\delta_{\Sigma}^{-1} : A \otimes \Sigma_i B_i \to
\Sigma_i (A \otimes B_i)$ to denote the obvious morphisms.
The last piece of the puzzle, allowing us to model loops in regions, is the existence of an
\emph{iteration operator}, since we currently cannot represent any nonterminating computations. We
would further like this operator to be \emph{strong}; that is, compatible with the distributor, so
that we can soundly access variables in scope in the loop body. Formally, we define a \emph{(strong)
Conway iteration operator} as follows:
\begin{definition}[Conway iteration]
  A category $\mc{C}$ with all coproducts is said to have a \emph{iteration operator} if we can
  define an operator $(-)^\dagger$ taking every morphism $f : A \to B + A$ to a morphism $f^\dagger
  : A \to B$, the \emph{fixpoint} of $f$, with the following property: given $f : A \to B + A$, we
  have $f^\dagger = f;[\ms{id}, f^\dagger]$. We say this operator is a \emph{Conway iteration
  operator} if it additionally satisfies the following properties:
  \begin{itemize}
    \item \emph{Naturality:} given $f : A \to B + A$ and $g : B \to C$, we have
      $
      (f;g + \ms{id})^\dagger = f^\dagger;g : A \to C
      $
    \item \emph{Dinaturality:} given morphisms $g : A \to B + C$ and $h : C \to B + A$, we have that
      $
      (g ; [\iota_l, h])^\dagger = g ; [\ms{id}_B, (h ; [\iota_l, g])^\dagger]
      $
    \item \emph{Codiagonal:} given $f : A \to (B + A) + A$, we have
      $
      (f^\dagger)^\dagger = (f;[\ms{id}, \iota_r])^\dagger : A \to B
      $
  \end{itemize}
  If $\mc{C}$ is distributive, we say this operator is \emph{strong} if
  $$
  \forall f: A \to B + A, (C \otimes f ; \delta^{-1})^\dagger = C \otimes f^\dagger
  $$
  Given a wide subcategory $\mc{K} \subseteq \mc{C}$, we say $\mc{C}$ is \emph{\mc{K}-uniform} if,
  for all $h : A \to_{\mc{K}} B$, $f : B \to C + B$, and $g : A \to C + A$, we have that
  $$
  h ; f = g ; C + h \implies h ; f^\dagger = g^\dagger 
  $$
\end{definition}
We will call Freyd categories equipped with an appropriate Elgot structure \emph{strong Elgot
categories}. In particular, we define
\begin{definition}[(Strong) Elgot structure]
  A Freyd category $\mc{C}$ with all coproducts is said to have an \emph{Elgot structure} if it has
  a Conway iteration operator which is $\mc{C}_\bot$ uniform. If $\mc{C}$ is distributive, we say
  $\mc{C}$ is strong Elgot if its iteration operator is strong. In particular, we say a monad is
  Elgot if its Kliesli category has an Elgot structure. Similarly, we say a \emph{strong} monad is
  strong Elgot if its Kliesli category has a strong Elgot structure.
\end{definition}
It turns out that, to check something is a strong Elgot category, we do not need to
explicitly verify dinaturality. In particularly, we may verify the following:
\begin{proposition}
  If $(-)^\dagger$ is an iteration operator which satisfies naturality and codiagonal and is
  $\mc{K}$-uniform for $\mc{K}$ co-Cartesian, then it also satisfies dinaturality.
\end{proposition}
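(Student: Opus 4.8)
The plan is to reduce dinaturality to two ``simulation'' identities for a single combined iteration, and then to discharge those identities using the codiagonal law together with $\mc{K}$-uniformity. I would first package the two given morphisms $g : A \to B + C$ and $h : C \to B + A$ into a single map on the coproduct $A + C$, namely
\[
  w := [\, g ; (\ms{id}_B + \iota_r),\ h ; (\ms{id}_B + \iota_l)\,] : A + C \to B + (A + C),
\]
where $\iota_r : C \to A + C$ and $\iota_l : A \to A + C$. Intuitively $w$ runs one ``half-step'' of the alternating loop $A \xrightarrow{g} C \xrightarrow{h} A \xrightarrow{g} \cdots$. Writing $w^\dagger = [a, c]$ with $a := \iota_l ; w^\dagger$ and $c := \iota_r ; w^\dagger$, a direct application of the fixpoint identity $w^\dagger = w ; [\ms{id}_B, w^\dagger]$ together with the coproduct equations yields $a = g ; [\ms{id}_B, c]$ and $c = h ; [\ms{id}_B, a]$. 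The first of these already has the shape of the desired conclusion: once I establish $a = (g ; [\iota_l, h])^\dagger$ and $c = (h ; [\iota_l, g])^\dagger$, substituting into $a = g ; [\ms{id}_B, c]$ gives exactly $(g;[\iota_l,h])^\dagger = g ; [\ms{id}_B, (h;[\iota_l,g])^\dagger]$, which is dinaturality.

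The technical heart is therefore the two simulation equations $\iota_l ; w^\dagger = (g;[\iota_l,h])^\dagger$ and $\iota_r ; w^\dagger = (h;[\iota_l,g])^\dagger$ (the second symmetric to the first under swapping $(A,g)$ with $(C,h)$). The subtlety is a \emph{speed mismatch}: $w$ advances by a half-step, whereas $g;[\iota_l,h]$ advances by a full step, so the one-step hypothesis of uniformity cannot be applied directly — testing uniformity with the co-Cartesian injection $\iota_l$ would demand $\iota_l ; w = (g;[\iota_l,h]) ; (\ms{id}_B + \iota_l)$, which is false. To repair this I would \emph{accelerate} using the codiagonal law: form $F : A \to (B + A) + A$ running a full $g$-then-$h$ step, arranged so that $F ; [\ms{id}_{B+A}, \iota_r] = g ; [\iota_l, h]$, so that codiagonal rewrites $(g;[\iota_l,h])^\dagger$ as the double dagger $(F^\dagger)^\dagger$. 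The inner dagger then runs at the same speed as $w$, and at that matched speed the mediating injection does satisfy the one-step condition, so $\mc{C}_\bot$-uniformity (with $\mc{K}$ co-Cartesian, hence closed under $\iota_l,\iota_r$, copairings, and the codiagonal $\nabla=[\ms{id},\ms{id}]$) identifies the accelerated iteration with the appropriate restriction of $w^\dagger$. Naturality is used to push the post-composed copairings through $(-)^\dagger$ in the bookkeeping. Concretely this amounts to a Bekič-style elimination of the ``non-recurrent'' summand of $w$, here licensed by codiagonal and uniformity rather than by dinaturality itself.

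I expect the routine parts — the fixpoint computation of $a$ and $c$, and the final substitution — to be short, with essentially all the work concentrated in the simulation step. The main obstacle is precisely setting up the codiagonal acceleration so that the resulting one-step system matches $w$ exactly, and then verifying the uniformity side-condition for the chosen co-Cartesian mediator \emph{without} implicitly invoking dinaturality, which would make the argument circular. Getting this elimination to close using only naturality, codiagonal, and $\mc{K}$-uniformity is the delicate point; everything else is a matter of coproduct algebra.
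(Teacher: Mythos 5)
Your skeleton is the standard one (the paper itself gives no proof here, deferring to Lemma~31 of Goncharov--Schr\"oder, whose argument has exactly this shape): package $g$ and $h$ into $w = [\,g;(\ms{id}_B+\iota_r),\ h;(\ms{id}_B+\iota_l)\,] : A+C \to B+(A+C)$, derive $\iota_l;w^\dagger = g;[\ms{id}_B,\ \iota_r;w^\dagger]$ from the fixpoint law, and reduce dinaturality to the two simulation identities $\iota_l;w^\dagger = (g;[\iota_l,h])^\dagger$ and $\iota_r;w^\dagger = (h;[\iota_l,g])^\dagger$. The gap is in how you propose to discharge those identities: the codiagonal acceleration cannot be set up on the domain $A$. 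Any $F : A \to (B+A)+A$ with $F;[\ms{id},\iota_r] = g;[\iota_l,h]$ necessarily performs the $g$-then-$h$ composite inside a single step, because the intermediate state $C$ does not occur in its iteration domain. Concretely, with the marking that sends $h$'s $A$-outputs to the outer summand, the fixpoint equation reads $F^\dagger = g;[\iota_l,\ h;[\iota_l,\ F^\dagger]]$, which is irreducibly recursive (the domain $A$ is not a coproduct, so there is no universal-property argument to resolve it); with the all-inner marking $F = g;[\iota_l,h];\iota_l$, the codiagonal identity collapses to the tautology $(g;[\iota_l,h])^\dagger = (g;[\iota_l,h])^\dagger$. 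In neither case does an inner dagger ``running at the same speed as $w$'' exist, and the uniformity square you would then need, $\iota_l;w = F^\dagger;(\ms{id}_B+\iota_l)$, is unsatisfiable: the left side sends its loop outputs into the $C$-summand of $A+C$, the right side into the $A$-summand.

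The repair is to apply codiagonal to $w$ itself, on the domain $A+C$. Take $W = [\,g;(\ms{id}_B+\iota_r);\iota_l,\ h;(\ms{id}_B+\iota_l);[\iota_l;\iota_l,\ \iota_r]\,]$, marking $g$'s transitions as inner and $h$'s as outer; then $W;[\ms{id},\iota_r] = w$, so codiagonal gives $w^\dagger = (W^\dagger)^\dagger$. Because the domain is now a coproduct, the fixpoint law resolves $W^\dagger$ in closed form: precomposing with $\iota_l$ gives the non-recursive equation $\iota_l;W^\dagger = g;(\ms{id}_B+\iota_r)$, substituting this into the $\iota_r$-component gives $\iota_r;W^\dagger = h;[\iota_l,\ g;(\ms{id}_B+\iota_r)]$, and hence $W^\dagger = V;(\ms{id}_B+\iota_r)$ with $V = [\,g,\ h;[\iota_l,g]\,] : A+C \to B+C$. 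Since $W^\dagger$ now factors through the single injection $\ms{id}_B+\iota_r$, uniformity applies with the mediator $\iota_r \in \mc{K}$ (the square $\iota_r;W^\dagger = (\iota_r;V);(\ms{id}_B+\iota_r)$ holds on the nose), yielding $\iota_r;w^\dagger = (\iota_r;V)^\dagger = (h;[\iota_l,g])^\dagger$; the symmetric marking gives $\iota_l;w^\dagger = (g;[\iota_l,h])^\dagger$, and your final substitution closes the proof. So your ingredients are the right ones, but the Beki\v{c}-style elimination only closes when the two-level structure is imposed on $A+C$, where the coproduct's universal property both computes the inner dagger and produces the factorization that uniformity needs; as written, your acceleration step on $A$ cannot be made to work.
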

\begin{proof}
  See Lemma 31 of \citet{goncharov-18-guarded-traced}
\end{proof}
Since in a distributive Freyd category $\mc{C}_\bot$ must be distributive (and hence co-cartesian),
dinaturality follows from the other axioms of an Elgot category.
Given a distributive Freyd category equipped with a (strong) Elgot structure, we will often want to
consider the fixpoint of a morphism $f: R \otimes A \to B + A$, where our ``context'' $R$ does not
change between iterations. To do this, we first need to build up a morphism
\begin{equation}
  \rcase{f} := \lmor{f} ; \pi_l \otimes (B + A) ; \delta^{-1} 
    : R \otimes A \to R \otimes B + R \otimes A
\end{equation}
which computes $f$ and then distributes a copy of the read-only state $R$ to each branch of the
result. We may then define the fixpoint
\begin{equation}
  \rfix{f} := (\rcase{f})^\dagger ; \pi_r : R \otimes A \to B
\end{equation}
We consider some more properties of $\rcase{f}$ and $\rfix{f}$ in Appendix~\ref{apx:environment}.

\subsection{String Diagrams}

\emph{String diagrams} provide a graphical calculus for reasoning about (symmetric) monoidal
categories, which allows us to succinctly express complex morphisms and rewrites. Since both
cartesian and co-cartesian categories are monoidal (with the product and coproduct as tensor,
respectively), we can use string diagrams to reason about both. In the co-cartesian case, string
diagrams behave much like control-flow diagrams, with boxes representing sub-programs, input wires
entry points, and output wires exit points. In particular, a \emph{region} is just a box with a
single input wire. Continuing this analogy, we draw the codiagonal morphism $[\ms{id}_A, \ms{id}_A]$
as joining two wires, and the zero morphism as a wire coming from nowhere, as in
Figure~\ref{fig:coproduct-string-diagrams}. 

\begin{figure}
  \begin{tikzpicture}
    \node[] (S1) at (-0.6, 1) {};
    \node[] (S2) at (0.6, 1) {};
    \node[box=2/0/2/0] (A) at (0, 0) {\text{subprogram 1}};
    \node[box=1/0/2/0] (B) at (-1, -1.5) {\text{subprogram 2}};
    \node[box=2/0/2/0] (C) at (2, -1.5) {\text{subprogram 3}};
    \node[box=3/0/1/0] (D) at (1.5, -3) {\text{subprogram 4}};
    \node[dot] (zero) at ($(C.north.2)+(0, 0.5)$) {};
    \node[dot] (codiag) at (0, -4) {};
    \node[] (end) at (0, -4.5) {};
    \wires{
      S1 = { south = A.north.1 },
      S2 = { south = A.north.2 },
      A = { south.1 = B.north.1, south.2 = C.north.1 },
      B = { south.1 = codiag.west, south.2 = D.north.1 },
      C = { south.1 = D.north.2, south.2 = D.north.3 },
      D = { south = codiag.east },
      zero = { south = C.north.2 },
      codiag = { south = end.north }
    }{}

    \node[] (zerolabel) at (4, 0) {Zero morphism};
    \node[] (regionlabel) at (-4, -1) {Region};
    \node[] (codiaglabel) at (-3, -4) {Codiagonal morphism};

    \draw[gray, dashed, ->] (zerolabel) to (zero);
    \draw[gray, dashed, ->] (regionlabel) to (B);
    \draw[gray, dashed, ->] (codiaglabel) to (codiag);
  \end{tikzpicture}
  \caption{
    A string diagram using the coproduct as symmetric monoidal structure, interpreted as a CFG
  }
  \label{fig:coproduct-string-diagrams}
  \Description{A string diagram using the coproduct as symmetric monoidal structure, showing how to
  draw the codiagonal morphism and zero morphism, as well as showing what the analogue to a
  subprogram and region might look like.}
\end{figure}

The power of string diagrams comes from the fact that many syntactically distinct ways to write
equal values are obviously graphically equivalent by \emph{isotopy}: essentially, moving boxes and
wires around. String diagrams also give us an elegant way to represent and reason about Elgot
structures. It turns out that Elgot structures induce a \emph{trace} on the coproduct
\cite{hasegawa-trace-02}: given $f : A + C \to B + C$, we can define 
\begin{equation}
  \ms{Tr}_{A, B}^C(f) 
  = \iota_l ; [f ; B + \iota_r]^\dagger 
  = \iota_l ; f ; [\ms{id}, (\iota_l;f)^\dagger] : A \to B
\end{equation}
Since this satisfies the axioms of a trace over a symmetric monoidal category, we can draw it, and
therefore the Elgot operator, as in Figure~\ref{fig:elgot-string-diagrams}. Continuing with the
control-flow diagram analogy, such traces can be interpreted as \emph{loops}, with the Elgot axioms,
now drawn as diagrams in Figure~\ref{fig:elgot-ax-string-diagrams}.

\begin{figure}
  \begin{subfigure}{0.5\textwidth}
    \centering
    \begin{tikzpicture}
      \node[] (Atrf) at (0, 2) {$A$};
      \node[box] (trf) at (0, 0) {\ms{Tr}_{A, B}^C(f)};
      \node[] (Btrf) at (0, -2) {$B$};
      \node[] (eq) at (2, 0) {=};
      \node[] (A) at (3.5, 2) {$A$};
      \node[box=2/0/2/0] (f) at (3.5, 0) {\quad f \quad};
      \node[] (B) at (3.5, -2) {$B$};
      \coordinate[label=right:$C$] (C) at (4.7, 0) {};
      \coordinate[] (cup) at (4.25, -1) {};
      \coordinate[] (cap) at (4.25, 1) {};
      \wires{
        Atrf = { south = trf.north },
        trf  = { south = Btrf },
        A    = { south = f.north.1 },
        f    = { south.1 = B, south.2 = cup.west },
        cap  = { west = f.north.2, east = C.north },
        C    = { south = cup.east },
      }{}
    \end{tikzpicture}
    \caption{The trace of $f : A + C \to B + C$}
  \end{subfigure}%
  \begin{subfigure}{0.5\textwidth}
    \centering
    \begin{tikzpicture}
      \node[] (Atrf) at (0, 2) {$A$};
      \node[box] (trf) at (0, 0) {f^\dagger};
      \node[] (Btrf) at (0, -2) {$B$};
      \node[] (eq) at (1.5, 0) {=};
      \node[] (A) at (3, 2) {$A$};
      \node[box=1/0/2/0] (f) at (3, 0) {\quad f \quad};
      \node[] (B) at (3, -2) {$B$};
      \node[dot] (codiag) at (3, 0.75) {};
      \coordinate[label=right:$A$] (C) at (4.2, 0) {};
      \coordinate[] (cup) at (3.75, -1) {};
      \coordinate[] (cap) at (3.75, 1) {};
      \wires{
        Atrf    = { south = trf.north },
        trf     = { south = Btrf },
        A       = { south = codiag.north },
        f       = { south.1 = B, south.2 = cup.west },
        codiag  = { south = f.north, east = cap.west },
        C       = { south = cup.east, north = cap.east },
      }{}
    \end{tikzpicture}
    \caption{
      The fixpoint of $f : A \to B + A$
    }
  \end{subfigure}
  \caption{Representations of the coproduct trace and Elgot structure as string diagrams}
  \label{fig:elgot-string-diagrams}
  \Description{Representations of the coproduct trace and Elgot structure as string diagrams}
\end{figure}

\begin{figure}
  \begin{subfigure}{0.5\textwidth}
    \centering
    \begin{tikzpicture}
      \node[] (A) at (0, 2) {$A$};
      \node[box=1/0/2/0] (f) at (0, 0) {\quad f \quad};
      \node[] (B) at (0, -2) {$B$};
      \node[dot] (codiag) at (0, 0.75) {};
      \coordinate[] (C) at (1.2, 0) {};
      \coordinate[] (cup) at (0.75, -1) {};
      \coordinate[] (cap) at (0.75, 1) {};
      \node[] (eq) at (1.8, 0) {=};
      \node[] (Ad) at (3.6, 2) {$A$};
      \node[box=1/0/2/0] (fd1) at (3.4, 1) {\quad f \quad};
      \node[dot] (codiagd1) at (3.8, 0.25) {};
      \node[box=1/0/2/0] (fd2) at (3.8, -0.5) {\quad f \quad};
      \node[dot] (codiagd2) at (3.4, -1.25) {};
      \node[] (Bd) at (3.6, -2) {$B$};
      \coordinate[] (cupd) at (4.5, -1.25) {};
      \coordinate[] (capd) at (4.5, 0.5) {};
      \wires{
        A         = { south = codiag.north },
        f         = { south.1 = B, south.2 = cup.west },
        codiag    = { south = f.north, east = cap.west },
        C         = { south = cup.east, north = cap.east },
        Ad        = { south = fd1.north },
        fd1       = { south.1 = codiagd2.west, south.2 = codiagd1.north },
        codiagd1  = { south = fd2.north },
        fd2       = { south.1 = codiagd2.north, south.2 = cupd.west },
        codiagd2  = { south = Bd.north },
        cupd      = { east = capd.east },
        capd      = { west = codiagd1.east }, 
      }{}
    \end{tikzpicture}
    \caption{Fixpoint}
  \end{subfigure}%
  \begin{subfigure}{0.5\textwidth}
    \centering
    \begin{tikzpicture}
      \node[] (A) at (0, 2) {$A$};
      \node[box=1/0/2/0] (f) at (0, 0.5) {\quad f \quad};
      \node[box] (g) at (-0.25, -0.5) {g};
      \node[] (B) at (0, -2) {$B$};
      \node[dot] (codiag) at (0, 1.25) {};
      \coordinate[] (C) at (1.2, 0) {};
      \coordinate[] (cup) at (0.75, -1) {};
      \coordinate[] (cap) at (0.75, 1.5) {};
      \coordinate[] (box0) at (1.5, 1.65) {};
      \coordinate[label={[font=\small, text=gray]below:$(f;g + A)^\dagger$}] 
                    (box1) at (1.5, -1.2) {};
      \coordinate[] (box2) at (-1, -1.2) {};
      \coordinate[] (box3) at (-1, 1.65) {};
      \draw [gray, dashed] (box0) -- (box1);
      \draw [gray, dashed] (box1) -- (box2);
      \draw [gray, dashed] (box2) -- (box3);
      \draw [gray, dashed] (box3) -- (box0);
      \node[] (eq) at (1.8, 0) {=};
      \node[] (A2) at (3.6, 2) {$A$};
      \node[box=1/0/2/0] (f2) at (3.6, 0.5) {\quad f \quad};
      \node[box] (g2) at (3.35, -1) {g};
      \node[] (B2) at (3.6, -2) {$B$};
      \node[dot] (codiag2) at (3.6, 1) {};
      \coordinate[] (cup2) at (4.15, -0.2) {};
      \coordinate[] (cap2) at (4.15, 1.25) {};
      \coordinate[] (box02) at (4.8, 1.5) {};
      \coordinate[label={[font=\small, text=gray]below:$f^\dagger$}] 
                    (box12) at (4.8, -0.4) {};
      \coordinate[] (box22) at (2.6, -0.4) {};
      \coordinate[] (box32) at (2.6, 1.5) {};
      \draw [gray, dashed] (box02) -- (box12);
      \draw [gray, dashed] (box12) -- (box22);
      \draw [gray, dashed] (box22) -- (box32);
      \draw [gray, dashed] (box32) -- (box02);
      \wires{
        A         = { south = codiag.north },
        f         = { south.1 = g.north, south.2 = cup.west },
        g         = { south = B.north },
        codiag    = { south = f.north, east = cap.west },
        C         = { south = cup.east, north = cap.east },
        A2         = { south = codiag2.north },
        f2         = { south.1 = g2.north, south.2 = cup2.west },
        g2         = { south = B2.north },
        codiag2    = { south = f2.north, east = cap2.west },
        cup2       = { east = cap2.east },
        cap2       = { west = codiag2.east },
      }{}
    \end{tikzpicture}
    \caption{Naturality}
  \end{subfigure}
  \begin{subfigure}{0.5\textwidth}
    \centering
    \begin{tikzpicture}
      \node[] (A) at (0, 2) {$A$};
      \node[box=1/0/3/0] (f) at (0, 0) {\quad f \quad};
      \node[] (B) at (0, -2) {$B$};
      \node[dot] (codiag) at (0, 0.75) {};
      \coordinate[] (C) at (1.2, 0) {};
      \coordinate[] (cup) at (0.75, -1) {};
      \coordinate[] (cap) at (0.75, 1) {};
      \node[dot] (codiag2) at (0.225, -0.75) {};
      \node[] (eq) at (1.8, 0) {=};
      \node[] (Ad) at (3.6, 2) {$A$};
      \node[box=1/0/3/0] (fd) at (3.6, 0) {\quad f \quad};
      \node[] (Bd) at (3.6, -2) {$B$};
      \node[dot] (codiagd) at (3.6, 1.25) {};
      \node[dot] (codiagd2) at (3.6, 0.75) {};
      \coordinate[] (Cd) at (4.8, 0) {};
      \coordinate[] (cupd) at (4.35, -1.25) {};
      \coordinate[] (capd) at (4.35, 1.25) {};
      \coordinate[] (cupd2) at (4.25, -0.75) {};
      \coordinate[] (capd2) at (4.25, 0.75) {};
      \wires{
        A         = { south = codiag.north },
        f         = { south.1 = B, south.2 = codiag2.west, south.3 = codiag2.east },
        codiag2   = { south = cup.west },
        codiag    = { south = f.north, east = cap.west },
        C         = { south = cup.east, north = cap.east },
        Ad        = { south = codiagd.north },
        codiagd   = { south = codiagd2.north, east = capd.west },
        codiagd2  = { south = fd.north, east = capd2.west },
        fd        = { south.1 = Bd, south.2 = cupd.west, south.3 = cupd2.west },
        cupd      = { east = capd.east },
        cupd2     = { east = capd2.east },
      }{}
    \end{tikzpicture}
    \caption{Codiagonal}
  \end{subfigure}%
  \begin{subfigure}{0.5\textwidth}
    \centering
    \begin{tikzpicture}
      \node[] (A) at (0, 2) {$A$};
      \node[box] (g) at (0, 1.25) {g};
      \node[box=1/0/2/0] (f) at (0, 0.25) {\quad f \quad};
      \node[] (B) at (0, -2) {$B$};
      \node[dot] (codiag) at (0, 0.75) {};
      \coordinate[] (C) at (1.2, 0) {};
      \coordinate[] (cup) at (0.75, -1.5) {};
      \coordinate[] (cap) at (0.75, 1) {};
      \node[box] (g2) at (0.5, -0.75) {g};
      \node[] (eq) at (1.8, 0) {=};
      \node[] (Ad) at (3.6, 2) {$A$};
      \node[box] (gd) at (3.6, 0.5) {g};
      \node[box=1/0/2/0] (fd) at (3.6, -0.5) {\quad f \quad};
      \node[] (Bd) at (3.6, -2) {$B$};
      \node[dot] (codiagd) at (3.6, 1) {};
      \coordinate[] (Cd) at (4.8, 0) {};
      \coordinate[] (cupd) at (4.35, -1.5) {};
      \coordinate[] (capd) at (4.35, 1.5) {};
      \wires{
        A         = { south = g.north },
        g         = { south = codiag.north },
        f         = { south.1 = B, south.2 = g2.north },
        g2        = { south = cup.west },
        codiag    = { south = f.north, east = cap.west },
        C         = { south = cup.east, north = cap.east },
        Ad        = { south = codiagd.north },
        codiagd   = { south = gd.north, east = capd.west },
        gd        = { south = fd.north },
        fd        = { south.1 = Bd, south.2 = cupd.west },
        cupd      = { east = capd.east },
      }{}
    \end{tikzpicture}
    \caption{Dinaturality}
  \end{subfigure}
  \caption{Representations of the Elgot axioms as string diagrams}
  \label{fig:elgot-ax-string-diagrams}
  \Description{Representations of the Elgot axioms as string diagrams}
\end{figure}

Unfortunately, unmodified string diagrams do not work for premonoidal categories, and hence for
Freyd categories. The reason is because, since not all morphisms are central, premonoidal categories
do not in general validate \emph{sliding}. However, this is easy enough to fix: we can postulate a
``state'' wire, drawn in red, which all impure morphisms require as an input and output, as in
Figure~\ref{fig:premonoidal-string-diagram}. Since the state wire linearly threads through all
impure boxes, it establishes a unique order in which they must be executed; this construction is
shown to be sound in \citet{promonad}. Pure morphisms do not have a state wire, so a diagram
representing a pure morphism will simply have a red ``stripe'' on the side. This gives us a
convenient way to distinguish between string diagrams using the monoidal structure induced by the
coproduct and those using the premonoidal structure induced by the tensor product in a category
having both (such as a distributive premonoidal category): the latter will have a state wire, while
the former will not.

\begin{figure}
  \begin{tikzpicture}
    \node[] (SIa) at (-2.5, 3.5) {};
    \node[] (Na) at (-1.5, 3.5) {$\nats$};
    \node[dot] (codiaga) at (-1.5, 3) {};
    \node[box] (adda) at (0.5, 1) {\cdot + 2};
    \node[box=2/0/1/0] (print1a) at (-2, 2) {\texttt{print}};
    \node[box=2/0/1/0] (print2a) at (0, 0) {\texttt{print}};
    \node[] (SOa) at (0, -1) {};
    \node[] (neq) at (2, 1.7) {$\neq$};
    \node[] (SIb) at (3, 3.5) {};
    \node[] (Nb) at (4, 3.5) {$\nats$};
    \node[dot] (codiagb) at (4, 3) {};
    \node[box=2/0/1/0] (print1b) at (5, 1) {\texttt{print}};
    \node[box] (addb) at (5.5, 2) {\cdot + 2};
    \node[box=2/0/1/0] (print2b) at (3, 0) {\texttt{print}};
    \node[] (SOb) at (3, -1) {};
    \wires{
      Na = { south = codiaga.north },
      codiaga = { south = print1a.north.2, east = adda.north },
      adda = { south = print2a.north.2 },
      Nb = { south = codiagb.north },
      codiagb = { south = print2b.north.2, east = addb.north },
      addb = { south = print1b.north.2 },
    }{}
    \wires[red]{
      SIa = { south = print1a.north.1 },
      print1a = { south = print2a.north.1 },
      print2a = { south = SOa.north },
      SIb = { south = print1b.north.1 },
      print1b = { south = print2b.north.1 },
      print2b = { south = SOb.north },
    }{}
  \end{tikzpicture}
  \caption{
    A string diagram in a premonoidal category, demonstrating the necessity of using a state wire
  }
  \label{fig:premonoidal-string-diagram}
  \Description{A string diagram in a premonoidal category, using a red state wire}
\end{figure}

\subsection{Semantics}

We now have all the ingredients we need to give a semantics to \isotopessa{} expressions and regions.
In particular, an \isotopessa{} expression model consists of:
\begin{itemize}
  \item A premonoidal category $\mc{C}$
  \item A complete lattice $E$, and a continuous function $\epsilon \mapsto \mc{C}_\epsilon$ from
  $E$ to wide subcategories of $\mc{C}$, such that $\mc{C}_\top = \mc{C}$, and $(\mc{C},
  \mc{C}_\bot)$ is a distributive Freyd category. Furthermore, we ask that each $\mc{C}_\epsilon$ is
  closed under tensor products and coproducts, and in particular that injections and (inverse)
  distributors are pure and therefore contained in every $\mc{C}_\epsilon$.
\end{itemize}
If an \isotopessa{} expression model additionally has an Elgot structure on $\mc{C}$ (taken as a
Freyd category with pure morphisms $\mc{C}_\bot$), we will refer to it simply as an \isotopessa{}
model.

We will write morphisms $\mc{C}_\epsilon(A, B)$ as $A \to_\epsilon B$, and morphisms $\mc{C}(A, B)$
as $A \to B$. Note that, whenever $\mc{C}$ is a distributive Elgot Freyd category, we can trivially
obtain an \isotopessa{} model by choosing $E = \{\top, \bot\}$; hence, this re-framing does not add
any additional generality, but simplifies bookkeeping. We may also choose $E = \{*\}$ if and only if
$\mc{C}$ is in fact Cartesian.
Fixing an \isotopessa{} expression model, we can model types in the obvious manner: 
\begin{itemize}
  \item The unit type $\mb{1}$ is modelled as the monoidal unit $I$
  \item The empty type $\mb{0}$ is modelled as the initial object $\mb{0}$
  \item Products $A \otimes B$ are modelled as tensor products $\dnt{A} \otimes \dnt{B}$
  \item Sum types $A + B$ are modelled as coproducts $\dnt{A} + \dnt{B}$
\end{itemize}
We'll say an \isotopessa{} model \emph{interprets} a signature $(\mc{T}, \mc{I})$ if we have
\begin{itemize}
  \item For every base type $X \in \mc{T}$, an object $\dnt{X} : |\mc{C}|$
  \item For every primitive instruction $f \in \mc{I}_\epsilon(A, B)$, a morphism $\dnt{f} : \dnt{A}
  \to_\epsilon \dnt{B}$.
\end{itemize}
This gives us all the ingredients to interpret \emph{contexts} and \emph{label contexts}, which, as
in Figure~\ref{fig:ssa-ty-sem}, we will interpret as tensor products and coproducts of the
denotations of their parameters, respectively.

\begin{figure}
  \begin{equation*}
    \boxed{\dnt{A} : |\mc{C}|}
  \end{equation*}
  \begin{align*}
     \dnt{\mb{1}} = I \qquad
      \dnt{A \otimes B} = \dnt{A} \otimes \dnt{B} \qquad
      \dnt{\mb{0}} = \mb{0} \qquad
      \dnt{A + B} = \dnt{A} + \dnt{B} \\
  \end{align*}
  \begin{equation*}
    \boxed{\dnt{\Gamma} : |\mc{C}|}
  \end{equation*}
  \begin{align*}
    \dnt{\cdot} = I \qquad
      \dnt{\Gamma, \thyp{x}{A}{\epsilon}} = \dnt{\Gamma} \otimes \dnt{A} \\
  \end{align*}
  \begin{equation*}
    \boxed{\dnt{\ms{L}} : |\mc{C}|}
  \end{equation*}
  \begin{align*}
      \dnt{\cdot} = 0 \qquad
      \dnt{\ms{L}, \lhyp{\ell}{A}} = \dnt{\ms{L}} + \dnt{A} \\
  \end{align*}
  \begin{equation*}
    \boxed{\dnt{\Gamma \leq \Delta} : \dnt{\Gamma} \to_\bot \dnt{\Delta}}
  \end{equation*}
  \begin{gather*}
    \dnt{\cdot \leq \cdot} = \ms{id}_I \qquad
    \dnt{\Gamma, \thyp{x}{A}{\epsilon} \leq \Delta} = \pi_l;\dnt{\Gamma \leq \Delta} \qquad
    \dnt{\Gamma, \thyp{x}{A}{\epsilon} \leq \Delta, \thyp{x}{A}{\epsilon}}
    = \dnt{\Gamma \leq \Delta} \otimes \dnt{A} \\
  \end{gather*}
  \begin{equation*}
    \boxed{\dnt{\ms{L} \leq \ms{K}} : \dnt{\ms{L}} \to_\bot \dnt{\ms{K}}}
  \end{equation*}
  \begin{gather*}
    \dnt{\cdot \leq \cdot} = \ms{id}_{\mb{0}} \qquad
    \dnt{\ms{L} \leq \ms{K}, \lhyp{\ell}{A}} = \dnt{\ms{L} \leq \ms{K}};\iota_\ell \qquad
    \dnt{\ms{L}, \lhyp{\ell}{A} \leq \ms{K}, \lhyp{\ell}{A}}
    = \dnt{\ms{L} \leq \ms{K}} + \dnt{A}
  \end{gather*}
  \caption{Denotational semantics for \isotopessa{} types, contexts, and weakenings}
  \Description{}
  \label{fig:ssa-ty-sem}
\end{figure}

We can now interpret \isotopessa{} expressions $\hasty{\Gamma}{\epsilon}{a}{A}$ over a given
signature in a model interpreting that signature as morphisms $\dnt{\Gamma} \to_\epsilon \dnt{A}$
using the rules in Figure~\ref{fig:ssa-expr-sem}. Up to this point, both our syntax and semantics
are quite standard; in particular:
\begin{itemize}
  \item Variables $x$ are modelled as projections from the appropriate index in the context's
  denotation $\pi_{\Gamma, x} : \dnt{\Gamma} \to_\bot \dnt{A}$
  \item Applications of primitive instructions $f\;a$ are modelled as $\dnt{f}$ precomposed with the
  denotation of $a$. 
  \item Unary \ms{let}-bindings are modelled by:
  \begin{itemize}
    \item Duplicating the context using the diagonal morphism $\Delta_{\dnt{\Gamma}}$
    \item Passing the right copy of the context through $\dnt{\hasty{\Gamma}{\epsilon}{a}{A}}$ to
    get an input of type $\dnt{\Gamma} \otimes \dnt{A} = \dnt{\Gamma, \bhyp{x}{A}}$
    \item Passing the result of this through $\dnt{\hasty{\Gamma, \bhyp{x}{A}}{\epsilon}{b}{B}}$ to get
    the final result of type $\dnt{B}$
  \end{itemize}
  \item Pairs are modelled by passing the result of the diagonal morphism $\Delta_{\dnt{\Gamma}}$ to
   $
   \dnt{\hasty{\Gamma}{\epsilon}{a}{A}} \ltimes \dnt{\hasty{\Gamma}{\epsilon}{b}{B}}
   $
   i.e., first
   passing the left copy through $\dnt{\hasty{\Gamma}{\epsilon}{a}{A}}$ and then the right copy
   through $\dnt{\hasty{\Gamma}{\epsilon}{b}{B}}$. By the axioms of a Freyd category, for pure
   morphisms, this is the same as simply taking the product
   $
    \langle \dnt{\hasty{\Gamma}{\epsilon}{a}{A}}, \dnt{\hasty{\Gamma}{\epsilon}{b}{B}} \rangle
   $.
  \item Binary \ms{let}-bindings are modelled similarly to unary \ms{let}-bindings, except that after
  passing the right copy of the context through $\dnt{\hasty{\Gamma}{\epsilon}{e}{A \otimes B}}$, we
  re-associate $\dnt{\Gamma} \otimes (\dnt{A} \otimes \dnt{B})$ to $(\dnt{\Gamma} \otimes \dnt{A})
  \otimes \dnt{B} = \dnt{\Gamma, \bhyp{x}{A}, \bhyp{y}{B}}$ before passing the entire result through
  $\dnt{\hasty{\Gamma, \bhyp{x}{A}, \bhyp{y}{B}}{\epsilon}{c}{C}}$.
  \item The unit value $()$ is modelled as the terminal morphism $1_{\dnt{\Gamma}}$, while
  $\ms{abort}\;a$ is modelled as the denotation of $a$ postcomposed with the zero morphism 
  $0_{\dnt{A}}$. Injections, similarly, are simply modelled as the appropriate coproduct injections.
  \item A \ms{case}-expression is modelled by
  \begin{itemize}
    \item Duplicating the context using the diagonal morphism $\Delta_{\dnt{\Gamma}}$
    \item Using the right copy of the context to compute the discriminant using
          $\dnt{\hasty{\Gamma}{\epsilon}{e}{A + B}}$
    \item Applying the inverse distributor $\delta^{-1}$ to obtain a coproduct
          $\dnt{\Gamma} \otimes \dnt{A} + \dnt{\Gamma} \otimes \dnt{B}$
    \item Computing $\dnt{\hasty{\Gamma, x: A}{\epsilon}{a}{C}}$ on the right
          branch and $\dnt{\hasty{\Gamma, y: B}{\epsilon}{b}{C}}$ on the left branch
  \end{itemize}
\end{itemize}

\begin{figure}
  \begin{equation*}
    \boxed{\dnt{\hasty{\Gamma}{\epsilon}{a}{A}} : \dnt{\Gamma} \to_\epsilon \dnt{A}}
  \end{equation*}
  \begin{align*}
    \dnt{\hasty{\Gamma}{\epsilon}{x}{A}} &= \pi_{\Gamma, x} \\
    \dnt{\hasty{\Gamma}{\epsilon}{f\;a}{B}} 
      &= \dnt{f} \circ \dnt{\hasty{\Gamma}{\epsilon}{a}{A}} \\
    \dnt{\hasty{\Gamma}{\epsilon}{\letexpr{x}{a}{b}}{B}}
      &= \lmor{\dnt{\hasty{\Gamma}{\epsilon}{a}{A}}} 
      ; \dnt{\hasty{\Gamma, \bhyp{x}{A}}{\epsilon}{b}{B}}
      \\
    \dnt{\hasty{\Gamma}{\epsilon}{(a, b)}{A \otimes B}} 
      &= \Delta_{\dnt{\Gamma}}
      ; \dnt{\hasty{\Gamma}{\epsilon}{a}{A}} \ltimes \dnt{\hasty{\Gamma}{\epsilon}{b}{B}}
      \\
    \dnt{\hasty{\Gamma}{\epsilon}{\letexpr{(x, y)}{e}{c}}{C}}
      &= \lmor{\dnt{\hasty{\Gamma}{\epsilon}{e}{A \otimes B}}} ; \alpha
      ; \dnt{\hasty{\Gamma, \bhyp{x}{A}, \bhyp{y}{B}}{\epsilon}{c}{C}}
      \\
    \dnt{\hasty{\Gamma}{\epsilon}{()}{\mb{1}}} &= 1_{\dnt{\Gamma}} \\
    \dnt{\hasty{\Gamma}{\epsilon}{\linl{a}}{A + B}}
      &= \dnt{\hasty{\Gamma}{\epsilon}{a}{A}} ; \iota_l \\
    \dnt{\hasty{\Gamma}{\epsilon}{\linr{b}}{A + B}}
      &= \dnt{\hasty{\Gamma}{\epsilon}{b}{B}} ; \iota_r \\
    \dnt{\hasty{\Gamma}{\epsilon}{\caseexpr{e}{x}{a}{y}{b}}{C}}
      &= \lmor{\dnt{\hasty{\Gamma}{\epsilon}{e}{A + B}}}
        ; \delta^{-1}_{\dnt{\Gamma}} ; \\& \qquad
        [
          \dnt{\hasty{\Gamma, \bhyp{x}{A}}{\epsilon}{a}{C}}, 
          \dnt{\hasty{\Gamma, \bhyp{y}{B}}{\epsilon}{b}{C}}
        ]
      \\
    \dnt{\hasty{\Gamma}{\epsilon}{\labort{a}}{A}} 
      &= \dnt{\hasty{\Gamma}{\epsilon}{a}{\mb{0}}} ; 0_{\dnt{A}}
  \end{align*}
  \begin{equation*}
    \text{where} \quad \boxed{\pi_{\Gamma, x} : \dnt{\Gamma} \to_\bot \dnt{A}} \qquad
    \pi_{(\Gamma, x : A), x} = \pi_r \qquad
    \pi_{(\Gamma, y : B), x} = \pi_l ; \pi_{\Gamma, x}
  \end{equation*}
  \caption{Denotational semantics for \isotopessa{} expressions}
  \Description{Denotational semantics for isotope-SSA expressions}
  \label{fig:ssa-expr-sem}
\end{figure}

Similarly, if we in fact have an \isotopessa{} model, we can interpret \isotopessa{} regions
$\haslb{\Gamma}{r}{\ms{L}}$ as morphisms $\dnt{\Gamma} \to \dnt{\ms{L}}$; note that we don't assume
anything about the effect of these morphisms. As $\ms{L}$ is a coproduct, we can view the result
object of a region $r$ as encoding both \emph{data} and \emph{control-flow} information. In
particular, we interpret a branch $\brb{\ell}{a}$ as simply the injection of the (pure) expression
$\hasty{\Gamma}{\bot}{a}{A}$, our \emph{data}, into the element of the coproduct corresponding to
$\ell$, which encodes the point in control-flow the rest of the program should jump to next. This is
in contrast to \emph{expressions}, which purely encode data, with no particular instructions on how
to use it afterwards. Our interpretation of \ms{let}-statements and \ms{case}-statements, given in
Figure~\ref{fig:ssa-reg-sem}, is exactly the same as that of the corresponding expressions.

Finally, we come to the interpretation of \ms{where}-statements, which is where the Elgot structure
comes in. The semantics of a \ms{where}-block
$\haslb{\Gamma}{\where{r}{(\wbranch{\ell_i}{x_i}{t_i},)_i}}{\ms{L}}$ can be broken down into two
major components:
\begin{itemize}
  \item The \emph{terminator} 
  $\entrymor{\Gamma}{r}{\ms{L}} : \dnt{\Gamma} \to \dnt{\ms{L}} + \Sigma_i\dnt{A_i}$,
  which, given as input the context $\dnt{\Gamma}$, executes $r$ and then re-associates the output.
  The output type $\dnt{\ms{L}} + \Sigma_i\dnt{A_i}$ expresses that $r$ may either:
  \begin{itemize}
    \item Via the left injection, return immediately, jumping to an enclosing label in $\ms{L}$
    \item Via the right injection, jump to the $i^{th}$ basic block $t_i$ in the
    \ms{where}-statement by returning a value in $\dnt{A_i}$
  \end{itemize}
  \item The ``\emph{loop}'' $\loopmor{\Gamma}{(\wbranch{\ell_i}{x_i}{t_i},)_i}{\ms{L}} :
  \dnt{\Gamma} \otimes \Sigma_i\dnt{A_i} \to \dnt{\ms{L}} + \Sigma_i\dnt{A_i}$ which, given as input
  the context $\dnt{\Gamma}$ and an input $\dnt{A_i}$ for the $i^{th}$ basic block $t_i$, executes
  $t_i$ and then re-associates the output. The output type $\dnt{\ms{L}} + \Sigma_i\dnt{A_i}$ again
  expresses that control-flow may either exit the \ms{where}-block (via $\dnt{\ms{L}}$) or jump to
  some other basic block (via $\Sigma_i\dnt{A_i}$).
\end{itemize}
We glue these together in the obvious manner to get the semantics for a where-block:
\begin{itemize}
  \item Compute $\lmor{\entrymor{\Gamma}{r}{\ms{L}}}$, which, given as input the context
  $\dnt{\Gamma}$, copies the context, executes $\entrymor{r}$, and then returns the copied context
  and the output as $\dnt{\Gamma} \otimes (\dnt{\ms{L}} + \Sigma_i\dnt{A_i})$
  \item Distribute the context into each branch of the coproduct, yielding 
  $\dnt{\Gamma} \otimes \dnt{\ms{L}} + \dnt{\Gamma} \otimes \Sigma_i\dnt{A_i}$
  \item If we are in the left branch, project out the result to yield $\dnt{\ms{L}}$, and return
  immediately
  \item Otherwise, compute $\loopmor{\Gamma}{(\wbranch{\ell_i}{x_i}{t_i},)_i}{\ms{L}}$ in a loop,
  passing in a fresh copy of the context each time. This is implemented with $\ms{rfix}$.
\end{itemize}
This simplifies significantly in the case of a \ms{where}-block defining just a single label,
yielding
\begin{equation}
  \dnt{\haslb{\Gamma}{\where{r}{\wbranch{\ell}{x}{s}}}{\ms{L}}}
  = \lmor{\dnt{\haslb{\Gamma}{r}{\ms{L}, \ell(A)}}}
  ; \delta^{-1}
  ; [\pi_r, \rfix{\dnt{\haslb{\Gamma, \bhyp{x}{A}}{s}{\ms{L}, \ell(A)}}}]
\end{equation}
In particular, it is a consequence of label weakening (Lemma~\ref{lem:wk}) that, in the case where
$s$ does not call $\ell$, this simplifies further to
\begin{equation}
  \dnt{\haslb{\Gamma}{\where{r}{\wbranch{\ell}{x}{s}}}{\ms{L}}}
  = \lmor{\dnt{\haslb{\Gamma}{r}{\ms{L}, \ell(A)}}}
  ; \delta^{-1}
  ; [\pi_r, \dnt{\haslb{\Gamma, \bhyp{x}{A}}{s}{\ms{L}}}]
\end{equation}

\begin{figure}
  \begin{equation*}
    \boxed{\dnt{\haslb{\Gamma}{r}{\ms{L}}} : \dnt{\Gamma} \to \dnt{\ms{L}}}
  \end{equation*}
  \begin{align*}
    \dnt{\haslb{\Gamma}{\brb{\ell}{a}}{\ms{L}}} 
      &= \dnt{\hasty{\Gamma}{\bot}{a}{A}} ; \iota_{\ms{L}, \ell}
      \\
    \dnt{\haslb{\Gamma}{\letstmt{x}{a}{r}}{\ms{L}}}
      &= \lmor{\dnt{\hasty{\Gamma}{\epsilon}{a}{A}}}
      ; \dnt{\haslb{\Gamma, \bhyp{x}{A}}{r}{\ms{L}}} 
      \\
    \dnt{\haslb{\Gamma}{\letstmt{(x, y)}{e}{r}}{\ms{L}}}
      &= \lmor{\dnt{\hasty{\Gamma}{\epsilon}{e}{A \otimes B}}} ; \alpha
      ; \dnt{\haslb{\Gamma, \bhyp{x}{A}, \bhyp{y}{B}}{r}{\ms{L}}} 
      \\ 
    \dnt{\haslb{\Gamma}{\casestmt{e}{x}{r}{y}{s}}{\ms{L}}}
      &= \lmor{\dnt{\hasty{\Gamma}{\epsilon}{e}{A + B}}}
      ; \delta^{-1} ;
      \\&\quad\;
      [
        \dnt{\haslb{\Gamma, \bhyp{x}{A}}{r}{\ms{L}}},
        \dnt{\haslb{\Gamma, \bhyp{y}{B}}{s}{\ms{L}}}
      ]
      \\
    \dnt{\haslb{\Gamma}{\where{r}{(\wbranch{\ell_i}{x_i}{t_i},)_i}}{\ms{L}}}
      &= \lmor{\entrymor{\Gamma}{r}{\ms{L}}} 
      ; \delta^{-1} 
      ; [\pi_r, \rfix{\loopmor{\Gamma}{(\wbranch{\ell_i}{x_i}{t_i},)_i}{\ms{L}}} ]
  \end{align*} 
  \begin{align*}
    \text{where} \qquad 
    & \boxed{\iota_{\ms{L}, \ell} : \dnt{A} \to_\bot \dnt{\ms{L}}} \qquad \qquad \qquad
    \iota_{(\ms{L}, \ell(A)), \ell} = \iota_r \qquad
    \iota_{(\ms{L}, \kappa(B)), x} = \iota_l ; \iota_{\ms{L}, \ell} \\
    & \boxed{\entrymor{\Gamma}{r}{\ms{L}} : \dnt{\Gamma} \to \dnt{L} + \Sigma_i\dnt{A_i}} \\
    & \entrymor{\Gamma}{r}{\ms{L}} = \dnt{\haslb{\Gamma}{r}{\ms{L}, (\lhyp{\ell_i}{A_i},)_i}} 
    ; \alpha^+_{\dnt{\ms{L}} + \Sigma_i \dnt{A_i}} \\
    & \boxed{\loopmor{\Gamma}{\wbranch{\ell_i}{x_i}{t_i},)_i}{\ms{L}} 
      : \dnt{\Gamma} \otimes \Sigma_i\dnt{A_i} \to \dnt{\ms{L}} + \Sigma_i\dnt{A_i}}
    \\
    & \loopmor{\Gamma}{\wbranch{\ell_i}{x_i}{t_i},)_i}{\ms{L}} = \delta^{-1}_{\Sigma} 
      ; [ \dnt{\haslb{\Gamma, \bhyp{x_i}{A_i}}{t_i}{\ms{L}, (\lhyp{\ell_j}{A_j},)_j}}, ]_i
      ; \alpha^+_{\dnt{\ms{L}} + \Sigma_i \dnt{A_i}}  
  \end{align*}
  \caption{Denotational semantics for \isotopessa{} regions}
  \Description{Denotational semantics for isotope-SSA regions}
  \label{fig:ssa-reg-sem}
\end{figure}

\subsection{Metatheory}

We can now begin to state the metatheoretic properties of our denotational semantics. Before we do so, we establish the convention that whenever we have an equation involving the interpretation of a derivation (e.g., $\dnt{\mathcal{D}} = \dnt{\mc{D}'}$), we assume that all the derivations (e.g., $\mc{D}$ and $\mc{D}'$) exist and are well-formed. 

We begin with
weakening: as shown in Figure~\ref{fig:ssa-ty-sem}, weakenings are modelled, essentially, as
projections from a larger product $\dnt{\Gamma}$ to a smaller product $\dnt{\Delta}$, while
label-weakenings are modelled as injections from a smaller coproduct $\dnt{\ms{L}}$ to a larger
coproduct $\dnt{\ms{K}}$; in particular, in both cases, the morphisms are pure.  A simple induction
can then be used to derive the following weakening lemmas:
\begin{lemma}[name=(Label) Weakening, restate=weakeninglem]
  Given $\Gamma \leq \Gamma'$ and $\ms{L}' \leq \ms{L}$, $\ms{K}' \leq \ms{K}$, we have
  \begin{enumerate}[label=(\alph*)]
    \item $\dnt{\Gamma \leq \Delta} = \dnt{\Gamma \leq \Gamma'};\dnt{\Gamma' \leq \Delta}$
      \label{itm:varwk}
    \item $\dnt{\ms{L}' \leq \ms{K}} = \dnt{\ms{L}' \leq \ms{L}};\dnt{\ms{L} \leq \ms{K}}$
      \label{itm:lbwk}
    \item $\dnt{\hasty{\Gamma}{\epsilon}{a}{A}} 
      = \dnt{\Gamma \leq \Gamma'};\dnt{\hasty{\Gamma'}{\epsilon}{a}{A}}$
      \label{itm:expwk}
    \item $\dnt{\haslb{\Gamma}{r}{\ms{L}}}
      = \dnt{\Gamma \leq \Gamma'}
      ; \dnt{\haslb{\Gamma'}{r}{\ms{L}'}}
      ; \dnt{\ms{L}' \leq \ms{L}}$
      \label{itm:regwk}
    \item $\dnt{\issubst{\gamma}{\Gamma}{\Delta}}
      = \dnt{\Gamma \leq \Gamma'};\dnt{\issubst{\gamma}{\Gamma'}{\Delta}}$
      \label{itm:substwk}
    \item $\dnt{\lbsubst{\Gamma}{\sigma}{\ms{L}}{\ms{K}}}
      = \dnt{\Gamma \leq \Gamma'} \otimes \dnt{\ms{L}}
      ; \dnt{\lbsubst{\Gamma}{\sigma}{\ms{L}}{\ms{K}'}}
      ; \dnt{\ms{K}' \leq \ms{K}}
      $
      \label{itm:lbsubstwk}
  \end{enumerate}
  \label{lem:wk}
\end{lemma}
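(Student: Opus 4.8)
The plan is to prove all six parts together, establishing the purely structural facts (a) and (b) first since the other four invoke them. For (a) I would induct on the derivation of $\Gamma \leq \Gamma'$: the \brle{wk-nil} case is immediate, \brle{wk-skip} reduces to naturality of $\pi_l$ on the pure weakening morphism, and \brle{wk-cons} reduces to bifunctoriality of $\otimes$, in each case after unfolding $\dnt{\Gamma \leq \Delta}$ from Figure~\ref{fig:ssa-ty-sem} and applying the inductive hypothesis to the truncated contexts. Part (b) is exactly dual, inducting on $\ms{L} \leq \ms{K}$ and using naturality of $\iota_l$ together with bifunctoriality of $+$.

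Before (c)--(f) I would record two families of naturality facts. First, projection commutes with variable weakening, $\pi_{\Gamma, x} = \dnt{\Gamma \leq \Gamma'} ; \pi_{\Gamma', x}$ (induction on $\Gamma \leq \Gamma'$), and dually injection commutes with label weakening, $\iota_{\ms{L}', \ell} ; \dnt{\ms{L}' \leq \ms{L}} = \iota_{\ms{L}, \ell}$. Second, and crucially, the weakening morphisms lie in $\mc{C}_\bot$, so by the Freyd axioms they are comonoid homomorphisms: $\dnt{\Gamma \leq \Gamma'} ; \dmor{\dnt{\Gamma'}} = \dmor{\dnt{\Gamma}} ; \dnt{\Gamma \leq \Gamma'} \otimes \dnt{\Gamma \leq \Gamma'}$, and they preserve the terminal morphism. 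Granting these, (c) is a direct induction on $a$: \brle{var} is the projection fact; \brle{op}, \brle{inl}, \brle{inr}, \brle{abort} follow by postcomposition and the inductive hypothesis; and \brle{let$_1$}, \brle{let$_2$}, \brle{pair}, \brle{case} follow by sliding $\dnt{\Gamma \leq \Gamma'}$ through the copy $\dmor{\dnt{\Gamma}}$ via comonoid naturality, past the (pure, natural) associator and distributor, and then applying the hypotheses to the subexpressions. Part (d) repeats this pattern for \brle{br}, \brle{let}, and \brle{case}, now also threading $\dnt{\ms{L}' \leq \ms{L}}$ rightward past the copairing using naturality and the injection fact.

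The real work is the \brle{cfg} case of (d), whose denotation $\lmor{\entrymor{\Gamma}{r}{\ms{L}}} ; \delta^{-1} ; [\pi_r, \rfix{\loopmor{\Gamma}{(t_i)_i}{\ms{L}}}]$ must be shown to factor both weakenings out of the Elgot fixpoint. Here the split of $\ms{rfix}$ into a read-only environment $\dnt{\Gamma}$ and a control coproduct is what makes it tractable: the context weakening touches only the environment and is pure, so I would invoke the environment lemmas for $\ms{rcase}$ and $\ms{rfix}$ from Appendix~\ref{apx:environment} together with $\mc{C}_\bot$-uniformity and strength of $(-)^\dagger$ to slide $\dnt{\Gamma \leq \Gamma'} \otimes -$ inside the fixpoint, while the label weakening sits on the $\dnt{\ms{L}}$ summand of the output and is extracted by naturality of $(-)^\dagger$. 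The inductive hypotheses on $r$ and each $t_i$ supply the needed factorizations of $\entrymor{\Gamma}{r}{\ms{L}}$ and $\loopmor{\Gamma}{(t_i)_i}{\ms{L}}$, using (b) to identify the weakening of the extended label context $\ms{L}', \ell_i(A_i) \leq \ms{L}, \ell_i(A_i)$ with $\dnt{\ms{L}' \leq \ms{L}} + \Sigma_i \dnt{A_i}$. I expect verifying the uniformity side condition for the pure context weakening, and tracking precisely which coproduct summand each injection lands in, to be the fiddly heart of the argument.

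Finally (e) and (f) reduce to (c) and (d). For (e) I would induct on the substitution derivation (\brle{sb-nil}, \brle{sb-cons}): its denotation pairs the component expression denotations via $\dmor{\dnt{\Gamma}}$, so the claim follows from (c) componentwise with comonoid naturality of the pure weakening. For (f) I would induct on the label-substitution derivation (\brle{ls-nil}, \brle{ls-cons}): its denotation copairs the component region denotations over $\dnt{\Gamma} \otimes \dnt{\ms{L}}$, and the stated shape $\dnt{\Gamma \leq \Gamma'} \otimes \dnt{\ms{L}} ; \dnt{\lbsubst{\Gamma}{\sigma}{\ms{L}}{\ms{K}'}} ; \dnt{\ms{K}' \leq \ms{K}}$ emerges from (d) applied to each component region, with (a) and (b) absorbing the domain- and codomain-weakening wrappers respectively.
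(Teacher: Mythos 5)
Your proposal is correct and follows essentially the same route as the paper's proof: parts (a) and (b) by induction on the weakening derivations, the projection/injection commutation facts and the comonoid-homomorphism property of pure weakenings as auxiliary lemmas, parts (c) and (d) by induction on typing derivations with the \brle{cfg} case handled via the environment-comonad lemmas (pure sliding through $\ms{rfix}$ by uniformity, label weakening extracted by naturality), and parts (e) and (f) by straightforward induction on the substitution derivations. No gaps to report.
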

\begin{proof}
  See Appendix~\ref{proof:weakening}.
\end{proof}

Our first proper semantic theorem is that \emph{rewriting} is sound: if two (potentially impure!)
substitutions have pointwise equal semantics, then substituting by either yields a result with
the same semantics. That is, value-substitution is a well-defined operation when quotienting by the
semantics, or, more formally:

\begin{theorem}[Soundness (Rewriting)]
  Given substitutions $\issubst{\gamma, \gamma'}{\Gamma}{\Delta}$ such that, for all 
  $\thyp{x}{A}{\epsilon} \in \Delta$, $\dnt{\hasty{\Gamma}{\epsilon}{\gamma\;x}{A}} =
  \dnt{\hasty{\Gamma}{\epsilon}{\gamma'\;x}{A}}$, we have
  \begin{enumerate}[label=(\alph*)]
    \item Given $\hasty{\Delta}{\epsilon}{a}{A}$, 
      $\dnt{\hasty{\Gamma}{\epsilon}{[\gamma]a}{A}} = \dnt{\hasty{\Gamma}{\epsilon}{[\gamma']a}{A}}$
    \item Given $\issubst{\rho}{\Delta}{\Xi}$, $\dnt{\issubst{[\gamma]\rho}{\Gamma}{\Xi}} =
      \dnt{\issubst{[\gamma']\rho}{\Gamma}{\Xi}}$
    \item Given $\haslb{\Delta}{r}{\ms{L}}$, $\dnt{\haslb{\Gamma}{[\gamma]r}{\ms{L}}} =
      \dnt{\haslb{\Gamma}{[\gamma']r}{\ms{L}}}$
    \item Given $\lbsubst{\Delta}{\rho}{\ms{L}}{\ms{K}}$,
      $\dnt{\lbsubst{\Gamma}{[\gamma]\rho}{\ms{L}}{\ms{K}}} =
      \dnt{\lbsubst{\Gamma}{[\gamma']\rho}{\ms{L}}{\ms{K}}}$ 
  \end{enumerate}
  \label{thm:rewriting}
\end{theorem}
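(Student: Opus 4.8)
The plan is to prove all four statements simultaneously by structural induction on the typing derivations of $a$, $\rho$, $r$, and the label-substitution $\rho$, exploiting the fact that the denotational semantics is defined compositionally: except at variable leaves, where $[\gamma]x = \gamma\;x$, the term $[\gamma]a$ (and likewise for the other three) has the same outermost constructor as $a$ itself, with immediate subterms replaced by $[\gamma]$ applied to the subterms of $a$. Since every semantic combinator used in Figures~\ref{fig:ssa-expr-sem} and~\ref{fig:ssa-reg-sem} --- composition, tensor, the diagonal, coproduct injections, the (inverse) distributor, $\lmor{\cdot}$, and $\ms{rfix}$ --- is an honest operation on morphisms, it automatically respects equality of its arguments. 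Thus every non-leaf case reduces, by congruence, to the inductive hypotheses on the subterms, while every variable leaf is discharged directly by the pointwise hypothesis $\dnt{\gamma\;x} = \dnt{\gamma'\;x}$.

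First I would treat (a) for expressions. The variable case is immediate. The operation, pair, unit, injection, and abort cases are pure congruences: e.g. $\dnt{[\gamma](f\;b)} = \dnt{f} \circ \dnt{[\gamma]b}$, and the inductive hypothesis on $b$ gives $\dnt{[\gamma]b} = \dnt{[\gamma']b}$. The binding cases --- the unary and binary $\ms{let}$ cases and the $\ms{case}$ case --- are where the one real subtlety appears: to recurse into a body, one must pass to the extended substitutions $\rupg{\gamma}$ and $\rupg{\gamma'}$ over the newly bound variables, and one must check that these still satisfy the pointwise premise. On the fresh variables both extensions are the identity $x \mapsto x$, so they agree trivially; on the old variables they are $\gamma\;y$ and $\gamma'\;y$ transported into the enlarged context, and by the weakening lemma (Lemma~\ref{lem:wk}\ref{itm:expwk}) each such denotation factors as a weakening morphism composed with the original, so equality of the originals is preserved. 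With the premise re-established, the inductive hypothesis applies to the body and the semantic combinators close the case.

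Parts (b), (c), and (d) then follow the same template. For (b) and (d) the induction is on the (label-)substitution being transported: the empty case is trivial and the cons case combines the inductive hypothesis on the tail with part (a) (for value substitutions) or part (c) (for label substitutions, under the extension over the bound parameter), again glued by fixed semantic operations. For (c), branches reduce to (a); the $\ms{let}$- and $\ms{case}$-statement cases mirror their expression analogues, re-establishing the premise under binders exactly as above. The only genuinely new case is the $\ms{where}$-block, whose denotation is $\lmor{\entrymor{\Gamma}{r}{\ms{L}}} ; \delta^{-1} ; [\pi_r, \rfix{\loopmor{\Gamma}{(\wbranch{\ell_i}{x_i}{t_i},)_i}{\ms{L}}}]$; here $\entrymor{\Gamma}{r}{\ms{L}}$ and $\loopmor{\Gamma}{(\wbranch{\ell_i}{x_i}{t_i},)_i}{\ms{L}}$ are built purely from $\dnt{[\gamma]r}$ and the $\dnt{[\gamma]t_i}$ (the latter under the extensions over $x_i$), so the inductive hypotheses give equality of these constituents, and since $\ms{rfix}$ --- hence the underlying Elgot operator $(-)^\dagger$ --- is a function of its argument morphism, equal loop bodies yield equal fixpoints.

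I expect the main obstacle to be not any single case but the uniform bookkeeping of context extension under binders: making precise that $\rupg{\gamma}$ and $\rupg{\gamma'}$ continue to satisfy the pointwise-equality premise, which is exactly the place where Lemma~\ref{lem:wk} is needed. Everything else is mechanical congruence, and the $\ms{where}$-case --- often the delicate one for region semantics --- is in fact easy here, since we need only the well-definedness of $\ms{rfix}$ on morphisms, not any of its equational properties.
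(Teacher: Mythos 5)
Your proof is correct and follows essentially the same route as the paper, whose entire proof is ``by a trivial induction, since our semantics is compositional.'' You have simply carried out that induction explicitly, and the one detail you add beyond the paper's one-liner --- re-establishing the pointwise premise under binders via Lemma~\ref{lem:wk} --- is exactly the bookkeeping the paper leaves implicit.
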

\begin{proof}
  By a trivial induction, since our semantics is compositional.
\end{proof}

This result allows us to safely perform peephole rewrites which we can independently prove
semantically valid, including on potentially impure operations. However, for pure substitutions, it
is possible to relate the semantics of the substituted term to the original more precisely. By
\emph{pure} substitutions, we mean substitutions which do not replace variables with impure terms.
More precisely, we can define the \emph{effect} of a context using the lattice structure on effects
as follows
\begin{equation}
  \ms{eff}(\cdot) = \bot \qquad 
  \ms{eff}(\Gamma, \thyp{x}{A}{\epsilon}) = \ms{eff}(\Gamma) \sqcup \epsilon
\end{equation}
We can now give a denotational semantics to substitutions in which a substitution
$\issubst{\gamma}{\Gamma}{\Delta}$ is interpreted as a morphism from $\dnt{\Gamma}$ to
$\dnt{\Delta}$ with effect $\leq \ms{eff}(\Delta)$, as in Figure~\ref{fig:ssa-subst-sem}. Such a
substitution is \emph{pure} if the effect of each of its components is $\bot$, i.e., we have
\begin{equation}
  \substpure{\gamma} \iff \forall x, \dnt{\hasty{\Gamma}{\epsilon}{\gamma\;x}{\Delta\;x}} 
    : \dnt{\Gamma} \to_\bot \dnt{\Delta\;x}
\end{equation}
In particular, we have that
\begin{enumerate}
  \item $\ms{eff}(\Delta) = \bot \implies \substpure{\gamma}$
  \item $\substpure{\gamma} \implies \substpure{\lupg{\gamma}}$ and $\substpure{\rupg{\gamma}}$; 
  in particular, the identity substitution is obviously pure.
\end{enumerate}
We can now state the soundness of variable substitution as follows:
\begin{theorem}[name=Soundness (Substitution), restate=soundnesssubst]
  Given $\dnt{\issubst{\gamma}{\Gamma}{\Delta}} : \dnt{\Gamma} \to \dnt{\Delta}$ pure, we have that
  \begin{enumerate}[label=(\alph*)]
    \item $\dnt{\hasty{\Gamma}{\epsilon}{[\gamma]a}{A}} 
      = \dnt{\issubst{\gamma}{\Gamma}{\Delta}};\dnt{\hasty{\Delta}{\epsilon}{a}{A}}$
      \label{itm:tm-subst-sound}
    \item $\dnt{\haslb{\Gamma}{[\gamma]r}{\ms{L}}}
      = \dnt{\issubst{\gamma}{\Gamma}{\Delta}};\dnt{\haslb{\Delta}{r}{\ms{L}}}$
    \item $\dnt{\issubst{[\gamma]\rho}{\Delta}{\Xi}}
      = \dnt{\issubst{\gamma}{\Gamma}{\Delta}};\dnt{\issubst{\rho}{\Delta}{\Xi}}$
    \item $\dnt{\lbsubst{\Gamma}{[\gamma]\sigma}{\ms{L}}{\ms{K}}}
      = \dnt{\issubst{\gamma}{\Gamma}{\Delta}};\dnt{\lbsubst{\Delta}{\sigma}{\ms{L}}{\ms{K}}}$
  \end{enumerate}
  \label{thm:subst-sound}
\end{theorem}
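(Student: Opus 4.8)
The plan is to prove all four parts simultaneously by a single mutual induction: parts (a) and (b) by structural induction on the typing derivations of $a$ and $r$, and parts (c) and (d) by induction on the structure of the substitutions $\rho$ and $\sigma$, invoking (a) and (b) at the leaves. Because our semantics is \emph{compositional} — each syntactic former is interpreted by a fixed categorical combinator — every inductive step reduces to pushing the pure morphism $\dnt{\gamma}$ past that combinator. The purity hypothesis $\substpure{\gamma}$ is what makes this possible: it gives us both that $\dnt{\gamma}$ is central (so it \emph{slides}) and that it is a comonoid homomorphism, $\dnt{\gamma};\dmor{\dnt{\Delta}} = \dmor{\dnt{\Gamma}};\dnt{\gamma}\otimes\dnt{\gamma}$.

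Before the induction I would isolate two supporting facts. First, an \emph{extension lemma}: for any $\Xi$, $\dnt{\rupg{\gamma}_{\Xi}} = \dnt{\gamma}\otimes\dnt{\Xi}$ (with the analogous statement for $\lupg{\cdot}$), proved by a short induction on $\Xi$ directly from the definition of the substitution semantics. This is exactly what lets the induction hypothesis reapply after passing under a binder, since every binder extends source and target contexts by the same variables. Second, the variable \emph{base case}: $\dnt{\gamma\,x} = \dnt{\gamma};\pi_{\Delta,x}$, which is immediate from the fact that $\dnt{\gamma}$ is built (via \brle{sb-cons}) as the tuple of its components and that $\pi_{\Delta,x}$ selects the $x$-th component — using here the pure product projection law, which again needs purity.

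The core of the argument is the \emph{binding} formers — unary and binary \ms{let}, \ms{case}, and \ms{br} — which all follow one template. Each begins by copying the context with $\dmor{\dnt{\Gamma}}$ (inside $\lmor{\cdot}$) and then threading the copies through the subterms; to commute $\dnt{\gamma}$ through this I use precisely the comonoid-homomorphism law together with sliding to reassociate $\dnt{\gamma}\otimes\dnt{\gamma}$ into the nested tensors. Concretely, in the $\letexpr{x}{a}{b}$ case the induction hypotheses on $a$ and $b$ plus the extension lemma reduce the goal to
\begin{align*}
  \lmor{\dnt{\gamma};\dnt{a}};(\dnt{\gamma}\otimes\dnt{A});\dnt{b}
  &= \dnt{\gamma};\lmor{\dnt{a}};\dnt{b},
\end{align*}
and the remaining equality is pure bookkeeping with the comonoid law and one application of sliding. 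The \ms{case} and binary \ms{let} cases add only the central morphisms $\delta^{-1}$ and $\alpha$, which commute with $\dnt{\gamma}$ without incident.

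The main obstacle is the \ms{where} case, where the Elgot structure enters through $\rfix{\cdot}$. After applying the induction hypotheses and the extension lemma, the terminator $\entrymor{\Gamma}{[\gamma]r}{\ms{L}} = \dnt{\gamma};\entrymor{\Delta}{r}{\ms{L}}$ (the reassociator $\alpha^+$ being pure), and each loop body acquires $\dnt{\gamma}\otimes\Sigma_i\dnt{A_i}$ on its read-only input (the distributor $\delta^{-1}_{\Sigma}$ being pure). The delicate point is $\rfix{\cdot}$: since the read-only context is copied afresh on every iteration, $\dnt{\gamma}$ cannot simply be slid in once. Instead I would appeal to the properties of $\rcase{\cdot}$ and $\rfix{\cdot}$ recorded in Appendix~\ref{apx:environment}, together with $\mc{C}_\bot$-uniformity (taking the pure morphism $\dnt{\gamma}\otimes\Sigma_i\dnt{A_i}$ as the uniformity witness) and strength of the iteration operator, to establish the key commutation $(\dnt{\gamma}\otimes\Sigma_i\dnt{A_i});\rfix{f} = \rfix{((\dnt{\gamma}\otimes\Sigma_i\dnt{A_i});f)}$ for pure $\dnt{\gamma}$. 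Once this holds, the outer $\lmor{\cdot};\delta^{-1};[\pi_r,-]$ commutes with $\dnt{\gamma}$ exactly as in the \ms{case} case, closing the induction; parts (c) and (d) then follow from the cons rules of Figure~\ref{fig:ssa-meta-rules} by (a) and (b) respectively.
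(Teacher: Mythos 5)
Your proposal is correct and follows essentially the same route as the paper's proof: the variable base case is the paper's Substitution Projection lemma, the extension fact is handled inline in the paper via $\dnt{\gamma}\otimes\dnt{A} = \dnt{(\gamma, x\mapsto x)}$, the binding formers are dispatched exactly as you describe using the comonoid-homomorphism and centrality properties of pure morphisms, and the \ms{where} case rests on the same Appendix~\ref{apx:environment} fact that a pure morphism $h$ satisfies $h \otimes A ; \rfix{f} = \rfix{h \otimes A ; f}$, proved by $\mc{C}_\bot$-uniformity. The only cosmetic differences are that the paper proves (a), (b), (c), (d) sequentially rather than by one mutual induction, and that the key $\rfix$ commutation needs only uniformity (strength enters elsewhere in the appendix), neither of which affects correctness.
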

\begin{proof}
  See Appendix~\ref{proof:soundness-subst}.
\end{proof}
In particular, this implies that, when $\gamma$ is pure, the semantics of substitution composition
$[\gamma]\rho$ is just composition of the denotations of $\gamma, \rho$; note in general that this is
\emph{not} true in general if only the $\rho$ is pure, as can trivially be seen from the example
\(\gamma = x \mapsto \ms{print}; x\), \(\rho = x \mapsto ()\). Note also that we can derive
rewriting (Theorem~\ref{thm:rewriting}) for pure substitutions from soundness of substitution, but
not vice versa; however, rewriting also covers \emph{impure} substitutions; this is similar to the
distinction between \emph{uniformity} (which only holds for pure operations) and \emph{dinaturality}
(which holds in general, but has a more restricted form). We can derive the following important
corollary:
\begin{corollary}[Soundness (Single Substitution)]
  Given $\hasty{\Gamma}{\bot}{a}{A}$, we have
  \begin{enumerate}
    \item Given $\hasty{\Gamma, \bhyp{x}{A}}{\bot}{b}{B}$,
    \begin{equation}
      \dnt{\hasty{\Gamma}{\epsilon}{[a/x]b}{B}}
      = \lmor{\dnt{\hasty{\Gamma}{\bot}{a}{A}}} 
        ; \dnt{\hasty{\Gamma, \bhyp{x}{A}}{\epsilon}{b}{B}}
      = \dnt{\hasty{\Gamma}{\epsilon}{\letexpr{x}{a}{b}}{B}}
    \end{equation}
    \item Given $\haslb{\Gamma, \bhyp{x}{A}}{r}{\ms{L}}$, we have
    \begin{equation}
      \dnt{\haslb{\Gamma}{[a/x]r}{\ms{L}}}
      = \lmor{\dnt{\hasty{\Gamma}{\bot}{a}{A}}}
        ; \dnt{\haslb{\Gamma, \bhyp{x}{A}}{r}{\ms{L}}}
      = \dnt{\haslb{\Gamma}{\letstmt{x}{a}{r}}{\ms{L}}}
    \end{equation}
  \end{enumerate}
  \label{corr:single-subst}
\end{corollary}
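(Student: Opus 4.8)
The plan is to obtain both statements as direct instances of Soundness (Substitution), Theorem~\ref{thm:subst-sound}, specialised to the single-variable substitution $[a/x] = \lupg{(x \mapsto a)}$. First I would check that $[a/x]$ is a well-typed \emph{pure} substitution $\issubst{[a/x]}{\Gamma}{\Gamma, \bhyp{x}{A}}$: unfolding the left extension, $[a/x]$ sends each variable of $\Gamma$ to itself and $x$ to $a$. Its component at $x$ is $\hasty{\Gamma}{\bot}{a}{A}$, which is pure by hypothesis, and the remaining components form the identity substitution $\rupg{\cdot}_\Gamma$, which is pure by the remarks preceding the theorem. Hence $[a/x]$ satisfies the hypothesis of Theorem~\ref{thm:subst-sound}.

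The one step with genuine content is to compute the denotation of this substitution and recognise it as $\lmor{\dnt{\hasty{\Gamma}{\bot}{a}{A}}}$. Unfolding the semantics of substitutions (Figure~\ref{fig:ssa-subst-sem}) at the last target hypothesis $x \mapsto a$, and using that the identity substitution on $\Gamma$ denotes $\ms{id}_{\dnt{\Gamma}}$, gives $\dnt{\issubst{[a/x]}{\Gamma}{\Gamma, \bhyp{x}{A}}} = \langle \ms{id}_{\dnt{\Gamma}}, \dnt{\hasty{\Gamma}{\bot}{a}{A}} \rangle$. By the Freyd-category identity $\langle f, g \rangle = \dmor{A} ; f \otimes g$, taking $f = \ms{id}_{\dnt{\Gamma}}$ this equals $\dmor{\dnt{\Gamma}} ; \dnt{\Gamma} \otimes \dnt{\hasty{\Gamma}{\bot}{a}{A}}$, which is exactly $\lmor{\dnt{\hasty{\Gamma}{\bot}{a}{A}}}$ by the defining equation $\lmor{h} = \dmor{A} ; A \otimes h$.

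With this in hand, the first equality of (1) follows by instantiating Theorem~\ref{thm:subst-sound}\ref{itm:tm-subst-sound} at $\gamma = [a/x]$ (giving $\dnt{\hasty{\Gamma}{\epsilon}{[a/x]b}{B}} = \dnt{\issubst{[a/x]}{\Gamma}{\Gamma, \bhyp{x}{A}}} ; \dnt{\hasty{\Gamma, \bhyp{x}{A}}{\epsilon}{b}{B}}$) and substituting the identity from the previous paragraph; the first equality of (2) follows identically from the region clause of the same theorem. The second equalities are then purely definitional, since the semantics of $\letexpr{x}{a}{b}$ in Figure~\ref{fig:ssa-expr-sem} and of $\letstmt{x}{a}{r}$ in Figure~\ref{fig:ssa-reg-sem} are \emph{defined} to be $\lmor{\dnt{a}} ; \dnt{b}$ and $\lmor{\dnt{a}} ; \dnt{r}$ respectively; one appeals to the weakening lemma (Lemma~\ref{lem:wk}) only to note that the annotations $\bot$ and $\epsilon$ on $a$ denote the same morphism. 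The sole obstacle is thus the bookkeeping in the previous paragraph --- in particular the auxiliary fact that the identity substitution is interpreted as the identity morphism --- after which the corollary is immediate.
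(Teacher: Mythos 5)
Your proposal is correct and takes essentially the same route as the paper: the paper's proof consists precisely of the observation that $\dnt{\issubst{\lupg{x \mapsto a}}{\Gamma}{\Gamma, \bhyp{x}{A}}} = \Delta_{\dnt{\Gamma}} ; \dnt{\Gamma} \otimes \dnt{\hasty{\Gamma}{\bot}{a}{A}} = \lmor{\dnt{\hasty{\Gamma}{\bot}{a}{A}}}$, followed by an appeal to Theorem~\ref{thm:subst-sound}. The extra bookkeeping you supply (purity of the extended substitution, the identity substitution denoting the identity, and the definitional unfolding of the \ms{let}-semantics) is exactly what the paper's ``follows immediately'' leaves implicit.
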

\begin{proof}
  Follows immediately from the fact that 
  \begin{equation}
    \dnt{\issubst{\lupg{x \mapsto a}}{\Gamma}{\Gamma, \bhyp{x}{A}}} 
    = \Delta_{\dnt{\Gamma}} ; \dnt{\Gamma} \otimes \dnt{\hasty{\Gamma}{\bot}{a}{A}}
    = \lmor{\dnt{\hasty{\Gamma}{\bot}{a}{A}}}
  \end{equation}
\end{proof}

\begin{figure}
  \begin{equation*}
    \boxed{\dnt{\issubst{\gamma}{\Gamma}{\Delta}} 
      : \dnt{\Gamma} \to_{\ms{eff}(\Delta)} \dnt{\Delta}}
  \end{equation*}
  \begin{gather*}
    \dnt{\issubst{\cdot}{\Gamma}{\cdot}} = \tmor{\dnt{\Gamma}}
    \qquad
    \dnt{\issubst{\gamma, x \mapsto e}{\Gamma}{\Delta, \thyp{x}{A}{\epsilon}}}
    = \dmor{\dnt{\Gamma}};\dnt{\issubst{\gamma}{\Gamma}{\Delta}} 
      \ltimes \dnt{\hasty{\Gamma}{\epsilon}{e}{A}}
  \end{gather*}
  \begin{equation*}
    \boxed{\dnt{\lbsubst{\Gamma}{\kappa}{\ms{L}}{\ms{K}}} 
      : \dnt{\Gamma} \otimes \dnt{\ms{L}} \to \dnt{\ms{K}}}
  \end{equation*}
  \begin{gather*}
    \dnt{\lbsubst{\Gamma}{\cdot}{\cdot}{\ms{K}}} 
      = \tmor{\dnt{\Gamma}} \otimes \mb{0}; \lambda; 0_{\ms{K}}
    \\
    \dnt{\lbsubst{\kappa, \ell(x) \mapsto r}{\Gamma}{\ms{L}, \ell(A)}{\ms{K}}}
      = \delta ; [
      \dnt{\lbsubst{\kappa}{\Gamma}{\ms{L}}{\ms{K}}}, 
      \dnt{\haslb{\Gamma, \bhyp{x}{A}}{r}{\ms{K}}}
    ]
  \end{gather*}
  \caption{Denotational semantics for \isotopessa{} (label) substitutions}
  \Description{}
  \label{fig:ssa-subst-sem} 
\end{figure}

We can now move on to stating the metatheoretic properties of label-substitutions, which,
thankfully, turn out to be a little bit simpler. In particular, in Figure~\ref{fig:ssa-subst-sem},
we interpret label substitutions $\lbsubst{\Gamma}{\sigma}{\ms{L}}{\ms{K}}$ as morphisms taking
a copy of the context $\dnt{\Gamma}$ and an element of the coproduct $\dnt{\ms{L}}$ to an element
of the coproduct $\dnt{\ms{K}}$, with an arbitrary effect. Label substitution is then sound in
general, as stated in the following theorem:
\begin{theorem}[name=Soundness (Label Substitution), restate=soundnesslsubst]
  Given $\lbsubst{\Gamma}{\sigma}{\ms{L}}{\ms{K}}$, we have
  \begin{enumerate}[label=(\alph*)]
    \item $\dnt{\haslb{\Gamma}{[\sigma]r}{\ms{K}}}
      = \lmor{\dnt{\haslb{\Gamma}{r}{\ms{L}}}}
      ; \dnt{\lbsubst{\Gamma}{\sigma}{\ms{L}}{\ms{K}}}$
    \item $\dnt{\lbsubst{\Gamma}{[\sigma]\sigma'}{\ms{M}}{\ms{K}}}
      = \dmor{\dnt{\Gamma}} \otimes \dnt{\ms{L}} ; \alpha
      ; \dnt{\Gamma} \otimes \dnt{\lbsubst{\Gamma}{\sigma'}{\ms{M}}{\ms{L}}}
      ; \dnt{\lbsubst{\Gamma}{\sigma}{\ms{L}}{\ms{K}}}$
  \end{enumerate}
\end{theorem}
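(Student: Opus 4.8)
The plan is to prove part (a) by structural induction on the region $r$, and then derive part (b) from (a) by a routine induction on $\sigma'$, since label-substitution composition is pointwise. Before the main induction I would isolate a \emph{lookup lemma}: for $\lbsubst{\Gamma}{\sigma}{\ms{L}}{\ms{K}}$ and a label $\ell(A) \in \ms{L}$ with $\sigma$ sending $\ell(x) \mapsto r_\ell$, composing $\dnt{\lbsubst{\Gamma}{\sigma}{\ms{L}}{\ms{K}}}$ with $\dnt{\Gamma} \otimes \iota_{\ms{L}, \ell}$ recovers exactly $\dnt{\haslb{\Gamma, \bhyp{x}{A}}{r_\ell}{\ms{K}}}$. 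This is the label-substitution analogue of the projection law for variable substitutions, and it follows by induction on $\sigma$ using the coproduct $\beta$-laws together with the fact that $\dnt{\Gamma} \otimes \iota_r$ post-composed with the (inverse) distributor is the right injection into $(\dnt{\Gamma} \otimes \dnt{\ms{L}}) + (\dnt{\Gamma} \otimes \dnt{A})$.

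With the lookup lemma available, the branch base case $r = \brb{\ell}{a}$ is immediate: since $[\sigma](\brb{\ell}{a}) = [a/x]r_\ell$, Corollary~\ref{corr:single-subst} gives its denotation as $\lmor{\dnt{\hasty{\Gamma}{\bot}{a}{A}}} ; \dnt{\haslb{\Gamma, \bhyp{x}{A}}{r_\ell}{\ms{K}}}$, while the right-hand side $\lmor{\dnt{\haslb{\Gamma}{\brb{\ell}{a}}{\ms{L}}}} ; \dnt{\sigma}$ unfolds, via $\lmor{f;g} = \lmor{f} ; \dnt{\Gamma} \otimes g$ and the lookup lemma, to the same morphism. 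The \ms{let} and \ms{case} cases are routine, because label substitution commutes with these constructors without touching their expression parts: I would apply the induction hypothesis to the subregions and reconcile the two sides using the $\lmor{\cdot}$ identities. The one subtlety is that passing under a binder weakens $\sigma$ from $\Gamma$ to $\Gamma, \bhyp{x}{A}$; the resulting weakening (Lemma~\ref{lem:wk}(f)) contributes precisely the projection $\pi_l$, which is absorbed by the identity $\lmor{\lmor{f};g} = \lmor{f} ; \lmor{g} ; \pi_l \otimes \dnt{\ms{L}}$.

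The real obstacle is the \ms{where}-case $r = \where{r_0}{(\wbranch{\ell_i}{x_i}{t_i})_i}$, where the Elgot structure enters. Applying the induction hypothesis to $r_0$ and to each $t_i$ (against the extension $\rupg{\sigma}$ that holds the fresh labels $\ell_i$ fixed) shows that $\entrymor{\Gamma}{[\sigma]r_0}{\ms{K}}$ and $\loopmor{\Gamma}{([\sigma]t_i)_i}{\ms{K}}$ arise from the unsubstituted $\entrymor{\Gamma}{r_0}{\ms{L}}$ and $\loopmor{\Gamma}{(t_i)_i}{\ms{L}}$ by rerouting only the $\dnt{\ms{L}}$-exit summand through $\dnt{\sigma}$, with the read-only context $\dnt{\Gamma}$ carried alongside and the iterated $\Sigma_i \dnt{A_i}$ re-entry branch left intact. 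The crux is then to commute this exit-rerouting past the fixpoint: since it acts as the identity on the $\Sigma_i \dnt{A_i}$ summand, I expect to invoke \emph{naturality} of the Conway/Elgot operator, in the context-carrying $\rfix{f}$ form developed in Appendix~\ref{apx:environment}, to pull $\dnt{\sigma}$ out of the loop and turn $\rfix{\loopmor{\Gamma}{([\sigma]t_i)_i}{\ms{K}}}$ into $\rfix{\loopmor{\Gamma}{(t_i)_i}{\ms{L}}}$ post-composed by $\dnt{\sigma}$ with $\dnt{\Gamma}$ threaded through. The fiddly heart of the argument is the bookkeeping of the context copies, so that the single outer $\lmor{\cdot}$ followed by $\dnt{\sigma}$ matches the per-branch threading produced by the induction hypothesis; I would discharge this using the $\rcase{f}$ and $\rfix{f}$ identities and the strength of the iteration operator rather than manipulating $(-)^\dagger$ by hand.

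Finally, part (b) follows by induction on $\sigma'$. The empty case is immediate from the clause defining $\dnt{\lbsubst{\Gamma}{\cdot}{\cdot}{\ms{K}}}$, and the cons case $\sigma' = (\sigma'', \ell(x) \mapsto r)$ uses the cons clause of the label-substitution semantics (distributor followed by a copairing) together with part (a) applied to the newly added branch $[\sigma]r$. The distributor $\beta$-laws and the diagonal $\dmor{\dnt{\Gamma}}$ then account for the two context copies consumed separately by $\sigma'$ and $\sigma$, yielding the stated factorization through $\dnt{\lbsubst{\Gamma}{\sigma'}{\ms{M}}{\ms{L}}}$ and $\dnt{\lbsubst{\Gamma}{\sigma}{\ms{L}}{\ms{K}}}$.
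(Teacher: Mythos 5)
Your proposal is correct and follows essentially the same route as the paper: the lookup lemma you isolate is exactly the paper's Label Substitution Injection lemma, the branch case is handled identically via the single-substitution corollary, the $\pi_l$-absorption identity is precisely how the paper deals with weakening under binders, and the \ms{where}-case is resolved the same way — decomposing $\rupg{\sigma}$ so that only the $\dnt{\ms{L}}$-exit summand is rerouted through $\dnt{\sigma}$, then invoking naturality of the iteration operator (via the $\rcase{\cdot}$/$\rfix{\cdot}$ machinery) to pull the substitution out of the fixpoint, with part (b) then following by a pointwise induction using part (a).
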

\begin{proof}
  See Appendix~\ref{proof:soundness-lsubst}
\end{proof}

\subsection{Equational Theory}

\label{ssec:completeness}

Using the metatheory in the previous section, our goal is now to prove the equational theory given
in Section~\ref{sec:equations} sound with respect to any valid \isotopessa{} model. Stated more
precisely, we have the following:
\begin{theorem}[name=Soundness (Equational Theory), restate=soundnesseqn]
  We have that
  \begin{enumerate}[label=(\alph*)]
    \item $\tmeq{\Gamma}{\epsilon}{a}{a'}{A} \implies 
      \dnt{\hasty{\Gamma}{\epsilon}{a}{A}} = \dnt{\hasty{\Gamma}{\epsilon}{a'}{A}}$
      \label{itm:eqn-sound-expr}
    \item $\lbeq{\Gamma}{r}{r'}{\ms{L}} \implies
      \dnt{\haslb{\Gamma}{r}{\ms{L}}} = \dnt{\haslb{\Gamma}{r'}{\ms{L}}}$
      \label{itm:eqn-sound-region}
    \item $\tmseq{\gamma}{\gamma'}{\Gamma}{\Delta} \implies
      \dnt{\issubst{\gamma}{\Gamma}{\Delta}} = \dnt{\issubst{\gamma'}{\Gamma}{\Delta}}$
      \label{itm:eqn-sound-vsubst}
    \item $\lbseq{\sigma}{\sigma'}{\Gamma}{\ms{L}}{\ms{K}} \implies
      \dnt{\lbsubst{\Gamma}{\sigma}{\ms{L}}{\ms{K}}} 
      = \dnt{\lbsubst{\Gamma}{\sigma'}{\ms{L}}{\ms{K}}}$ 
      \label{itm:eqn-sound-lsubst}
  \end{enumerate}
\end{theorem}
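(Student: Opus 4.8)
The plan is to prove all four statements simultaneously by rule induction on the derivations of the equivalence judgements $\teqv$ on expressions and regions, together with their pointwise extensions to (label-)substitutions, exploiting the fact that these relations are inductively generated by the congruence, type-directed, and rewriting rules of Section~\ref{sec:equations}. The congruence rules are immediate: \brle{refl}, \brle{symm}, and \brle{trans} hold because equality of morphisms is itself an equivalence relation, and each structural congruence rule (\brle{let$_1$}, \brle{let$_2$}, \brle{pair}, \brle{case}, \brle{cfg}, and the rest) follows because the semantics is compositional, so the induction hypotheses for the immediate subterms combine directly. Parts (c) and (d) reduce to (a) and (b): since $\dnt{\issubst{\gamma}{\Gamma}{\Delta}}$ and $\dnt{\lbsubst{\Gamma}{\sigma}{\ms{L}}{\ms{K}}}$ are assembled pointwise (Figure~\ref{fig:ssa-subst-sem}) from the denotations of the component expressions and regions, the rules \brle{sb-cons}, \brle{ls-cons}, and their skip variants merely propagate equalities already supplied by (a), (b), and smaller instances of (c), (d).

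The two type-directed rules require a small categorical observation each. For \brle{terminal}, two \emph{pure} terms of unit type denote morphisms $\dnt{\Gamma} \to_\bot I$, and since $I$ is terminal in $\mc{C}_\bot$ these must both equal $\tmor{\dnt{\Gamma}}$. For \brle{initial}, a variable $\thyp{x}{\mb{0}}{\bot}$ in $\Gamma$ yields a pure projection $\dnt{\Gamma} \to_\bot \mb{0}$; in a distributive category the initial object is \emph{strict}, so this morphism exhibits $\dnt{\Gamma}$ as initial, whence \emph{any} two morphisms out of $\dnt{\Gamma}$ agree, settling both the expression and region forms of \brle{initial} at once.

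Next I would dispatch the expression and non-loop region rewriting rules by unfolding the semantics and calculating in the Freyd/distributive structure. The $\beta$-rules \brle{let$_1$-$\beta$} and its region counterpart are exactly Corollary~\ref{corr:single-subst}; \brle{let$_1$-$\eta$} is the identity $\lmor{f};\pi_r = f$; \brle{let$_2$-pair}, \brle{let$_2$-$\eta$}, \brle{case-inl}, \brle{case-inr}, and \brle{case-$\eta$} follow from the Freyd comonoid laws together with the universal property of the coproduct and the inverse distributor; and the ``binding'' and ``pull-out'' rules (\brle{let$_1$-op}, \brle{let$_1$-let$_1$}, \brle{let$_1$-let$_2$}, \brle{let$_1$-case}, \brle{let$_1$-abort}, \brle{let$_2$-bind}, \brle{case-bind}) are routine diagram chases using the listed properties of $\lmor{\cdot}$, naturality of the (inverse) distributor, and the coherence of the associators. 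The corresponding region-level rules reuse these calculations verbatim, since the semantics of \ms{let}- and \ms{case}-statements is definitionally identical to that of the matching expressions.

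The heart of the proof, and the expected main obstacle, is the suite of \ms{where}-block rules, where the Elgot structure finally enters through $\entrymor{\Gamma}{r}{\ms{L}}$, $\loopmor{\Gamma}{\cdot}{\ms{L}}$, $\rcase{\cdot}$, and $\rfix{\cdot}$. I would match each rule to the corresponding Conway/uniformity axiom: \brle{cfg-$\beta_1$} and \brle{cfg-$\beta_2$} to a single unrolling of the fixpoint identity $f^\dagger = f;[\ms{id}, f^\dagger]$ (with naturality handling the external-label case); \brle{codiag} to the codiagonal axiom; \brle{dinat} to dinaturality, interpreting the label-substitutions via Soundness (Label Substitution); \brle{cfg-$\eta$} to a combination of naturality and dinaturality (again through Soundness (Label Substitution)) relating the $\rfix{\cdot}$-based \ms{where}-semantics to the pointwise $\ms{cfgs}$ substitution; and \brle{uni} to $\mc{C}_\bot$-uniformity. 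The genuinely delicate points are (i) reconciling the $n$-ary control-flow semantics, phrased with $\delta_\Sigma^{-1}$ and $\Sigma_i\dnt{A_i}$, against the \emph{binary} iteration operator $(-)^\dagger$ on which the axioms are stated, which requires bundling the loop bodies through the coproduct and tracking the read-only context using the properties of $\rcase{\cdot}$ and $\rfix{\cdot}$ collected in Appendix~\ref{apx:environment}; and (ii) discharging the purity side-condition of \brle{uni}, since the hypothesis that $e$ is pure is precisely what places the mediating morphism in $\mc{C}_\bot$ and thereby licenses $\mc{C}_\bot$-uniformity. These loop rules carry essentially all of the real content; the remaining cases are bookkeeping.
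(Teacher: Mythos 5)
Your proposal is correct and follows essentially the same route as the paper's own proof: rule induction with congruence cases discharged by compositionality, the $\beta$/$\eta$/binding rules by Freyd/distributive calculations (with \brle{let$_1$-$\beta$} via Corollary~\ref{corr:single-subst}), the \ms{where}-rules matched to the Elgot axioms exactly as you describe (fixpoint unrolling for \brle{cfg-$\beta_1$}, codiagonal for \brle{codiag}, dinaturality for \brle{dinat}, $\mc{C}_\bot$-uniformity with the purity side-condition for \brle{uni}, and \brle{cfg-$\eta$} resolved through Soundness of Label Substitution), and parts (c), (d) reduced pointwise to (a), (b). Your justification of \brle{initial} via strictness of the initial object in a distributive category is in fact more careful than the paper's terse appeal to the universal property, and your only deviations from the paper's calculations are cosmetic (e.g.\ \brle{cfg-$\beta_2$} needs only the distributor and coproduct laws, not naturality, and the paper derives \brle{cfg-$\eta$} from the two $\beta$-cases rather than from naturality and dinaturality directly).
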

\begin{proof}
  See Appendix~\ref{proof:soundness-eqn}
\end{proof}

Now that we've proved the \emph{soundness} of our equational theory, what remains is to prove that
it is \emph{complete}, i.e., that every equation which holds in all \isotopessa{} models can be
derived from it, or, stated more categorically, that our syntax quotiented by the equational theory
forms an initial \isotopessa{} model.
Our strategy for doing this is as follows:
\begin{enumerate}
  \item We begin by constructing a category of expressions $\ms{Th}^\otimes(\Gamma)$ and a category
  of regions $\ms{Th}(\Gamma, \ms{L})$ quotiented by our equational theory, and constructing a
  functor from the former to the latter.
  \item We then show that the category of expressions and the category of regions have the structure
  of an \isotopessa{} expression model and \isotopessa{} model, respectively, and hence that
  expressions may be interpreted in the former and both expressions and regions in the latter.

  To do so, we will first need some notation to talk about the behaviour of the equivalence classes
  of quotients. Suppose $S$ and $T$ are sets of terms. Then we will write
  $\hasty{\Gamma}{\epsilon}{S}{A}$ to mean that for every $a \in S$, we have
  $\hasty{\Gamma}{\epsilon}{a}{A}$, and similarly we will write $\tmeq{\Gamma}{\epsilon}{S}{T}{A}$
  when for all $a \in S$ and $b \in T$, we have $\tmeq{\Gamma}{\epsilon}{a}{b}{A}$. We generalize in
  the obvious fashion to regions as well as the case when only one side of an equivalence is a set
  of terms.

  Then, it will turn out that, for a distinguished variable $\invar$ and label $\outlb$,
  \begin{equation}
    \begin{gathered}
      \hasty{\Gamma, \invar : \pckd{\Delta}}{\epsilon}
        {\dnt{\hasty{\Delta}{\epsilon}{a}{A}}_{\ms{Th}^\otimes(\Gamma)}}{A}
        \\
      \haslb{\Gamma, \invar : \pckd{\Delta}}
        {\dnt{\hasty{\Delta}{\epsilon}{a}{A}}_{\ms{Th}(\Gamma, \ms{L})}}{\ms{L}, \outlb(A)}
        \\
      \haslb{\Gamma, \invar : \pckd{\Delta}}
        {\dnt{\haslb{\Delta}{r}{\ms{K}}}_{\ms{Th}(\Gamma, \ms{L})}}{\ms{L}, \outlb(\pckd{\ms{K}})}
    \end{gathered}
  \end{equation}
  where packing of contexts is defined as in Section~\ref{ssec:records-enums}.
  
  \item Finally, we refine this result to show that
  \begin{equation}
    \begin{gathered}
    \tmeq{\Gamma, \invar : [\Delta]}{\epsilon}
      {\dnt{\hasty{\Delta}{\epsilon}{a}{A}}_{\ms{Th}^\otimes(\Gamma)}}{\pckd{a}}{A}
    \\
    \lbeq{\Gamma, \invar : [\Delta]}
      {\dnt{\hasty{\Delta}{\epsilon}{a}{A}}_{\ms{Th}(\Gamma, \ms{L})}}{\ms{ret}\;\pckd{a}}
      {\ms{L}, \outlb(A)}
    \\
    \lbeq{\Gamma, \invar : [\Delta]}
      {\dnt{\haslb{\Delta}{r}{\ms{K}}}_{\ms{Th}(\Gamma, \ms{L})}}
      {\pckd{r}}{\ms{L}, \outlb(\pckd{\ms{K}})}
    \end{gathered}
  \end{equation}
  where $\ms{ret}\;a := \brb{\outlb}{a}$. Since the packing operator $\pckd{\cdot}$ on terms and
  regions from Section~\ref{ssec:records-enums} is injective for pure contexts $\ms{eff}(\Gamma) =
  \bot$, and hence in particular for $\Gamma = \cdot$, $\ms{L} = \cdot$, it follows that in this
  case the category of expressions and the category of regions are the initial distributive
  \isotopessa{} expression model and \isotopessa{} model respectively.
\end{enumerate}

\subsection{Expressions}

We'll begin by going over the entire proof of completeness for expressions, which is the simpler
case. In particular, we may define the category $\ms{Th}_\epsilon^\otimes(\Gamma)$ of expressions
with effect $\epsilon$ as follows:
\begin{itemize}
  \item Objects $|\ms{Th}^\otimes(\Gamma)|$ types $A, B, C$
  \item Morphisms $\ms{Th}^\otimes(\Gamma)_\epsilon(A, B) = \{e \mid \hasty{\Gamma,
    \bhyp{\invar}{A}}{\epsilon}{e}{B}\}$ quotiented by $\tmeq{\Gamma,
    \bhyp{\invar}{A}}{\epsilon}{e}{e'}{B}$
  \item Identity $(\hasty{\Gamma, \bhyp{\invar}{A}}{\bot}{\invar}{A}) \in
  \ms{Th}^\otimes(\Gamma)_\epsilon(A, A)$
  \item Composition $e;e' = (\ms{let}\;\invar = e; e')$, which satisfies
  $$
  \hasty{\Gamma, \bhyp{\invar}{A}}{\epsilon}{e}{B}, \quad
  \hasty{\Gamma, \bhyp{\invar}{B}}{\epsilon}{e'}{C} \qquad \implies \qquad
  \hasty{\Gamma, \bhyp{\invar}{A}}{\epsilon}{e;e'}{C}
  $$
  We may verify that this satisfies the axioms of a category w.r.t. our equational theory
\end{itemize}
In general, the category of expressions $\ms{Th}^\otimes(\Gamma)$ is simply then given by
$\ms{Th}_\top^\otimes(\Gamma)$, which can be viewed as the union of all
$\ms{Th}_\epsilon^\otimes(\Gamma)$.

We now need to equip $\ms{Th}_\epsilon^\otimes(\Gamma)$ with the structure of a premonoidal
category. Obviously, we wish to define the tensor product of types $A$ and $B$ to be simply $A
\otimes B$; we can then begin by defining projections
\begin{equation}
  \hasty{\Gamma, \invar : A \otimes B}{\epsilon}{\pi_l := \ms{let}\;(x, y) = \invar; x}{A} \qquad 
  \hasty{\Gamma, \invar : A \otimes B}{\epsilon}{\ms{let}\;(x, y) = \invar; y}{B}
\end{equation}
By simply using the pair constructor as a cartesian product $\langle a, b \rangle = (a, b)$, this
can be shown to endow $\ms{Th}_\bot^\otimes(\Gamma)$ with the structure of a cartesian category,
allowing us to define the associators, symmetries, and unitors in the natural manner. If we then
define tensor functors
\begin{equation}
  - \otimes X : e \mapsto \ms{let}\;(\invar, x) = \invar; (e ; (\invar, x)) \qquad
  X \otimes - : e \mapsto \ms{let}\;(x, \invar) = \invar; (e ; (x, \invar))
\end{equation}
we find that $\ms{Th}_\epsilon^\otimes(\Gamma)$ and hence in particular $\ms{Th}^\otimes(\Gamma)$ is
endowed with the structure of a Freyd category with pure subcategory $\ms{Th}_\bot^\otimes(\Gamma)$.

Similarly, we wish to show that $A + B$ is the coproduct of $A$ and $B$ in
$\ms{Th}_\epsilon^\otimes(\Gamma)$. Since we already have obvious injection morphisms
\begin{equation}
  \hasty{\Gamma, \invar : A}{\epsilon}{\iota_l := \iota_l\;\invar}{A + B} \qquad
  \hasty{\Gamma, \invar : B}{\epsilon}{\iota_r := \iota_r\;\invar}{A + B}
\end{equation}
we can define the coproduct of morphisms $\hasty{\Gamma, \invar : A}{\epsilon}{a}{C}$ and 
$\hasty{\Gamma, \invar : B}{\epsilon}{b}{C}$ to be simply given by
\begin{equation}
  \hasty{\Gamma, \invar : A + B}{\epsilon}{[a, b] := \caseexpr{\invar}{\invar}{a}{\invar}{b}}{C}
\end{equation}
It is straightforward to verify that this indeed induces a coproduct on
$\ms{Th}_\epsilon^\otimes(\Gamma)$ and hence on $\ms{Th}^\otimes(\Gamma)$ All that remains is to
show that $\ms{Th}_\epsilon^\otimes(\Gamma)$ is in fact a \emph{distributive} Freyd category. To do
so, we may define an inverse distributor morphism
\begin{equation}
  \hasty{\Gamma, \invar : A \otimes (B + C)}
    {\epsilon}
    {\delta^{-1} := \ms{let}\;(x, y) = \invar; \caseexpr{y}{z}{\iota_l(x, z)}{z}{\iota_r(x, z)}}
    {A \otimes B + A \otimes C}
\end{equation}
which can easily be shown to be an inverse to the obvious distributor morphism. We may now note that
\begin{equation}
  \dnt{\cdot}_{\ms{Th}^\otimes(\Gamma)} = \mb{1}, \quad
  \dnt{\Delta, \thyp{x}{A}{\epsilon}}_{\ms{Th}^\otimes(\Gamma)} 
    = \dnt{\Delta}_{\ms{Th}^\otimes(\Gamma)} \otimes A
  \qquad \implies \qquad
  \dnt{\Delta}_{\ms{Th}^\otimes(\Gamma)} = \pckd{\Gamma}
\end{equation}
Therefore, it follows that, as expected, that
$
  \hasty{\Gamma, \invar : \pckd{\Delta}}{\epsilon}
          {\dnt{\hasty{\Delta}{\epsilon}{a}{A}}_{\ms{Th}^\otimes(\Gamma)}}{A}
$
and it remains to show that we in fact have
$$
  \tmeq{\Gamma, \invar : \pckd{\Delta}}{\epsilon}
        {\dnt{\hasty{\Delta}{\epsilon}{a}{A}}_{\ms{Th}^\otimes(\Gamma)}}{[a]}{A}
$$
which can be done by a relatively straightforward induction, implying, since $[\cdot]$ is injective
w.r.t. our equational theory for pure contexts, the following theorem:
\begin{theorem}[name=Completeness (Expressions), restate=completenessexpr]
  We have that, for all pure $\ms{eff}(\Gamma) = \bot$,
  $$
    \tmeq{\Gamma}{\epsilon}{e}{e'}{A} 
    \iff \dnt{\hasty{\Gamma}{\epsilon}{e}{A}}_{\ms{Th}^\otimes(\cdot)} 
         = \dnt{\hasty{\Gamma}{\epsilon}{e'}{A}}_{\ms{Th}^\otimes(\cdot)}
  $$
  In particular, this implies that $\ms{Th}(\cdot)$ is the initial \isotopessa{} expression model
  \label{thm:complete-expr}
\end{theorem}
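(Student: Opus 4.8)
The plan is to prove the two directions separately, with the bulk of the work concentrated in a single refinement lemma that is already announced in the excerpt. The forward direction is immediate: since $\ms{Th}^\otimes(\cdot)$ has been equipped with the structure of a distributive Freyd category, it is a valid \isotopessa{} expression model, so the earlier Soundness (Equational Theory) theorem applies and gives $\tmeq{\Gamma}{\epsilon}{e}{e'}{A} \implies \dnt{\hasty{\Gamma}{\epsilon}{e}{A}}_{\ms{Th}^\otimes(\cdot)} = \dnt{\hasty{\Gamma}{\epsilon}{e'}{A}}_{\ms{Th}^\otimes(\cdot)}$.

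For the reverse direction, the crux is the refinement claim already flagged in the text, namely that for every $\hasty{\Delta}{\epsilon}{a}{A}$ one has
\[
  \tmeq{\Gamma, \invar : \pckd{\Delta}}{\epsilon}
    {\dnt{\hasty{\Delta}{\epsilon}{a}{A}}_{\ms{Th}^\otimes(\Gamma)}}{\pckd{a}}{A},
\]
where $\pckd{a} = [\ms{unpack}_\invar^\otimes(\Delta)]a$ replaces every variable by its projection out of the packed context $\invar$. I would prove this by induction on the derivation of $\hasty{\Delta}{\epsilon}{a}{A}$. In each case the corresponding clause of the expression semantics (Figure~\ref{fig:ssa-expr-sem}) is unfolded \emph{inside} the syntactic category, where composition $e;e'$ is the unary \ms{let} $\letexpr{\invar}{e}{e'}$, the diagonal $\dmor{}$ and the projections $\pi_l,\pi_r$ are their concrete destructuring expressions, and $\ltimes$, $\alpha$, and $\delta^{-1}$ are the plumbing morphisms defined earlier; the resulting term is then rewritten, using the induction hypotheses together with the $\beta$/$\eta$ and bind rules of the equational theory, into the substituted form $\pckd{a}$. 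The base case is the variable, where $\dnt{x}$ is by definition the projection $\pi_{\Delta,x}$, which is exactly $\ms{unpack}_\invar^\otimes(\Delta)(x)$; the operation case reduces via the derived bind rule $\letexpr{\invar}{c}{f\;\invar} \teqv f\;c$, and the remaining constructors are handled analogously.

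Granting the refinement, the theorem follows quickly. Specialising to $\Gamma = \cdot$ and applying the refinement to both $e$ and $e'$, the hypothesis $\dnt{\hasty{\Gamma}{\epsilon}{e}{A}}_{\ms{Th}^\otimes(\cdot)} = \dnt{\hasty{\Gamma}{\epsilon}{e'}{A}}_{\ms{Th}^\otimes(\cdot)}$ — which is precisely $\dnt{e} \teqv \dnt{e'}$ in the context $\invar : \pckd{\Gamma}$, as morphism equality is equality of equivalence classes — yields $\pckd{e} \teqv \pckd{e'}$ by transitivity. Since the packing operation $\pckd{\cdot}$ is injective with respect to the equational theory whenever $\ms{eff}(\Gamma) = \bot$, and $\Gamma$ is pure by assumption, we obtain $\tmeq{\Gamma}{\epsilon}{e}{e'}{A}$. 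Initiality of $\ms{Th}^\otimes(\cdot)$ among \isotopessa{} expression models then follows: the interpretation functor into any model is the structure-preserving functor out of $\ms{Th}^\otimes(\cdot)$, it is well defined by soundness and determined on the generating base types and primitive instructions, while faithfulness — the content of this theorem — rules out spurious identifications.

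The main obstacle I anticipate is the inductive refinement step for the constructors that route data through the premonoidal tensor, i.e.\ pairs, binary \ms{let}, and \ms{case}, since there the categorical plumbing ($\dmor{}$, the left-to-right tensoring $\ltimes$, the inverse distributor $\delta^{-1}$, and the associators) must be expanded into explicit \ms{let}/\ms{case} terms and then shown, via the let-commuting and distributivity lemmas, to collapse to the single substitution $\pckd{a}$. Keeping the effect annotations consistent throughout — each projection is pure yet may be weakened to effect $\epsilon$ — and ensuring every rewrite remains well-typed on both sides is where the care is needed.
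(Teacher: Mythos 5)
Your proposal is correct and follows essentially the same route as the paper: equip $\ms{Th}^\otimes(\Gamma)$ with distributive Freyd structure so that soundness gives the forward direction, prove by induction on typing derivations that $\dnt{\hasty{\Delta}{\epsilon}{a}{A}}_{\ms{Th}^\otimes(\Gamma)}$ is equivalent to the packed term $\pckd{a}$, and then use injectivity of packing/unpacking for pure contexts to conclude the reverse direction and initiality. The paper's appendix proof discharges exactly these three steps (model structure, the packing refinement, and mutual inversion of pack/unpack) by pointing to the corresponding Lean lemmas, so your plan matches its structure precisely.
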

\begin{proof}
  See Appendix~\ref{proof:complete-expr}
\end{proof}

\subsection{Regions}

We define the category $\ms{Th}(\Gamma, \ms{L})$ of regions as follows:
\begin{itemize}
  \item Objects $|\ms{Th}(\Gamma, \ms{L})|$ types $A, B, C$
  \item Morphisms $\ms{Th}(\Gamma, \ms{L})(A, B) = 
    \{r \mid \haslb{\Gamma, \bhyp{\invar}{A}}{r}{\ms{L}, \outlb(B)}\}$ 
    quotiented by $\lbeq{\Gamma, \bhyp{\invar}{A}}{r}{r'}{\ms{L}, \outlb(B)}$
  \item Identity $\haslb{\Gamma, \bhyp{\invar}{A}}{\ms{br}\;\ms{ret}\;\invar}{\ms{L}, \outlb(A)}$
    where $\ms{ret}\;a := \brb{\outlb}{a}$
  \item Composition $r;r' = [\lupg{(\outlb(\invar) \mapsto r')}]r$
\end{itemize}
In particular, we may view $\ms{ret}$ as an identity-on-objects functor
$\ms{Th}^\otimes_\bot(\Gamma) \to \ms{Th}(\Gamma, \ms{L})$ with action on morphisms given by given
by 
\begin{equation}
  \hasty{\Gamma, \invar : A}{\bot}{e}{B} 
  \qquad \mapsto \qquad 
  \haslb{\Gamma, \invar(A)}{\ms{ret}\;e}{\ms{L}, \outlb(B)}
\end{equation}
We will use this to equip $\ms{Th}(\Gamma, \ms{L})$ with the structure of a Freyd category.
In particular, taking our subcategory of pure morphisms to be the image of $\ms{ret}$ in 
$\ms{Th}(\Gamma, \ms{L})$, we may define the obvious tensor functors
\begin{equation}
  - \otimes X : r \mapsto \ms{let}\;(\invar, x) = \invar; (r ; \ms{ret}\;(\invar, x)) \qquad
  X \otimes - : r \mapsto \ms{let}\;(x, \invar) = \invar; (r ; \ms{ret}\;(x, \invar))
\end{equation}
Our premonoidal structure is then completely described by requiring that $\ms{ret}$ preserves all
relevant structure, i.e., that we have
\begin{equation}
  \alpha = \ms{ret}\;\alpha \qquad
  \lambda = \ms{ret}\;\lambda \qquad
  \rho = \ms{ret}\;\rho \qquad
  \sigma = \ms{ret}\;\sigma \qquad
  \Delta = \ms{ret}\;\Delta
\end{equation}
Just like for expressions, we can write the coproduct of 
$\haslb{\Gamma, \bhyp{\invar}{A}}{s}{\ms{L}, \outlb(C)}$ and
$\haslb{\Gamma, \bhyp{\invar}{B}}{t}{\ms{L}, \outlb(C)}$
in $\ms{Th}(\Gamma, \ms{L})$ as
\begin{equation}
  \haslb{\Gamma, \invar : A + B}{[s, t] 
    := \casestmt{\invar}{\invar}{s}{\invar}{t}}{\ms{L}, \outlb(C)}
\end{equation}
with the obvious injections $\iota_l := \ms{ret}\;\iota_l$ and $\iota_r = \ms{ret}\;\iota_r$. It
turns out that in this case $\ms{ret}$ preserves coproducts as well, and we can therefore easily
conclude that our category is distributive by taking inverse distributor $\delta^{-1} =
\ms{ret}\;\delta^{-1}$.

All that remains now is to take $\ms{Th}(\Gamma, \ms{L})$ from an \isotopessa{} expression model
to an \isotopessa{} model by giving it an Elgot structure. We do so by defining the fixpoint
of a morphism $\haslb{\Gamma, \invar : A}{r}{\ms{L}, \outlb(B + A)}$ as follows:
\begin{equation}
  \haslb{\Gamma, \invar : A}
    {r^\dagger := \where{\ms{br}\;\ms{go}\;\invar}{\wbranch{\ms{go}}{\invar : A}
        {r ; \casestmt{\invar}{x}{\ms{ret}\;x}{y}{\ms{br}\;\ms{go}\;y}}}}
    {\ms{L}, \outlb(B)}
\end{equation}
where $\ms{go}$ is an (arbitrary) fresh label. We can verify this indeed satisfies the axioms of
an Elgot structure through a somewhat tedious calculation. We may now note that
\begin{equation}
  \begin{aligned}
    \dnt{\cdot}_{\ms{Th}(\Gamma, \ms{L})} 
      &= \mb{1}, &
    \dnt{\Delta, \thyp{x}{A}{\epsilon}}_{\ms{Th}(\Gamma, \ms{L})} 
      &= \dnt{\Delta}_{\ms{Th}(\Gamma, \ms{L})} \otimes A
    &\qquad \implies \qquad
    \dnt{\Delta}_{\ms{Th}(\Gamma, \ms{L})} &= \pckd{\Gamma} \\
    \dnt{\cdot^+}_{\ms{Th}(\Gamma, \ms{L})} &= \mb{0}, &
    \dnt{\ms{K}, \ell(A)}_{\ms{Th}(\Gamma, \ms{L})} 
      &= \dnt{\ms{K}}_{\ms{Th}(\Gamma, \ms{L})} + A
    &\qquad \implies \qquad
    \dnt{\ms{K}}_{\ms{Th}(\Gamma, \ms{L})} &= \pckd{\ms{K}}
  \end{aligned}
\end{equation}
and hence that 
\begin{equation*}
  \haslb{\Gamma, \invar : \pckd{\Delta}}
        {\dnt{\hasty{\Delta}{\epsilon}{a}{A}}_{\ms{Th}(\Gamma, \ms{L})}}{\ms{L}, \outlb(A)}
      \qquad
  \haslb{\Gamma, \invar : \pckd{\Delta}}
    {\dnt{\haslb{\Delta}{r}{\ms{K}}}_{\ms{Th}(\Gamma, \ms{L})}}{\ms{L}, \outlb(\pckd{\ms{K}})}
\end{equation*}
as expected. It is relatively easy to derive that
\begin{equation*}
  \lbeq{\Gamma, \invar : \pckd{\Delta}}
        {\dnt{\hasty{\Delta}{\epsilon}{a}{A}}_{\ms{Th}(\Gamma, \ms{L})}}
        {\ms{ret}\;\pckd{a}}
        {\ms{L}, \outlb(A)}
\end{equation*}
by a relatively straightforward induction. A much more tedious induction is required to prove that
\begin{equation*}
  \lbeq{\Gamma, \invar : \pckd{\Delta}}
        {\dnt{\haslb{\Delta}{r}{\ms{K}}}_{\ms{Th}(\Gamma, \ms{L})}}
        {[r]}
        {\ms{L}, \outlb(\pckd{\ms{K}})}
\end{equation*}
since the case for \ms{where}-statements is particularly complex. With a little bit more book-keeping (which can be found in the mechanization), we can state the completeness theorem as follows:
\begin{theorem}[name=Completeness (Regions), restate=completenessregions]
  We have that, for all pure $\ms{eff}(\Gamma) = \bot$,
  $$
    \lbeq{\Gamma}{r}{r'}{\ms{L}} 
    \iff \dnt{\haslb{\Gamma}{r}{\ms{L}}}_{\ms{Th}(\cdot, \cdot)} 
        = \dnt{\haslb{\Gamma}{r'}{\ms{L}}}_{\ms{Th}(\cdot, \cdot)} 
  $$
  In particular, this implies that $\ms{Th}(\cdot, \cdot)$ is the initial \isotopessa{} model.
  \label{thm:complete-reg}
\end{theorem}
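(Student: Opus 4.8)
The plan is to prove the two directions of the biconditional and then read off initiality. Throughout we use the fact, established in the preceding construction, that $\ms{Th}(\cdot,\cdot)$ --- with the tensor, coproduct, inverse distributor, and Elgot fixpoint defined above --- is a genuine distributive \isotopessa{} model; hence the generic interpretation $\dnt{\cdot}_{\ms{Th}(\cdot,\cdot)}$ is defined and all the metatheorems of Section~\ref{sec:densem} apply to it. The forward ($\Rightarrow$) direction is then immediate: by Soundness (Equational Theory), $\lbeq{\Gamma}{r}{r'}{\ms{L}}$ forces $\dnt{\haslb{\Gamma}{r}{\ms{L}}} = \dnt{\haslb{\Gamma}{r'}{\ms{L}}}$ in \emph{every} model, and in particular in $\ms{Th}(\cdot,\cdot)$.

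For the backward ($\Leftarrow$) direction the essential ingredient is the refinement lemma
$$
\lbeq{\Gamma, \invar : \pckd{\Delta}}
  {\dnt{\haslb{\Delta}{r}{\ms{K}}}_{\ms{Th}(\Gamma, \ms{L})}}
  {\pckd{r}}{\ms{L}, \outlb(\pckd{\ms{K}})}
$$
stated above. Taking the outer parameters of $\ms{Th}(\Gamma,\ms{L})$ to be empty and the inner $\Delta, \ms{K}$ to be the theorem's $\Gamma, \ms{L}$, and abbreviating $\dnt{r} := \dnt{\haslb{\Gamma}{r}{\ms{L}}}_{\ms{Th}(\cdot,\cdot)}$, this yields $\dnt{r} \teqv \pckd{r}$ and likewise $\dnt{r'} \teqv \pckd{r'}$. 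Since the morphisms of $\ms{Th}(\cdot,\cdot)$ are exactly $\teqv$-equivalence classes of regions, the hypothesis $\dnt{r} = \dnt{r'}$ says precisely that $\dnt{r} \teqv \dnt{r'}$; chaining, $\pckd{r} \teqv \dnt{r} \teqv \dnt{r'} \teqv \pckd{r'}$. Finally, because packing of regions is injective with respect to the equational theory for pure contexts (Section~\ref{ssec:records-enums}), and here $\ms{eff}(\Gamma) = \bot$, we may cancel the packing to obtain $\lbeq{\Gamma}{r}{r'}{\ms{L}}$. This is exactly where the purity hypothesis is used: for impure $\Gamma$ packing need not reflect $\teqv$, and the implication can fail.

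It therefore remains to prove the refinement lemma, which I would do by induction on the derivation of $\haslb{\Delta}{r}{\ms{K}}$, mirroring the expression case (Theorem~\ref{thm:complete-expr}) and using its easy region analogue $\lbeq{\Gamma,\invar:\pckd{\Delta}}{\dnt{\hasty{\Delta}{\epsilon}{a}{A}}_{\ms{Th}(\Gamma,\ms{L})}}{\ms{ret}\;\pckd{a}}{\ms{L}, \outlb(A)}$. The \ms{br}, \ms{let}$_1$, \ms{let}$_2$, and \ms{case} cases reduce, through the matching $\beta$/bind rules and the induction hypotheses, to the interpretation clauses in Figure~\ref{fig:ssa-reg-sem}, just as for expressions. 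The main obstacle --- and the genuinely hard part of the whole theorem --- is the \ms{where}-case: there the interpretation unfolds through the entry morphism, the loop morphism, and $\rfix{\cdot}$, so the \emph{single-label} Elgot fixpoint used in $\ms{Th}(\cdot,\cdot)$ must be shown equal, up to $\teqv$, to the actual $n$-ary \ms{where}-block of $r$. I expect to discharge this by collapsing the $n$ labels onto a single packed label via the enum machinery of Section~\ref{ssec:records-enums}, rewriting $\rfix{\cdot}$ into a \ms{where}-loop with \brle{cfg-$\eta$} and \brle{cfg-$\beta_1$}, and then reconciling the packed control-flow graph with the original using \brle{dinat}, uniformity, \brle{codiag}, and the \ms{where}-fusion rules of Equations~\ref{eqn:where-fusion-1} and~\ref{eqn:where-fusion-2}.

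Initiality is then a formal corollary. For any \isotopessa{} model $\mc{M}$, Soundness (Equational Theory) makes $\dnt{\cdot}_{\mc{M}}$ well-defined on the quotient, and by construction it preserves the Freyd, distributive, and Elgot structure, giving a structure-preserving model morphism $\ms{Th}(\cdot,\cdot) \to \mc{M}$. Such a morphism is forced on objects (types are generated from the signature) and on the generating morphisms (projections, instructions, injections, and the \ms{where}-fixpoint), and every region is, up to $\teqv$, built from these; hence the morphism is unique, and $\ms{Th}(\cdot,\cdot)$ is initial.
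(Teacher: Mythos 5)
Your proposal is correct and follows essentially the same route as the paper: soundness gives the forward direction, and the backward direction is exactly the paper's strategy of instantiating the lemma $\lbeq{\Gamma, \invar : \pckd{\Delta}}{\dnt{\haslb{\Delta}{r}{\ms{K}}}_{\ms{Th}(\cdot, \cdot)}}{\pckd{r}}{\outlb(\pckd{\ms{K}})}$ (proved by induction, with the \ms{where}-case as the hard step) and then cancelling $\pckd{\cdot}$ using its injectivity for pure contexts. The paper discharges the induction and the remaining bookkeeping by reference to its Lean mechanization rather than on paper, but the decomposition, the key lemma, and the use of the purity hypothesis are the same as yours.
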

\begin{proof}
  See Appendix~\ref{proof:complete-reg}
\end{proof}

\section{Concrete Models}

\label{sec:concrete}

\subsection{Monads and Monad Transformers}

As previously stated, every strong monad over a CCC, by virtue of providing a
model of higher-order effectful programming, \emph{also} provides a model of
first-order effectful programming, i.e., (acyclic) SSA. In particular, the Kleisli
category $\ms{C}_{\ms{T}}$ of every such strong monad $\ms{T}$ is a premonoidal category and
induces a Freyd category with pure morphisms given by
\begin{equation}
  (\ms{C}_{\ms{T}})_{\bot}(A, B) = \{f;\eta_B : A \to \ms{T}B \mid f : A \to B\} 
  \subseteq \ms{C}(A, \ms{T}B) = \ms{C}_{\ms{T}}(A, B)
\end{equation}
It follows that every strong Elgot monad over a CCC with all coproducts is an
\isotopessa{} model (since, in particular, every CCC with coproducts is
distributive). This allows us to very quickly amass a large collection of
\isotopessa{} models from the literature on strong Elgot monads.

Probably the simplest example of an \isotopessa{} model is given by the
\emph{option monad} $\ms{Option}\;A = A + \mb{1}$, with fixpoint operation
$$
  f^\dagger = \ms{some}\;b \quad \text{if} \quad \exists n, f^n\;a = \ms{some}\;(\iota_l\;b)
  \qquad
  f^\dagger = \ms{none} \quad \text{otherwise}
$$
where
$$
  f^0\;a = \ms{some}\;(\iota_r\;a) \qquad
  f^{n + 1}\;a = \ms{bind}\;(f^{n}\;a)\;[(\ms{pure};\iota_l), f] 
$$
This model allows us to interpret potentially divergent but otherwise purely functional programs.
Another very important \isotopessa{} model is generated by the \emph{powerset monad} $\mc{P}$ over
\ms{Set}, which has fixpoint operation
$$
  f^\dagger\;a = \bigcup_n\{b \mid \iota_l\;b \in f^n\;a\}
$$
We can use this monad to interpret \emph{nondeterministic} potentially divergent functional
programs. We can further expand our repertoire of \isotopessa{} models by considering \emph{monad
transformers}, many of which preserve Elgot-ness. For example, if $\ms{T}$ is a strong Elgot over
\ms{Set},
\begin{itemize}
  \item For all types $R$, the \emph{reader transformer}
  $\ms{ReaderT}\;R\;\ms{T}\;A = R \to \ms{T}\;A$ is strong Elgot 
  with monad operations
  \begin{equation}
    \eta\;a = \lambda - . \eta_{\ms{T}}\;a \qquad
    \ms{bind}\;a\;f = \lambda r. \ms{bind}_{\ms{T}}\;(a\;r)\;(\lambda a. f\;a\;r)
  \end{equation}
  and fixpoint
  \begin{equation}
    f^\dagger = \lambda r.(\lambda a. f\;a\;r)^{\dagger_{\ms{T}}} 
  \end{equation}
  \item For all monoids $W$, the \emph{writer transformer}
  $\ms{WriterT}\;W\;\ms{T}\;A = \ms{T}\;(A \times W)$ is strong Elgot 
  with monad operations
  \begin{equation}
    \eta\;a = \eta_{\ms{T}}\;(a, 1) \qquad
    \ms{bind}\;a\;f = \ms{bind}_{\ms{T}}\;a\;(\lambda (a, w). 
      \ms{bind}_{\ms{T}}\;(f\;a)\;(\lambda (b, w') . (b, w \cdot w')))
  \end{equation}
  and fixpoint
  \begin{equation}
    f^\dagger\;a = (
        \lambda (a, w). \ms{bind}_{\ms{T}}\;(f\;a)\;
          (\lambda (b, w') . \eta_{\ms{T}}([(\lambda c.\,  (c, w \cdot w')), 
                                          (\lambda c.\, (c, w \cdot w'))]\;b))
      )^{\dagger_{\ms{T}}}
      (a, 1)
  \end{equation}
  \item For all types $S$, the \emph{state transformer}
  $\ms{StateT}\;S\;\ms{T}\;A = S \to \ms{T}(A \times S)$ is strong Elgot 
  with monad operations
  \begin{equation}
    \eta\;a = \lambda s. \eta_{\ms{T}}\;(a, s) \qquad
    \ms{bind}\;a\;f = \lambda s. \ms{bind}_{\ms{T}}\;(a\;s)\;(\lambda (a, s'). f\;a\;s')
  \end{equation}
  and fixpoint
  \begin{equation}
    f^\dagger = \lambda s.(\lambda (a, s'). 
      \ms{bind}_{\ms{T}}\;(f\;a\;s')\;(\lambda (b, s') . 
        \eta_{\ms{T}}([(\lambda c.\, (c, s')), (\lambda c. (c, s'))]\;b)))^{\dagger_{\ms{T}}}
  \end{equation}
\end{itemize}

One other important source of \isotopessa{} models is via \emph{Elgot submonads}
of monads on \ms{Set}, which we define as follows:
\begin{definition}[Elgot submonad]
  A monad $(\ms{S}, \eta', \mu')$ is a submonad of $(\ms{T}, \eta, \mu)$ if
  $\eta = \eta'$, $\forall A, \ms{S}\;A \subseteq \ms{T}\;A$, and, on the set
  $\ms{S}\;(\ms{S}\;A) \subseteq \ms{T}\;(\ms{T}\;A)$, $\mu = \mu'$. We say
  $\ms{S}$ is \emph{strong Elgot} if $\ms{T}$ is strong Elgot and, for all $f
  \in A \to \ms{S}\;(A + B)$, $f^\dagger \in A \to \ms{S}\;B$ (i.e. the fixpoint
  operation sends the Kleisli category of $\ms{S}$ to itself).
\end{definition}
One useful property this definition has is that the intersection of arbitrarily many (Elgot)
submonads forms an (Elgot) submonad, i.e., the (Elgot) submonads form a complete lattice.

\subsection{Trace Models}

A particularly important class of \isotopessa{} models are what we will call
\emph{trace models}, which can be viewed as programs which either diverge or
produce a result, while also producing (potentially nondeterministic,
potentially infinite) traces of events $\epsilon \in \mc{E}$.

\begin{definition}[(Monoidal) Stream Action]
  A \textit{stream action} $(M, I, \cdot, \Sigma)$ is a set $M$ of \emph{finite
  effects} and set $I$ of \emph{divergent effects} equipped with a binary
  function (``action'') $\cdot : M \times I \to I$ and a function $\Sigma : M^\omega \to I$
  mapping infinite streams of $M$ to an element of $I$ satisfying $\Sigma \sigma
  = \sigma_0 \cdot \Sigma_i \sigma_{i + 1}$, where $\Sigma_i \sigma_i = \Sigma
  (\lambda i. \sigma_i)$. If $M$ is a monoid and $(M, I, \cdot)$ a monoid
  action, we say we  have a \emph{monoidal} stream action
\end{definition}
The most important example of a monoidal stream action is the \emph{free}
monoidal stream action over a set $E$, called the set of \emph{events}, where:
\begin{itemize}
  \item $M = E^*$ is the free monoid over a set of ``events" $\epsilon \in E$, i.e., the
  monoid of finite lists of events
  \item $I = E^{\leq \omega}$ is the set of \textit{potentially} infinite
  streams of events $\epsilon \in E$
  \item $m \cdot i$ is defined in the obvious manner, i.e. by prepending the (finite) list $m$ to
  the (potentially infinite) stream $i$
  \item $\Sigma\;\sigma$ is defined as the supremum $\bigsqcup_it_i$, where $t
  \leq t'$ iff $t$ is a prefix of $t'$, $t_0 = \bot$ (the empty stream), and
  $t_{i + 1} = t_i \cdot \sigma_i$
\end{itemize}
Other important examples include:
\begin{itemize}
  \item The free stream action over a set of events $E$, where $M = E$, $I =
  E^\omega$ (the set of infinite streams of events $\epsilon$), $m \cdot i$ is
  merely prepending $m$ to the stream $i$, and $\Sigma \sigma = \sigma$
  \item The terminal stream action over an arbitrary set $M$, where $I
  = \mb{1}$. This is always a monoidal stream action if $M$ is a monoid
\end{itemize}

\begin{definition}[Trace Monad Transformer]
  Given a monoidal stream action $(M, I, \cdot, \Sigma)$, we can define the \emph{trace
  monad transformer} over $\ms{Set}$ as follows:
  $
  \ms{TraceT}\;\Sigma\;\ms{T}\;A = \ms{T}\;(A \times M + I)
  $
  with monad operations
  $$
    \eta = (\lambda a. \iota_l (a, 0)) ; \eta_{\ms{T}} \qquad
    \ms{bind}\;a\;f 
    = \ms{bind}_{\ms{T}}\;a\;[\lambda (a, m). m \cdot f a, \lambda i. \eta_{\ms{T}}(\iota_r\;i)]
  $$
  This yields a family of useful Elgot monads, including
  \begin{itemize}
    \item $\ms{Traces}?\;\Sigma = \ms{TraceT}\;\Sigma\;\mc{P}$, with the fixpoint of 
    $f : A \to B + A$ given by
    \begin{align}
      f^\dagger\;a &= f^\dagger_\infty\;a \cup \bigcup_n f^\dagger_n\;a 
        \qquad \qquad \text{where} \\
      f^\dagger_n\;a &=
          \{\iota_l\;(b, m) \mid \iota_l(\iota_l\;b, m) \in [\ms{id}_B, f]^n(\iota_r\;a)\}
          \cup \{\iota_r\;i \mid \iota_r\;i \in [\ms{id}_B, f]^n(\iota_r\;a)\} \\
      f^\infty\;a &= \{\iota_r \Sigma \sigma \mid \exists \alpha : A^\omega, 
        \alpha_0 = a \land \forall i, \iota_l\;(\iota_r\;\alpha_{i + 1}, \sigma_i) \in f\;\alpha_i \} 
    \end{align}
    We say a morphism $f : A \to \ms{Traces}?\;\Sigma\;B$ is \emph{total} if all $f\;a$ are
    nonempty, and \emph{deterministic} if all $f\;a$ are subsingletons.
    \item $\ms{Traces}\;\Sigma = \ms{TraceT}\;\Sigma\;\mc{P}^+$, which can be viewed as an Elgot
    submonad of $\ms{Traces}?\;\Sigma$ since the fixpoint of a total function is total
    \item $\ms{Trace}?\;\Sigma = \ms{TraceT}\;\Sigma\;\ms{Option}$, which can be viewed as an Elgot
    submonad of $\ms{Traces}?\;\Sigma$ since the fixpoint of a deterministic function is
    deterministic
    \item $\ms{Trace}\;\Sigma = \ms{TraceT}\;\Sigma\;\ms{Id}$, which can be viewed as an Elgot
    submonad of $\ms{Traces}?\;\Sigma$, and in particular the intersection 
    $\ms{Trace}?\;\Sigma \cap \ms{Traces}\;\Sigma$.
  \end{itemize}
\end{definition}

As a simple, concrete example of this type of model this construction lets us define, consider the
\emph{nondeterministic printing monad}, which is simply defined as \(\ms{Print}\;A \equiv
\ms{Traces}\;\Sigma\), where \(\Sigma: (\ms{byte}^*)^\omega \to \ms{byte}^{\leq \omega}\) denotes
concatenation of bytestrings into a (potentially infinite) bytestream. This gives us us an
\isotopessa{} model supporting effectful instructions \(\ms{print}: \ms{byte}^* \to \mb{1}\) and
\(\ms{nondet}: \mb{1} \to A\) (for \(A\) nonempty), with semantics \(\dnt{\ms{print}} = \lambda
b.\{\iota_0 ((), b)\}\), \(\dnt{\ms{nondet}} = \lambda (). \{\iota_0 (a, []) \mid a \in A\}\).

We can make things a little more interesting by adding a heap to the mix, defining our monad
\(\ms{Comp} = \ms{StateT}\;\ms{Heap}\;\ms{Print}\), where \(\ms{Heap} = \nats \rightharpoonup
\nats\) is simply a partial function with finite support. The Kleisli category of this monad is also
a Freyd category and inherits an Elgot structure from that on \(\ms{Set}_{\ms{Print}}\); we
therefore have an \isotopessa{} model supporting the instructions \(\ms{set}: \nats \times \nats \to
\mb{1}\), \(\ms{get}: \nats \to \nats\), \(\ms{alloc}: \nats \to \nats)\), and \(\ms{free}: \nats
\to \mb{1}\). We can assign semantics to \(\ms{set}\) in the standard fashion, with \(\dnt{\ms{set}}
= \lambda (p, v)\;h. \{\iota_0((), [p \mapsto v]h, [])\}\). \(\ms{get}\) is a bit more tricky, since
it is unclear what to do when we try to access uninitialized memory: one option is simply to return
an arbitrary value, with \(\dnt{\ms{get}} = \lambda p\;h. \{\iota_0(v, h, []) \mid v = h\;p \lor p
\notin h\}\). Finally, \(\dnt{\ms{alloc}} = \lambda v\;h.\{\iota_0\;(p, h', []) \mid h' = h \sqcup p
\mapsto v\}\) simply fills a random empty heap cell with the provided value, returning a pointer to
the cell, while \(\dnt{\ms{free}} = \lambda p\;h. \{\iota_0\;((), h \setminus p, [])\}\).

As we can see, the trace monad is a very powerful tool for quickly constructing families of
\isotopessa{} models, at least for relatively simple side-effects. It turns out, however, that the
same techniques can be used to tackle much more complex side-effects; to demonstrate this, we will
show how to build a simple model of TSO weak memory based on that given in \citet{sparky}.

\subsection{TSO weak memory}

\label{ssec:tso}

To start, we need to construct an appropriate stream action monoid to take our traces over. For
simplicity, we will consider an infinite set of named locations \(x, y, z \in \ms{Loc}\) which are
subject to concurrent modification by all threads via TSO reads, writes, and fences. We begin with
the following definitions:
\begin{definition}[Pomset] 
  A \emph{pomset} \(\alpha\) over a set of actions \(\mc{A}\) with a distinguished null action
  "\emph{tick}" \(\delta \in \mc{A}\) is a \textit{nonempty} partially-ordered \emph{carrier set}
  \(P\) such that every \(p \in P\) has finitely many predecessors equipped with a mapping \(\alpha:
  P \to \mc{A}\). A pomset is \emph{finite} if its carrier set is. We will quotient pomsets over the
  removal of arbitrarily many copies of $\delta$ as long as infinite sets are only equated to other
  infinite pomsets. We will write unordered pomsets using multiset notation, e.g. \(\{a, a, b\}\),
  and linearly ordered pomsets using list notation, e.g. \([a, a, b]\).
\end{definition}
(Finite) pomsets form a monoid under sequential composition \(\alpha;\beta\), which is defined by
the function \([\alpha, \beta]: \ms{trim}(P + Q) \to \mc{A}\), where \(P + Q\) is given the
lexicographic ordering and \(\ms{trim}(R)\) removes all elements of \(R\) with infinitely many
predecessors. They also form a monoid under parallel composition, defining \(\alpha || \beta =
[\alpha, \beta]: P + Q \to \mc{A}\) where \(P + Q\) is given the standard partial ordering (with
elements of \(P\) and \(Q\) incomparable). Note in both cases the monoidal unit is \(\{\delta\}\),
since we only consider nonempty pomsets but allow the removal of finitely many ticks. Given any
partially ordered set \(N\) and a family of pomsets \(\alpha_n\) for \(n \in N\), we can define
their \textit{sum} \(\Sigma_n\alpha_n\) to be given by the function \((n, a) \mapsto \alpha_n\;a:
\ms{trim}(\Sigma_nP_n) \to \mc{A}\), where the dependent product \(\Sigma_nP_n\) is given the
lexicographic order. In particular, choosing \(N = \nats\) makes \(\Sigma:
\ms{Pom}_{\ms{fin}}^\omega \to \ms{Pom}\) into a stream action monoid w.r.t the sequential
composition monoid on finite pomsets. 

We define a \emph{program order pomset} to be a pomset with \(\mc{A}_{\ms{PO}} = \mc{A}_w \cup
\mc{A}_r \cup \{\delta\}\), where \(\mc{A}_r\) consists of \textit{reads} of the form \(x = v\) and
\(\mc{A}_w\) consists of \textit{writes} of the form \(x := v\) for locations \(x\) and values \(v
\in \ms{Word}\). We define the \emph{program order monad} \(\ms{PO}\;A = \ms{Traces}\;\Sigma\;A\),
where \(\Sigma\) is taken as a stream action on finite program order pomsets. This yields an
\isotopessa{} model with support for concurrent read and write operations \(\ms{read}_x:
\mc{I}^\varnothing_0(\mb{1}, \ms{Word})\), \(\ms{write}_x: \mc{I}^\varnothing_0(\ms{Word}, \mb{1})\)
with semantics
\(
  \dnt{\ms{read}_x} = R_x^{\ms{PO}} = \lambda (). \{(v, \{x = v\}) | v \in \ms{Word}\}
\),
\(
  \dnt{\ms{write}_x} = W_x^{\ms{PO}} =  \lambda v. \{((), \{x := v\})\}
\).
The semantics of \(\ms{read}\) in particular give a hint as to how this model works: rather than
tracking the state of the heap, we simply \textit{emit} a pomset of the events that would have been
generated by a given execution (in the case of a read, a read event \(x = v\) whenever the read
returns \(v\), and in the case of a write, the single write event \(x := v\) for a write of \(v\)),
and later post-filter to obtain a set of valid executions. By doing so, our semantics remains
compositional, allowing us to reason about each program fragment individually, while still allowing
other program fragments to perform concurrent operations affecting our potential executions. On that
note, we can define the \textit{parallel execution} of two morphisms as follows:
\begin{equation}
  \begin{aligned}
  f_0 || f_1 = \lambda (a_0, a_1). 
  & \{\iota_0 ((b_0, b_1), \alpha_0 || \alpha_1) 
    \mid \iota_0 (b_i, \alpha_i) \in f_i\;a_i\} 
  \\ & \cup \{\iota_1 (\alpha_0 || \alpha_1) 
      \mid (\iota_0 (b_0, \alpha_0) \in f_0\;a \lor \iota_0\;\alpha_0 \in f_0\;a_0) 
      \land \iota_1 \alpha_1 \in f_1\;a_1\} 
  \\ & \cup \{\iota_1 (\alpha_0 || \alpha_1) 
      \mid \iota_1\;\alpha_0 \in f_0\;a_0 
      \land \iota_0 (a_1, \alpha_1) \in f_1\;a_1\}
    : \ms{Set}_{\ms{PO}}(A \otimes A', B \otimes B')
  \end{aligned}
\end{equation}
It's tempting to try to use this to define a tensor product of morphisms to obtain a monoidal
(rather than premonoidal) category, but unfortunately, sliding still fails: \((W_x^{\ms{PO}} ||
\ms{id}) ; (\ms{id} || W_y^{\ms{PO}})\) emits the pomset \([x := a, y := b]\), while \((\ms{id} ||
W_y^{\ms{PO}}) ; (W_x^{\ms{PO}} || \ms{id})\) emits the pomset \([y := b, x := a]\). 

To graduate from sequential consistency to a genuine, if maximally simple, weak memory model, we
introduce \textit{load buffering} of write actions. We will implement TSO ordering by buffering all
our writes, in which case they are only visible to the local thread. On the other hand, read events
will first attempt to read from the buffer, and, if there is no corresponding write in the buffer,
will read an arbitrary value, in both cases pushing an event to the global pomset. At any point, we
may choose to \textit{flush} some of the buffered writes. We introduce the set \(\mc{A}_b =
\{(\bufloc{x} := v)\}\) of \textit{buffer write actions}, where an action \(\bufloc{x} := v\) by a
thread denotes adding a write \(x := v\) to the thread's write buffer. We may then define the set of
TSO actions \(\mc{A}_{\ms{TSO}} = \mc{A}_{\ms{PO}} \cup \mc{A}_b\); a pomset over this set is called
a \emph{TSO pomset}. A buffer will be defined to be a list of write actions \(\ms{Buf} =
\mc{A}_b^*\), which we will interpret as linear pomsets over \(\mc{A}_{\ms{TSO}}\) ordered by index
(with the empty list corresponding to \(\{\delta\}\)). In particular, we define the monad \(\ms{TSO}
= \ms{StateT}\;\ms{Buf}\;(\ms{Trace}\;\Sigma)\), where \(\Sigma\) is taken as a the stream action of
finite TSO pomsets on TSO pomsets. We can view \(\ms{PO}\) as a submonad of this monad.

Given a buffer \(\ms{Buf}\), we can define the result of reads \([\cdot]_x: \ms{Buf} \to \ms{Word}
\sqcup \{\bot\}\) from the buffer by induction to be:
\(
  (L;\{\bufloc{x} := v\})[x] = v,
\),
\(
  (L;\{\bufloc{y} := v\})[x] = L[x]
\),
\(
  [][x] = \bot
\).
Note that writes at the \textit{end} of the buffer are prioritized, since later writes overwrite
earlier ones! 

The semantics of reads and writes can then be given by
\begin{equation}
  \begin{aligned}
  \dnt{\ms{read}_x} = R_x^{\ms{TSO}} 
    &= \ms{pflush};(\lambda ()\;L. \{(v, L, \{x = v\}) \mid L[x] = v \lor L[x] = \bot\});\ms{pflush} 
  \\
  \dnt{\ms{write}_x} = W_x^{\ms{TSO}}
    &= \ms{pflush};(\lambda v\;L. \{(v, (L;\{\bufloc{x} := v\}), \{x := v\})\});\ms{pflush}
  \end{aligned}
\end{equation}
where buffer flushing is implemented via the morphism 
$$
  \ms{pflush}_A = \lambda a\;L. \{(a, R, \alpha) | L = \alpha;R\}
  : \ms{Set}_{\ms{TSO}}(A, A)
$$
(called \(\ms{split}\) in \cite{sparky}). To be able to perform synchronization, we will also need
to introduce a \(\ms{fence}: \mc{I}^\varnothing_0(\mb{1}, \mb{1})\) instruction, which simply causes
all actions before the fence to be observed before any actions after the fence. For \(\ms{TSO}\),
implementing this is as simple as flushing the buffer, i.e., we define
$$
  \dnt{\ms{fence}} = \lambda ()\;L. \{((), [], L;\{\delta\})\}
$$
We can now define the parallel composition of morphisms in a "fork-join" style as follows: we first
flush the buffer completely (i.e., taking it and sticking it at the beginning of our pomset), then
execute \(f\) and \(g\) in parallel with separate buffers, \textit{filtering out executions which
completely flush the buffer}. Since both threads end up with an empty buffer, the resulting joined
buffer is also empty, giving us the following definition:
\begin{equation}
  \begin{aligned}
    f_0 || f_1 = \lambda (a_0, a_1)\;L. 
    & \{\iota_0 ((b_0, b_1), [], L;(\alpha_0 || \alpha_1)) 
      \mid \iota_0 (b_i, [], \alpha_i) \in f_i\;a_i\;[]\} 
    \\ & \cup \{\iota_1 (\alpha_0 || \alpha_1) 
        \mid (\iota_0 (b_0, [], \alpha_0) \in f_0\;a \lor \iota_0\;\alpha_0 \in f_0\;a_0\;[]) 
        \land \iota_1 \alpha_1 \in f_1\;a_1\;[]\} 
    \\ & \cup \{\iota_1 (\alpha_0 || \alpha_1) 
        \mid \iota_1\;\alpha_0 \in f_0\;a_0\;[] 
        \land \iota_0 (a_1, [], \alpha_1) \in f_1\;a_1\;[]\}
    \end{aligned}
\end{equation}

One issue with this model is that our category currently has ``too many'' operations besides the
ones we want: for example, it is a perfectly valid morphism to simply add an event into the buffer
without any flushing. Of course, one thing we could do is consider the smallest Elgot subcategory
generated by the atomic instructions we want to support, such as \(\ms{read}_x\), \(\ms{write}_x\),
and \(\ms{fence}\), which can be a perfectly valid approach, but is somewhat unwieldy. Another
approach, however, is to think about what ``valid'' morphisms might look like in general; this
also has the benefit of letting us write down specifications that may not be trivially implementable
in terms of actual instructions.

There are many potential properties we might want ``valid'' morphisms to have; in general, the less
valid morphisms we admit, the more equations we have to work with when reasoning. One simple
property we might consider is that a ``valid'' morphism must have at least one execution in which
the buffer is completely flushed, regardless of initial state. This is a perfectly reasonable
property to impose, since it ensures that we do not encounter degenerate cases where, for example,
the parallel composition of two processes has an empty trace  set even though both processes have a
nonempty trace set (since neither empties the buffer). We might even want to go a little bit further
and, representing the fact that a buffer flush can happen at any time between instructions, require
that a nondeterministic buffer flush before or after an instruction does not change its semantics,
i.e., that $\ms{pflush}_A ; f ; \ms{pflush}_B = f$. Unfortunately, this does \emph{not} hold for
the identity, since
\begin{equation}
  \ms{pflush} ; \ms{id} ; \ms{pflush} = \ms{pflush} ; \ms{pflush} = \ms{pflush} \neq \ms{id}
\end{equation}
so this would not give us a subcategory. While one possible solution is to simply add the identity
back in (and this would give us a valid subcategory), we have the issue that, for example, morphisms
like the associator and unitor would still be excluded. Instead, we can take into account the fact
that all $\ms{pflush}_A$ are idempotent and define a new category $\ms{PTSO}$ with objects sets and
morphisms of the form
\begin{equation}
  \ms{PTSO}(A, B) = \{\ms{pflush}_A; f; \ms{pflush}_B \mid f \in \ms{Set}_{\ms{TSO}}(A, B)\} 
\end{equation}
In this category, $\ms{pflush}_A$ is the identity on $A$, since
\begin{equation}
  \begin{aligned}
    \ms{pflush}_A ; \ms{pflush}_A; f; \ms{pflush}_B &= \ms{pflush}_A; f; \ms{pflush}_B \\
    \ms{pflush}_A; f; \ms{pflush}_B; \ms{pflush}_B &= \ms{pflush}_A; f; \ms{pflush}_B
  \end{aligned}
\end{equation}
This is especially useful since it frees us from needing to remember to sprinkle $\ms{pflush}$
everywhere when specifying things (and, of course, when that level of control is desired, we can
specify in $\ms{Set}_{\ms{TSO}}$ and then lift to $\ms{PTSO}$).

It turns out that this is a general construction we can do, which gives us another useful way to
build up \isotopessa{} models that can often be quite difficult to express as simply monads.
In particular, we have the following definition
\begin{definition}[Idempotent envelope category]
  Let $\mc{C}$ be a category and $d$ be a family of morphisms for each $A \in |\mc{C}|$ such that
  each $d_A : \mc{C}(A, A)$ is idempotent, i.e., $d_A ; d_A = d_A$. We may then define the
  subsemicategory $\ms{Ide}(\mc{C}, d)$ of $\mc{C}$ to be given by
  \begin{equation}
    \ms{Ide}(\mc{C}, d)(A, B) 
      = \{f \in \mc{C}(A, B) \mid f = d_A ; f ; d_B\}
      = \{d_A ; f ; d_B \mid f \in \mc{C}(A, B)\}
  \end{equation}
  Note that $\ms{Ide}(\mc{C}, d)$ is \emph{not} a subcategory unless $d_A = \ms{id}_A$ for all $A$,
  in which case $\ms{Ide}(\mc{C}, d) = \mc{C}$. However, $\ms{Ide}(\mc{C}, d)$ can be
  \emph{independently} viewed as a category with identity $d_A$ at $A$.
\end{definition}
We have that, if $\mc{C}$ has coproducts and $d_{A + B} = d_A + d_B$, then $\ms{Ide}(\mc{C}, d)$
inherits coproducts from $\mc{C}$ with injections 
\begin{equation}
  \begin{aligned}
    d_A;\iota_l &= \iota_l;d_{A + B} &= d_A;\iota_l;d_{A + B} &\in \ms{Ide}(\mc{C}, d)(A, A + B) 
    \\
    d_B;\iota_r &= \iota_r;d_{A + B} &= d_B;\iota_r;d_{A + B} &\in \ms{Ide}(\mc{C}, d)(B, A + B)
  \end{aligned}
\end{equation}
since, for $f \in \ms{Ide}(\mc{C}, d)(A, C)$, $g \in \ms{Ide}(\mc{C}, d)(A, C)$, we have that
\begin{equation}
  d_{A + B} ; [f, g] ; d_C 
  = d_A + B ; [f, g] ; d_C
  = [d_A ; f ; d_C, d_B ; g ; d_C]
  = [f, g]
\end{equation}
In particular, it follows that $\ms{Ide}_d$ preserves coproducts. If $\mc{C}$ in addition has an
Elgot structure, then $\ms{Ide}(\mc{C}, d)$ also inherits this structure (and therefore $\ms{Ide}_d$
preserves it), as for $f \in \ms{Ide}(\mc{C}, d)(A, B + A)$, we have
\begin{equation}
  f^\dagger 
  = (d_A ; f ; d_B + d_A)^\dagger 
  = (d_A ; f ; d_B + d_A)^\dagger ; d_B
  = d_A ; (f ; d_B + (d_A ; d_A))^\dagger ; d_B
  = d_A ; f^\dagger ; d_B
\end{equation}
On the other hand, if $\mc{C}$ is equipped with a (symmetric) premonoidal structure and we have that
\begin{equation}
  d_{A \otimes B} = d_A \ltimes d_B = d_A \rtimes d_B 
  \qquad d_{A \otimes I} = d_A \otimes I 
  \qquad d_{I \otimes A} = I \otimes d_A
  \label{eqn:d-premonoidal}
\end{equation}
(note that $d$ does \emph{not} have to be central!), it follows that $\ms{Ide}(\mc{C},
d)$ has (symmetric) premonoidal structure as well, with inherited tensor product functors and
associators, unitors, and symmetries given by
\begin{equation}
  \begin{gathered}
    d_{(A \otimes B) \otimes C} ; \alpha_{A, B, C}
    = \alpha_{A, B, C} ; d_{A \otimes (B \otimes C)}
    = d_{(A \otimes B) \otimes C} ; \alpha_{A, B, C} ; d_{A \otimes (B \otimes C)} \\
    d_{A \otimes I} ; \lambda_A = \lambda_A ; d_A = d_{A \otimes I} ; \lambda_A ; d_A \qquad
    d_{I \otimes A} ; \rho_A = \rho_A ; d_A = d_{I \otimes A} ; \rho_A ; d_A \\
    d_{A \otimes B} ; \sigma_{A, B} 
      = \sigma_{A, B} ; d_{B \otimes A} 
      = d_{A \otimes B} ; \sigma_{A, B} ; d_{B \otimes A}
  \end{gathered}
\end{equation} 
respectively. It follows that if $\mc{C}$ is distributive, so is $\ms{Ide}(\mc{C}, d)$, since we can
verify that the distributor $[d_A \otimes (d_B ; \iota_l), d_A \otimes (d_C ; \iota_r)]$ has the
expected inverse
\begin{equation}
  d_{A \otimes (B + C)} ; \delta^{-1} ; d_{(A \otimes B) + (A \otimes C)}
\end{equation}
If we in fact have that $d_{A \otimes B} = A \otimes d_B = d_A \otimes B$ for all $A,
B$ (which implies Equation~\ref{eqn:d-premonoidal}) and $\mc{C}$ is equipped with the structure
of a Freyd category, then so is $\ms{Ide}(\mc{C}, d)$, with projections and diagonal given by
\begin{equation}
  \begin{gathered}
  d_{A \otimes B} ; \pi_l = \pi_l ; d_A = d_{A \otimes B} ; \pi_l ; d_A \qquad
  d_{A \otimes B} ; \pi_r = \pi_r ; d_B = d_{A \otimes B} ; \pi_r ; d_B \\
  d_A ; \Delta = \Delta ; d_{A \otimes A} = d_A ; \Delta ; d_{A \otimes A}
  \end{gathered}
\end{equation}
Hence, it follows that if $\mc{C}$ is an \isotopessa{} (expression) model, $d_A$ is idempotent,
$d_{A + B} = d_A + d_B$, and $d_{A \otimes B} = A \otimes d_B = d_A \otimes B$, then
$\ms{Ide}(\mc{C}, d)$ is an \isotopessa{} (expression) model as well. It turns out that all these
properties hold for $d = \ms{pflush}$, giving us our final categorical model for TSO weak memory,
$\ms{PTSO} = \ms{Ide}(\ms{Set}_{\ms{TSO}}, \ms{pflush})$.

\section{Discussion and Related Work}

\subsection{SSA, FP and IRs}

Static Single Assignment (SSA) form was first introduced as a compiler intermediate representation
by \citet{alpern-ssa-original-88} and \citet{rosen-gvn-1988}, with the goal of facilitating
effective reasoning about program variable equivalence. To perform optimizations like common
subexpression elimination (CSE) effectively, we need to determine which expressions are equal
\emph{at a given point in time}. By transforming the program into SSA form, we introduce unique
variables for each assignment and use $\phi$-nodes to merge variable values at points where control
paths converge. Since each variable then corresponds to a unique value at a specific point in the
program's execution, SSA unlocks the ability to perform \emph{algebraic reasoning} about variable
values over time. As a result, analyses like CSE become a matter of simple algebraic rewriting based
on variable names.

\citet{cytron-ssa-intro-91} provided the first efficient algorithm for converting 3-address code
programs to SSA form using a minimal number of $\phi$-nodes. They observed that a $\phi$-node only
needs to be introduced at the earliest point where a variable may have different values based on
control flow. This point is computable via the graph-theoretic notion of a \emph{dominance
frontier}, which identifies where control paths merge in the program's control flow graph. Since
then, SSA has become the intermediate representation of choice for most production-grade
compiler toolchains.

One of the most famous optimizations enabled by SSA's support for effective algebraic reasoning is
\emph{sparse conditional constant propagation (SCCP)}, introduced in \citet{wegman-sccp-91}. This
algorithm leverages the SSA property to reason about the possible set of values each variable may
take using a lattice-based data-flow analysis, which models variable values in terms of abstract
states like \emph{unknown}, \emph{constant}, or \emph{overdefined}. Without SSA, one could naively
consider all variable definitions, which would likely detect fewer constants due to imprecision.
Alternatively, achieving more precise results would require a complex and computationally intensive
reaching-definitions analysis.

However, while SSA is an excellent representation for implementing compilers, the semantics of
$\phi$-nodes can be quite unintuitive due to their lack of an obvious operational interpretation,
making them challenging to reason about formally. \citet{kelsey-95-cps} establishes a correspondence
between SSA and a subset of \emph{continuation-passing style (CPS)}, a common intermediate
representation for functional compilers. In particular, while not all CPS programs can be directly
converted to SSA form—those using non-local returns like \texttt{longjmp} or \texttt{call/cc}—the
typical outputs of CPS transformations avoid such features. Kelsey observed that many optimizations
requiring flow analysis in CPS could be performed directly in SSA, often dramatically simplifying
them.

\citet{appel-ssa} builds on this work by informally showing that the functional subset of CPS
programs ``hidden inside" SSA is, in fact, simply nested, mutually tail-recursive functions,
with each function corresponding to a basic block. He makes the key observation that the
dominance-based scoping of SSA corresponds to the lexical scoping of functions. For a variable to be
visible in a function, it must appear in the lexical scope of that variable's definition and
therefore be dominated by it; otherwise, there would be no way to call the function.

In fact, the subset of functional programs identified by \citet{appel-ssa} corresponds to
\emph{A-normal form (ANF)}, another functional intermediate representation advocated by
\citet{flanagan-93-anf}. \citet{chakravarty-functional-ssa-2003} formalizes this correspondence
giving an algorithm to convert SSA programs to ANF, and then showing how SSA optimizations such as
SCCP can be written to operate on ANF programs. The authors highlight that the semantic rigor of
their notation, combined with the well-defined semantics of ANF, make their presentation of SCCP
significantly more amenable to formal analysis.

Going in the other direction, \citet{thorin-12} introduce the Thorin intermediate representation,
which consists of CPS extended with SSA-style dominance-based scoping. As SSA is a first-order
language, it is often difficult to represent programs using closures effectively, and consequently,
difficult to optimize them well. The authors give the example of LLVM, which often needs to generate
a new struct and a large amount of boilerplate code for every closure, which, even after
optimization, is often not significantly reduced. By contrast, Thorin retains the advantages of CPS
for representing functional programs, while enabling SSA-like graph-based use-def analysis by
prohibiting variable shadowing.

The primary focus of the works we have covered so far is establishing the correspondence between
SSA, an imperative intermediate representation, and widely used functional intermediate
representations. However, to advance beyond these correspondences and directly study the
optimization and verification of SSA programs themselves, we require an equational theory
underpinned by formal semantics. One approach found in the literature (and which we also use) is
to relax SSA into forms that support richer substitution principles and well-defined operational
semantics, which can then be used to reason about SSA itself. Two papers which exemplify this
approach are \citet{benton-kennedy-99} and \citet{garbuzov-structural-cfg-2018}.

One of the difficulties of working with ANF as an intermediate representation is that it does not,
in general, satisfy \emph{substitution}: replacing a variable $x$ with a compound expression like a
function call can take a program out of the ANF-fragment. The \emph{monadic intermediate language
(MIL)} introduced by \citet{benton-kennedy-99} for use in the MLj compiler for Standard ML can be
viewed as a relaxation of ANF (and hence, of SSA) to get a nicer substitution principle. Benton and
Kennedy then use this flexibility to build up an equational theory justified by their operational
semantics. One interesting feature is that, unlike Moggi's equational metalanguage
\cite{moggi-91-monad} (and like our \isotopessa{} calculus), MIL enforces a stratification of
\emph{values} and \emph{computations}, without supporting ``computations of computations''
(corresponding to nested monad types $\ms{T}(\ms{T}(A)))$). This hints at Freyd categories, rather
than general Kleisli categories, being a natural model of MIL-like intermediate representations.

Similarly, \citet{garbuzov-structural-cfg-2018} exhibit a correspondence between an operational
semantics for SSA and an operational semantics for call-by-push-value (CBPV) \cite{cbpv}. They then
use the normal form bisimulations of \citet{lassen-bisim} to derive an equational theory for use in
justifying optimizations. In particular, we can view their paper as interpreting CBPV as a
relaxation of SSA more suited for developing an equational theory and for semantics work; in
particular, to be able to take advantage of CBPV to give a \emph{structural} operational semantics
for (unstructured) SSA programs. The semantics of CBPV are widely studied, and hint at a large
variety of potential models for SSA, but the formalization in \cite{garbuzov-structural-cfg-2018}
does not support any effects other than nontermination.

\subsection{Formalizations of SSA}

\subsubsection{Other SSA type systems}

Several attempts have been made to provide a type-theoretic treatment of SSA. The work most similar
to ours is by \citet{typed-effect-ssa-rigon-torrens-vasconcellos-20}, who present a typed
translation from SSA into the lambda calculus using a type-and-effect system, observing that the
algorithm for converting programs to SSA form may also be viewed as a mechanism for transforming
programs in a functional language with unstructured control flow into equivalent expressions.

Similarly to \isotopessa{}, they extend the lambda calculus with a mutually recursive
\texttt{where}-binding, which allows them to directly translate unstructured SSA control flow.
However, their calculus uses only a single syntactic category of expressions, whereas we attempt to
model (generalized) SSA directly by distinguishing between expressions and regions. Their language
also includes support for effect handlers, which are beyond the scope of our current study but
represent an interesting direction for future work. While the authors do not provide an equational
theory or a semantics for their language, we believe that our equational theory could be adapted to
the fragment of their language without effect handlers.

An interesting alternative approach is demonstrated by \citet{ssa-types-matsuno-ohori-06}, who give
a type theory for what appear to be ordinary three-address code programs. However, every well-typed
program can be placed into SSA-form by inserting $\phi$-nodes in a fully type-directed way. This
lets them model SSA without any $\phi$-nodes, letting them use the standard semantics for
three-address code. 

\citet{menon-verified-06} give a type-safe formalization of SSA, along with an operational semantics
and  formal definitions of dominance, definition/use points, and the SSA property for 3-address
code. By augmenting SSA with first-class proof variables, they aim to give a representation which
allows aggressive optimizations to preserve safety information. Their type system requires checking
the SSA property separately from well-typedness, but is proven sound if the SSA property holds.
\citet{hua-explicit-ssa-2010} give another type system and direct operational semantics for standard
SSA, and prove type safety for it.

Many operational semantics for SSA have arisen from compiler verification efforts.
\citet{barthe-compcert-ssa-2014} give an operational semantics as part of the CompCertSSA project,
and give a semantics-preserving translation from three-address code into SSA.
\citet{herklotz-gsa-2023} formalise "gated SSA" and give semantics-preserving translations between
it and ordinary SSA. Going beyond CompCertSSA, \citet{vellvm-12} have studied the semantics of the
LLVM IR itself as part of the Vellvm project. There has been much less work on denotational
semantics for SSA, or directly on its equational theory. \citet{pop-ssa-inout-2009} give an unusual
denotational model of SSA in terms of the iteration structure of a program, which they use to better
understand the loop-closing $\phi$-nodes found both in the gated SSA representation as well as
practical compilers such as GCC.

\subsubsection{Mechanizations of SSA}

CompcertSSA \cite{compcert-ssa-12} is an attempt to extend the CompCert verified compiler
\cite{leroy-compcert-09} with an SSA-based middle-end. This is achieved by generating a pruned SSA
IR from CompCert's RTL format. Instead of verifying the RTL-to-SSA translation within Coq, this
translation is performed by unverified code, and a separate \emph{translation validation} stage uses
a verified checker to ensure correctness. After performing (verified) SSA optimizations, the IR is
naively lowered back to RTL for the rest of CompCert's machinery to work on. \citet{demange-ssa-15} build on this work by providing realistic, verified implementations of Sparse Conditional Constant Propagation (SCCP) \cite{wegman-sccp-91}, Common Subexpression Elimination (CSE), and Global Value Numbering (GVN) \cite{rosen-gvn-1988} within a general framework of flow-insensitive static analysis for CompCertSSA.

While CompCert is a verified implementation of a new compiler in Coq, the Vellvm project
\cite{vellvm-12} attempts to mechanize a subset of the LLVM intermediate representation (IR),
covering the LLVM type system, operational semantics, and the well-formedness and structural
properties of valid LLVM IR. The authors adopt a memory model for LLVM based on CompCert's
\cite{leroy-compcert-09}, allowing them to leverage significant portions of CompCert's Coq
infrastructure. Similarly, \citet{siddharth-24-peephole} attempt to mechanize a well-defined subset
of the Multi-Level Intermediate Representation (MLIR) in the Lean 4 theorem prover. Their
formalization features a user-friendly front end to convert MLIR syntax into their calculus and
scaffolding for defining and verifying peephole rewrites using tactics. Their framework has been
tested on bitvector rewrites from LLVM, structured control flow, and fully homomorphic encryption;
however, as of publication, only structured control flow was fully supported.

\subsection{Compositional (Relaxed) Concurrency}

The most natural way to think about concurrency is often in terms of an operational semantics on an
abstract machine, in which concurrent threads interleave and interact. Such semantics, naturally,
are designed to reason about an entire program at a time. \citet{batty-compositional-17} argues that
a compositional semantics of concurrency is necessary to be able to reason about properties of large
software systems effectively, particularly in the presence of complicating factors such as compiler
optimizations and weak memory semantics. However, it is generally very challenging to reason about a
small component of a concurrent system in isolation since its behaviour may be drastically affected
by other components running in parallel.

One natural approach to constructing denotational models of concurrency is to consider the
extension of an abstract machine's behavior. Each thread performs a sequence of atomic actions,
and so from the outside, the machine can produce any interleaving of the atomic actions of each
thread. Just by itself, considering sets of possible traces is not sufficiently abstract, and so
\citet{brookes-full-abstraction-96} takes the closure of these sets under semantics-preserving
transformations, such as stuttering (introducing extra identity steps) and mumbling (which fuse
sequential atomic operations, thereby hiding implementation details), to obtain a model with good
equational properties such as associativity
($\dnt{\alpha;(\beta;\gamma)} = \dnt{(\alpha;\beta);\gamma}$) and the expected identities for
branches and loops. 

If the machine model is \emph{sequentially consistent} -- i.e., all allowed behaviors arise from
interleavings of sequences of atomic events -- then this style of trace semantics is sufficient.
However, real hardware often exhibits \emph{relaxed behaviour}, in which additional behaviours
which do not correspond to the interleaving of parallel threads are allowed. Even more relaxed
behaviours arise from fundamental compiler optimizations, such as re-ordering of independent reads
and writes. These are perfectly valid in a single-threaded context but can introduce new
behaviours in multithreaded programs — for example, another thread could distinguish the order of
writes. In general, we need tools to reason about interactions with a system that is not
\emph{linearizable} (i.e., in which concurrent operations cannot be reduced to interleaved atomic
operations on a single thread), of which actual hardware is only one example. There are two major
approaches to this problem.

One idea we might have is to augment traces with additional structure, which we can then quotient
away to maintain extensionality by using an appropriate closure operator. What's nice about this
method is that we can often use structures analogous to the additional state in the machine model
which leads to the relaxed behaviour in the first place. For example, in
\citet{jagadeesan-brookes-relaxed-12}, the authors extend \citet{brookes-full-abstraction-96} with
additional state corresponding to the contents of a thread-local buffer, and then take the closure
of their trace-set with respect to buffer operations such as nondeterministic flushing. This gives
them a model of TSO weak memory with good equational properties and a monadic structure, which
remains intuitive, since it has a connection to the original TSO model which can also be framed in
terms of the abstract machine having thread-local buffers. \citet{release-acquire} use this approach
to derive a trace-based semantics for release-acquire atomics inspired by their operational
semantics, showing the viability of this technique even for very complex memory models.

The idea of augmenting traces leads naturally to \emph{games}, which we can view one variant of as
sequential traces of moves between a \emph{proponent} and an \emph{opponent}. Game semantics were
first used with great success to give a semantics to \emph{sequential computation}; for example,
\citet{abramsky-algol-96} give an adequate denotational semantics for sequential Algol using
Hyland-Ong games \cite{hyland-ong-00}. \citet{ghica-08} observe that the sequentiality of Hyland-Ong
games corresponds to a highly constrained, deterministic form of interleaving between concurrent
processes. By generalizing away many of the rules of Hyland-Ong games, and in particular
\emph{alternation} (i.e., the proponent and the opponent must take turns), the authors obtain a form
of game-semantics for concurrent programs, which they use to build a fully abstract semantics for
\emph{fine-grained concurrency} (in contrast to \citet{brookes-full-abstraction-96}, which
implements \emph{coarse-grained} concurrency using the somewhat unrealistic \ms{await} primitive).

In general, generalizing traces to represent true concurrency leads us to another approach:
replacing sets of \emph{linear} traces with (sets of) \emph{partially ordered} structures that
directly represent the concurrency of their component operations. This approach directly captures
the idea that executing two operations concurrently (i.e., without specifying an order between them)
is fundamentally different from executing them in some particular order. The most basic such data
structure is a \emph{partially ordered multiset}, or \emph{pomset}, which we describe in
Section~\ref{ssec:tso} of our paper. Pomsets can naturally model sequential consistency, and, like
traces, can be augmented with additional structure to model relaxed memory models such as TSO weak
memory, as in \citet{sparky}.

One issue with this approach is that it can be very challenging to determine the appropriate
structures to augment pomsets with in order to model more advanced memory models. For example,
\citet{jagadeesan-pwp-20} introduce \emph{pomsets with preconditions} (PwP), which augment pomsets
with logical formulae, to model weak memory behaviors. However, \citet{leaky-semicolon} demonstrate
that sequential composition (the semicolon operator) in PwP is not associative. This lack of
associativity undermines many common program optimizations and breaks the monadic structure
necessary for compositional reasoning. To address this issue, \citet{leaky-semicolon} propose
\emph{pomsets with predicate transformers} (PwT), which restore the associativity of sequential
composition. Despite this improvement, their semantics still do not support loops, and developing a
monadic structure based on PwT that fully supports recursion and iterative constructs remains future
work. 

Another potential generalization of pomsets, inspired by game semantics, is the use of \emph{event
structures}, which introduce a \emph{conflict relation} to represent many potentially conflicting
executions within a single mathematical object. \citet{castellan-16} show how to use event
structures to represent concurrent executions with respect to a memory model. They study a simple
language supporting parallel composition of $n$ linear programs defined as lists of loads, stores,
and arithmetic operations, for which event structures provide a denotational semantics.
\citet{paviotti-modular-relaxed-dep-20} use the event structure approach to give denotational
semantics for relaxed memory concurrency in a more realistic language, including semantics for
branches and loops. However, they employ step indexing in their semantics, and as a result, loop
unrolling is only a refinement rather than an equation. Since event structures satisfy the axioms of
axiomatic domain theory~\cite{fiore-phd-94}, it should be possible to modify this semantics into one
where loop unrolling is an equation, at which point it would be a model of our calculus.

\subsection{Future Work}

\subsubsection{Substructural Types and Effects}

Currently, our treatment of effects is relatively primitive, in that, while we postulate a lattice
of effects, our equational theory only distinguishes between \emph{pure} and \emph{impure}
functions, the latter having effect $\bot$. \citet{fuhrmann-direct-1999} studies languages with
semantics given in an \emph{abstract Kleisli category}. In particular, he introduces the notion of
\begin{itemize}
  \item \emph{Central} operations, which commute with all other operations; we like to call such
  morphisms \emph{linear}
  \item \emph{Duplicable} operations $f$, for which $\letexpr{y}{f(x)}{(y, y)} \teqv (f(x), f(x))$
  \item \emph{Discardable} operations $f$, for which $\letexpr{y}{f(x)}{e} \teqv e$ when $e$ does
  not depend on $y$
\end{itemize}
While, as we make use of in our calculus, pure operations are central, duplicable, and discardable,
it can be useful to study each notion in isolation, as well as to consider morphisms which are
central and duplicable (which we have taken to calling \emph{relevant}) and morphisms which are
both central and discardable (which we have taken to calling \emph{affine}). Examples of morphisms
which are not pure but nonetheless relevant or affine abound; for example,
\begin{itemize}
  \item In a programming language with nondeterminism, nondeterministic functions are \emph{impure}
  (since $\letexpr{y}{f(x)}{(y, y)}$ is a strict refinement of $(f(x), f(x))$), but are nontheless
  \emph{affine}, because discarding the result of a nondeterministic function does not affect the
  program's behavior, assuming they have no other effect.
  \item In a programming language with nontermination, nonterminating functions with no other effect
  are impure (since $\letexpr{y}{f(x)}{e}$ may diverge even if $e$ does not) but are always
  duplicable, because duplicating a nonterminating function does not introduce new effects beyond
  divergence. Depending on whether the language's other effects commute with nontermination, they
  may in fact be relevant.
\end{itemize}
All three of these concepts still make perfect sense when interpreted as-is in a Freyd category, and
are particularly useful when reasoning about models of probabilistic programming, such as Markov
categories (which have many affine morphisms) \cite{nlab:markov-category}.

Given that such relevant and affine side-effects have substitution principles which depend on how
often a variable is used in a term, we call such side-effects \emph{substructural}. We might also
consider how we could support substructural, and in particular linear, \emph{types}. Categorically,
this corresponds to weakening the requirement that our base category be cartesian, instead requiring
only a symmetric monoidal category. The resulting generalization of a Freyd category is called an
\emph{effectful category} by \citet{promonad}.

\subsubsection{Refinement and Enrichment}

Throughout this work, we have focused on the notion of program \emph{equivalence}
$\lbeq{\Gamma}{r}{r'}{\ms{L}}$. In many settings, especially when considering the nondeterminism and
concurrency, we would also like to study program \emph{refinement} $\haslb{\Gamma}{r \sqsubseteq
r'}{\ms{L}}$. Semantically, this corresponds to an enrichment of our model in the category of
partial orders; we conjecture that adding this to our semantics would require minimal changes. We
may also consider how our syntax could be extended to support explicit parallelism and how we could
reflect the corresponding algebraic laws (as given in, e.g., \citet{hoare-parallel-14}), such as
\begin{equation}
  (P \parallel Q) ; (R \parallel S) \sqsubseteq (P ; R) \parallel (Q ; S)
\end{equation}
in our equational theory. More generally, we can consider enrichment with further structures, such
as distributive lattices and dcpos, and the language features these correspond to, such as
nondeterministic choice between programs $P \lor C$. These features are particularly interesting
from the perspective of programs as \emph{specifications}, as they offer the potential for
representing complex specifications within SSA.

\subsubsection{Guarded Iteration}

Elgot categories and monads were introduced in \citet{elgot-elgot-75}, and subsequently formalized
in \citet{adamek-elgot-11}. \citet{coinductive-resumption-levy-goncharov-19} generalize Elgot
categories to \emph{guarded Elgot categories} (and, correspondingly, \emph{guarded Elgot monads},
whose Kleisli categories are guarded Elgot), to support partial non-unique iteration. Generalizing
our language with support for guarded iteration could allow us to study applications of SSA to
domains for which not all fixpoints exist, such as the representation of terminating languages (as,
for example, one would find in a proof assistant) or productive processes. Another potential
advantage of such support is that every category with coproducts can be considered \emph{vacuously}
guarded, with the only fixpoints supported being those of loops which immediately exit. This could
allow us to unify the notion of \isotopessa{} expression models and \isotopessa{} models into the
single notion of \emph{guarded} \isotopessa{} models (with expression models vacuously guarded,
while \isotopessa{} models in the original sense would have all fixpoints).

\citet{goncharov-metalang-21} introduce a metalanguage for guarded iteration based on labelled
iteration as presented by \citet{geron-iteration-16}. Interestingly, they refer to the labels in
their systems as \emph{exceptions}, making the claim that this more closely matches their semantics.
Similar to our work, they provide a denotational semantics for this language using the Kleisli
category of a strong guarded \emph{pre-iterative} monad $\mb{T}$—that is, a monad with a guarded
fixpoint operator—on a distributive category $\mc{C}$. They then demonstrate that this semantics is
sound and adequate with respect to a big-step operational semantics for a specific monad, namely
$\ms{T};X = X \times \mathbb{N}^ + \mathbb{N}^\omega$ over $\ms{Set}$. Notably, unlike our trace
monad, their monad is \emph{guarded iterative} rather than merely iterative. The infinite branch of
the coproduct $\mathbb{N}^\omega$ represents an infinite sequence of natural numbers, effectively
prohibiting non-productive infinite loops — that is, recursion that does not pass through a
\emph{guard}, which in this context is emitting a natural number. Their treatment of iteration is
thus somewhat more general than ours, as they support guarded iteration and require only a
pre-iterative monad. However, they focus on Kleisli categories rather than providing a general
treatment for Freyd categories. Additionally, their language supports functional types, whereas our
language is purely first-order. We are very interested in exploring the connection between guarded
labelled-iteration and SSA implied by the similarities between our respective syntaxes and
semantics.

\subsection*{Acknowledgements}

This work was supported in part by a European Research Council (ERC) Consolidator Grant for the project ``TypeFoundry'', funded under the European Union's Horizon 2020 Framework Programme (grant agreement no. 101002277).

\bibliographystyle{ACM-Reference-Format}
\bibliography{references}

\clearpage 

\appendix

\section{The Environment Comonad}

\label{apx:environment}

If $\mc{C}$ is a Freyd category, then given an object $R \in |\mc{C}|$, the functor $R \otimes
\cdot$ is a comonad with counit $\pi_r : R \otimes A \to A$ and comultiplication $\dmor{R} \otimes A
; \alpha : R \otimes A \to R \otimes (R \otimes A)$, often called the \emph{environment comonad} or
\emph{coreader comonad}. In $\ms{Set}$, the co-Kleisli category of $R \otimes \cdot$ is isomorphic
to the Kleisli category of the reader monad $R \to \cdot$, and thus can be equipped with the
structure of a distributive Elgot category. We might therefore intuit that, in general, if $\mc{C}$
is a distributive Elgot category, the co-Kleisli category of $R \otimes \cdot$ has this structure,
even if e.g. $\mc{C}$ lacks exponentials. The rest of this section is dedicated to proving this,
which, beyond providing yet another class of \isotopessa{} models, will also give us a some useful
equations and notation to use in our proofs later in the appendix.
We begin by giving an explicit definition:
\begin{definition}[Environment Comonad]
  Given a Freyd category $\mc{C}$ and an object (the \emph{environment}) $R \in |\mc{C}|$, the
  \emph{environment comonad} $\envcom{\mc{C}}{R}$ is given by the functor $A \mapsto R \otimes A$
  and has counit $\pi_r$ and comultiplication $\Delta_R \otimes A$. It follows that composition of
  co-Kleisli morphisms $f : R \otimes A \to B$, $g : R \otimes B \to C$ is given by
  \begin{equation}
    \rseq{R}{f}{g} := \dmor{R} ; \alpha ; R \otimes f ; g : R \otimes A \to C
  \end{equation}
  In particular, we can define an identity on objects functor $\ms{env}_R : \mc{C} \to
  \envcom{\mc{C}}{R}$ as follows:
  \begin{equation}
    \toenv{R}{f} := \pi_r ; f
  \end{equation}
  Where it is clear from context, we will leave out the environment $R$.
\end{definition}
Given $f : R \otimes A \to B$, we will introduce the syntax sugar
\begin{equation}
  \rlmor{f} := \dmor{R} ; \alpha ; R \otimes f : R \otimes A \to R \otimes B
\end{equation}
In particular, we have that $\rseq{}{f}{g} = \rlmor{f} ; g$ and $\rlmor{\toenv{}{f}} := R \otimes
f$. It is trivial to verify (for example, by drawing string diagrams) that
\begin{equation}
  \begin{gathered}
  \rlmor{\rlmor{f} ; g} = \rlmor{f} ; \rlmor{g} 
  \implies \rseq{}{f}{(\rseq{}{g}{h})} = \rseq{}{(\rseq{}{f}{g})}{h} \\
  \rlmor{f} ; \pi_r = f \implies \rseq{}{f}{\pi_r} = f \qquad
  \rlmor{\pi_r} = \ms{id} \implies \rseq{}{\pi_r}{f} = f
\end{gathered}
\end{equation}
and hence that $\envcom{\mc{C}}{R}$ is indeed a category. To show it is in fact a \emph{Freyd
category}, we can define tensor functors
\begin{equation}
  \begin{aligned}
  \envtn{R}{f}{X} &:= \alpha^{-1} ; f \otimes X 
    : R \otimes (A \otimes X) \to B \otimes X \\
  \envtn{R}{X}{f} &:= R \otimes \sigma ; \envtn{R}{f}{X} ; \sigma
    : R \otimes (X \otimes A) \to X \otimes B
  \end{aligned}
\end{equation}
We can define
\begin{itemize}
  \item Associators $\alpha^R_{A, B, C} := \toenv{R}{\alpha_{A, B, C}}$
  \item Unitors $\lambda^R_A := \toenv{R}{\lambda_A}$ and $\rho^R_A := \toenv{R}{\rho_A}$
  \item Symmetries $\sigma^R_{A, B} := \toenv{R}{\sigma_{A, B}}$
  \item Projections $\pi_l^R := \toenv{R}{\pi_l}$ and $\pi_r^R := \toenv{R}{\pi_r}$
  \item Terminal morphisms $!^R_A := \toenv{R}{!_A} = !_{R \otimes A}$
  \item Diagonals $\Delta^R_A := \toenv{R}{\Delta_A}$
  \item Pure morphisms ${\envcom{\mc{C}}{R}}_\bot(A, B) = \mc{C}_{\bot}(R \otimes A, B)$
\end{itemize}
\begin{lemma}
  We can always equip $\envcom{\mc{C}}{R}$ with the structure of a Freyd category as described above
  ; furthermore, $\toenv{R}{\cdot}$ strictly preserves premonoidal structure.
\end{lemma}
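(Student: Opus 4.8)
The plan is to show that every piece of structure listed above is chosen precisely so that the identity-on-objects assignment $\toenv{R}{\cdot} : \mc{C} \to \envcom{\mc{C}}{R}$ becomes a \emph{strict} premonoidal functor, and then to import all of the coherence data ($\alpha^R, \lambda^R, \rho^R, \sigma^R$, the comonoid structure, the Freyd product axioms) from $\mc{C}$ by transport along $\toenv{R}{\cdot}$. Throughout, the only computational tools needed are the four identities already recorded for $\rlmor{\cdot}$ and $\toenv{R}{\cdot}$ — namely $\rlmor{\rlmor{f};g} = \rlmor{f};\rlmor{g}$, $\rlmor{f};\pi_r = f$, $\rlmor{\pi_r} = \ms{id}$, and $\rlmor{\toenv{R}{f}} = R\otimes f$ — together with the fact that $\envcom{\mc{C}}{R}$ is already known to be a category. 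Since almost every manipulation is a rearrangement of pure ``scaffolding'' around a single impure box, I would carry out the fiddlier steps in the string-diagram calculus of the preceding subsection, where the state wire makes the sliding arguments visually immediate.

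First I would establish the \emph{binoidal} structure: that $\envtn{R}{\cdot}{X}$ and $\envtn{R}{X}{\cdot}$ are functors on $\envcom{\mc{C}}{R}$. Preservation of identities, $\envtn{R}{\pi_r}{X} = \pi_r$, is a coherence identity about projections and $\alpha^{-1}$ in the cartesian subcategory $\mc{C}_\bot$; preservation of co-Kleisli composition, $\envtn{R}{\rseq{}{f}{g}}{X} = \rseq{}{\envtn{R}{f}{X}}{\envtn{R}{g}{X}}$, unfolds using $\rlmor{\rlmor{f};g} = \rlmor{f};\rlmor{g}$ and the comonoid laws for $\dmor{R}$. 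I would then check that $\toenv{R}{\cdot}$ strictly preserves whiskering: the representative computation is $\envtn{R}{\toenv{R}{f}}{X} = \alpha^{-1};((\pi_r ; f)\otimes X) = \alpha^{-1};(\pi_r\otimes X);(f\otimes X) = \pi_r;(f\otimes X) = \toenv{R}{f\otimes X}$, where $\alpha^{-1};(\pi_r\otimes X) = \pi_r$ holds by coherence, and symmetrically on the right using $\sigma$. Together with $\toenv{R}{\cdot}$ being a functor (immediate from $\rlmor{\toenv{R}{f}} = R\otimes f$) and the structural isomorphisms being \emph{defined} as $\toenv{R}{\alpha}, \toenv{R}{\lambda}, \toenv{R}{\rho}, \toenv{R}{\sigma}$, this already yields the ``strictly preserves premonoidal structure'' clause.

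With strictness in hand, the premonoidal axioms transport for free: each instance of the triangle, pentagon, and hexagon identities in $\envcom{\mc{C}}{R}$ is the $\toenv{R}{\cdot}$-image of the corresponding identity in $\mc{C}$, because $\toenv{R}{\cdot}$ preserves composition, whiskering, and the structural isos; naturality of $\alpha^R, \lambda^R, \rho^R, \sigma^R$ in each variable reduces similarly to the single-variable premonoidal naturality in $\mc{C}$. For the Freyd conditions I would use the ``alternative'' characterization from the definition. The pure subcategory is $\mc{C}_\bot(R\otimes A, B)$, which is exactly the co-Kleisli category of the product comonad $R\otimes -$ on the cartesian category $\mc{C}_\bot$, hence cartesian: $I$ is terminal because $\mc{C}_\bot(R\otimes A, I)$ is a singleton, and $A\otimes B$ is a product with projections $\pi_l^R, \pi_r^R$ and pairing inherited from $\mc{C}_\bot$, since $\rseq{}{\langle f,g\rangle}{\pi_l^R} = \rlmor{\langle f,g\rangle};\pi_r;\pi_l = \langle f,g\rangle ; \pi_l = f$ by $\rlmor{h};\pi_r = h$. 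The comonoid laws for $\dmor{A}^R = \toenv{R}{\dmor{A}}$ and $\tmor{A}^R = \toenv{R}{\tmor{A}}$ transport via the strict functor, and the condition $\rseq{}{f}{\dmor{B}^R} = \rseq{}{\dmor{A}^R}{(f\otimes f)}$ for pure $f$ follows from $\rlmor{f};\pi_r = f$ on the left and the corresponding $\mc{C}_\bot$-comonoid identity on the right.

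The main obstacle I expect is \emph{centrality}: the Freyd definition requires pure morphisms to be central, so I must show every $f \in \mc{C}_\bot(R\otimes A, B)$ is central in $\envcom{\mc{C}}{R}$, and likewise that the structural isos are central. Expanding the two legs of the sliding square $\rseq{}{\envtn{R}{f}{C}}{\envtn{R}{B}{g}} = \rseq{}{\envtn{R}{A}{g}}{\envtn{R}{f}{D}}$ into $\mc{C}$ produces composites built from $\dmor{R}$, associators, symmetries, and $f$ — all central (pure) in $\mc{C}$ — wrapped around the one possibly-impure morphism $g$. The equality then follows from $\mc{C}$-centrality of these pure pieces together with monoidal coherence; this is exactly the kind of rearrangement that is painful to write equationally but transparent as an isotopy of a string diagram in which $g$ is the only box carrying the state wire. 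Once centrality is secured, the tensor of two pure morphisms is well-defined and the remaining product and comonoid bookkeeping closes the proof.
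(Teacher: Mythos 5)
Your overall skeleton is the same as the paper's: the paper likewise begins by checking that $\toenv{R}{\cdot}$ strictly preserves whiskering (its computation $\toenv{R}{f\otimes A} = \pi_r ; f\otimes A = \alpha^{-1};(\pi_r;f)\otimes A = \envtn{R}{\toenv{R}{f}}{A}$ is exactly yours), invokes functoriality to dispense with the closed coherence equations, proves centrality of pure morphisms by the sliding/string-diagram argument you describe, and verifies terminality of $I$ and the diagonal axioms for the pure subcategory. Your packaging of the pure subcategory as the co-Kleisli category of the product comonad $R\otimes -$ on the cartesian category $\mc{C}_\bot$ is a nice observation the paper does not make explicit, though it verifies the same facts directly.

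There is, however, one genuine gap: you file naturality of $\alpha^R,\lambda^R,\rho^R,\sigma^R$ under what ``transports for free'' from strictness of $\toenv{R}{\cdot}$. That transport argument is sound only for equations whose constituent morphisms all lie in the image of the functor (triangle, pentagon, hexagon). Naturality in $\envcom{\mc{C}}{R}$ quantifies over \emph{all} co-Kleisli morphisms $f : R\otimes A \to A'$, and $\toenv{R}{\cdot}$, while identity-on-objects, is far from full, so naturality cannot be inherited by functoriality of any kind. It must instead be proved by unfolding co-Kleisli composition and applying $\mc{C}$-naturality to the arbitrary, possibly impure, morphism $f$ itself; e.g.\ for the unitor,
\begin{equation*}
  \rseq{}{f\otimes_R\mb{1}}{\lambda^R_{A'}}
  = \alpha^{-1};(f\otimes\mb{1});\lambda_{A'}
  = \alpha^{-1};\lambda_{R\otimes A};f
  = (R\otimes\lambda_A);f
  = \rlmor{\toenv{R}{\lambda_A}};f
  = \rseq{}{\lambda^R_A}{f},
\end{equation*}
using coherence and the identity $\rlmor{\toenv{R}{h}} = R\otimes h$. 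These unfolding computations---for the associator they are exactly the three string diagrams in the paper's Figure on associator naturality---are the bulk of the paper's proof, and your phrase ``reduces similarly'' hides them entirely. The fix is easy with tools you already deploy: treat naturality precisely as you treat centrality, unfolding both sides into $\mc{C}$ and rearranging the pure scaffolding around the single impure box. But as written, the claim that strictness alone yields the premonoidal axioms is incorrect, and it miscategorizes the hardest part of the lemma as the easiest.
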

\begin{proof}
  We have that
  \begin{equation}
    \begin{aligned}
    \toenv{R}{f \otimes A} &= \pi_r ; f \otimes A = \alpha^{-1} ; (\pi_r ; f) \otimes A 
                            = \toenv{R}{f} \otimes_R A \\
    \toenv{R}{A \otimes f} &= \pi_r ; A \otimes f 
                            = R \otimes \sigma ; \alpha^{-1} ; (\pi_r ; f) \otimes A ; \sigma
                            = A \otimes_R \toenv{R}{f}
    \end{aligned} 
  \end{equation}
  and hence that $\ms{env}_R$ preserves products. Since $\ms{env}_R$ is a functor, to check we
  indeed have a Freyd category, it suffices to show that:
  \begin{itemize}
    \item $\alpha^R_{A, B, C}$ is natural: we have that, given 
    $f: R \otimes A \to A'$, $g : R \otimes B \to B'$, $h : R \otimes C \to C'$
    \begin{align}
      \rseq{}{(f \otimes_R B) \otimes_R C}{\alpha_{A', B, C}} 
        &= \rseq{}{\alpha_{A, B, C}}{f \otimes_R (B \otimes C)} \label{eqn:env-a-nat-1} \\
      \rseq{}{(A \otimes_R g) \otimes C}{\alpha_{A, B', C}}
        &= \rseq{}{\alpha_{A, B, C}}{A \otimes_R (g \otimes_R C)} \label{eqn:env-a-nat-2} \\
      \rseq{}{(A \otimes B) \otimes_R h}{\alpha_{A, B, C'}}
        &= \rseq{}{\alpha_{A, B, C}}{A \otimes_R (B \otimes_R h)} \label{eqn:env-a-nat-3} \\
    \end{align}
    since, for each equation, both sides are equal to the same string diagram in
    Figure~\ref{fig:env-a-nat-1}, \ref{fig:env-a-nat-2}, and \ref{fig:env-a-nat-3}
    respectively.
    \item $\lambda^R_{A}$ is natural: we have that, for $f : R \otimes A \to A'$,
    \begin{equation}
      \begin{aligned}
      \rseq{}{f \otimes_R \mb{1}}{\lambda^R_{A'}}
      &= \dmor{R} ; \alpha ; R \otimes (\alpha^{-1} ; f \otimes \mb{1}) ; \pi_r ; \pi_l \\
      &= R \otimes \pi_l ; f \\
      &= \dmor{R} ; \alpha ; R \otimes (\pi_r ; \pi_l) ; f
      &= \rseq{}{\lambda^R_{A}}{f}
      \end{aligned}
    \end{equation}
    In general, we note that, for all $g : R \otimes A \to A' \otimes \mb{1}$, 
    $\rseq{}{g}{\lambda^R} = g ; \pi_l$.
    \item $\rho^R_{A}$ is natural:
    \begin{equation}
      \begin{aligned}
      \rseq{}{\mb{1} \otimes_R f}{\rho^R_{A'}}
      & = \dmor{R} \otimes (\mb{1} \otimes A) ; \alpha 
        ; R \otimes (R \otimes \sigma ; \alpha^{-1} ; f \otimes \mb{1} ; \sigma) 
        ; \pi_r ; \pi_r \\
      & = R \otimes \pi_r ; f \\
      & = \dmor{R} ; \alpha ; R \otimes (\pi_r ; \pi_r) ; f
      & = \rseq{}{\rho^R_{A}}{f}
      \end{aligned}
    \end{equation}
    In general, we note that, for all $g : R \otimes A \to \mb{1} \otimes A'$, 
    $\rseq{}{g}{\rho^R} = g ; \pi_r$.
    \item $\sigma^R_{A, B}$ is natural: given $f : R \otimes A \to A'$ and $g : R \otimes B \to B'$, 
    we have that
    \begin{equation}
      \begin{aligned}
      \rseq{}{f \otimes_R B}{\sigma^R_{A', B}}  
        & = \dmor{R} \otimes (A \otimes B) ; \alpha 
          ; R \otimes (\alpha^{-1} ; f \otimes B) ; \pi_r ; \sigma \\
        & = \alpha^{-1} ; f \otimes B ; \sigma \\
        & = R \otimes \sigma ; R \otimes \sigma ; f \otimes B ; \sigma \\
        & = \dmor{R} \otimes (B \otimes A) ; R \otimes (\pi_r ; \sigma) 
          ; R \otimes \sigma ; \alpha^{-1} ; f \otimes B ; \sigma
        & = \rseq{}{\sigma^R_{A, B}}{B \otimes_R f}
      \end{aligned}
    \end{equation}
    and
    \begin{equation}
      \begin{aligned}
      \rseq{}{A \otimes g}{\sigma^R_{A, B'}} 
        &= \dmor{R} \otimes (A \otimes B) ; \alpha 
        ; R \otimes (R \otimes \sigma ; \alpha^{-1} ; g \otimes A ; \sigma)
        ; \pi_r ; \sigma \\
        &= R \otimes \sigma ; \alpha^{-1} ; g \otimes A ; \cancel{\sigma ; \sigma} \\
        &= \dmor{R} \otimes (A \otimes B) ; \alpha ; R \otimes (\pi_r ; \sigma) 
          ; \alpha^{-1} ; g \otimes A
        &= \rseq{}{\sigma^R_{A, B}}{g \otimes_R A}
      \end{aligned}
    \end{equation}
    \item Pure morphisms are central: given $f : R \otimes A \to A'$ and $g : R \otimes B \to B'$
    pure, we have that
    \begin{equation}
      \begin{aligned}
      \rseq{}{f \otimes_R B}{A' \otimes_R g}
      & = \Delta_R ; \alpha ; R \otimes (\alpha^{-1} ; f \otimes B ; \sigma) 
        ; \alpha^{-1} ; g \otimes A' ; \sigma \\
      & = \Delta_R ; \alpha ; R \otimes (R \otimes \sigma ; \alpha^{-1} ; g \otimes A ; \sigma) 
      ; \alpha^{-1} ; f \otimes B' \\
      & = \rseq{}{A \otimes_R g}{f \otimes_R B'}
      \end{aligned}
    \end{equation}
    since both sides correspond to the diagram in Figure~\ref{fig:env-slide}. We will write this as $f
    \otimes_R g$.
    \item $!^R_A$ is terminal for pure morphisms: yes, since $!^R_A$ is just $!_{R \otimes A}$ which
    is terminal in $\mc{C}$
    \item $\Delta^R_A$ duplicates pure morphisms: we have
    \begin{equation}
      \begin{aligned}
      \rseq{}{f}{\Delta^R_B} 
        & = \dmor{R} ; \alpha ; R \otimes f ; \pi_r ; \dmor{A'} \\
        & = f ; \dmor{A'} = \dmor{R \otimes A} ; f \otimes f \\
        & = \dmor{R} ; \alpha ; R \otimes (\alpha^{-1} ; f \otimes A') 
          ; R \otimes \sigma ; \alpha^{-1} ; f \otimes A' ; \sigma
        & = \rseq{}{\Delta^R_A}{f \otimes_R f}
      \end{aligned}
    \end{equation}
  \end{itemize}
\end{proof}

\begin{figure}
  \begin{subfigure}{0.3\textwidth}
    \begin{tikzpicture}
      \node[] (E) at (0, 0) {};
      \node[] (R) at (0.5, 0) {$R$};
      \node[] (A) at (1, 0) {$A$};
      \node[] (B) at (1.5, 0) {$B$};
      \node[] (C) at (2, 0) {$C$};
      \node[box=3/0/2/0] (f) at (0.75, -1.75) {\quad f \quad};
      \node[] (E') at (0.5, -3.5) {};
      \node[] (A') at (1.5, -3.5) {$A'$};
      \node[] (B') at (2, -3.5) {$B$};
      \node[] (C') at (2.5, -3.5) {$C$};
      \wires{
        R = { south = f.north.2 },
        A = { south = f.north.3 },
        B = { south = B'.north },
        C = { south = C'.north },
        f = { south.2 = A'.north },
      }{}
      \wires[red]{
        E = { south = f.north.1 },
        f = { south.1 = E'.north }
      }{}
    \end{tikzpicture}
    \caption{Equation~\ref{eqn:env-a-nat-1}}
    \label{fig:env-a-nat-1}
  \end{subfigure}%
  \begin{subfigure}{0.3\textwidth}
    \begin{tikzpicture}
      \node[] (E) at (0.5, 0) {};
      \node[] (R) at (1, 0) {$R$};
      \node[] (A) at (1.5, 0) {$A$};
      \node[] (B) at (2, 0) {$B$};
      \node[] (C) at (2.5, 0) {$C$};
      \node[box=3/0/2/0] (g) at (2, -2) {\quad g \quad};
      \node[] (E') at (0.5, -3.5) {};
      \node[] (A') at (1, -3.5) {$A$};
      \node[] (B') at (2, -3.5) {$B'$};
      \node[] (C') at (3, -3.5) {$C$};
      \wires{
        R = { south = g.north.2 },
        A = { south = A'.north },
        B = { south = g.north.3 },
        C = { south = C'.north },
        g = { south.2 = B'.north },
      }{}
      \wires[red]{
        E = { south = g.north.1 },
        g = { south.1 = E'.north }
      }{}
    \end{tikzpicture}
    \caption{Equation~\ref{eqn:env-a-nat-2}}
    \label{fig:env-a-nat-2}
  \end{subfigure}%
  \begin{subfigure}{0.3\textwidth}
    \begin{tikzpicture}
      \node[] (E) at (0.25, 0) {};
      \node[] (R) at (1, 0) {$R$};
      \node[] (A) at (1.5, 0) {$A$};
      \node[] (B) at (2, 0) {$B$};
      \node[] (C) at (2.5, 0) {$C$};
      \node[box=3/0/2/0] (h) at (2.5, -2) {\quad h \quad};
      \node[] (E') at (0.5, -3.5) {};
      \node[] (A') at (1, -3.5) {$A$};
      \node[] (B') at (1.5, -3.5) {$B$};
      \node[] (C') at (2, -3.5) {$C'$};
      \wires{
        R = { south = h.north.2 },
        A = { south = A'.north },
        B = { south = B'.north },
        C = { south = h.north.3 },
        h = { south.2 = C'.north },
      }{}
      \wires[red]{
        E = { south = h.north.1 },
        h = { south.1 = E'.north }
      }{}
    \end{tikzpicture}
    \caption{Equation~\ref{eqn:env-a-nat-3}}
    \label{fig:env-a-nat-3}
  \end{subfigure}
  \caption{Naturality of the associator in the co-Kleisli category of the environment comonad}
  \Description{}
  \label{fig:env-a-nat}
\end{figure} %
\begin{figure}
  \begin{tikzpicture}
    \node[] (R) at (-0.5, 0) {$R$};
    \node[dot] (dR) at (0, -1) {};
    \node[] (A) at (1, 0) {$A$};
    \node[] (B) at (2, 0) {$B$};
    \node[box=2/0/1/0] (f) at (1, -1.75) {\quad f \quad};
    \node[box=2/0/1/0] (g) at (2, -3) {\quad g \quad};
    \node[] (A') at (0, -4) {$A'$};
    \node[] (B') at (2, -4) {$B'$};
    \coordinate[] (cont) at (0.5, -2.5) {};
    \wires{
      R = { south = dR.north },
      dR = { east = f.north.1, west = cont.west },
      cont = { east = g.north.1 },
      A  = { south = f.north.2 },
      B = { south = g.north.2 },
      f  = { south = A' },
      g  = { south = B' },
    }{}

    \node[] (eq) at (3, -2) {$=$};

    \node[] (R2) at (3.5, 0) {$R$};
    \node[dot] (dR2) at (4, -1) {};
    \node[] (A2) at (5, 0) {$A$};
    \node[] (B2) at (6, 0) {$B$};
    \node[box=2/0/1/0] (f2) at (4.5, -2.5) {\quad f \quad};
    \node[box=2/0/1/0] (g2) at (6, -1.75) {\quad g \quad};
    \node[] (A2') at (4, -4) {$A'$};
    \node[] (B2') at (5, -4) {$B'$};
    \wires{
      R2 = { south = dR2.north },
      dR2 = { west = f2.north.1, east = g2.north.1 },
      A2  = { south = f2.north.2 },
      B2 = { south = g2.north.2 },
      f2  = { south = A2' },
      g2  = { south = B2' },
    }{}
  \end{tikzpicture}
  \caption{Centrality of pure morphisms in the environment comonad's co-Kleisli category}
  \Description{}
  \label{fig:env-slide}
\end{figure} %
Now, assume $\mc{C}$ is distributive. We wish to show that $A + B$ is a coproduct in
$\envcom{\mc{C}}{R}$, and furthermore that $\envcom{\mc{C}}{R}$ is distributive; note that even the
former may not be the case without distributivity! We proceed as follows:
\begin{lemma}
  If $\mc{C}$ is a distributive Freyd category, then $\envcom{\mc{C}}{R}$ is also distributive Freyd
\end{lemma}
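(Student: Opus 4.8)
The plan is to transport the distributive structure of $\mc{C}$ across the identity-on-objects functor $\toenv{R}{\cdot}$, which the preceding lemma already shows is strictly premonoidal. First I would equip $\envcom{\mc{C}}{R}$ with coproducts: take the injections to be $\iota_l^R := \toenv{R}{\iota_l}$ and $\iota_r^R := \toenv{R}{\iota_r}$, and, given co-Kleisli morphisms $f : R \otimes A \to C$ and $g : R \otimes B \to C$, define their copairing to be the $\mc{C}$-morphism
\begin{equation}
  [f, g]^R := \delta^{-1} ; [f, g] : R \otimes (A + B) \to C.
\end{equation}
This is exactly where distributivity is essential: coproducts need not exist in a co-Kleisli category in general, and the inverse distributor $\delta^{-1}$ is precisely what lets us rewrite $R \otimes (A + B)$ as $R \otimes A + R \otimes B$ so that the universal property of the coproduct in $\mc{C}$ can be applied.

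To check that $A + B$ is genuinely a coproduct, I would verify the $\beta$-laws using the identities $\rseq{R}{f}{g} = \rlmor{f} ; g$ and $\rlmor{\toenv{R}{h}} = R \otimes h$. For the left injection this gives $\rseq{R}{\iota_l^R}{[f,g]^R} = (R \otimes \iota_l) ; \delta^{-1} ; [f, g] = \iota_l ; [f, g] = f$, where I use the distributor $\beta$-law $R \otimes \iota_l ; \delta^{-1} = \iota_l$ (a consequence of $\iota_l ; \delta = R \otimes \iota_l$), and symmetrically for the right injection; uniqueness of the copairing follows because $\delta^{-1}$ is an isomorphism and $[\cdot, \cdot]$ is unique in $\mc{C}$. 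Since $\iota_l$, $\iota_r$, $\delta^{-1}$, and $\pi_r$ are all pure, and the pure subcategory $\mc{C}_\bot$ is closed under copairing, these injections and copairings land in ${\envcom{\mc{C}}{R}}_\bot$, so the same object $A + B$ is a coproduct in the pure subcategory as well.

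Next I would show that $\toenv{R}{\cdot}$ preserves these coproducts strictly, i.e. $\toenv{R}{[f,g]} = [\toenv{R}{f}, \toenv{R}{g}]^R$. Unwinding the definitions, this reduces to the equation $\pi_r ; [f, g] = \delta^{-1} ; [\pi_r ; f, \pi_r ; g]$, which I would prove by precomposing both sides with the isomorphism $\delta = [R \otimes \iota_l, R \otimes \iota_r]$ and appealing to naturality of $\pi_r$ (so that $R \otimes \iota_l ; \pi_r = \pi_r ; \iota_l$). Together with the definitional equality $\iota_l^R = \toenv{R}{\iota_l}$, this makes $\toenv{R}{\cdot}$ an identity-on-objects functor that strictly preserves both the premonoidal product and finite coproducts.

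Distributivity is then essentially formal. The distributor $\delta^R$ of $\envcom{\mc{C}}{R}$ is assembled from its tensor, injections, and copairing in exactly the same shape as $\delta$ is assembled in $\mc{C}$; because $\toenv{R}{\cdot}$ preserves all of this structure strictly and is identity on objects, it carries $\delta \mapsto \delta^R$ and $\delta^{-1} \mapsto \toenv{R}{\delta^{-1}}$. Functoriality then yields $\delta^R ; \toenv{R}{\delta^{-1}} = \toenv{R}{\delta ; \delta^{-1}} = \toenv{R}{\ms{id}} = \ms{id}$ and likewise on the other side, so $\delta^R$ is invertible; as every morphism involved is pure, this witnesses distributivity of ${\envcom{\mc{C}}{R}}_\bot$, hence of $\envcom{\mc{C}}{R}$ as a Freyd category. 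I expect the one genuinely delicate step to be the coproduct-preservation identity above (equivalently, the very construction of the copairing): this is the unique point at which distributivity of $\mc{C}$ is actually consumed, and it requires carefully juggling the $\rlmor{\cdot}$ and $\toenv{R}{\cdot}$ sugar, the distributor $\beta$-laws, and naturality of the projections. Everything downstream is bookkeeping once this identity is in hand.
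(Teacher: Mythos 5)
Your proposal is correct and follows essentially the same route as the paper: the same coproduct structure on $\envcom{\mc{C}}{R}$ (injections $\toenv{R}{\iota_l}, \toenv{R}{\iota_r}$ and copairing $\delta^{-1};[f,g]$, with the $\beta$-laws checked via $\rlmor{\toenv{R}{h}} = R \otimes h$ and uniqueness via invertibility of $\delta$), followed by showing $\toenv{R}{\cdot}$ preserves this structure so that $\delta^R = \toenv{R}{\delta}$ and invertibility transfers by functoriality, exactly as in the paper's computation $\toenv{R}{\delta} = \delta^{-1};[\pi_r;\iota_l, \pi_r;\iota_r] = \delta^R$. Your explicit treatment of the pure subcategory ${\envcom{\mc{C}}{R}}_\bot$ is a welcome refinement of a point the paper leaves implicit, but it does not change the substance of the argument.
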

\begin{proof}
  Given $f : R \otimes A \to B$ and $g : R \otimes A \to C$, we may define the coproduct and
  injections
  \begin{equation}
    \envcop{R}{f}{g} := \delta^{-1} ; [f, g] : R \otimes A \to B + C \qquad
    \envinl{R} := \pi_r ; \iota_r : R \otimes A \to A + B \qquad
    \envinr{R} := \pi_r ; \iota_r : R \otimes B \to A + B
  \end{equation}
  To verify this is indeed a coproduct, we can check that
  \begin{equation}
    \begin{aligned}
    \rseq{}{\envinl{R}}{\envcop{R}{f}{g}} 
      &= \dmor{R} ; \alpha ; R \otimes (\pi_r ; \iota_r) ; \delta^{-1} ; [f, g]
      = R \otimes \iota_r ; \delta^{-1} ; [f, g]
      = f \\
    \rseq{}{\envinr{R}}{\envcop{R}{f}{g}} 
    &= \dmor{R} ; \alpha ; R \otimes (\pi_r ; \iota_l) ; \delta^{-1} ; [f, g]
    = R \otimes \iota_l ; \delta^{-1} ; [f, g]
    = g
    \end{aligned}
  \end{equation}
  This morphism is obviously unique, since we have for all $h : R \otimes (A + B) \to C$
  \begin{equation}
    \envcop{R}{\rseq{}{\envinl{R}}{h}}{\rseq{}{\envinr{R}}{h}}
    = \delta^{-1} 
    ; [\dmor{R} ; \alpha ; R \otimes \iota_l ; \pi_r ; h,
      \dmor{R} ; \alpha ; R \otimes \iota_r ; \pi_r ; h]
    = \cancel{\delta^{-1} ; [R \otimes \iota_r, R \otimes \iota_l]} ; h
  \end{equation}
  To show that $\envcom{R}{\mc{C}}$ is indeed distributive, we need $\delta^{R-1} :=
  \toenv{R}{\delta^{-1}} = \pi_r ; \delta^R$ to be the inverse to $\delta^R := \envcop{R}{A \otimes
  \envinl{R}}{A \otimes \envinr{R}}$, which can easily be derived from the functoriality of
  $\toenv{R}{\cdot}$, since
  \begin{equation}
    \toenv{R}{\delta} = \pi_r ; [\iota_l ; \iota_r] 
                      = \delta^{-1} ; [\pi_r ; \iota_l, \pi_r \iota_r]
                      = \delta^R
  \end{equation}
\end{proof}
Note in particular that this allows us to define sums
\begin{equation}
  \begin{aligned}
    f +_R g 
    & = \envcop{R}{\rseq{}{f}{\envinl{R}}}{\rseq{}{g}{\envinr{R}}} \\
    & = \delta^{-1} ; [
      \dmor{R} \otimes (A + B) ; \alpha ; R \otimes f ; \pi_r ; \iota_l,
      \dmor{R} \otimes (A + B) ; \alpha ; R \otimes g ; \pi_r ; \iota_r
    ]
    & = \delta^{-1} ; f + g
  \end{aligned}
\end{equation}
Our final task is to show that, if $\mc{C}$ is a strong Elgot category, then so is
$\envcom{R}{\mc{C}}$, and, in particular, with fixpoint operator $\rfix{f}$. Note that, just like we
needed distributivity to have coproducts at all, we will need strength to have an Elgot structure.
We begin by stating some generally useful properties of $\rcase{f}$ and $\rfix{f}$. In
particular, we have that: For $f: R \otimes A \to B + C$:
\begin{itemize}
  \item Given $h : R \otimes X \to A$, we have
  \begin{equation}
    \begin{aligned}
    \rlmor{h} ; \rcase{f} 
    & = \dmor{R} \otimes X ; \alpha ; R \otimes h 
      ; \dmor{R} \otimes A ; \alpha ; R \otimes f ; \delta^{-1} \\
    & = \dmor{R} \otimes X ; \alpha 
      ; R \otimes (\dmor{R} \otimes X ; \alpha ; R \otimes h ; f) ; \delta^{-1} \\
    & = \rcase{\rlmor{h} ; f} & = \rcase{\rseq{}{h}{f}}
    \end{aligned}
  \end{equation}
  \item Given $g : R \otimes B \to X$, $h : R \otimes C \to Y$, we have
  \begin{equation}
    \begin{aligned}
    \rcase{\rseq{}{f}{g +_R h}}
    & = \rlmor{\rseq{}{f}{g +_R h}} ; \delta^{-1} \\
    & = \rlmor{f} ; \rlmor{g +_R h} ; \delta^{-1} \\
    & = \rlmor{f} 
      ; \dmor{R} \otimes (B + C) ; \alpha ; R \otimes (\delta^{-1} ; g + h) ; \delta^{-1} \\
    & = \rlmor{f} 
      ; \dmor{R} \otimes (B + C) ; \alpha ; R \otimes \delta^{-1} ; \delta^{-1} 
      ; (R \otimes g) + (R \otimes h) \\
    & = \rlmor{f} ; \delta^{-1} ; \rlmor{g} + \rlmor{h} \\
    & = \rcase{f} ; \rlmor{g} + \rlmor{h} \\
    & = \rlmor{f} ; \rlmor{g} +_R \rlmor{h}
    \end{aligned}
  \end{equation}
\end{itemize}
Similarly, for $f : R \otimes A \to B + A$:
\begin{itemize}
  \item Given $h : R' \to_\bot R$, $h \otimes A ; \rfix{f} = \rfix{h \otimes A ; f}$: we have that
  \begin{equation}
    \begin{aligned}
    h \otimes A ; \rcase{f} 
      & = h \otimes A ; \dmor{R} \otimes A ; \alpha ; R \otimes f ; \delta^{-1} \\
      & = \dmor{R'} \otimes A ; \alpha ; h \otimes (h \otimes A ; f) ; \delta^{-1} \\
      & = \dmor{R'} \otimes A ; \alpha ; R \otimes (h \otimes A ; f) ; \delta^{-1} 
        ; h \otimes B + h \otimes A \\
      & = \rcase{h \otimes A ; f} ; h \otimes B + h \otimes A
    \end{aligned}
  \end{equation}
  and hence by uniformity that
  \begin{equation}
    \begin{aligned}
    h \otimes A ; \rfix{f} 
      & = h \otimes A ; (\rcase{f})^\dagger ; \pi_r \\
      & = (\rcase{h \otimes A ; f} ; h \otimes B + R' \otimes A)^\dagger ; \pi_r \\
      & = (\rcase{h \otimes A ; f})^\dagger ; h \otimes B ; \pi_r 
      & = \rfix{h \otimes A ; f}
    \end{aligned}
  \end{equation}
  \item $\rlmor{\rfix{f}} = (\rcase{f})^\dagger$: we have that
  \begin{equation}
    \begin{aligned}
      & (\dmor{R} \otimes A ; \alpha)
        ; (R \otimes \rcase{f} 
        ; \delta^{-1} 
        ; R \otimes \pi_r + R \otimes (R \otimes A)) \\
      & = \dmor{R} \otimes A ; \alpha
        ; R \otimes (\lmor{f} ; \pi_l \otimes (B + A) ; \delta^{-1}) 
        ; \delta^{-1} 
        ; R \otimes \pi_r + R \otimes (R \otimes A) \\
      & = \dmor{R} \otimes A ; \alpha
        ; R \otimes f
        ; \dmor{R} \otimes (B + A) ; \alpha 
        ; R \otimes \delta^{-1} ; \delta^{-1}
        ; R \otimes \pi_r + R \otimes (R \otimes A) \\
      & = \dmor{R} \otimes A ; \alpha
        ; R \otimes f
        ; \delta^{-1}
        ; (\dmor{R} \otimes A ; \alpha ; R \otimes \pi_r) + (\dmor{R} \otimes A ; \alpha) \\
      & = \dmor{R \otimes A}
        ; (R \otimes A) \otimes f
        ; \pi_l \otimes (B + A)
        ; \delta^{-1}
        ; R \otimes A + (\dmor{R} \otimes A ; \alpha) \\
      &= \rcase{f} ; R \otimes A + (\dmor{R} \otimes A ; \alpha)
    \end{aligned}
  \end{equation}
  It follows by uniformity and strength that
  \begin{equation}
    \begin{aligned}
    \rlmor{\rfix{f}} = \lmor{\rfix{f}} ; \pi_l \otimes B 
      &= \dmor{R \otimes A} ; (R \otimes A) \otimes \rfix{f} ; \pi_l \otimes B \\
      &= \dmor{R} \otimes A ; \alpha ; R \otimes \rfix{f} \\
      &= \dmor{R} \otimes A ; \alpha ; R \otimes ((\rcase{f})^\dagger ; \pi_r) \\
      &= \dmor{R} \otimes A ; \alpha
        ; (R \otimes \rcase{f} ; \delta^{-1} ; R \otimes \pi_r + R \otimes (R \otimes A))^\dagger \\
      &= (\rcase{f})^\dagger
    \end{aligned}
  \end{equation}
  as desired.
\end{itemize}
We also state some generally useful properties of $\rlmor{f}$:
\begin{itemize}
  \item For all $f : R \otimes A \to C$, $g : R \otimes B \to C$, 
  \begin{equation}
    \begin{aligned}
        \rlmor{\delta^{-1} ; [f, g]} 
        & = \dmor{R} \otimes (A + B) ; \alpha ; R \otimes (\delta^{-1} ; [f, g]) \\
        & = \delta^{-1} ; [\dmor{R} \otimes A ; \alpha ; f, \dmor{R} \otimes B ; \alpha ; g] \\
        & = \delta^{-1} ; [\rlmor{f}, \rlmor{g}]
    \end{aligned}
  \end{equation}
\end{itemize}
We may now state our desired result as follows:
\begin{lemma}
  If $\mc{C}$ is a premonoidal strong Elgot category, then so is $\envcom{\mc{C}}{R}$ with
  fixpoint operator $\rfix{f}$.
\end{lemma}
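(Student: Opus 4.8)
The target is to verify the remaining clauses in the definition of a strong Elgot category: that $\rfix{\cdot}$ is a Conway iteration operator, that it is uniform over the pure subcategory ${\envcom{\mc{C}}{R}}_\bot$, and that it is strong with respect to the co-Kleisli distributor $\delta^{R-1}$. Since $\envcom{\mc{C}}{R}$ is already known to be a distributive Freyd category, ${\envcom{\mc{C}}{R}}_\bot$ is co-Cartesian, so the proposition reducing dinaturality to naturality, codiagonal, and uniformity over a co-Cartesian subcategory (Lemma 31 of \citet{goncharov-18-guarded-traced}) applies; it therefore suffices to establish the \emph{fixpoint} identity, \emph{naturality}, \emph{codiagonal}, \emph{uniformity}, and \emph{strength}.

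The uniform method throughout is to rewrite each co-Kleisli equation — recalling that composition is $\rseq{R}{\cdot}{\cdot} = \rlmor{\cdot};\cdot$, copairing is $\delta^{-1};[\cdot,\cdot]$, and the identity is $\pi_r$ — into a statement about $(-)^\dagger$ in $\mc{C}$, which is then discharged by the corresponding axiom of $\mc{C}$. The bridge identities that make this work are $\rfix{f} = (\rcase{f})^\dagger;\pi_r$ and $\rlmor{\rfix{f}} = (\rcase{f})^\dagger$, both already proved, together with the computation rules $\rcase{\rseq{}{h}{f}} = \rlmor{h};\rcase{f}$ and $\rcase{\rseq{}{f}{g +_R h}} = \rcase{f};\rlmor{g}+\rlmor{h}$. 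For the fixpoint identity, expanding the co-Kleisli unfolding $\rseq{R}{f}{\envcop{R}{\pi_r}{\rfix{f}}} = \rcase{f};[\pi_r,\rfix{f}]$ and applying the $\mc{C}$-fixpoint property of $(\rcase{f})^\dagger$ returns exactly $(\rcase{f})^\dagger;\pi_r = \rfix{f}$. Naturality is equally direct: for $f : A \to B + A$ and $g : B \to C$, the second computation rule (with the co-Kleisli identity $\pi_r$ in place of $h$) turns $\rcase{\rseq{R}{f}{(g +_R \pi_r)}}$ into $\rcase{f};\rlmor{g}+\ms{id}$, since $\rlmor{\pi_r}=\ms{id}$; then $\mc{C}$-naturality gives $(\rcase{f})^\dagger;\rlmor{g}$, and postcomposition with $\pi_r$ together with $\rlmor{g};\pi_r = g$ recovers $\rseq{R}{\rfix{f}}{g}$.

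Codiagonal and strength are where the work concentrates. For codiagonal I would expand the nested fixpoint of $f : A \to (B+A)+A$ via $\rlmor{\rfix{f}} = (\rcase{f})^\dagger$ to get $\rcase{\rfix{f}} = (\rcase{f})^\dagger;\delta^{-1}$, reducing $\rfix{\rfix{f}}$ to $((\rcase{f})^\dagger;\delta^{-1})^\dagger;\pi_r$; the task is then to reshape this, using the interaction of $\delta^{-1}$ with the ambient codiagonal law, into the single ambient fixpoint corresponding to $\rfix{\rseq{R}{f}{\envcop{R}{\pi_r}{\envinr{R}}}}$. For strength I would take the established identity $h \otimes A ; \rfix{f} = \rfix{h \otimes A ; f}$ as a template and reduce the co-Kleisli strength equation $(\envtn{R}{D}{f};\delta^{R-1})^{\dagger_R} = \envtn{R}{D}{\rfix{f}}$ to the strength of $(-)^\dagger$ in $\mc{C}$ applied to $\rcase{f}$, tracking the several distributors. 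Uniformity proceeds along the same lines: purity of the uniformizing co-Kleisli morphism $h$ makes $\rlmor{h}$ a pure morphism of $\mc{C}$, so the translated hypothesis feeds directly into $\mc{C}_\bot$-uniformity of $(-)^\dagger$.

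I expect codiagonal to be the main obstacle, as it is the only axiom forcing a genuine interaction between \emph{two} layers of the distributor and \emph{two} nested applications of $(-)^\dagger$: the reassociation of $R \otimes ((B+A)+A)$ and the bookkeeping needed to present $((\rcase{f})^\dagger;\delta^{-1})^\dagger$ in precisely the shape demanded by the ambient codiagonal law are delicate, and aligning the coproduct injections is where a careful string-diagram computation will be needed. Strength is of comparable but somewhat lesser difficulty, while the remaining clauses are routine once the bridge identities are in hand.
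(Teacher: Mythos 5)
Your proposal is correct, and its skeleton coincides with the paper's: both reduce each co-Kleisli axiom to an ambient one using the bridge identities $\rfix{f} = (\rcase{f})^\dagger ; \pi_r$ and $\rlmor{\rfix{f}} = (\rcase{f})^\dagger$ together with the computation rules for $\rcase{\cdot}$, and both dispose of dinaturality via co-Cartesianness of the pure subcategory and Lemma 31 of \citet{goncharov-18-guarded-traced}; your fixpoint and uniformity arguments are essentially identical to the paper's. Where you genuinely diverge is in naturality and codiagonal: the paper expands both sides of each of these equations from the raw definitions and closes \emph{both} items with a final appeal to ambient \emph{uniformity}, whereas you exploit the already-established identity $\rlmor{\rfix{f}} = (\rcase{f})^\dagger$ to close naturality using only ambient naturality, and codiagonal using ambient naturality plus ambient codiagonal (pull $\delta^{-1}$ inside the dagger by naturality, then collapse the nested daggers); this route does go through --- both sides of the co-Kleisli codiagonal law reduce to $(\rcase{f} ; [\delta^{-1}, \iota_r])^\dagger ; \pi_r$ --- and is arguably tidier, since it reuses the preparatory work instead of re-deriving it inline, at the cost of making those preparatory identities (whose proofs themselves invoke ambient uniformity and strength) load-bearing. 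Finally, you explicitly verify strength of $\rfix{\cdot}$, which the paper's itemized proof in fact omits (it checks only fixpoint, naturality, codiagonal, and uniformity, even though the statement claims a \emph{strong} Elgot structure); your plan of reducing the co-Kleisli strength equation to ambient strength applied to $\rcase{f}$, modelled on the already-proved identity $h \otimes A ; \rfix{f} = \rfix{h \otimes A ; f}$, is the natural way to close that gap.
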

\begin{proof}
  We check each of the Elgot axioms as follows:
  \begin{itemize}
    \item \emph{Fixpoint:} given $f : R \otimes A \to B + A$, we have that
    \begin{equation}
      \begin{aligned}
        \rfix{f} 
        & = (\rcase{f})^\dagger ; \pi_r = (\rlmor{f} ; \delta^{-1})^\dagger ; \pi_r \\  
        & = \rlmor{f} ; \delta^{-1} 
          ; [\ms{id}_{R \otimes B} , (\rlmor{f} ; \delta^{-1})^\dagger] ; \pi_r \\
        & = \rlmor{f} ; \delta^{-1} ; [\pi_r, (\rcase{f})^\dagger ; \pi_r]
        & = \rseq{}{f}{\envcop{R}{\pi_r}{\rfix{f}}}
      \end{aligned}
    \end{equation}
    as desired.
    \item \emph{Naturality:} given $f : R \otimes A \to B + A$, $g: R \otimes B \to C$, we have that
    \begin{equation}
      \begin{aligned}
        \rfix{\rseq{}{f}{g +_R A}} 
        & = (\rlmor{\dmor{R} \otimes A ; \alpha ; R \otimes f ; \delta^{-1} ; g + \pi_r} 
            ; \delta^{-1})^\dagger 
          ; \pi_r \\
        & = (\rlmor{\dmor{R} \otimes A ; \alpha ; R \otimes f ; \delta^{-1}} 
          ; \delta^{-1}; R \otimes g + R \otimes \pi_r)^\dagger 
          ; \pi_r \\
        & = (\dmor{R} \otimes A
          ; \alpha
          ; R \otimes (\dmor{R} \otimes A ; \alpha ; R \otimes f ; \delta^{-1})
          ; \delta^{-1}; (\pi_r ; g) + R \otimes \pi_r)^\dagger \\
        & = (\dmor{R} \otimes A
          ; \alpha 
          ; R \otimes f ; \delta^{-1}
          ; (\dmor{R} \otimes B ; \alpha ; \pi_r ; g) 
          + (\dmor{R} \otimes A ; \alpha ; R \otimes \pi_r))^\dagger \\
        & = (\dmor{R} \otimes A
        ; \alpha 
        ; R \otimes f ; \delta^{-1}
        ; g 
        + R \otimes A)^\dagger \\
        & = (\dmor{R} \otimes A
        ; \alpha 
        ; R \otimes f ; \delta^{-1})^\dagger
        ; g
        = (\rcase{f})^\dagger ; g
      \end{aligned}
    \end{equation}
    On the other hand, we have that
    \begin{equation}
      \begin{aligned}
        \rseq{}{\rfix{f}}{g} 
        & = \dmor{R} \otimes A ; \alpha ; R \otimes (\rcase{f} ; \pi_r + R \otimes A)^\dagger ; g \\
        & = \dmor{R} \otimes A ; \alpha 
          ; (R \otimes (\rcase{f} ; \pi_r + R \otimes A) ; \delta^{-1})^\dagger ; g \\
        & = \dmor{R} \otimes A ; \alpha 
          ; (R \otimes 
            (\dmor{R} \otimes A ; \alpha ; R \otimes f ; \delta^{-1} ; \pi_r + R \otimes A) 
            ; \delta^{-1})^\dagger ; g \\
        & = \dmor{R} \otimes A ; \alpha 
          ; (R \otimes 
            (\dmor{R} \otimes A ; \alpha ; R \otimes f ; \delta^{-1}) 
            ; \delta^{-1} ; R \otimes \pi_r + R \otimes (R \otimes A))^\dagger ; g
      \end{aligned}
    \end{equation}
    By uniformity, it hence suffices to show that
    \begin{equation}
      \begin{aligned}
        & (\dmor{R} \otimes A ; \alpha)
          ; (R \otimes 
          (\dmor{R} \otimes A ; \alpha ; R \otimes f ; \delta^{-1}) 
          ; \delta^{-1} ; R \otimes \pi_r + R \otimes (R \otimes A)) \\
        & = \dmor{R} \otimes A ; \alpha ; R \otimes f ; \delta^{-1}
          ; (\dmor{R} \otimes B ; \alpha ; R \otimes \pi_r)
          + (\dmor{R} \otimes A ; \alpha) \\
        & = \rcase{f} ; (R \otimes B) + (\dmor{R} \otimes A ; \alpha)
      \end{aligned}
    \end{equation}
    to yield the desired result.
    \item \emph{Codiagonal:} given $f : R \otimes A \to (B + A) + A$, we have
    \begin{equation}
      \begin{aligned}
        \rfix{\rfix{f}} 
        & = (\dmor{R} \otimes A ; \alpha \
            ; R \otimes ((\rcase{f})^\dagger ; \pi_r) ; \delta^{-1})^\dagger 
          ; \pi_r \\
        & = (\dmor{R} \otimes A ; \alpha 
            ; R \otimes (\rcase{f} ; \pi_r + R \otimes A)^\dagger
            ; \delta^{-1})^\dagger 
          ; \pi_r \\
        & = (\dmor{R} \otimes A ; \alpha 
            ; (R \otimes (\rcase{f} ; \pi_r + R \otimes A) ; \delta^{-1})^\dagger
            ; \delta^{-1})^\dagger 
          ; \pi_r \\
        & = (\dmor{R} \otimes A ; \alpha 
            ; (
              R \otimes (\rcase{f} ; \pi_r + R \otimes A) ; \delta^{-1} 
                ; \delta^{-1} + R \otimes (R \otimes A))^\dagger
            )^\dagger 
          ; \pi_r \\
        & = (\dmor{R} \otimes A ; \alpha 
            ; (
              R \otimes (\rcase{f} ; \pi_r + R \otimes A) ; \delta^{-1} 
                ; \delta^{-1} + R \otimes (R \otimes A))^\dagger
              ; \pi_r + R \otimes A
            )^\dagger \\
        & = (\dmor{R} \otimes A ; \alpha 
            ; (
              R \otimes (\rcase{f} ; \pi_r + R \otimes A) ; \delta^{-1} 
                ; (\delta^{-1} ; (\pi_r + R \otimes A)) + R \otimes (R \otimes A)
              )^\dagger
            )^\dagger
      \end{aligned}
    \end{equation}
    On the other hand, we have
    \begin{equation}
      \begin{aligned}
        \rfix{\rseq{}{f}{\envcop{R}{\pi_r}{\envinl{R}}}}
        & = \rfix{
          \dmor{R} \otimes A ; \alpha ; R \otimes f ; \delta^{-1} ; [\pi_r, \pi_r ; \iota_l]
        } \\
        & = \rfix{f ; [\ms{id}, \iota_l]} \\
        & = (\dmor{R} \otimes A ; \alpha ; R \otimes (f ; [\ms{id}, \iota_l]) ; \delta^{-1})^\dagger 
          ; \pi_r \\
        & = (\dmor{R} \otimes A ; \alpha ; R \otimes f ; \delta^{-1}  
            ; \delta^{-1} + R \otimes A ; [\ms{id}, \iota_l])^\dagger 
          ; \pi_r \\
        & = ((\dmor{R} \otimes A ; \alpha ; R \otimes f ; \delta^{-1}  
        ; \delta^{-1} + R \otimes A)^\dagger)^\dagger ; \pi_r \\
        & = ((\dmor{R} \otimes A ; \alpha ; R \otimes f ; \delta^{-1}  
        ; \delta^{-1} + R \otimes A)^\dagger ; \pi_r + R \otimes A)^\dagger \\
        & = ((\dmor{R} \otimes A ; \alpha ; R \otimes f ; \delta^{-1}  
        ; \delta^{-1} + R \otimes A ; (\pi_r + R \otimes A) + R \otimes A)^\dagger)^\dagger \\
      \end{aligned}
    \end{equation}
    By uniformity, it therefore suffices to show that
    \begin{equation}
      \begin{aligned}
        & (\dmor{R} \otimes A ; \alpha )
          ; (
            R \otimes (\rcase{f} ; \pi_r + R \otimes A) ; \delta^{-1} 
              ; (\delta^{-1} ; (\pi_r + R \otimes A)) + R \otimes (R \otimes A)
          ) \\
        & = \dmor{R} \otimes A ; \alpha
          ; R \otimes (\dmor{R} \otimes A ; \alpha ; R \otimes f ; \delta^{-1} 
            ; \pi_r + R \otimes A) 
          ; \delta^{-1} ; (\delta^{-1} ; (\pi_r + R \otimes A)) + R \otimes (R \otimes A) \\
        & = \dmor{R} \otimes A ; \alpha
          ; R \otimes (\dmor{R} \otimes A ; \alpha ; R \otimes f ; \delta^{-1}) ; \\ & \qquad 
            \delta^{-1} ; R \otimes \pi_r + R \otimes (R \otimes A) 
          ; (\delta^{-1} ; (\pi_r + R \otimes A)) + R \otimes (R \otimes A) \\
        & = \dmor{R} \otimes A ; \alpha
          ; R \otimes f
          ; \delta^{-1}
          ; \delta^{-1} + (R \otimes A)
          ; (\pi_r + R \otimes A) + (\dmor{R} \otimes A ; \alpha)
      \end{aligned}
    \end{equation}
    to obtain the desired result.
    \item \emph{Uniformity:} 
    Given $f : R \otimes A \to B + A$, $g : R \otimes X \to B + X$ and $h : R \otimes X \to_\bot A$
    such that $\rseq{}{h}{f} = \rseq{}{g}{B +_R h}$, we have that
    \begin{equation}
      \begin{aligned}
      \rlmor{h} ; \rcase{f} 
      & = \rcase{\rseq{}{h}{f}} \\
      & = \rcase{\rseq{}{g}{B +_R h}} \\ 
      & = \rcase{g} ; (R \otimes B) + \rlmor{h}        
      \end{aligned}
    \end{equation}
    and hence by uniformity that
    \begin{equation}
      \rseq{}{h}{\rfix{f}} 
      = \rlmor{h} ; (\rcase{f})^\dagger ; \pi_r
      = (\rcase{g})^\dagger ; \pi_r 
      = \rfix{g}
    \end{equation}
  \end{itemize}
\end{proof}

\section{Proofs}

\subsection{Strict SSA}

\anfconversion*

\label{proof:anf-conversion}

\ssaconversion*

\begin{proof}
  We may show that $\ms{ANF}(r)$ is always in ANF and that $\ms{ANF}_{\ms{let}}(x, a, r)$ is in ANF
  if $r$ is by a straightforward induction. To prove the rest of the lemma, we begin proving the
  correctness of $\ms{ANF}_{\ms{let}}(x, a, r)$ by induction on expressions $a$:
  \begin{itemize}
    \item If $\ahasty{\Gamma}{\epsilon}{a}{A}$ is atomic, we are done; otherwise
    \item If $a = f\;e$, then, since $e$ is a subterm of $a$, by induction, we have
    \begin{equation}
      \letstmt{x}{a}{r} 
      \teqv \letstmt{y}{e}{\letstmt{x}{f\;y}{r}}
      \teqv \letanf{y}{e}{\letstmt{x}{f\;y}{r}}
      = \letanf{x}{a}{r}
    \end{equation}
    as desired.
    \item The cases for pairs, injections, and aborts containing expressions are analogous
    \item If $a = \caseexpr{e}{y}{b}{z}{c}$, then we may rewrite
    \begin{equation}
      \letstmt{x}{a}{r} 
      \teqv \letstmt{w}{e}{\casestmt{w}{y}{\letexpr{b}{x}{r}}{z}{\letexpr{c}{x}{r}}}
    \end{equation}
    Since $r$ is in ANF, by induction (since $b, c$ are subterms of $a$), this is equivalent to
    \begin{equation}
      \letstmt{w}{e}
        {\casestmt{w}{y}{\letanf{b}{x}{r}}{z}{\letanf{c}{x}{r}}}
    \end{equation}
    which is equal to $\letanf{x}{a}{r}$, as desired.
    \item The cases for unary and binary $\ms{let}$ are analogous to the above
  \end{itemize}
  We may now prove the correctness of $\toanf{r}$ by a straightforward induction on $r$:
  \begin{itemize}
    \item If $r = \brb{\ell}{a}$, by $\beta$-reduction, we have that
    \begin{equation}
      r \teqv \letexpr{x}{a}{\brb{\ell}{x}} \teqv \letanf{x}{a}{\brb{\ell}{x}} = \toanf{r}
    \end{equation}
    \item If $r = \letstmt{x}{a}{r'}$, then by induction we have that
    \begin{equation}
      r \teqv \letstmt{x}{a}{\toanf{r'}} \teqv \letanf{x}{a}{\toanf{r'}} = \toanf{r}
    \end{equation}
    as desired.
    \item If $r = \letstmt{(x, y)}{a}{r'}$, then by induction we have that
    \begin{equation}
      \begin{aligned}
        r & \teqv \letstmt{z}{a}{\letstmt{(x, y)}{z}{r'}} \\
          & \teqv \letstmt{z}{a}{\letstmt{(x, y)}{z}{\ms{ANF}(r')}} \\
          & \teqv \letanf{z}{a}{\letstmt{(x, y)}{z}{\ms{ANF}(r')}}
            \teqv \toanf{r}
      \end{aligned}
    \end{equation}
    The proof for \ms{case}-statements is analogous
    \item The case for control-flow graphs follows trivially by induction
  \end{itemize}
\end{proof}

\label{proof:ssa-conversion}

\begin{proof}
  Given that $\ssawhere{r}{G} \teqv \where{r}{G}$ for $r$ in ANF, it is trivial to see
  that
  \begin{equation}
    \tossa{r} := \ssawhere{\toanf{r}}{\cdot} \teqv (\where{\toanf{r}}{\cdot}) \teqv \toanf{r} 
      \teqv r
  \end{equation}
  We hence only need to prove the second part of the lemma. We proceed by induction on ANF regions
  $r$ as follows:
  \begin{itemize}
    \item If $r$ is a terminator, this holds trivially by reflexivity
    \item If $r = \letstmt{x}{a}{r'}$, then by induction, we have that
    \begin{equation}
      \ssawhere{r}{G} 
        := \letstmt{x}{a}{\ssawhere{r'}{G}} 
        \teqv \letstmt{x}{a}{\where{r'}{G}} 
        \teqv \where{\letstmt{x}{a}{r'}}{G}
        \teqv \where{r}{G}
    \end{equation}
    as desired. The case for binary \ms{let}-statements is analogous.
    \item If $r = \casestmt{a}{x}{s}{y}{t}$, then by induction, we have that
    \begin{equation}
      \begin{aligned}
      \ssawhere{r}{G} 
        &:= \where{(\casestmt{a}{x}{\brb{\ell_l}{x}}{y}
        {\brb{\ell_r}{y}})}{G, \wbranch{\ell_l}{x}{\tossa{s}}, \wbranch{\ell_r}{y}{\tossa{t}}} \\
        &:= \where{(\casestmt{a}{x}{\brb{\ell_l}{x}}{y}
        {\brb{\ell_r}{y}})}{G, \wbranch{\ell_l}{x}{s}, \wbranch{\ell_r}{y}{t}} \\
        &\teqv \where{(\where{(\casestmt{a}{x}{\brb{\ell_l}{x}}{y}
        {\brb{\ell_r}{y}})}{\wbranch{\ell_l}{x}{s}, \wbranch{\ell_r}{y}{t}})}{G} \\
        &\teqv \where{\casestmt{a}{x}{s}{y}{t}}{G} \teqv \where{r}{G}
      \end{aligned}
    \end{equation}
    \item If $r = \where{r'}{\wbranch{\ell_i}{x_i}{t_i},)_i}$, then by induction, we have that
    \begin{equation}
      \begin{aligned}
        \ssawhere{r}{G} 
          &:= \ssawhere{r'}{G, (\wbranch{\ell_i}{x_i}{\tossa{t_i}},)_i} \\
          &\teqv \where{r'}{G, (\wbranch{\ell_i}{x_i}{t_i},)_i} \\
          &\teqv \where{(\where{r'}{(\wbranch{\ell_i}{x_i}{t_i},)_i}}{G} \teqv \where{r}{G} \\
      \end{aligned}
    \end{equation}
  \end{itemize}
\end{proof}

\cfgperminvar*

\label{proof:cfg-perm-invar}

\begin{proof}
  We proceed by induction on the size of $G$. If $G$ consists only of an entry block, there is only
  one permutation, so we are done. Otherwise, assume $G = \beta, (\wbranch{\ell_i}{x_i}{t_i},)_i$.
  Furthermore, let:
  \begin{itemize}
    \item $\wbranch{\ell_i}{x_i}{\beta_i}$ be the children of $\beta$ in $G$ in order
    \item $G_i = (\wbranch{\kappa_{i, j}}{y_{i, j}}{t_{i, j}},)_j$ be the CFG composed of the
    descendants of $\beta_i$ in $G$, with $\beta_i$ as entry block, in order
    \item $\wbranch{\ell_i'}{x_i'}{\beta_i'}$ be the children of $\beta$ in $G'$ in order
    \item $G_i'$ be the CFG composed of the descendants of $\beta_i'$ in $G'$, with $\beta_i'$ as
    entry block, in order
  \end{itemize}
  By assumption, there exists some permutation $\sigma$ such that $G' = \beta,
  (\wbranch{\ell_{\sigma_i}}{x_{\sigma_i}}{t_{\sigma_i}},)_i$. It follows that, since the dominance
  relation on labels is permutation-invariant, there exists some permutation $\rho$ such that
  $\wbranch{\ell_i'}{x_i'}{\beta_i'} = \wbranch{\ell_{\rho_i}}{x_{\rho_i}}{\beta_{\rho_i}}$, as well
  as permutations $\tau_i$ such that $G_i' = (\wbranch{\ell_{\rho_i, \tau_{ij}}}{x_{\rho_i,
  \tau_{ij}}}{t_{\rho_i, \tau_{ij}}},)_j$, implying in particular that $G_i' \simeq G_{\rho_i}$. By
  induction, we hence have that, for all $i$, $\toreg{G_i'} \teqv \toreg{G_{\rho_i}}$, and hence
  that
  \begin{equation}
    \begin{aligned}
    \toreg{G'} &:= \adddom{\beta}{(\wbranch{\ell_{\rho_i}}{x_{\rho_i}}{\toreg{G_i'}},)_i} \\
    &\teqv \adddom{\beta}{(\wbranch{\ell_{\rho_i}}{x_{\rho_i}}{\toreg{G_{\rho_i}}},)_i} \\
    &\teqv \adddom{\beta}{(\wbranch{\ell_i}{x_i}{\toreg{G_i}},)_i} \teqv \toreg{G}
    \end{aligned}
  \end{equation}
\end{proof}

\cfgconversion*

\label{proof:cfg-conversion}

\begin{proof}
  We will proceed by induction on the length of $G = \tocfg{r}$. If $G$ consists of only an entry
  block $\beta$, then we trivially have that $r = \toreg{\tocfg{r}}$, and so we are done. Otherwise,
  assume $r = \adddom{\beta}{(\wbranch{\ell_i}{x_i}{t_i},)_i}$. Clearly, $\tocfg{r}$ will have entry
  block $\beta$; moreover, since every block in $\tocfg{r}$ other than those of the form
  $\toentry{t_i}$ can only be reached from within the region $t_i$ (due to lexical scoping of
  labels), we have $\beta_i = \toentry{t_{\rho_i}}$ for some injection $\rho$, where $\beta_i$ are
  the children of $\beta$ in the dominance tree of $G$. In particular, we can write
  $\{\wbranch{\ell_i}{x_i}{\toentry{t_i}},\}_i$ as the disjoint union of:
  \begin{itemize}
    \item The immediate children of $\beta$, $\wbranch{\kappa_i}{y_i}{\beta_i}$, where 
      $\kappa_i = \ell_{\rho_i}$, $y_i = x_{\rho_i}$, and $\beta_i = \toentry{t_{\rho_i}}$
    \item The nodes dominated by each $\beta_i$ but not immediately dominated by $\beta$; we will
    write the collection of such nodes for each $i$ as $\wbranch{\kappa_{i, j}}{y_{i, j}}{\beta_{i,
    j}}$, where $\kappa_{i, j} = \ell_{\rho_{i, j}}$, $y_{i, j} = x_{\rho_{i, j}}$, and $\beta_{i,
    j} = \toentry{t_{\rho_{i, j}}}$.
  \end{itemize}
  As in the algorithm to compute $\toreg{\cdot}$, let $G_i$ denote the control-flow graph with 
  entry block $\beta_i$ consisting of all blocks dominated by $\beta_i$. We may write
  \begin{equation}
    G \simeq \beta, (\wbranch{\kappa_i}{y_i}{G_i},)_i \qquad
    \forall i, G_i = \beta_i, (\wbranch{\kappa_{i, j}}{y_{i, j}}{\beta_{i, j}},)_j, R_i
  \end{equation}
  where $R_i$ is the ``remainder'' of the control-flow graph $G_i$. Note the equations for $G_i$ are
  actually equalities, rather than being equivalence up to permutation $\simeq$, since the
  $\beta_{i, j}$, appear in $G$ before any other elements, being immediate children of $\beta$. It
  follows that
  \begin{equation}
    \toreg{\tocfg{r}}
    = \toreg{\tocfg{\adddom{\beta}{(\wbranch{\ell_i}{x_i}{t_i},)_i}}}
    = \adddom{\beta}{(\wbranch{\kappa_i}{y_i}{\toreg{G_i}},)_i}
  \end{equation}
  Now, define the strict regions
  \begin{equation}
    r_i = \adddom{\beta_i}{
      ((\wbranch{\kappa_{i, j}}{y_{i, j}}{t_{\rho_{i, j}}},)_j, \todom{t_i})
    }
  \end{equation}
  It is easy to show that $\tocfg{r_i} \simeq G_i$, since every basic block dominated by $\beta_i$
  must be dominated via either some $t_{\rho_{i, j}}$ or some child of $t_i$. Since by induction
  $\toreg{\tocfg{r_i}} \teqv r_i$, and by the previous lemma $\toreg{\tocfg_r_i} \teqv \toreg{G_i}$,
  it follows that
  \begin{equation}
    \toreg{\tocfg{r}}
    \teqv \adddom{\beta}{(\wbranch{\kappa_i}{y_i}{r_i},)_i}
  \end{equation}
  Define
  \begin{equation}
    T_i = (\wbranch{\kappa_j}{y_j}{r_j},)_{j < i}, 
      (\wbranch{\kappa_j}{y_j}{t_j}, 
      (\wbranch{\kappa_{j, k}}{y_{j, k}}{t_{\rho_{j, k}}},)_k,)_{j \geq i}
  \end{equation}
  Using Equation~\ref{eqn:pull-where}, we may show that, for all $i$, $\adddom{\beta}{T_{i + 1}}
  \teqv \adddom{\beta}{T_i}$, since
  \begin{itemize}
    \item For all $j$, $t_{\rho_{i, j}}$ cannot use variables defined in $\beta_i$, or $r$ would not
    typecheck
    \item For all $j$, if $r_k$ or $t_{\rho_{k, j'}}$ calls $\kappa_{i, j}$, then $k = i$, or
    $\beta_{i, j}$ would not be dominated by $\beta_i$
    \item Similarly, $\beta$ cannot call $\kappa_{i, j}$ or $\beta_{i, j}$ would be a direct
    descendant of $\beta$
  \end{itemize}
  We hence have by induction that
  \begin{equation}
    \toreg{\tocfg{r}} \teqv \adddom{\beta}{T_N} \teqv \adddom{\beta}{T_0} \teqv r
  \end{equation}
  as desired, since $T_0$ is a permutation of $(\wbranch{\ell_i}{x_i}{t_i},)_i$.
\end{proof}

\subsection{B\"ohm-Jacopini}

\begin{lemma}
  The following facts hold:
  \begin{itemize}
    \item Given $\haslb{\Gamma}{r}{\outlb(A)}$ and $\haslb{\Gamma, \bhyp{\invar}{A}}{s}{\ms{L}}$,
    we have that
    \begin{equation}
      \begin{aligned}
      & \dnt{\haslb{\Gamma}{\ms{seq}(r, s) 
        := (\where{[\outlb(x) \mapsto \brb{\ell}{x}]r}{\wbranch{\ell}{\invar}{s}})}{\ms{L}}} \\
      & = \lmor{\dnt{\haslb{\Gamma}{[\outlb(x) \mapsto \brb{\ell}{x}]r}{\ell(A)}}} 
        ; \dnt{\Gamma} \otimes \alpha^+_A 
        ; \dnt{\haslb{\Gamma, \bhyp{\invar}{A}}{s}{\ms{L}}} \\
      & = \lmor{\dnt{\haslb{\Gamma}{r}{\ell(A)}}} 
        ; \dnt{\Gamma} \otimes \alpha^+_A 
        ; \dnt{\haslb{\Gamma, \bhyp{\invar}{A}}{s}{\ms{L}}}
      \end{aligned}
    \end{equation}
    \item Given $\hasty{\Gamma}{\epsilon}{e}{A}$, %
                $\haslb{\Gamma, \bhyp{\invar}{A}}{r}{\outlb(B + A)}$,
    we have that
    \begin{equation}
      \begin{aligned}
      & \dnt{\haslb{\Gamma}{\ms{loop}(e, r) 
        := (\where{\brb{\ell}{e}}{\wbranch{\ell}{\invar}
        {\ms{seq}(r, \casestmt{\invar}{x}{\brb{\outlb}{x}}{y}{\brb{\ell}{y}})}})}{\outlb(B)}} \\
      & = \lmor{\dnt{\hasty{\Gamma}{\epsilon}{e}{A}}}
        ; \rfix{\dnt{\haslb{\Gamma, \bhyp{\invar}{A}}{r}{\outlb(B + A)}} ; \alpha^+_{B + A}}
      \end{aligned}
    \end{equation}
    \item Given $\ms{R} = (\ell_i(A_i),)_i$ and $\haslb{\Gamma, \bhyp{x_i}{A_i}}{t_i}{\ms{L}}$, we
    have
    \begin{equation}
      \begin{aligned}
        & \dnt{
          \haslb{\Gamma, \bhyp{c}{\pckd{\ms{R}}}}{\ms{case}_{\ms{R}}\;c\; \{\ell_i(x_i) : t_i\}}
                {\ms{L}}
          } \\
        & = \dnt{\Gamma} \otimes \alpha^+_{\Sigma_i\dnt{A_i}} ; \delta^{-1}_{\Sigma}
          ; [\dnt{\haslb{\Gamma, \bhyp{x_i}{A_i}}{t_i}{\ms{L}}},]_i
      \end{aligned}
    \end{equation}
  \end{itemize}
  \item We have that
  \begin{equation}
    \dnt{\hasty{\Gamma, \bhyp{\invar}{\bot}}{[\ms{L}]}{\ms{ua}\;\invar}
      {\pckd{\ms{L}} + \pckd{\ms{R}}}}
    = \pi_r ; \alpha^+_{\dnt{\ms{L}} + \dnt{\ms{R}}}
  \end{equation}
  \item We have that
  \begin{equation}
    \dnt{\lbsubst{\Gamma}{\ms{pack}_\kappa^+(\ms{L})}{\ms{L}}{\kappa(\pckd{\ms{L}})}}   
      = \pi_r ; \alpha^+_{\mb{0} + \dnt{\ms{L}}}
  \end{equation}
  and
  \begin{equation}
    \dnt{\lbsubst{\Gamma}{\ms{unpack}_\kappa^+(\ms{L})}{\kappa(\pckd{\ms{L}})}{\ms{L}}}
      = \pi_r ; \alpha^+_{\dnt{\ms{L}}}
  \end{equation}
  In particular, given $\haslb{\Gamma}{r}{\ms{L}}$, we have that
  \begin{equation}
    \dnt{\haslb{\Gamma}{\pckd{r}^+}{\outlb(\pckd{\ms{L}})}}
    = \dnt{\haslb{\Gamma}{r}{\ms{L}}} ; \alpha^+_{\mb{0} + \dnt{\ms{L}}}
  \end{equation}
\end{lemma}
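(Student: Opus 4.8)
The plan is to prove the five facts one at a time, since each simply computes the denotation of one of the structured combinators ($\ms{seq}$, $\ms{loop}$, the $n$-ary $\ms{case}$, $\ms{ua}$, and the $\ms{pack}^+/\ms{unpack}^+$ substitutions) and identifies it with an explicit categorical morphism. In every case the recipe is identical: unfold the relevant clause of the region semantics (Figure~\ref{fig:ssa-reg-sem}) or the label-substitution semantics (Figure~\ref{fig:ssa-subst-sem}), then collapse the resulting composite using monoidal and comonoidal coherence (Theorem~\ref{thm:monoidal-coherence}) together with the algebraic identities for $\lmor{\cdot}$, $\rcase{\cdot}$, and $\rfix{\cdot}$ recorded in the Freyd-category preliminaries and in Appendix~\ref{apx:environment}. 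I would dispatch the routine facts first and leave $\ms{loop}$ for last.

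For $\ms{seq}(r,s)$ I observe it is a \ms{where}-block with a single bound label $\ell$ whose body $s$ targets $\ms{L}$ and therefore never branches to $\ell$. Hence I may apply the single-label \ms{where}-semantics and, since $s$ does not call $\ell$, its further simplification that is a consequence of Lemma~\ref{lem:wk}, giving $\lmor{\dnt{[\outlb(x)\mapsto\brb{\ell}{x}]r}}; \delta^{-1}; [\pi_r, \dnt{s}]$; because the entry block branches unconditionally to $\ell$, its denotation lands in the $\dnt{A}$-summand, the $\dnt{\Gamma}\otimes\mb{0}$-summand is initial and hence vacuous, and the composite reduces to $\lmor{\dnt{[\outlb(x)\mapsto\brb{\ell}{x}]r}}; (\dnt{\Gamma}\otimes\alpha^+_A); \dnt{s}$. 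The second equality then follows because the relabeling $[\outlb(x)\mapsto\brb{\ell}{x}]$ is a renaming whose label-substitution denotation is the identity coproduct map, which I justify by Soundness of Label Substitution. The $n$-ary $\ms{case}$, the $\ms{ua}$ identity, and the two $\ms{pack}^+/\ms{unpack}^+$ computations are each handled by a short induction on the relevant structure — the enum $\ms{R}$ or the label-context $\ms{L}$ — unfolding one layer of the inductive definition, applying the binary \ms{case} clause (respectively the cons-clause of the label-substitution semantics in Figure~\ref{fig:ssa-subst-sem}), and reassociating the coproduct via $\alpha^+$ and $\delta^{-1}_{\Sigma}$; the concluding identity for $\pckd{r}^+ = [\ms{pack}^+_\outlb(\ms{L})]r$ then falls out by composing the $\ms{pack}^+$ computation with Soundness of Label Substitution.

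The $\ms{loop}$ case is where the real work lies, and I expect it to be the main obstacle. Here the bound label $\ell$ \emph{is} called by the body, so the $\ms{rfix}$ produced by the \ms{where}-semantics does not trivialise. I would first use the single-label \ms{where}-clause to obtain $\lmor{\dnt{\brb{\ell}{e}}}; \delta^{-1}; [\pi_r, \rfix{\dnt{s}}]$ with $s = \ms{seq}(r, \casestmt{\invar}{x}{\brb{\outlb}{x}}{y}{\brb{\ell}{y}})$; since $\brb{\ell}{e}$ injects unconditionally into the $\ell$-branch, this reduces up to coherence to $\lmor{\dnt{e}}; \rfix{\dnt{s}}$. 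The crux is then to show $\rfix{\dnt{s}} = \rfix{\dnt{r}; \alpha^+_{B+A}}$: expanding $\dnt{s}$ by the already-proved $\ms{seq}$ fact, and noting that the terminal \ms{case} is denotationally just the coherence isomorphism rearranging $\outlb(B) + \ell(A)$ into $B + A$ (the read-only $\dnt{\Gamma}$ being discarded along the $\pi_r$ introduced by $\rcase{\cdot}$), I reduce the two fixpoint bodies to equal morphisms, whereupon equality of the fixpoints is immediate. The delicate part is bookkeeping the several nested reassociations ($\alpha^+$, $\delta^{-1}$, and the projections supplied by $\rcase{\cdot}$) so that the morphism inside $\rfix{\cdot}$ matches \emph{exactly}; I would manage this by working in the co-Kleisli calculus of the environment comonad from Appendix~\ref{apx:environment}, where precisely these manipulations are packaged as the $\rseq{}{\cdot}{\cdot}$, $\envcop{R}{\cdot}{\cdot}$, and $\rfix{\cdot}$ equations, rather than by raw diagram chasing.
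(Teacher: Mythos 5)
The paper never actually proves this lemma: it is stated bare at the top of the appendix subsection and its five facts are then invoked directly inside the proof of the B\"ohm-Jacopini theorem, so there is no paper proof to compare yours against --- your proposal fills a gap the authors left as routine. On its merits, your strategy is correct and uses exactly the machinery the paper's other appendix proofs rely on. The $\ms{seq}$ computation follows, as you say, from the single-label $\ms{where}$-clause of Figure~\ref{fig:ssa-reg-sem}, the simplification for bodies that do not call $\ell$ (which the paper itself derives from Lemma~\ref{lem:wk}), region weakening applied to the entry block (whose denotation factors as $\dnt{\haslb{\Gamma}{[\outlb(x)\mapsto\brb{\ell}{x}]r}{\ell(A)}};\dnt{\ell(A)\leq\ms{L},\ell(A)}$), and initiality of $\dnt{\Gamma}\otimes\mb{0}$, which distributivity supplies; the second equality is indeed Soundness (Label Substitution) applied to the renaming, whose denotation collapses to $\pi_r$ up to the same initiality argument. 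The inductions for the $n$-ary $\ms{case}$, $\ms{ua}$, and $\ms{pack}^+_\kappa/\ms{unpack}^+_\kappa$ are routine as claimed (note the empty-$\ms{R}$ base case also rests on initiality of $\dnt{\Gamma}\otimes\mb{0}$, since $\dnt{\infty}$ and the empty cotuple are then both maps out of an initial object). Your $\ms{loop}$ argument is also right: the entry $\brb{\ell}{e}$ reduces the $\ms{where}$-semantics to $\lmor{\dnt{e}};\rfix{\dnt{s}}$ via $\dnt{\Gamma}\otimes\iota_r;\delta^{-1}=\iota_r$, and expanding $\dnt{s}$ with the already-established $\ms{seq}$ fact and computing the terminal $\ms{case}$ yields $\dnt{s}=\dnt{r};\alpha^+$, whence the two fixpoints agree.

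Two points of bookkeeping you should tighten. First, the projection that discards the environment in the terminal $\ms{case}$ does not come from $\rcase{\cdot}$ (which never appears until you form $\rfix{\cdot}$); it comes from $\lmor{f};\pi_r=f$ together with the branch denotations $\dnt{\brb{\outlb}{x}}=\pi_r;\iota_r;\iota_l$ and $\dnt{\brb{\ell}{y}}=\pi_r;\iota_r$. Second, for the types in the $\ms{loop}$ fact to match, $\alpha^+_{B+A}$ must be read as the reassociating coherence $\mb{0}+(\dnt{B}+\dnt{A})\to(\mb{0}+\dnt{B})+\dnt{A}$ rather than a unitor; your computation in fact produces exactly this composite (the unitor from the $\ms{seq}$ fact followed by the coherence $[\iota_r;\iota_l,\iota_r]$ from the terminal $\ms{case}$), so the argument is consistent, but the identification deserves to be stated explicitly since it is what makes $\rfix{\dnt{r};\alpha^+_{B+A}}$ well-typed.
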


\bohmjacopini*

\label{proof:bohm-jacopini}

\begin{proof}
  We begin by showing the correctness of $\topwhile{\ms{L}}{r}$ by induction on $r$:
  \begin{itemize}
    \item If $r = \brb{\ell}{a}$, we have that 
    $\topwhile{\ms{L}}{a} = \ms{pack}^+(\ms{L})_\ell(a) = [\brb{\ell}{a}]^+$
    by definition
    \item If $r = \letstmt{x}{a}{s}$, we have by induction that
    $$
    \topwhile{\ms{L}}{\letstmt{x}{a}{s}} 
    = \letstmt{x}{a}{\topwhile{\ms{L}}{s}}
    \teqv \letstmt{x}{a}{[s]^+}
    = [\letstmt{x}{a}{s}]^+
    $$
    \item If $r = \letstmt{(x, y)}{a}{s}$, we have by induction that
    $$
    \topwhile{\ms{L}}{\letstmt{(x, y)}{a}{s}} 
    = \letstmt{(x, y)}{a}{\topwhile{\ms{L}}{s}}
    \teqv \letstmt{(x, y)}{a}{\pckd{s}^+}
    = \pckd{\letstmt{(x, y)}{a}{s}}^+
    $$
    \item If $r = \casestmt{a}{x}{s}{y}{t}$, we have by induction that
    $$
    \begin{aligned}
    \topwhile{\ms{L}}{\casestmt{a}{x}{s}{y}{t}}
    &= \casestmt{a}{x}{\topwhile{\ms{L}}{s}}{y}{\topwhile{\ms{L}}{t}} \\
    &\teqv \casestmt{a}{x}{\pckd{s}^+}{y}{\pckd{t}^+} \\
    &= \pckd{\casestmt{a}{x}{s}{y}{t}}^+ \\
    \end{aligned}
    $$
    \item Assume $r = \where{s}{(\wbranch{\ell_i}{t_i}{x_i},)_i}$. %
    Define $\ms{R} = (\ell_i(A_i),)_i$.
    By induction, we have that $\forall i, \topwhile{\ms{L}, \ms{R}}{t_i} \teqv \pckd{t_i}^+$, and
    hence by soundness
    \begin{equation}
      \dnt{\haslb{\Gamma, \bhyp{x_i}{A_i}}{\topwhile{\ms{L}}{t_i}}{\outlb(\pckd{\ms{L}, \ms{R}})}}
      = \dnt{\haslb{\Gamma, \bhyp{x_i}{A_i}}{\pckd{t_i}^+}{\outlb(\pckd{\ms{L}, \ms{R}})}}
    \end{equation}
    Now, define
    $$
    D = \ms{case}_{\ms{L}}\;\invar\;
    \{\ell_i(x_i) : \ms{seq}(\topwhile{\ms{L}}{t_i}, \brb{\outlb}{(\ms{ua}\;\invar)}\})
    $$
    and $L = \ms{loop}(y, D)$. %
    It follows that
    \begin{equation}
      \begin{aligned}
      & \dnt{\haslb{\Gamma, \invar : [R]}{D}{\outlb(\pckd{L} + \pckd{R})}} \\
      & = \dnt{\Gamma} \otimes \alpha^+_{\Sigma_i\dnt{A_i}} ; \delta^{-1}_{\Sigma}
      ; [\dnt{\haslb{\Gamma, \bhyp{x_i}{A_i}} 
                {\ms{seq}(\topwhile{\ms{L}}{t_i}, \brb{\outlb}{(\ms{ua}\;\invar))}}
                {\outlb(\pckd{\ms{L}} + \pckd{\ms{R}})}},]_i \\
      & = \dnt{\Gamma} \otimes \alpha^+_{\Sigma_i\dnt{A_i}} ; \delta^{-1}_{\Sigma}
      ; [\dnt{\haslb{\Gamma, \bhyp{x_i}{A_i}}{\topwhile{\ms{L}}{t_i}}{\outlb(\pckd{\ms{L}, \ms{R}})}} 
      ; \alpha^+_{\mb{0} + (\dnt{\ms{L}} + \dnt{\ms{R}})},]_i \\
      & = \dnt{\Gamma} \otimes \alpha^+_{\Sigma_i\dnt{A_i}} ; \delta^{-1}_{\Sigma}
      ; [\dnt{\haslb{\Gamma, \bhyp{x_i}{A_i}}{\pckd{t_i}^+}{\outlb(\pckd{\ms{L}, \ms{R}})}} 
      ; \alpha^+_{\mb{0} + (\dnt{\ms{L}} + \dnt{\ms{R}})},]_i\\
      & = \dnt{\Gamma} \otimes \alpha^+_{\Sigma_i\dnt{A_i}} ; \delta^{-1}_{\Sigma}
      ; [\dnt{\haslb{\Gamma, \bhyp{x_i}{A_i}}{t_i}{\ms{L}, \ms{R}}} 
      ; \alpha^+_{\mb{0} + (\dnt{\ms{L}} + \dnt{\ms{R}})},]_i
      \end{aligned}
    \end{equation}
    and therefore that
    \begin{equation}
      \begin{aligned}
      & \dnt{\haslb{\Gamma, y : \pckd{R}}{L}{\outlb(\pckd{\ms{L}})}} \\
      & = \lmor{\dnt{\hasty{\Gamma, y : \pckd{R}}{\bot}{y}{A}}}
      ; \rfix{\dnt{\haslb{\Gamma, \invar : \pckd{R}}{D}{\outlb(\pckd{L} + \pckd{R})}} 
      ; \alpha^+_{(\mb{0} + \dnt{\ms{L}}) + \dnt{\ms{R}}}} \\
      & = \rfix{\dnt{\Gamma} \otimes \alpha^+_{\Sigma_i\dnt{A_i}} ; \delta^{-1}_{\Sigma}
      ; [\dnt{\haslb{\Gamma, \bhyp{x_i}{A_i}}{t_i}{\pckd{\ms{L}, \ms{R}}}} 
      ; \alpha^+_{(\mb{0} + \dnt{\ms{L}}) + \dnt{\ms{R}}},]_i} \\
      & = \dnt{\Gamma} \otimes \alpha^+_{\Sigma_i\dnt{A_i}} 
        ; \rfix{\loopmor{\Gamma}{(\wbranch{\ell_i}{x_i}{t_i},)_i}{\ms{L}}}
        ; \alpha^+_{\mb{0} + \dnt{\ms{L}}}
      \end{aligned}
    \end{equation}
    Hence, we have that
    \begin{equation}
      \begin{aligned}
      & \dnt{\haslb 
        {\Gamma, \bhyp{\invar}{\pckd{\ms{L}, \ms{R}}}}
        {\caseexpr{\ms{ua}\;\invar}{x}{\brb{\outlb}{x}}{y}{L}}
        {\outlb([\ms{L}])}} \\
      & = \dnt{\Gamma} \otimes \alpha^+_{\dnt{\ms{L}} + \dnt{\ms{R}}}
      ; \delta^{-1}
      ; [
        \dnt{\haslb{\Gamma, x : \pckd{\ms{L}}}{\brb{\outlb}{x}}{\outlb(\pckd{\ms{L}})}},
        \dnt{\haslb{\Gamma, y : \pckd{\ms{R}}}{L}{\outlb(\pckd{\ms{L}})}}
      ] \\
      & = \dnt{\Gamma} \otimes \alpha^+_{\dnt{\ms{L}} + \dnt{\ms{R}}}
      ; \delta^{-1}
      ; [
        \pi_r ; \alpha^+_{\mb{0} + \dnt{\ms{L}}}, 
        \dnt{\Gamma} \otimes \alpha^+_{\Sigma_i\dnt{A_i}} 
        ; \rfix{\loopmor{\Gamma}{(\wbranch{\ell_i}{x_i}{t_i},)_i}{\ms{L}}} ; 
        \alpha^+_{\mb{0} + \dnt{\ms{L}}}
      ] \\
      & = \dnt{\Gamma} \otimes \alpha^+_{\dnt{\ms{L}} + \Sigma_i\dnt{A_i}}
      ; \delta^{-1}
      ; [
        \pi_r,  
        \rfix{\loopmor{\Gamma}{(\wbranch{\ell_i}{x_i}{t_i},)_i}{\ms{L}}}
      ]
      ; \alpha^+_{\mb{0} + \dnt{\ms{L}}}
      \end{aligned}
    \end{equation}
    It hence suffices by completeness (Theorem~\ref{thm:complete-reg}) to show that
    \begin{equation}
      \begin{aligned}
        & \dnt{\haslb{\Gamma}{\ms{seq}(
          \topwhile{\ms{L}}{r}, \caseexpr{\ms{ua}\;\invar}{x}{\brb{\outlb}{x}}{y}{L})
        }{\outlb(\pckd{\ms{L}})}} \\
        & = \lmor{\dnt{\haslb{\Gamma}{\topwhile{\ms{L}}{r}}{\outlb(\pckd{\ms{L}, \ms{R}})}}}
          ; \dnt{\Gamma} \otimes \alpha^+_{\dnt{\ms{L, R}}} 
          ; \\ & \qquad \dnt{\haslb 
          {\Gamma, \bhyp{\invar}{\pckd{\ms{L}, \ms{R}}}}
          {\caseexpr{\ms{ua}\;\invar}{x}{\brb{\outlb}{x}}{y}{L}}
          {\outlb(\pckd{\ms{L}})}} \\
        & = \lmor{\dnt{\haslb{\Gamma}{\topwhile{\ms{L}}{r}}{\outlb(\pckd{\ms{L}, \ms{R}})}}}
          ; \dnt{\Gamma} \otimes \alpha^+_{\dnt{\ms{L}} + \Sigma_i\dnt{A_i}}
          ; \delta^{-1}
          ; \\ & \qquad [
            \pi_r,  
            \rfix{\loopmor{\Gamma}{(\wbranch{\ell_i}{x_i}{t_i},)_i}{\ms{L}}}
          ] ; \alpha^+_{\mb{0} + \dnt{\ms{L}}} \\
        & = \dnt{\haslb{\Gamma}{\where{r}{(\wbranch{\ell_i}{x_i}{t_i},)_i}}{\ms{L}}}
          ; \alpha^+_{\mb{0} + \dnt{\ms{L}}} \\
        & = \dnt{\haslb{\Gamma}{[\where{r}{(\wbranch{\ell_i}{x_i}{t_i},)_i}]^+}{\ms{L}}}
      \end{aligned}
    \end{equation}
  \end{itemize}
\end{proof}

\subsection{Substitution}

\begin{lemma}
  The following facts hold:
  \begin{enumerate}[label=(\alph*)]
    \item For all $f: A \to B \otimes C$, $g : (A \otimes B) \otimes C \to D$, we have
    \begin{equation}
      \lmor{f} ; \alpha ; \lmor{g} ; (\pi_l ; \pi_l) \otimes D =
      \lmor{\lmor{f} ; \alpha ; g}
    \end{equation}
    \item For all $f: A \to B + C$, $g : A \otimes B \to D$, $h : A \otimes C \to D$, we
    have
    \begin{equation}
      \lmor{f} ; \delta^{-1} ; [\lmor{g} ; \pi_l \otimes D, \lmor{h} ; \pi_l \otimes D]
      = \lmor{\lmor{f} ; \delta^{-1} ; [g, h]}
    \end{equation}
    \item For all $f_i : R \otimes A_i \to B$, we have
    \begin{equation}
      \delta_{\Sigma}^{-1} ; [\lmor{f_i} ; \pi_l \otimes B,]_i
      = \lmor{\delta_{\Sigma}^{-1} ; [f_i]_i} ; \pi_l \otimes B
    \end{equation}
    i.e.,
    \begin{equation}
      \delta_{\Sigma}^{-1} ; [\rlmor{f_i},]_i
      = \rlmor{\delta_{\Sigma}^{-1} ; [f_i]_i}
    \end{equation}
    \item For all $f : A \to B + C$, $g : A \otimes B \to B'$, $h : A \otimes C \to C'$, we have
    \begin{equation}
      \begin{aligned}
      \lmor{\lmor{f} ; \delta^{-1} ; g + h}
      &= \lmor{f} ; \delta^{-1} 
        ; [\lmor{g} ; \pi_l \otimes \iota_l, \lmor{h} ; \pi_l \otimes \iota_r] \\
      &= \lmor{f} ; \delta^{-1} 
        ; (\lmor{g} ; \pi_l \otimes B') + (\lmor{h} ; \pi_l \otimes C')
        ; \delta \\
      &= \lmor{f} ; \delta^{-1} 
      ; \rlmor{g} + \rlmor{h}
      ; \delta
      \end{aligned}
    \end{equation}
    In particular, we have
    \begin{equation}
      \begin{aligned}
      \lmor{\lmor{f} ; \delta^{-1} ; \pi_r + h}
      &= \lmor{f} ; \delta^{-1} 
        ; [A \otimes \iota_l, \lmor{h} ; \pi_l \otimes \iota_r] \\
      &= \lmor{f} ; \delta^{-1} 
        ; (A \otimes B) + (\lmor{h} ; \pi_l \otimes C')
        ; \delta\\
      &= \lmor{f} ; \delta^{-1} 
        ; (A \otimes B) + \rlmor{h}
        ; \delta
      \end{aligned}
    \end{equation}
    and
    \begin{equation}
      \begin{aligned}
      \lmor{\lmor{f} ; \delta^{-1} ; g + \pi_r}
      &= \lmor{f} ; \delta^{-1} 
        ; [\lmor{g} ; \pi_l \otimes \iota_l, A \otimes \iota_r] \\
      &= \lmor{f} ; \delta^{-1} 
        ; (\lmor{g} ; \pi_l \otimes B') + (A \otimes C)
        ; \delta \\
      &= \lmor{f} ; \delta^{-1} 
        ; \rlmor{g} + (A \otimes C)
        ; \delta
      \end{aligned}
    \end{equation}
  \end{enumerate}
\end{lemma}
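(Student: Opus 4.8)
The plan is to isolate the genuine content of the lemma—parts (a) and (c)—and to obtain (b) and (d) as formal consequences of these, together with the already-established algebra of $\lmor{\cdot}$: the functoriality law $\lmor{f;g} = \lmor f ; A\otimes g$, the ``bind'' law $\lmor{\lmor f ; g} = \lmor f ; \lmor g ; \pi_l\otimes C$, and the abbreviation $\rlmor k = \lmor k ; \pi_l\otimes(\cdot)$ from the environment comonad. Throughout, the only morphisms I will ever commute past a possibly impure map are pure ones (associators, unitors, symmetries, projections, diagonals, injections, and the distributor), which are central and hence slide freely; this is what keeps the premonoidal bookkeeping honest, and I would present the pure ``skeleton'' of each argument as a string diagram in the paper's state-wire convention before threading the effectful map through it.

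For (b): rewriting $\lmor g ; \pi_l\otimes D = \rlmor g$ and $\lmor h;\pi_l\otimes D = \rlmor h$, the left side becomes $\lmor f ; \delta^{-1} ; [\rlmor g, \rlmor h]$; I apply (c) in its binary form to collapse $\delta^{-1};[\rlmor g,\rlmor h]$ into $\rlmor{\delta^{-1};[g,h]} = \lmor{\delta^{-1};[g,h]};\pi_l\otimes D$, and then the bind law rewrites $\lmor f ; \lmor{\delta^{-1};[g,h]};\pi_l\otimes D$ as $\lmor{\lmor f;\delta^{-1};[g,h]}$, which is the right side. For (d): since $g+h = [g;\iota_l, h;\iota_r]$, I apply (b) to the two branches $g;\iota_l$ and $h;\iota_r$, then use $\lmor{g;\iota_l} = \lmor g ; (A\otimes B)\otimes\iota_l$ and centrality of $\iota_l$ to simplify $(A\otimes B)\otimes\iota_l ; \pi_l\otimes(B'+C') = \pi_l\otimes\iota_l$; the two reformulations in terms of $+$ and $;\delta$ follow by unfolding $\delta$ (whose components are $A\otimes\iota_l$ and $A\otimes\iota_r$) and the coproduct functor, and the $\pi_r$-specializations follow by setting $g$ or $h$ to $\pi_r$ and using $\lmor{\pi_r};\pi_l = \ms{id}$.

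For (a): I first rewrite the right side using the bind law with $f$ and $g' = \alpha;g$, obtaining $\lmor{\lmor f;\alpha;g} = \lmor f ; \lmor{\alpha;g};\pi_l\otimes D$; it then suffices to prove the $f$-free identity $\lmor{\alpha;g};\pi_l\otimes D = \alpha;\lmor g;(\pi_l;\pi_l)\otimes D$, whose precomposition with $\lmor f$ is exactly (a). Here $g$ occurs exactly once on each side and everything else is pure, so the equation reduces to the coherence fact $\alpha;(\pi_l;\pi_l) = \pi_l$ (the context $A$ is recovered identically whether we project from $A\otimes(B\otimes C)$ or, after reassociating, from $(A\otimes B)\otimes C$) together with naturality of the diagonal on the pure associator. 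I would verify it by comparing both sides after $\pi_r$ and after the context projection, or equivalently by the string-diagram isotopy that slides the single box $g$ along the threaded state wire.

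The main obstacle is part (c), namely $\delta_\Sigma^{-1};[\rlmor{f_i}]_i = \rlmor{\delta_\Sigma^{-1};[f_i]_i}$. I would prove it by induction on the number of summands, peeling $\Sigma_i A_i = (\Sigma_{i<n}A_i) + A_n$ so that the $n$-ary distributor decomposes through the binary $\delta$ and the coproduct associators, leaving a single binary statement. The heart of that statement is a purely coherent law—``copying the context $R$ commutes with distributing it across a coproduct''—which lives entirely in the distributive pure subcategory $\mc{C}_\bot$ and can be checked by the universal property of the coproduct (precomposing each injection) together with the distributivity isomorphism. Once this pure coherence skeleton is in hand, I extend it to arbitrary impure $f_i$ by functoriality of $R\otimes-$ within each branch and by sliding the surrounding pure morphisms past the $f_i$; the delicacy is precisely to ensure that every morphism crossing an $f_i$ is central, which is where I expect the bookkeeping to be heaviest and where the state-wire diagrams earn their keep.
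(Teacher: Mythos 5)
You should know at the outset that the paper contains no proof of this lemma to compare against: it is stated bare in the appendix, immediately before the weakening proof, as a package of routine coherence facts about $\lmor{\cdot}$, $\rlmor{\cdot}$ and the distributor to be consumed by the later substitution and soundness arguments. Your write-up would therefore be filling a gap rather than duplicating the paper's argument, and as a proof it is correct. The architecture is the right one: (a) and (c) carry the content, (b) is exactly binary (c) composed with the bind law $\lmor{\lmor{f};k} = \lmor{f};\lmor{k};\pi_l\otimes D$, and (d) is (b) applied to $g;\iota_l$, $h;\iota_r$ plus pure bookkeeping. Your reduction of (a) to the $f$-free identity $\lmor{\alpha;g};\pi_l\otimes D = \alpha;\lmor{g};(\pi_l;\pi_l)\otimes D$ is valid, and after sliding the central projections past the single occurrence of $g$ it collapses to the cartesian computation $\langle\pi_l,\alpha\rangle = \alpha;\langle\pi_l;\pi_l,\ms{id}\rangle$ in $\mc{C}_\bot$, as you say. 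Two refinements are worth making. First, for (c) the induction on summands is unnecessary, and so is your two-stage ``pure skeleton, then extend to impure $f_i$'' step: since $\delta_\Sigma$ is an isomorphism with $\iota_j;\delta_\Sigma = R\otimes\iota_j$, and coproducts exist in all of $\mc{C}$ (so copairings of impure maps are unique), it suffices to precompose both sides with each pure $R\otimes\iota_j$; then $R\otimes\iota_j;\delta_\Sigma^{-1};[\rlmor{f_i}]_i = \rlmor{f_j}$ immediately, while on the other side bifunctoriality on pure maps, naturality of $\alpha$, and functoriality of $R\otimes-$ give $R\otimes\iota_j;\rlmor{\delta_\Sigma^{-1};[f_i]_i} = \dmor{R}\otimes A_j;\alpha;R\otimes f_j = \rlmor{f_j}$; no morphism ever crosses an $f_i$, so the centrality bookkeeping you anticipate never arises. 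Second, two of the identities you invoke are stated slightly off: in (b) and (d) what you need is $\lmor{g};\pi_l\otimes D = \rlmor{g}$, which is not the paper's stated $\lmor{f};\pi_r = f$ or $\lmor{f};\pi_l = \ms{id}$ but a separate (easy) sliding-plus-cartesian computation that should be recorded; and in the $\pi_r$-specializations of (d) the fact used is $\rlmor{\pi_r} = \ms{id}$, i.e.\ $\lmor{\pi_r};\pi_l\otimes B = \ms{id}$, not literally $\lmor{\pi_r};\pi_l = \ms{id}$ — both are trivial in $\mc{C}_\bot$, but they are different equations and the first does not follow from the second by mere rewriting.
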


\weakeninglem*

\label{proof:weakening}

\begin{proof}
  To show that variable weakenings compose \ref{itm:varwk}, we proceed by induction on the
  derivation of $\Gamma \leq \Gamma'$ as follows:
  \begin{itemize}
    \item \brle{wk-nil}: if $\Gamma, \Gamma' = \cdot$, then $\Delta = \cdot$, and so we trivially
    have $\dnt{\cdot \leq \cdot} ; \dnt{\cdot \leq \cdot} = \dnt{\cdot \leq \cdot} = \ms{id}$
    \item \brle{wk-skip}: we have $\Gamma = \Xi, \thyp{x}{A}{\epsilon}$, and so by induction
    \begin{equation}
      \dnt{\Xi, \thyp{x}{A}{\epsilon} \leq \Gamma'} ; \dnt{\Gamma' \leq \Delta}
      = \pi_l ; \dnt{\Xi \leq \Gamma'} ; \dnt{\Gamma' \leq \Delta}
      = \pi_l ; \dnt{\Xi \leq \Delta}
      = \dnt{\Xi, \thyp{x}{A}{\epsilon} \leq \Delta}
    \end{equation}
    as desired
    \item \brle{wk-cons}: we have $\Gamma = \Xi, \thyp{x}{A}{\epsilon}$ and $\Gamma' = \Xi',
    \thyp{x}{A}{\epsilon'}$. We proceed by case analysis on $\Gamma' \leq \Delta$:
    \begin{itemize}
      \item \brle{wk-skip}: we have
      \begin{equation}
        \begin{aligned}
        \dnt{\Xi, \thyp{x}{A}{\epsilon} \leq \Xi', \thyp{x}{A}{\epsilon'}} ; \dnt{\Xi' \leq \Delta}
        & = \dnt{\Xi \leq \Xi'} \otimes \dnt{A} ; \dnt{\Xi', \thyp{x}{A}{\epsilon'} \leq \Delta} \\
        & = \dnt{\Xi \leq \Xi'} \otimes \dnt{A} ; \pi_l ; \dnt{\Xi' \leq \Delta} \\
        & = \pi_l ; \dnt{\Xi \leq \Xi'} ; \dnt{\Xi' \leq \Delta} \\
        & = \pi_l ; \dnt{\Xi \leq \Delta} \\
        & = \dnt{\Xi, \thyp{x}{A}{\epsilon} \leq \Delta}
        \end{aligned}
      \end{equation}
      as desired
      \item \brle{wk-cons}: we have $\Delta = \Delta' , \thyp{x}{A}{\epsilon''}$, and so by
      induction
      \begin{equation}
        \begin{aligned}
        \dnt{\Xi, \thyp{x}{A}{\epsilon} \leq \Xi', \thyp{x}{A}{\epsilon'}} ; 
        \dnt{\Xi', \thyp{x}{A}{\epsilon'} \leq \Delta', \thyp{x}{A}{\epsilon''}}
        &= \dnt{\Xi \leq \Xi'} \otimes \dnt{A} ; \dnt{\Xi' \leq \Delta'} \otimes \dnt{A} \\
        &= (\dnt{\Xi \leq \Xi'} ; \dnt{\Xi' \leq \Delta'}) \otimes \dnt{A} \\
        &= \dnt{\Xi \leq \Delta'} \otimes \dnt{A} \\
        &= \dnt{\Xi, \thyp{x}{A}{\epsilon} \leq \Delta}
        \end{aligned}
      \end{equation}
      as desired
    \end{itemize}
  \end{itemize}
  We can analogously show \ref{itm:lbwk} (i.e., that label weakenings compose) by induction on the
  derivation of $\ms{L} \leq \ms{K}$ as follows:
  \begin{itemize}
    \item \brle{lwk-nil}: if $\ms{L} = \ms{K} = \cdot$, then $\ms{L}' = \cdot$, so the result
    follows trivially from the fact that $\dnt{\cdot} = \mb{\ms{L}'}$ is the initial object
    \item \brle{lwk-skip}: we have $\ms{K} = \ms{K}', \ell(A)$, and therefore
    \begin{equation}
      \dnt{\ms{L}' \leq \ms{L}} ; \dnt{\ms{L} \leq \ms{K}', \ell(A)}
      = \dnt{\ms{L}' \leq \ms{L}} ; \dnt{\ms{L} \leq \ms{K}'} ; \iota_r
      = \dnt{\ms{L}' \leq \ms{K}'} ; \iota_r
      = \dnt{\ms{L}' \leq \ms{K}}
    \end{equation}
    \item \brle{lwk-cons}: we have $\ms{L} = \ms{R}, \ell(A)$ and $\ms{K} = \ms{K}', \ell(A)$. We
    proceed by case splitting on $\ms{L} \leq \ms{K}'$:
    \begin{itemize}
      \item \brle{lwk-skip}: we have
      \begin{equation}
        \begin{aligned}
        \dnt{\ms{L}' \leq \ms{R}, \ell(A)} ; \dnt{\ms{R}, \ell(A) \leq \ms{K}', \ell(A)}
        & = \dnt{\ms{L}' \leq \ms{R}} ; \iota_r ; \dnt{\ms{R} \leq \ms{K}'} + \dnt{A} \\
        & = \dnt{\ms{L}' \leq \ms{R}} ; \dnt{\ms{R} \leq \ms{K}'} ; \iota_r \\
        & = \dnt{\ms{L}' \leq \ms{K}'} ; \iota_r \\
        & = \dnt{\ms{L}' \leq \ms{K}', \ell(A)}
        \end{aligned}
      \end{equation}
      \item \brle{lwk-cons}: we have that $\ms{L}' = \ms{R}', \ell(A)$, and therefore
      \begin{equation}
        \begin{aligned}
        \dnt{\ms{R}', \ell(A) \leq \ms{R}, \ell(A)} ; \dnt{\ms{R}, \ell(A) \leq \ms{K}', \ell(A)}
        & = \dnt{\ms{R}' \leq \ms{R}} + \dnt{A} ; \dnt{\ms{R} \leq \ms{K}'} + \dnt{A} \\
        & = (\dnt{\ms{R}' \leq \ms{R}} ; \dnt{\ms{R} \leq \ms{K}'}) + \dnt{A} \\
        & = \dnt{\ms{R}' \leq \ms{K}'} + \dnt{A} \\
        & = \dnt{\ms{R}', \ell(A) \leq \ms{K}', \ell(A)}
        \end{aligned}
      \end{equation}
    \end{itemize}
  \end{itemize}
  We can now show weakening for expressions $\hasty{\Delta}{\epsilon}{a}{A}$ \ref{itm:expwk} by 
  induction on the typing derivation as follows:
  \begin{itemize}
    \item \brle{var}: we need to show that
    \begin{equation}
      \dnt{\Gamma \leq \Delta} ; \dnt{\hasty{\Delta}{\epsilon}{x}{A}}
      = \dnt{\Gamma \leq \Delta} ; \pi_{\Delta, x}
      = \dnt{\hasty{\Gamma}{\epsilon}{x}{A}}
      = \pi_{\Gamma, x}
    \end{equation}
    we proceed by induction on $\Gamma \leq \Delta$:
    \begin{itemize}
      \item \brle{wk-nil}: this case yields a contradiction, since if $\Delta = \cdot$ it cannot
      define $x$.
      \item \brle{wk-cons}: given $\Gamma = \Gamma', \thyp{y}{B}{\epsilon''}$, $\Delta = \Delta',
      \thyp{y}{B}{\epsilon'}$,
      \begin{itemize}
        \item If $x = y$, then $B = A$, and
        \begin{equation}
          \dnt{\Gamma', \thyp{x}{A}{\epsilon''} \leq \Delta', \thyp{x}{A}{\epsilon'}} 
          ; \pi_{(\Delta, \thyp{x}{A}{\epsilon'}), x}
          = \dnt{\Gamma' \leq \Delta'} \otimes \dnt{A} ; \pi_r
          = \pi_r
          = \pi_{(\Gamma, \thyp{x}{A}{\epsilon''}), x}
        \end{equation}
        as desired.
        \item Otherwise, we have by induction that
        \begin{equation}
          \begin{aligned}
          \dnt{\Gamma', \thyp{y}{B}{\epsilon''} \leq \Delta', \thyp{y}{B}{\epsilon'}}
          ; \pi_{(\Delta', \thyp{y}{B}{\epsilon'}), x}
          & = \dnt{\Gamma' \leq \Delta'} \otimes \dnt{B} ; \pi_l ; \pi_{\Delta', x} \\
          & = \pi_l ; \dnt{\Gamma' \leq \Delta'} ; \pi_{\Delta', x}
          = \pi_l ; \pi_{\Gamma', x}
          = \pi_{\Gamma, x}
          \end{aligned}
        \end{equation}
      \end{itemize}
      \item \brle{wk-skip}: we have $\Gamma = \Gamma', \thyp{y}{B}{\epsilon'}$, and hence
      \begin{equation}
        \begin{aligned}
          \dnt{\Gamma', \thyp{y}{B}{\epsilon'} \leq \Delta} ; \pi_{\Delta, x}
          = \pi_l ; \dnt{\Gamma' \leq \Delta} ; \pi_{\Delta, x}
          = \pi_l ; \pi_{\Gamma', x}
          = \pi_{\Gamma, x}
        \end{aligned}
      \end{equation}
    \end{itemize}
    \item \brle{let$_1$}: we have
    \begin{equation}
      \begin{aligned}
        & \dnt{\Gamma \leq \Delta} ; \dnt{\hasty{\Delta}{\epsilon}{\letexpr{x}{a}{b}}{B}} \\
        & = \dnt{\Gamma \leq \Delta} 
          ; \lmor{\dnt{\hasty{\Delta}{\epsilon}{a}{A}}}
          ; \dnt{\hasty{\Delta, \bhyp{x}{A}}{\epsilon}{b}{B}} \\
        &= \lmor{\dnt{\Gamma \leq \Delta} ; \dnt{\hasty{\Delta}{\epsilon}{a}{A}}}
        ; \dnt{\Gamma \leq \Delta} \otimes \dnt{A}
        ; \dnt{\hasty{\Delta, \bhyp{x}{A}}{\epsilon}{b}{B}} \\
        &= \lmor{\dnt{\hasty{\Gamma}{\epsilon}{a}{A}}}
        ; \dnt{\Gamma, \bhyp{x}{A} \leq \Delta, \bhyp{x}{A}}
        ; \dnt{\hasty{\Delta, \bhyp{x}{A}}{\epsilon}{b}{B}} \\
        &= \lmor{\dnt{\hasty{\Gamma}{\epsilon}{a}{A}}}
        ; \dnt{\hasty{\Gamma, \bhyp{x}{A}}{\epsilon}{b}{B}} \\
        &= \dnt{\hasty{\Gamma}{\epsilon}{\letexpr{x}{a}{b}}{B}}
      \end{aligned}
    \end{equation}
    \item \brle{unit}: follows immediately since weakenings are pure and $\mb{1}$ is the terminal
    object.
    \item \brle{pair}: we have
    \begin{equation}
      \begin{aligned}
        & \dnt{\Gamma \leq \Delta} ; \dnt{\hasty{\Delta}{\epsilon}{(a, b)}{A \otimes B}} \\
        & = \dnt{\Gamma \leq \Delta} 
          ; \dmor{\Delta}
          ; \dnt{\hasty{\Delta}{\epsilon}{a}{A}} \otimes \dnt{\hasty{\Delta}{\epsilon}{b}{B}} \\
        & = \dmor{\Gamma}
          ; (\dnt{\Gamma \leq \Delta} ; \dnt{\hasty{\Delta}{\epsilon}{a}{A}}) \otimes
          (\dnt{\Gamma \leq \Delta} ; \dnt{\hasty{\Delta}{\epsilon}{b}{B}}) \\
        & = \dmor{\Gamma}
          ; \dnt{\hasty{\Gamma}{\epsilon}{a}{A}} \otimes \dnt{\hasty{\Gamma}{\epsilon}{b}{B}} \\
        & = \dnt{\hasty{\Gamma}{\epsilon}{(a, b)}{A \otimes B}}
      \end{aligned}
    \end{equation}
    \item \brle{let$_2$}: we have
    \begin{equation}
      \begin{aligned}
        & \dnt{\Gamma \leq \Delta} ; \dnt{\hasty{\Delta}{\epsilon}{\letexpr{(x, y)}{a}{b}}{B}} \\
        & = \dnt{\Gamma \leq \Delta} 
          ; \lmor{\dnt{\hasty{\Delta}{\epsilon}{a}{A \otimes B}}}
          ; \alpha 
          ; \dnt{\hasty{\Delta, \bhyp{x}{A}, \bhyp{y}{B}}{\epsilon}{b}{B}} \\
        & = \lmor{\dnt{\Gamma \leq \Delta} ; \dnt{\hasty{\Delta}{\epsilon}{a}{A \otimes B}}}
          ; \alpha
          ; \dnt{\Gamma \leq \Delta} \otimes \dnt{A} \otimes \dnt{B}
          ; \dnt{\hasty{\Delta, \bhyp{x}{A}, \bhyp{y}{B}}{\epsilon}{b}{B}} \\
        & = \lmor{\dnt{\hasty{\Gamma}{\epsilon}{a}{A \otimes B}}}
          ; \alpha
          ; \dnt{\Gamma, \bhyp{x}{A}, \bhyp{y}{B} \leq \Delta, \bhyp{x}{A}, \bhyp{y}{B}}
          ; \dnt{\hasty{\Delta, \bhyp{x}{A}, \bhyp{y}{B}}{\epsilon}{b}{B}} \\
        & = \lmor{\dnt{\hasty{\Gamma}{\epsilon}{a}{A \otimes B}}}
          ; \alpha
          ; \dnt{\hasty{\Gamma, \bhyp{x}{A}, \bhyp{y}{B}}{\epsilon}{b}{B}} \\
        & = \dnt{\hasty{\Gamma}{\epsilon}{\letexpr{(x, y)}{a}{b}}{B}}
      \end{aligned}
    \end{equation}
    \item \brle{case}: we have
    \begin{equation}
      \begin{aligned}
        & \dnt{\Gamma \leq \Delta} ; \dnt{\hasty{\Delta}{\epsilon}{\caseexpr{a}{x}{s}{y}{t}}{B}} \\
        & = \dnt{\Gamma \leq \Delta} 
          ; \lmor{\dnt{\hasty{\Delta}{\epsilon}{a}{A}}}
          ; \delta^{-1}
          ; [\dnt{\hasty{\Delta, \bhyp{x}{A}}{\epsilon}{s}{B}}, 
             \dnt{\hasty{\Delta, \bhyp{y}{A}}{\epsilon}{t}{B}}] \\
        & = \lmor{\dnt{\Gamma \leq \Delta} ; \dnt{\hasty{\Gamma}{\epsilon}{a}{A}}}
          ; \delta^{-1}
          ; \\ & \qquad [
              \dnt{\Gamma \leq \Delta} \otimes \dnt{A} 
                ; \dnt{\hasty{\Delta, \bhyp{x}{A}}{\epsilon}{s}{C}},
              \dnt{\Gamma \leq \Delta} \otimes \dnt{B} 
                ; \dnt{\hasty{\Delta, \bhyp{y}{B}}{\epsilon}{t}{C}}
             ] \\
        & = \lmor{\dnt{\hasty{\Delta}{\epsilon}{a}{A}}}
          ; \\ & \qquad [
            \dnt{\Gamma, \bhyp{x}{A} \leq \Delta, \bhyp{x}{A}} 
              ; \dnt{\hasty{\Delta, \bhyp{x}{A}}{\epsilon}{s}{C}},
            \dnt{\Gamma, \bhyp{y}{B} \leq \Delta, \bhyp{y}{B}}
              ; \dnt{\hasty{\Delta, \bhyp{y}{B}}{\epsilon}{t}{C}}
           ] \\
        & = \lmor{\dnt{\hasty{\Delta}{\epsilon}{a}{A}}}
          ; [
            \dnt{\hasty{\Gamma, \bhyp{x}{A}}{\epsilon}{s}{C}},
            \dnt{\hasty{\Gamma, \bhyp{y}{B}}{\epsilon}{t}{C}}
           ]
      \end{aligned}
    \end{equation}
    \item \brle{op}: we have
    \begin{equation}
      \begin{aligned}
        \dnt{\Gamma \leq \Delta} ; \dnt{\hasty{\Gamma}{\epsilon}{f\;a}{B}} 
        & = \dnt{\Gamma \leq \Delta} 
          ; \dnt{\hasty{\Delta}{\epsilon}{a}{A}} 
          ; \dnt{\isop{f}{A}{B}{\epsilon}} \\
        & = \dnt{\hasty{\Gamma}{\epsilon}{a}{A}} 
          ; \dnt{\isop{f}{A}{B}{\epsilon}} \\
        & = \dnt{\hasty{\Gamma}{\epsilon}{f\;a}{B}}
      \end{aligned}
    \end{equation}
    \item \brle{inl}, \brle{inr}: analogous to the \brle{op} case
  \end{itemize}
  Similarly, we can show weakening for regions $\haslb{\Delta}{r}{\ms{L}}$ \ref{itm:regwk} by
  induction on the typing derivation as follows:
  \begin{itemize}
    \item \brle{br}: we have that
    \begin{equation}
      \begin{aligned}
      \dnt{\Gamma \leq \Delta}
        ; \dnt{\haslb{\Delta}{\brb{\ell}{a}}{\ms{L}}}
        ; \dnt{\ms{L} \leq \ms{K}}
      & = \dnt{\Gamma \leq \Delta} 
        ; \dnt{\hasty{\Delta}{\bot}{a}{A}}
        ; \iota_{\ms{L}, \ell}
        ; \dnt{\ms{L} \leq \ms{K}} \\
      & = \dnt{\hasty{\Gamma}{\bot}{a}{A}}
        ; \iota_{\ms{L}, \ell}
        ; \dnt{\ms{L} \leq \ms{K}}
      \end{aligned}
    \end{equation}
    It hence suffices to show that 
    $\iota_{\ms{L}, \ell} ; \dnt{\ms{L} \leq \ms{K}} = \iota_{\ms{K}, \ell}$, which we can do by
    induction on $\ms{L} \leq \ms{K}$:
    \begin{itemize}
      \item \brle{lwk-nil}: this case yields a contradiction, since if $\ms{K} = \cdot$ it cannot
      define the label $\ell$.
      \item \brle{lwk-skip}: we have $\ms{K} = \ms{K}', \kappa(B)$, and hence
      \begin{equation}
        \iota_{\ms{L}', \ell} ; \dnt{\ms{L} \leq \ms{K}', \kappa(B)}
        = \iota_{\ms{L}', \ell} ; \dnt{\ms{L} \leq \ms{K}', \kappa(B)} ; \iota_l
        = \iota_{\ms{K}', \ell} ; \iota_l
        = \iota_{\ms{K}, \ell}
      \end{equation}
      \item \brle{lwk-cons}: we have $\ms{L} = \ms{L}', \kappa(B)$ and $\ms{K} = \ms{K}', \kappa(B)$
      .
      \begin{itemize}
        \item If $\kappa = \ell$, then $B = A$ and
        \begin{equation}
          \iota_{(\ms{L}', \ell(A)), \ell} 
          ; \dnt{\ms{L}, \ell(A) \leq \ms{K}', \ell(A)}
          = \iota_r ; \dnt{\ms{L}' \leq \ms{K}'} + \dnt{A}
          = \iota_r
           = \iota_{\ms{K}, \ell}
        \end{equation}
        \item Otherwise, we have by induction that
        \begin{equation}
          \begin{aligned}
          \iota_{(\ms{L}', \kappa(B)), \ell}
          ; \dnt{\ms{L}, \kappa(B) \leq \ms{K}', \kappa(B)}
          & = \iota_{\ms{L}', \ell} ; \iota_l ; \dnt{\ms{L}' \leq \ms{K}'} + \dnt{B} \\
          & = \iota_{\ms{L}', \ell} ; \dnt{\ms{L}' \leq \ms{K}'} ; \iota_l \\
          & = \iota_{\ms{K}', \ell} ; \iota_l & = \iota_{\ms{K}, \ell}
          \end{aligned}
        \end{equation}
      \end{itemize}
    \end{itemize}
    \item \brle{let$_1$-r}: we have by induction that
    \begin{equation}
      \begin{aligned}
        & \dnt{\Gamma \leq \Delta} 
          ; \dnt{\haslb{\Delta}{\letstmt{x}{a}{r}}{\ms{L}}} 
          ; \dnt{\ms{L} \leq \ms{K}} \\
        & = \dnt{\Gamma \leq \Delta}
          ; \lmor{\dnt{\hasty{\Delta}{\epsilon}{a}{A}}}
          ; \dnt{\haslb{\Delta, \bhyp{x}{A}}{r}{\ms{L}}}
          ; \dnt{\ms{L} \leq \ms{K}} \\
        & = \lmor{\dnt{\Gamma \leq \Delta} ; \dnt{\hasty{\Delta}{\epsilon}{a}{A}}}
          ; \dnt{\Gamma \leq \Delta} \otimes \dnt{A}
          ; \dnt{\haslb{\Delta, \bhyp{x}{A}}{r}{\ms{L}}}
          ; \dnt{\ms{L} \leq \ms{K}} \\
        & = \lmor{\dnt{\hasty{\Gamma}{\epsilon}{a}{A}}}
          ; \dnt{\Gamma, \bhyp{x}{A} \leq \Delta, \bhyp{x}{A}}
          ; \dnt{\haslb{\Delta, \bhyp{x}{A}}{r}{\ms{L}}}
          ; \dnt{\ms{L} \leq \ms{K}} \\
        & = \lmor{\dnt{\hasty{\Gamma}{\epsilon}{a}{A}}}
          ; \dnt{\haslb{\Gamma, \bhyp{x}{A}}{r}{\ms{K}}} \\
        & = \dnt{\haslb{\Gamma}{\letstmt{x}{a}{r}}{\ms{K}}}
      \end{aligned}
    \end{equation}
    \item \brle{let$_2$-r}: we have by induction that
    \begin{equation}
      \begin{aligned}
        & \dnt{\Gamma \leq \Delta} 
          ; \dnt{\haslb{\Delta}{\letstmt{(x, y)}{a}{r}}{\ms{L}}} 
          ; \dnt{\ms{L} \leq \ms{K}} \\
        & = \dnt{\Gamma \leq \Delta}
          ; \lmor{\dnt{\hasty{\Delta}{\epsilon}{a}{A \otimes B}}}
          ; \alpha
          ; \dnt{\haslb{\Delta, \bhyp{x}{A}, \bhyp{y}{B}}{r}{\ms{L}}}
          ; \dnt{\ms{L} \leq \ms{K}} \\
        & = \lmor{\dnt{\Gamma \leq \Delta} ; \dnt{\hasty{\Delta}{\epsilon}{a}{A \otimes B}}}
          ; \alpha
          ; \dnt{\Gamma \leq \Delta} \otimes \dnt{A} \otimes \dnt{B}
          ; \dnt{\haslb{\Delta, \bhyp{x}{A}, \bhyp{y}{B}}{r}{\ms{L}}}
          ; \dnt{\ms{L} \leq \ms{K}} \\
        & = \lmor{\dnt{\hasty{\Gamma}{\epsilon}{a}{A \otimes B}}}
          ; \alpha
          ; \dnt{\Gamma, \bhyp{x}{A}, \bhyp{y}{B} \leq \Delta, \bhyp{x}{A}, \bhyp{y}{B}}
          ; \dnt{\haslb{\Delta, \bhyp{x}{A}, \bhyp{y}{B}}{r}{\ms{L}}}
          ; \dnt{\ms{L} \leq \ms{K}} \\
        & = \lmor{\dnt{\hasty{\Gamma}{\epsilon}{a}{A \otimes B}}}
          ; \alpha
          ; \dnt{\haslb{\Gamma, \bhyp{x}{A}, \bhyp{y}{B}}{r}{\ms{K}}}
      \end{aligned}
    \end{equation}
    \item \brle{case-r}: we have by induction that
    \begin{equation}
      \begin{aligned}
        & \dnt{\Gamma \leq \Delta} 
          ; \dnt{\haslb{\Delta}{\caseexpr{a}{x}{r}{y}{s}}{\ms{L}}}
          ; \dnt{\ms{L} \leq \ms{K}} \\
        & = \dnt{\Gamma \leq \Delta}
          ; \lmor{\dnt{\hasty{\Delta}{\epsilon}{a}{A}}}
          ; \delta^{-1}
          ; \\ & \qquad [ 
             \dnt{\haslb{\Delta, \bhyp{x}{A}}{s}{\ms{L}}}, 
             \dnt{\haslb{\Delta, \bhyp{y}{B}}{t}{\ms{L}}}
          ]
          ; \dnt{\ms{L} \leq \ms{K}} \\
        & = \lmor{\dnt{\Gamma \leq \Delta} ; \dnt{\hasty{\Delta}{\epsilon}{a}{A}}}
          ; \delta^{-1}
          ; [ \\ & \qquad 
             \dnt{\Gamma \leq \Delta} \otimes \dnt{A}
               ; \dnt{\haslb{\Delta, \bhyp{x}{A}}{s}{\ms{L}}}
               ; \dnt{\ms{L} \leq \ms{K}}, \\ & \qquad
             \dnt{\Gamma \leq \Delta} \otimes \dnt{B}
               ; \dnt{\haslb{\Delta, \bhyp{y}{B}}{t}{\ms{L}}}
               ; \dnt{\ms{L} \leq \ms{K}}
          ] \\
        & = \lmor{\dnt{\Gamma \leq \Delta} ; \dnt{\hasty{\Delta}{\epsilon}{a}{A}}}
           ; \delta^{-1}
           ; [ \\ & \qquad 
              \dnt{\Gamma, \bhyp{x}{A} \leq \Delta, \bhyp{x}{A}}
                ; \dnt{\haslb{\Delta, \bhyp{x}{A}}{s}{\ms{L}}}
                ; \dnt{\ms{L} \leq \ms{K}}, \\ & \qquad
              \dnt{\Gamma, \bhyp{y}{B} \leq \Delta, \bhyp{y}{B}}
                ; \dnt{\haslb{\Delta, \bhyp{y}{B}}{t}{\ms{L}}}
                ; \dnt{\ms{L} \leq \ms{K}}
           ]
            \\
        & = \lmor{\dnt{\hasty{\Gamma}{\epsilon}{a}{A}}}
          ; \delta^{-1}
          ; [ 
            \dnt{\haslb{\Gamma, \bhyp{x}{A}}{s}{\ms{K}}},
            \dnt{\haslb{\Gamma, \bhyp{y}{B}}{t}{\ms{K}}}
          ] \\
        & = \dnt{\haslb{\Gamma}{\caseexpr{a}{x}{s}{y}{t}}{\ms{K}}}
      \end{aligned}
    \end{equation}
    \item \brle{cfg}: Let $L = \loopmor{\Delta}{(\wbranch{\ell_i}{x_i}{t_i},)_i}{\ms{L}}$ and
    $\ms{R} = (\ell_i(A_i),)_i$. We have by induction that
    \begin{equation}
      \begin{aligned}
        & \dnt{\Gamma \leq \Delta} ; L ; \dnt{\ms{L} \leq \ms{K}} + \Sigma_i\dnt{A_i} \\
        & = \dnt{\Gamma \leq \Delta} 
          ; \delta^{-1}_{\Sigma}
          ; [(\dnt{\haslb{\Delta, \bhyp{x_i}{A_i}}{t_i}{\ms{L}, \ms{R}}},)_i]
          ; \alpha^+_{\dnt{\ms{L}} + \Sigma_i\dnt{A_i}}
          ; \dnt{\ms{L} \leq \ms{K}} + \Sigma_i\dnt{A_i} \\
        & = \delta^{-1}_{\Sigma}
          ; [
            \dnt{\Gamma \leq \Delta} \otimes \dnt{A_i}
            ; (\dnt{\haslb{\Delta, \bhyp{x_i}{A_i}}{t_i}{\ms{L}, \ms{R}}}
            ; \alpha_{\dnt{\ms{L}} + \dnt{\ms{R}}}
            ; \dnt{\ms{L} \leq \ms{K}} + \dnt{\ms{R}},)_i]
          ; \alpha^+_{\dnt{\ms{L}} + \Sigma_i\dnt{A_i}} \\
        & = \delta^{-1}_{\Sigma}
          ; [
            (\dnt{\Gamma, \bhyp{x_i}{A_i} \leq \Delta, \bhyp{x_i}{A_i}}
            ; \dnt{\haslb{\Delta, \bhyp{x_i}{A_i}}{t_i}{\ms{L}, \ms{R}}}
            ; \dnt{\ms{L}, \ms{R} \leq \ms{K}, \ms{R}},)_i]
          ; \alpha^+_{\dnt{\ms{L}} + \Sigma_i\dnt{A_i}} \\
        & = \delta^{-1}_{\Sigma}
          ; [(\dnt{\haslb{\Gamma, \bhyp{x_i}{A_i}}{t_i}{\ms{K}, \ms{R}}},)_i]
          ; \alpha^+_{\dnt{\ms{L}} + \Sigma_i\dnt{A_i}} \\
        & = \loopmor{\Gamma}{(\wbranch{\ell_i}{x_i}{t_i},)_i}{\ms{K}}
      \end{aligned}
    \end{equation}
    It follows that
    \begin{equation}
      \begin{aligned}
        & \dnt{\Gamma \leq \Delta} 
          ; \dnt{\haslb{\Delta}{\where{r}{(\wbranch{\ell_i}{x_i}{t_i},)_i}}{\ms{L}}}
          ; \dnt{\ms{L} \leq \ms{K}} \\
        & = \dnt{\Gamma \leq \Delta} 
          ; \lmor{\dnt{\haslb{\Delta}{r}{\ms{L}, \ms{R}}}}  
          ; \dnt{\Delta} \otimes \alpha^+_{\dnt{L} + \Sigma_i\dnt{A_i}})
          ; \delta^{-1}
          ; [\pi_r, \rfix{L}]
          ; \dnt{\ms{L} \leq \ms{K}} \\
        & = \lmor{\dnt{\Gamma \leq \Delta} 
            ; \dnt{\haslb{\Delta}{r}{\ms{L}, \ms{R}}}
            ; \alpha^+_{\dnt{L} + \Sigma_i\dnt{A_i}}}
          ; \delta^{-1}
          ; [\pi_r, \dnt{\Gamma \leq \Delta} ; \rfix{L}]
          ; \dnt{\ms{L} \leq \ms{K}} \\
        & = \lmor{\dnt{\haslb{\Gamma}{r}{\ms{L}, \ms{R}}}; \alpha^+_{\dnt{L} + \Sigma_i\dnt{A_i}}}
          ; [ \pi_r ; \dnt{\ms{L} \leq \ms{K}}, 
              \rfix{\loopmor{\Gamma}{(\wbranch{\ell_i}{x_i}{t_i},)_i}{\ms{K}}}
          ]
          \\
        & = \lmor{\dnt{\haslb{\Gamma}{r}{\ms{L}, \ms{R}}} 
            ; \alpha_{\dnt{\ms{L}} + \Sigma_i\dnt{A_i}}
            ; \dnt{\ms{L} \leq \ms{K}} + \Sigma_i\dnt{A_i}}
          ; [ \pi_r, 
              \rfix{\loopmor{\Gamma}{(\wbranch{\ell_i}{x_i}{t_i},)_i}{\ms{K}}}
          ]
          \\
        & = \lmor{\dnt{\haslb{\Gamma}{r}{\ms{L}, \ms{R}}} 
            ; \alpha_{\dnt{\ms{L}} + \Sigma_i\dnt{A_i}}
            ; \dnt{\ms{L} \leq \ms{K}} + \Sigma_i\dnt{A_i}}
          ; [ \pi_r, 
              \rfix{\loopmor{\Gamma}{(\wbranch{\ell_i}{x_i}{t_i},)_i}{\ms{K}}}
          ]
          \\
        & = \lmor{\dnt{\haslb{\Gamma}{r}{\ms{L}, \ms{R}}}
            ; \dnt{\ms{L}, \ms{R} \leq \ms{K}, \ms{R}} 
            ; \alpha_{\dnt{\ms{K}} + \Sigma_i\dnt{A_i}}}
          ; [ \pi_r, 
              \rfix{\loopmor{\Gamma}{(\wbranch{\ell_i}{x_i}{t_i},)_i}{\ms{K}}}
          ]
          \\
        & = \lmor{\dnt{\haslb{\Gamma}{r}{\ms{K}, \ms{R}}}
            ; \alpha_{\dnt{\ms{K}} + \Sigma_i\dnt{A_i}}}
          ; [ \pi_r, 
              \rfix{\loopmor{\Gamma}{(\wbranch{\ell_i}{x_i}{t_i},)_i}{\ms{K}}}
          ]
          \\
        & = \dnt{\haslb{\Gamma}{\where{r}{(\wbranch{\ell_i}{x_i}{t_i},)_i}}{\ms{K}}}
      \end{aligned}
    \end{equation}
  \end{itemize}
  Weakening for substitutions \ref{itm:substwk} and label substitututions \ref{itm:lbsubstwk}
  then follow by a trivial induction.
\end{proof}

\begin{lemma}[Substitution Projection]
  For $\issubst{\gamma}{\Gamma}{\Delta}$, $\ms{eff}(\Delta) = \bot$ and $\Delta(x) = A$, we have
  \begin{equation}
    \dnt{\issubst{\gamma}{\Gamma}{\Delta}};\pi_{\Delta, x} 
    = \dnt{\hasty{\Gamma}{\epsilon}{[\gamma]x}{A}}
  \end{equation}
  \label{lem:subst-proj}
\end{lemma}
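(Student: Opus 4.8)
The plan is to prove the identity by induction on the derivation of $\issubst{\gamma}{\Gamma}{\Delta}$, i.e.\ on the structure of $\Delta$. The rule \brle{sb-nil} gives $\Delta = \cdot$, which cannot satisfy $\Delta(x) = A$, so that case is vacuous; the entire content lies in \brle{sb-cons}. The observation I would make up front, and which does all the real work, is that the hypothesis $\ms{eff}(\Delta) = \bot$ forces $\gamma$ to be \emph{pure}: since effects accumulate monotonically along a context, every prefix $\Delta'$ of $\Delta$ also has $\ms{eff}(\Delta') = \bot$, and hence by property~(1) of pure substitutions ($\ms{eff}(\Delta) = \bot \implies \substpure{\gamma}$) each component of $\gamma$ denotes a morphism in $\mc{C}_\bot$. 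Purity is precisely what lets me replace the premonoidal ordered tensor $\ltimes$ occurring in the substitution semantics of Figure~\ref{fig:ssa-subst-sem} by the genuine Cartesian product $\otimes$, using the Freyd-category axiom $f \otimes g = \langle \pi_l; f, \pi_r; g\rangle$ for pure $f, g$.

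Concretely, I would write $\Delta = \Delta', \thyp{y}{B}{\epsilon'}$ and $\gamma = (\gamma_0, y \mapsto e)$, so that by the semantics $\dnt{\issubst{\gamma}{\Gamma}{\Delta}} = \dmor{\dnt{\Gamma}}; \bigl(\dnt{\issubst{\gamma_0}{\Gamma}{\Delta'}} \otimes \dnt{\hasty{\Gamma}{\bot}{e}{B}}\bigr)$, both factors now pure. I then split on whether the projected variable $x$ equals $y$ or lies in $\Delta'$. In the base case $x = y$ we have $B = A$, $\pi_{\Delta, x} = \pi_r$, and $(\dnt{\gamma_0} \otimes \dnt{e}); \pi_r = \pi_r; \dnt{e}$; combined with the comonoid identity $\dmor{\dnt{\Gamma}}; \pi_r = \ms{id}$ the composite collapses to $\dnt{\hasty{\Gamma}{\epsilon}{e}{A}}$, and since $[\gamma]x = (\gamma_0, y \mapsto e)(y) = e$ this is exactly $\dnt{\hasty{\Gamma}{\epsilon}{[\gamma]x}{A}}$ (the effect annotation being immaterial by purity and effect-weakening).

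In the inductive case $x \in \Delta'$ with $x \neq y$, we have $\pi_{\Delta, x} = \pi_l; \pi_{\Delta', x}$ and $(\dnt{\gamma_0} \otimes \dnt{e}); \pi_l = \pi_l; \dnt{\gamma_0}$, so with $\dmor{\dnt{\Gamma}}; \pi_l = \ms{id}$ the composite reduces to $\dnt{\issubst{\gamma_0}{\Gamma}{\Delta'}}; \pi_{\Delta', x}$. The induction hypothesis (applicable since $\ms{eff}(\Delta') = \bot$) rewrites this to $\dnt{\hasty{\Gamma}{\epsilon}{[\gamma_0]x}{A}}$, and because $[\gamma]x = (\gamma_0, y \mapsto e)(x) = \gamma_0(x) = [\gamma_0]x$ for $x \neq y$ (by the defining equations for substitution application), this finishes the step.

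The main obstacle I anticipate is the reconciliation of $\ltimes$ with $\otimes$: the projection simplifications $\dmor{\dnt{\Gamma}}; \pi_l = \dmor{\dnt{\Gamma}}; \pi_r = \ms{id}$ and the product description of $\otimes$ on pure morphisms are exactly the tools needed, but they are only available after the purity reduction. It is worth stressing that this step is load-bearing rather than cosmetic: if $\gamma$ were permitted to be impure, the ordered tensor would not collapse to a Cartesian product and the two projections would fail to simplify, which is why the hypothesis $\ms{eff}(\Delta) = \bot$ appears in the statement. Everything else is a routine bookkeeping induction once this reduction is in place.
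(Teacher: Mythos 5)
Your proof is correct and takes essentially the same route as the paper's: induction on $\Delta$ with a vacuous nil case, the head case ($x = y$) collapsing via $\dmor{\dnt{\Gamma}};(\cdot \otimes \dnt{e});\pi_r = \dnt{e}$, and the tail case ($x \neq y$) reducing to the induction hypothesis through $\pi_l$. The only difference is that the paper silently writes $\otimes$ where Figure~\ref{fig:ssa-subst-sem} defines the semantics with $\ltimes$, whereas you explicitly justify this replacement via purity of $\gamma$ — a step the paper elides but which your argument correctly identifies as the load-bearing use of the hypothesis $\ms{eff}(\Delta) = \bot$.
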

\begin{proof}
  We proceed by induction on $\Delta$:
  \begin{itemize}[leftmargin=*]
    \item If $\Delta = \cdot$, then $\Delta(x) = A$ is a contradiction.
    \item If $\Delta = \Delta', x : A$, then $\gamma = \gamma', x \mapsto [\gamma]x$, so we have 
    \begin{equation}
      \dnt{\issubst{\gamma}{\Gamma}{\Delta}};\pi_{\Delta, x} =
      \Delta_{\dnt{\Gamma}}
      ; \dnt{\issubst{\gamma'}{\Gamma}{\Delta'}} \otimes \dnt{\hasty{\Gamma}{\bot}{[\gamma]x}{A}}
      ; \pi_r
      = \dnt{\hasty{\Gamma}{\epsilon}{[\gamma]x}{A}}
    \end{equation}
    as desired.
    \item If $\Delta = \Delta', y : B$ (with $y \neq x$), then $\gamma = \gamma', y \mapsto
    [\gamma]y$, so by induction we have
    \begin{equation}
      \begin{aligned}
      \dnt{\issubst{\gamma}{\Gamma}{\Delta}};\pi_{\Delta, x} 
      & = \Delta_{\dnt{\Gamma}}
       ; \dnt{\issubst{\gamma'}{\Gamma}{\Delta'}} \otimes \dnt{\hasty{\Gamma}{\bot}{[\gamma]y}{B}}
       ; \pi_l ; \pi_{\Delta', x} \\
      & = \dnt{\issubst{\gamma'}{\Gamma}{\Delta'}} ; \pi_{\Delta', x}
        = \dnt{\hasty{\Gamma}{\epsilon}{[\gamma']x}{A}}
        = \dnt{\hasty{\Gamma}{\epsilon}{[\gamma]x}{A}}
      \end{aligned}
    \end{equation}
  \end{itemize}
\end{proof}

\soundnesssubst*

\label{proof:soundness-subst}

\begin{proof}
  Fix $\issubst{\gamma}{\Gamma}{\Delta}$ with $\ms{eff}(\Delta) = \bot$. We will begin by showing
  the soundness of substitution for expressions \ref{itm:tm-subst-sound}: we proceed by induction
  on the derivation $\hasty{\Delta}{\epsilon}{e}{E}$:
  \begin{itemize}[leftmargin=*]
    \item If $e = x$ is a variable, then by Lemma~\ref{lem:subst-proj}, we have
    \begin{equation}
      \dnt{\issubst{\gamma}{\Gamma}{\Delta}} ; \dnt{\hasty{\Delta}{\epsilon}{x}{A}} 
      = \dnt{\issubst{\gamma}{\Gamma}{\Delta}} ; \pi_{\Delta, x}
      = \dnt{\hasty{\Gamma}{\epsilon}{[\gamma]x}{A}}
    \end{equation}
    as desired.
    \item If $e = f\;a$ is an operation, then by induction we have that
    \begin{equation}
      \begin{aligned}
      \dnt{\issubst{\gamma}{\Gamma}{\Delta}} ; \dnt{\hasty{\Gamma}{\epsilon}{f\;a}{B}}
      &= \dnt{\issubst{\gamma}{\Gamma}{\Delta}} 
      ; \dnt{\hasty{\Delta}{\epsilon}{a}{A}} 
      ; \dnt{\isop{f}{\epsilon}{A}{B}}
      \\ &= \dnt{\hasty{\Gamma}{\epsilon}{[\gamma]a}{A}}
      ; \dnt{\isop{f}{\epsilon}{A}{B}}
      = \dnt{\hasty{\Gamma}{\epsilon}{[\gamma](f\;a)}{B}}
      \end{aligned}
    \end{equation}
    as desired. The cases for left injections, right injections, and \ms{abort} are analogous
    \item If $e = (\letexpr{x}{a}{b})$ is a unary \ms{let}-binding, then by induction we have that
    \begin{equation}
      \begin{aligned}
      & \dnt{\issubst{\gamma}{\Gamma}{\Delta}} \otimes \dnt{A} 
      ; \dnt{\hasty{\Delta, \bhyp{x}{A}}{\epsilon}{b}{B}}
      \\ &= \dnt{\issubst{(\gamma, x \mapsto x)}{(\Gamma, \bhyp{x}{A})}{(\Delta, \bhyp{x}{A})}}
      ; \dnt{\hasty{\Delta, \bhyp{x}{A}}{\epsilon}{b}{B}}
      \\ &= \dnt{\hasty{\Gamma, \bhyp{x}{A}}{\epsilon}{[\gamma, x \mapsto x]b}{B}}
          = \dnt{\hasty{\Gamma, \bhyp{x}{A}}{\epsilon}{[\gamma]b}{B}}
      \end{aligned}
    \end{equation}
    as we can assume that $x$ is a fresh variable. Hence, it follows that
    \begin{equation}
      \begin{aligned}
        & \dnt{\issubst{\gamma}{\Gamma}{\Delta}} 
        ; \dnt{\hasty{\Delta}{\epsilon}{\letexpr{x}{a}{b}}{B}} \\
        &= \dnt{\issubst{\gamma}{\Gamma}{\Delta}}
        ; \Delta_{\dnt{\Delta}} ; \dnt{\Delta} \otimes \dnt{\hasty{\Delta}{\epsilon}{a}{A}}
        ; \dnt{\hasty{\Delta, \bhyp{x}{A}}{\epsilon}{b}{B}} \\
        &= \Delta_{\dnt{\Gamma}} 
        ; \dnt{\Gamma} 
          \otimes \dnt{\hasty{\Gamma}{\epsilon}{[\gamma]a}{A}}
        ; \dnt{\issubst{\gamma}{\Gamma}{\Delta}} \otimes \dnt{A}
        ; \dnt{\hasty{\Delta, \bhyp{x}{A}}{\epsilon}{b}{B}} \\
        &= \Delta_{\dnt{\Gamma}}
        ; \dnt{\Gamma} \otimes \dnt{\hasty{\Gamma}{\epsilon}{[\gamma]a}{A}}
        ; \dnt{\hasty{\Gamma, \bhyp{x}{A}}{\epsilon}{[\gamma]b}{B}} \\
        &= \dnt{\hasty{\Gamma}{\epsilon}{[\gamma](\letexpr{x}{a}{b})}{B}}
      \end{aligned}
    \end{equation}
    as desired.
    \item If $e = (a, b)$ is a pair, then by induction we have that
    \begin{equation}
      \begin{aligned}
      &\dnt{\issubst{\gamma}{\Gamma}{\Delta}} ; \dnt{\hasty{\Delta}{\epsilon}{(a, b)}{A \times B}} 
      \\
      &= \dnt{\issubst{\gamma}{\Gamma}{\Delta}} 
        ; \Delta_{\dnt{\Delta}} 
        ; \dnt{\hasty{\Delta}{\epsilon}{a}{A}} \ltimes \dnt{\hasty{\Delta}{\epsilon}{b}{B}}
      \\
      &= \Delta_{\dnt{\Gamma}} ; 
        (\dnt{\issubst{\gamma}{\Gamma}{\Delta}} \otimes \dnt{\hasty{\Delta}{\epsilon}{a}{A}})
        \ltimes 
        (\dnt{\issubst{\gamma}{\Gamma}{\Delta}} \otimes \dnt{\hasty{\Delta}{\epsilon}{b}{B}})
      \\
      &= \Delta_{\dnt{\Gamma}} ; 
        \dnt{\hasty{\Gamma}{\epsilon}{[\gamma]a}{A}} 
        \ltimes \dnt{\hasty{\Gamma}{\epsilon}{[\gamma]b}{B}}
      = \dnt{\hasty{\Gamma}{\epsilon}{[\gamma](a, b)}{A \times B}}
      \end{aligned}
    \end{equation}
    \item If $e = (\letexpr{(x, y)}{a}{b})$ is a binary \ms{let}-binding, then by induction we have
    that
    \begin{equation}
      \begin{aligned}
      & \dnt{\issubst{\gamma}{\Gamma}{\Delta}} \otimes (\dnt{A} \otimes \dnt{B}) 
          ; \alpha ; \dnt{\hasty{\Delta, \bhyp{x}{A}, \bhyp{y}{B}}{\epsilon}{b}{C}}
      \\ &= \alpha ; \dnt{\issubst{(\gamma, x \mapsto x, y \mapsto y)}
        {(\Gamma, \bhyp{x}{A}, \bhyp{y}{B})}{(\Delta, \bhyp{x}{A}, \bhyp{y}{B})}}
          ; \dnt{\hasty{\Delta, \bhyp{x}{A}, \bhyp{y}{B}}{\epsilon}{b}{C}}
      \\ &= \dnt{\hasty{\Gamma, \bhyp{x}{A}, \bhyp{y}{B}}
                        {\epsilon}{[\gamma, x \mapsto x, y \mapsto y]b}{C}}
          = \dnt{\hasty{\Gamma, \bhyp{x}{A}, \bhyp{y}{B}}{\epsilon}{[\gamma]b}{C}}
      \end{aligned}
    \end{equation}
    as we can assume that $x, y$ are fresh variables. Hence, it follows that
    \begin{equation}
      \begin{aligned}
        & \dnt{\issubst{\gamma}{\Gamma}{\Delta}} 
        ; \dnt{\hasty{\Delta}{\epsilon}{\letexpr{(x, y)}{a}{b}}{C}} \\
        &= \dnt{\issubst{\gamma}{\Gamma}{\Delta}}
        ; \Delta_{\dnt{\Delta}} 
        ; \dnt{\Delta} \otimes \dnt{\hasty{\Delta}{\epsilon}{a}{A \otimes B}}
        ; \alpha ; \dnt{\hasty{\Delta, \bhyp{x}{A}, \bhyp{y}{B}}{\epsilon}{b}{C}} \\
        &= \Delta_{\dnt{\Gamma}} 
        ; \dnt{\Gamma} \otimes \dnt{\hasty{\Gamma}{\epsilon}{[\gamma]a}{A \otimes B}}
        ; \dnt{\issubst{\gamma}{\Gamma}{\Delta}} \otimes (\dnt{A} \otimes \dnt{B}) ; \alpha
        ; \dnt{\hasty{\Delta, \bhyp{x}{A}, \bhyp{y}{B}}{\epsilon}{b}{C}} \\
        &= \Delta_{\dnt{\Gamma}}
        ; \dnt{\Gamma} \otimes \dnt{\hasty{\Gamma}{\epsilon}{[\gamma]a}{A \otimes B}}
        ; \dnt{\hasty{\Gamma, \bhyp{x}{A}, \bhyp{y}{B}}{\epsilon}{[\gamma]b}{C}} \\
        &= \dnt{\hasty{\Gamma}{\epsilon}{[\gamma](\letexpr{(x, y)}{a}{b})}{C}}
      \end{aligned}
    \end{equation}
    as desired.
    \item If $e = \caseexpr{a}{x}{b}{y}{c}$ is a \ms{case}-expression, then by induction we have
    that
    \begin{equation}
      \begin{aligned}
      & \dnt{\issubst{\gamma}{\Gamma}{\Delta}} \otimes \dnt{A} 
      ; \dnt{\hasty{\Delta, \bhyp{x}{A}}{\epsilon}{b}{C}}
      \\ &= \dnt{\issubst{(\gamma, x \mapsto x)}{(\Gamma, \bhyp{x}{A})}{(\Delta, \bhyp{x}{A})}}
      ; \dnt{\hasty{\Delta, \bhyp{x}{A}}{\epsilon}{b}{C}}
      \\ &= \dnt{\hasty{\Gamma, \bhyp{x}{A}}{\epsilon}{[\gamma, x \mapsto x]b}{C}}
          = \dnt{\hasty{\Gamma, \bhyp{x}{A}}{\epsilon}{[\gamma]b}{C}}
      \end{aligned}
    \end{equation}
    and
    \begin{equation}
      \begin{aligned}
      & \dnt{\issubst{\gamma}{\Gamma}{\Delta}} \otimes \dnt{B} 
      ; \dnt{\hasty{\Delta, \bhyp{y}{B}}{\epsilon}{c}{C}}
      \\ &= \dnt{\issubst{(\gamma, y \mapsto y)}{(\Gamma, \bhyp{y}{B})}{(\Delta, \bhyp{y}{B})}}
      ; \dnt{\hasty{\Delta, \bhyp{y}{B}}{\epsilon}{c}{C}}
      \\ &= \dnt{\hasty{\Gamma, \bhyp{y}{B}}{\epsilon}{[\gamma, x \mapsto x]c}{C}}
          = \dnt{\hasty{\Gamma, \bhyp{y}{B}}{\epsilon}{[\gamma]c}{C}}
      \end{aligned}
    \end{equation}
    as we can assume that $x, y$ are fresh variables. Hence, it follows that
    \begin{equation}
      \begin{aligned}
        & \dnt{\issubst{\gamma}{\Gamma}{\Delta}}
        ; \Delta_{\dnt{\Delta}} 
        ; \dnt{\Delta} \otimes \dnt{\hasty{\Delta}{\epsilon}{a}{A + B}}
        ; \delta^{-1}_{\dnt{\Delta}} 
        ; [
          \dnt{\hasty{\Delta, \bhyp{x}{A}}{\epsilon}{b}{C}}, 
          \dnt{\hasty{\Delta, \bhyp{y}{B}}{\epsilon}{c}{C}}
        ] 
        \\ &
        = \Delta_{\dnt{\Gamma}}
        ; \dnt{\Gamma} \otimes \dnt{\hasty{\Gamma}{\epsilon}{[\gamma]a}{A + B}}
        ; \delta^{-1}_{\dnt{\Gamma}}
        ;  \\ & \qquad
        [
          \dnt{\issubst{\gamma}{\Gamma}{\Delta}} \otimes \dnt{A} 
            ; \dnt{\hasty{\Delta, \bhyp{x}{A}}{\epsilon}{b}{C}}, 
          \dnt{\issubst{\gamma}{\Gamma}{\Delta}} \otimes \dnt{B} 
            ; \dnt{\hasty{\Delta, \bhyp{y}{B}}{\epsilon}{c}{C}}
        ] 
        \\ &
        = \Delta_{\dnt{\Gamma}}
        ; \dnt{\Gamma} \otimes \dnt{\hasty{\Gamma}{\epsilon}{[\gamma]a}{A + B}}
        ; \delta^{-1}_{\dnt{\Gamma}} 
        ; [
          \dnt{\hasty{\Gamma, \bhyp{x}{A}}{\epsilon}{[\gamma]b}{C}}, 
          \dnt{\hasty{\Gamma, \bhyp{y}{B}}{\epsilon}{[\gamma]c}{C}}
        ]
        \\ &
        = \dnt{\hasty{\Gamma}{\epsilon}{[\gamma](\caseexpr{a}{x}{b}{y}{c})}{C}}
      \end{aligned}
    \end{equation} 
    as desired.
    \item If $e = ()$ is the null expression, since $\dnt{\issubst{\gamma}{\Gamma}{\Delta}}$ is
    pure, the desired result holds trivially since $\mb{1}$ is the terminal object in the category
    of pure morphisms.
  \end{itemize}
  We may now prove the soundness of substitution for regions as follows: assuming
  $\haslb{\Gamma}{r}{\ms{L}}$, we proceed by induction on $r$ as follows:
  \begin{itemize}[leftmargin=*]
    \item If $r = \brb{\ell}{a}$, then we have that
    \begin{equation}
      \begin{aligned}
      \dnt{\issubst{\gamma}{\Gamma}{\Delta}} ; \dnt{\haslb{\Delta}{\brb{\ell}{A}}{\ms{L}}}
      &= \dnt{\issubst{\gamma}{\Gamma}{\Delta}}  
      ; \dnt{\hasty{\Delta}{\bot}{a}{A}} 
      ; \iota_{\ms{L}, \ell} \\
      &= \dnt{\hasty{\Gamma}{\bot}{[\gamma]a}{A}}
      ; \iota_{\ms{L}, \ell}
      = \dnt{\haslb{\Gamma}{[\gamma](\brb{\ell}{a})}{\ms{L}}} 
      \end{aligned}
    \end{equation}
    \item If $r = (\letstmt{x}{a}{t})$, then we have that, by induction,
    \begin{equation}
      \begin{aligned}
      & \dnt{\issubst{\gamma}{\Gamma}{\Delta}} \otimes \dnt{A} 
      ; \dnt{\haslb{\Delta, \bhyp{x}{A}}{t}{\ms{L}}}
      \\ &= \dnt{\issubst{(\gamma, x \mapsto x)}{(\Gamma, \bhyp{x}{A})}{(\Delta, \bhyp{x}{A})}}
      ; \dnt{\haslb{\Delta, \bhyp{x}{A}}{t}{\ms{L}}}
      \\ &= \dnt{\haslb{\Gamma, \bhyp{x}{A}}{[\gamma, x \mapsto x]t}{\ms{L}}}
          = \dnt{\haslb{\Gamma, \bhyp{x}{A}}{[\gamma]t}{\ms{L}}}
      \end{aligned}
    \end{equation}
    since $x$ can be taken to be a free variable. Hence,
    \begin{equation}
      \begin{aligned}
        & \dnt{\issubst{\gamma}{\Gamma}{\Delta}} 
        ; \dnt{\haslb{\Delta}{\letexpr{x}{a}{t}}{\ms{L}}} \\
        &= \dnt{\issubst{\gamma}{\Gamma}{\Delta}}
        ; \Delta_{\dnt{\Delta}} ; \dnt{\Delta} \otimes \dnt{\hasty{\Delta}{\epsilon}{a}{A}}
        ; \dnt{\haslb{\Delta, \bhyp{x}{A}}{t}{\ms{L}}} \\
        &= \Delta_{\dnt{\Gamma}} 
        ; \dnt{\Gamma} 
          \otimes \dnt{\hasty{\Gamma}{\epsilon}{[\gamma]a}{A}}
        ; \dnt{\issubst{\gamma}{\Gamma}{\Delta}} \otimes \dnt{A}
        ; \dnt{\haslb{\Delta, \bhyp{x}{A}}{t}{\ms{L}}} \\
        &= \Delta_{\dnt{\Gamma}}
        ; \dnt{\Gamma} \otimes \dnt{\hasty{\Gamma}{\epsilon}{[\gamma]a}{A}}
        ; \dnt{\haslb{\Gamma, \bhyp{x}{A}}{[\gamma]t}{\ms{L}}} \\
        &= \dnt{\haslb{\Gamma}{[\gamma](\letexpr{x}{a}{t})}{\ms{L}}}
      \end{aligned}
    \end{equation}
    \item If $r = (\letstmt{(x, y)}{a}{t})$, then we have that, by induction,
    \begin{equation}
      \begin{aligned}
      & \dnt{\issubst{\gamma}{\Gamma}{\Delta}} \otimes (\dnt{A} \otimes \dnt{B}) 
          ; \alpha ; \dnt{\haslb{\Delta, \bhyp{x}{A}, \bhyp{y}{B}}{t}{\ms{L}}}
      \\ &= \alpha ; \dnt{\issubst{(\gamma, x \mapsto x, y \mapsto y)}
        {(\Gamma, \bhyp{x}{A}, \bhyp{y}{B})}{(\Delta, \bhyp{x}{A}, \bhyp{y}{B})}}
          ; \dnt{\haslb{\Delta, \bhyp{x}{A}, \bhyp{y}{B}}{t}{\ms{L}}}
      \\ &= \dnt{\haslb{\Gamma, \bhyp{x}{A}, \bhyp{y}{B}}
                        {[\gamma, x \mapsto x, y \mapsto y]t}{\ms{L}}}
          = \dnt{\haslb{\Gamma, \bhyp{x}{A}, \bhyp{y}{B}}{[\gamma]t}{\ms{L}}}
      \end{aligned}
    \end{equation}
    since $x, y$ can be taken to be free variables. Hence,
    \begin{equation}
      \begin{aligned}
        & \dnt{\issubst{\gamma}{\Gamma}{\Delta}} 
        ; \dnt{\haslb{\Delta}{\letexpr{(x, y)}{a}{t}}{\ms{L}}} \\
        &= \dnt{\issubst{\gamma}{\Gamma}{\Delta}}
        ; \Delta_{\dnt{\Delta}} 
        ; \dnt{\Delta} \otimes \dnt{\hasty{\Delta}{\epsilon}{a}{A \otimes B}}
        ; \alpha ; \dnt{\haslb{\Delta, \bhyp{x}{A}, \bhyp{y}{B}}{t}{\ms{L}}} \\
        &= \Delta_{\dnt{\Gamma}} 
        ; \dnt{\Gamma} \otimes \dnt{\hasty{\Gamma}{\epsilon}{[\gamma]a}{A \otimes B}}
        ; \dnt{\issubst{\gamma}{\Gamma}{\Delta}} \otimes (\dnt{A} \otimes \dnt{B}) ; \alpha
        ; \dnt{\haslb{\Delta, \bhyp{x}{A}, \bhyp{y}{B}}{t}{\ms{L}}} \\
        &= \Delta_{\dnt{\Gamma}}
        ; \dnt{\Gamma} \otimes \dnt{\hasty{\Gamma}{\epsilon}{[\gamma]a}{A \otimes B}}
        ; \dnt{\haslb{\Gamma, \bhyp{x}{A}, \bhyp{y}{B}}{[\gamma]t}{\ms{L}}} \\
        &= \dnt{\haslb{\Gamma}{[\gamma](\letexpr{(x, y)}{a}{t})}{\ms{L}}}
      \end{aligned}
    \end{equation}
    \item If $r = \casestmt{a}{x}{s}{y}{t}$, then we have that, by induction
    \begin{equation}
      \begin{aligned}
      & \dnt{\issubst{\gamma}{\Gamma}{\Delta}} \otimes \dnt{A} 
      ; \dnt{\haslb{\Delta, \bhyp{x}{A}}{s}{\ms{L}}}
      \\ &= \dnt{\issubst{(\gamma, x \mapsto x)}{(\Gamma, \bhyp{x}{A})}{(\Delta, \bhyp{x}{A})}}
      ; \dnt{\haslb{\Delta, \bhyp{x}{A}}{s}{\ms{L}}}
      \\ &= \dnt{\haslb{\Gamma, \bhyp{x}{A}}{[\gamma, x \mapsto x]s}{\ms{L}}}
          = \dnt{\haslb{\Gamma, \bhyp{x}{A}}{[\gamma]s}{\ms{L}}}
      \end{aligned}
    \end{equation}
    and
    \begin{equation}
      \begin{aligned}
      & \dnt{\issubst{\gamma}{\Gamma}{\Delta}} \otimes \dnt{B} 
      ; \dnt{\haslb{\Delta, \bhyp{y}{B}}{t}{\ms{L}}}
      \\ &= \dnt{\issubst{(\gamma, y \mapsto y)}{(\Gamma, \bhyp{y}{B})}{(\Delta, \bhyp{y}{B})}}
      ; \dnt{\haslb{\Delta, \bhyp{y}{B}}{t}{\ms{L}}}
      \\ &= \dnt{\haslb{\Gamma, \bhyp{y}{B}}{[\gamma, y \mapsto y]t}{\ms{L}}}
          = \dnt{\haslb{\Gamma, \bhyp{y}{B}}{[\gamma]t}{\ms{L}}}
      \end{aligned}
    \end{equation}
    since $x, y$ can be taken to be free variables. Hence,
    \begin{equation}
      \begin{aligned}
        & \dnt{\issubst{\gamma}{\Gamma}{\Delta}}
        ; \Delta_{\dnt{\Delta}} 
        ; \dnt{\Delta} \otimes \dnt{\hasty{\Delta}{\epsilon}{a}{A + B}}
        ; \delta^{-1}_{\dnt{\Delta}} 
        ; [
          \dnt{\haslb{\Delta, \bhyp{x}{A}}{s}{\ms{L}}}, 
          \dnt{\haslb{\Delta, \bhyp{y}{B}}{t}{\ms{L}}}
        ] 
        \\ &
        = \Delta_{\dnt{\Gamma}}
        ; \dnt{\Gamma} \otimes \dnt{\hasty{\Gamma}{\epsilon}{[\gamma]a}{A + B}}
        ; \delta^{-1}_{\dnt{\Gamma}}
        ;  \\ & \qquad
        [
          \dnt{\issubst{\gamma}{\Gamma}{\Delta}} \otimes \dnt{A} 
            ; \dnt{\haslb{\Delta, \bhyp{x}{A}}{s}{\ms{L}}}, 
          \dnt{\issubst{\gamma}{\Gamma}{\Delta}} \otimes \dnt{B} 
            ; \dnt{\haslb{\Delta, \bhyp{y}{B}}{t}{\ms{L}}}
        ] 
        \\ &
        = \Delta_{\dnt{\Gamma}}
        ; \dnt{\Gamma} \otimes \dnt{\hasty{\Gamma}{\epsilon}{[\gamma]a}{A + B}}
        ; \delta^{-1}_{\dnt{\Gamma}} 
        ; [
          \dnt{\haslb{\Gamma, \bhyp{x}{A}}{[\gamma]s}{\ms{L}}}, 
          \dnt{\haslb{\Gamma, \bhyp{y}{B}}{[\gamma]t}{\ms{L}}}
        ]
        \\ &
        = \dnt{\haslb{\Gamma}{[\gamma](\casestmt{a}{x}{s}{y}{t})}{\ms{L}}}
      \end{aligned}
    \end{equation}
    as desired.
    \item Assume $r = \where{s}{(\wbranch{\ell_i}{x_i}{t_i},)_i}$. Define $\ms{R} =
    (\ell_i(A_i),)_i$ and $S = \dnt{\issubst{\gamma}{\Gamma}{\Delta}}$. We have by induction that,
    for all $i$,
    \begin{equation}
      \begin{aligned}
      & \dnt{\issubst{\gamma}{\Gamma}{\Delta}} \otimes \dnt{A_i} 
      ; \dnt{\haslb{\Delta, \bhyp{x_i}{A_i}}{t_i}{\ms{L}, \ms{R}}}
      \\ &= \dnt{\issubst{(\gamma, x_i \mapsto x_i)}  
                  {(\Gamma, \bhyp{x_i}{A_i})}{(\Delta, \bhyp{x_i}{A_i})}}
      ; \dnt{\haslb{\Delta, \bhyp{x_i}{A_i}}{t_i}{\ms{L}, \ms{R}}}
      \\ &= \dnt{\haslb{\Gamma, \bhyp{x_i}{A_i}}{[\gamma, x_i \mapsto x_i]t_i}{\ms{L}, \ms{R}}}
          = \dnt{\haslb{\Gamma, \bhyp{x_i}{A_i}}{[\gamma]t_i}{\ms{L}, \ms{R}}}
      \end{aligned}
    \end{equation}
    and therefore that
    \begin{equation}
      \begin{aligned}
        & S \otimes \Sigma_i\dnt{A_i} 
        ; \loopmor{\Delta}{(\wbranch{\ell_i}{x_i}{t_i},)_i}{\ms{L}} \\
        & = S \otimes \Sigma_i\dnt{A_i} 
        ; \delta^{-1}_{\Sigma} ; 
        [ 
          \dnt{\haslb{\Delta, \bhyp{x_i}{A_i}}{t_i}{\ms{L}, \ms{R}}},
        ]_i
        ; \alpha^+_{\dnt{\ms{L}} + \Sigma_i \dnt{A_i}} \\
        & = \delta^{-1}_{\Sigma} ; 
        [ 
          S \otimes \dnt{A_i} ;
          \dnt{\haslb{\Delta, \bhyp{x_i}{A_i}}{t_i}{\ms{L}, \ms{R}}},
        ]_i
        ; \alpha^+_{\dnt{\ms{L}} + \Sigma_i \dnt{A_i}} \\
        & = \delta^{-1}_{\Sigma} ; 
        [ 
          \dnt{\haslb{\Gamma, \bhyp{x_i}{A_i}}{[\gamma]t_i}{\ms{L}, \ms{R}}},
        ]_i
        ; \alpha^+_{\dnt{\ms{L}} + \Sigma_i \dnt{A_i}} \\
        & = \loopmor{\Gamma}{(\wbranch{\ell_i}{x_i}{[\gamma]t_i},)_i}{\ms{L}}
      \end{aligned}
    \end{equation}
    It follows that
    \begin{equation}
      \begin{aligned}
        & S
          ; \dnt{\haslb{\Delta}{\where{s}{(\wbranch{\ell_i}{x_i}{t_i},)_i}}{\ms{L}}} \\
        & = S
          ; \lmor{\dnt{\haslb{\Delta}{s}{\ms{L}, \ms{R}}} 
            ; \alpha^+_{\dnt{\ms{L}} + \Sigma_i \dnt{A_i}}}
          ; \delta^{-1}
          ; [\pi_r, \loopmor{\Delta}{(\wbranch{\ell_i}{x_i}{t_i},)_i}{\ms{L}}] \\
        & = \lmor{S ; \dnt{\haslb{\Delta}{s}{\ms{L}, \ms{R}}} 
            ; \alpha^+_{\dnt{\ms{L}} + \Sigma_i \dnt{A_i}}}
          ; S \otimes (\dnt{\ms{L}} + \Sigma_i \dnt{A_i})
          ; \delta^{-1}
          ; [\pi_r, \loopmor{\Delta}{(\wbranch{\ell_i}{x_i}{t_i},)_i}{\ms{L}}] \\
        & = \lmor{\dnt{\haslb{\Gamma}{[\gamma]s}{\ms{L}, \ms{R}}} 
            ; \alpha^+_{\dnt{\ms{L}} + \Sigma_i \dnt{A_i}}}
          ; \delta^{-1}
          ; [\pi_r, 
            S \otimes (\dnt{\ms{L}} + \Sigma_i \dnt{A_i}) 
            ; \loopmor{\Delta}{(\wbranch{\ell_i}{x_i}{t_i},)_i}{\ms{L}}] \\
        & = \lmor{\dnt{\haslb{\Gamma}{[\gamma]s}{\ms{L}, \ms{R}}} 
            ; \alpha^+_{\dnt{\ms{L}} + \Sigma_i \dnt{A_i}}}
          ; \delta^{-1}
          ; [\pi_r, \loopmor{\Gamma}{(\wbranch{\ell_i}{x_i}{[\gamma]t_i},)_i}{\ms{L}}] \\
        & = \dnt{\haslb{\Gamma}
          {\where{[\gamma]s}
          {(\wbranch{\ell_i}{x_i}{[\gamma]t_i},)_i}}{\ms{L}}} \\
      \end{aligned}
    \end{equation}
    as desired.
  \end{itemize}
  Composition of substitutions then follows by a trivial induction, as does substitution for label
  substitutions.
\end{proof}

\subsection{Label Substitution}

\begin{lemma}[Label Substitution Injection]
  For $\lbsubst{\Gamma}{\sigma}{\ms{L}}{\ms{K}}$ and $\ms{L}(\ell) = A$, we have
  \begin{equation}
    \dnt{\Gamma} \otimes \iota_{\ms{L}, \ell} ; \dnt{\lbsubst{\Gamma}{\sigma}{\ms{L}}{\ms{K}}}
    = \dnt{\haslb{\Gamma, x : A}{[\gamma](\brb{\ell}{x})}{\ms{K}}}
  \end{equation}
  \label{lem:lsubst-inj}
\end{lemma}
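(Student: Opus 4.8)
The plan is to prove this by induction on the derivation of $\lbsubst{\Gamma}{\sigma}{\ms{L}}{\ms{K}}$ (writing the label substitution as $\sigma$, so that the $\gamma$ in the statement is this $\sigma$), equivalently on the label context $\ms{L}$ in which $\ell$ occurs. The base case $\ms{L} = \cdot$ is vacuous, since $\ms{L}(\ell) = A$ cannot hold. In the inductive step we have $\ms{L} = \ms{L}', \kappa(B)$ and $\sigma = (\sigma', \kappa(z) \mapsto s)$ by \brle{ls-cons}, and the semantics from Figure~\ref{fig:ssa-subst-sem} unfolds the extended substitution to $\delta^{-1} ; [\dnt{\lbsubst{\Gamma}{\sigma'}{\ms{L}'}{\ms{K}}}, \dnt{\haslb{\Gamma, \bhyp{z}{B}}{s}{\ms{K}}}]$. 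The entire argument then reduces to two routine identities for the inverse distributor, namely $\dnt{\Gamma} \otimes \iota_l ; \delta^{-1} = \iota_l$ and $\dnt{\Gamma} \otimes \iota_r ; \delta^{-1} = \iota_r$; these follow because $\delta$ is the canonical distributor (so $\iota_l ; \delta = \dnt{\Gamma} \otimes \iota_l$ and $\iota_r ; \delta = \dnt{\Gamma} \otimes \iota_r$) together with $\delta;\delta^{-1} = \ms{id}$.

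The induction then splits on whether $\ell$ is the newly-added label. If $\ell = \kappa$, then $A = B$, the injection specializes to $\iota_{\ms{L}, \ell} = \iota_r$ (Figure~\ref{fig:ssa-reg-sem}), and syntactically $[\sigma](\brb{\ell}{x}) = [x/z]s$ (Figure~\ref{fig:ssa-label-subst-def}). Precomposing the unfolded semantics with $\dnt{\Gamma} \otimes \iota_r$ and using the distributor identity gives $\iota_r ; [\dnt{\lbsubst{\Gamma}{\sigma'}{\ms{L}'}{\ms{K}}}, \dnt{\haslb{\Gamma, \bhyp{z}{B}}{s}{\ms{K}}}] = \dnt{\haslb{\Gamma, \bhyp{z}{B}}{s}{\ms{K}}}$, which equals $\dnt{\haslb{\Gamma, \bhyp{x}{A}}{[x/z]s}{\ms{K}}}$ by invariance of the denotation under renaming of the bound parameter (a special case of soundness of single substitution, Corollary~\ref{corr:single-subst}, since $x$ is a pure variable of type $A$). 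This is exactly the right-hand side of the lemma.

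If instead $\ell \neq \kappa$, so that $\ell \in \ms{L}'$, then $\iota_{\ms{L}, \ell} = \iota_{\ms{L}', \ell} ; \iota_l$ and, since the top binding is skipped, $[\sigma](\brb{\ell}{x}) = [\sigma'](\brb{\ell}{x})$. Here I would factor $\dnt{\Gamma} \otimes (\iota_{\ms{L}', \ell} ; \iota_l) = (\dnt{\Gamma} \otimes \iota_{\ms{L}', \ell}) ; (\dnt{\Gamma} \otimes \iota_l)$ by functoriality of $\dnt{\Gamma} \otimes -$, and then apply the $\iota_l$ distributor identity to obtain $(\dnt{\Gamma} \otimes \iota_{\ms{L}', \ell}) ; \iota_l ; [\dnt{\lbsubst{\Gamma}{\sigma'}{\ms{L}'}{\ms{K}}}, \cdots] = (\dnt{\Gamma} \otimes \iota_{\ms{L}', \ell}) ; \dnt{\lbsubst{\Gamma}{\sigma'}{\ms{L}'}{\ms{K}}}$. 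This is precisely the left-hand side of the lemma instantiated at $\sigma'$ and $\ms{L}'$, so the induction hypothesis yields $\dnt{\haslb{\Gamma, \bhyp{x}{A}}{[\sigma'](\brb{\ell}{x})}{\ms{K}}} = \dnt{\haslb{\Gamma, \bhyp{x}{A}}{[\sigma](\brb{\ell}{x})}{\ms{K}}}$, finishing the case.

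There is no deep obstacle here; this is essentially a bookkeeping lemma establishing that interpreting a label substitution and then restricting to the $\ell$-summand agrees with substituting into the trivial branch $\brb{\ell}{x}$. The only spots demanding mild care are the base-case renaming step (where one must justify that substituting a fresh variable for the bound parameter leaves the interpretation unchanged) and matching the orientation of the distributor consistently, given that Figure~\ref{fig:ssa-subst-sem} writes $\delta$ for what is dimensionally the inverse distributor $\delta^{-1} : \dnt{\Gamma} \otimes (\dnt{\ms{L}'} + \dnt{B}) \to \dnt{\Gamma}\otimes\dnt{\ms{L}'} + \dnt{\Gamma}\otimes\dnt{B}$ used elsewhere in the region semantics.
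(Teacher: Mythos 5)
Your proof is correct and follows essentially the same route as the paper's: induction on $\ms{L}$, a vacuous base case, unfolding the semantics of the extended substitution, the two distributor--injection identities, and a case split on whether the head label is $\ell$ (reducing to the new binding's body when it is, and applying the induction hypothesis plus $[\sigma](\brb{\ell}{x}) = [\sigma'](\brb{\ell}{x})$ when it is not). The only cosmetic difference is that the paper $\alpha$-converts the head binding's parameter to be $x$ itself, so its matching case needs no renaming step, whereas you justify the renaming explicitly via Corollary~\ref{corr:single-subst}; both are fine, and your closing observation about the $\delta$ versus $\delta^{-1}$ orientation in Figure~\ref{fig:ssa-subst-sem} is accurate.
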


\begin{proof}
  We proceed by induction on $\ms{L}$:
  \begin{itemize}[leftmargin=*]
    \item If $\ms{L} = \cdot$, then $\ms{L}(\ell) = A$ is a contradiction
    \item If $\ms{L} = \ms{L}', \ell(A)$, then 
    $\sigma = \sigma', \ell(x) \mapsto [\sigma](\brb{\ell}{x})$, so we have
    \begin{equation}
      \begin{aligned}
      & \dnt{\Gamma} \otimes \iota_{\ms{L}, \ell} 
        ; \dnt{\lbsubst{\Gamma}{\sigma}{\ms{L}, \ell(A)}{\ms{K}}} \\
      & = \dnt{\Gamma} \otimes \iota_r ; \delta^{-1} ; [
          \dnt{\lbsubst{\Gamma}{\sigma'}{\ms{L}}{\ms{K}}},
          \dnt{\haslb{\Gamma, \bhyp{x}{A}}{[\sigma](\brb{\ell}{x})}{\ms{K}}}
        ] \\
      & = \dnt{\haslb{\Gamma, \bhyp{x}{A}}{[\sigma](\brb{\ell}{x})}{\ms{K}}}
      \end{aligned}
    \end{equation}
    \item If $\ms{L} = \ms{L}', \kappa(A)$, then 
    $\sigma = \sigma', \kappa(x) \mapsto [\sigma](\brb{\kappa}{x})$, so by
    induction we have
    \begin{equation}
      \begin{aligned}
      & \dnt{\Gamma} \otimes \iota_{\ms{L}, \ell} 
        ; \dnt{\lbsubst{\Gamma}{\sigma}{\ms{L}, \ell(A)}{\ms{K}}} \\
      & = \dnt{\Gamma} \otimes (\iota_{\ms{L}', \ell} ; \iota_l) ; \delta^{-1} ; [
          \dnt{\lbsubst{\Gamma}{\sigma'}{\ms{L}}{\ms{K}}},
          \dnt{\haslb{\Gamma, \bhyp{x}{B}}{[\sigma](\brb{\kappa}{x})}{\ms{K}}}
        ] \\
      & = \dnt{\Gamma} \otimes \iota_{\ms{L}', \ell} 
        ; \dnt{\lbsubst{\Gamma}{\sigma'}{\ms{L}}{\ms{K}}} \\
      & = \dnt{\haslb{\Gamma, \bhyp{x}{A}}{[\sigma'](\brb{\ell}{x})}{\ms{K}}} \\
      & = \dnt{\haslb{\Gamma, \bhyp{x}{A}}{[\sigma](\brb{\ell}{x})}{\ms{K}}}
      \end{aligned}
    \end{equation}
  \end{itemize}
\end{proof}

\begin{lemma}[Label Substitution Splitting]
  For $\lbsubst{\Gamma}{\sigma}{\ms{L}}{\ms{K}}$, where $\ms{L} = (\ell_i(A_i),)_i$, we have
  \begin{equation}
    \dnt{\lbsubst{\Gamma}{\sigma}{\ms{L}}{\ms{K}}} 
    = \dnt{\Gamma} \otimes \alpha_{\Sigma_i\dnt{A_i}} 
    ; \delta^{-1}_{\Sigma}
    ; [\dnt{\haslb{\Gamma, \bhyp{x_i}{A_i}}{\sigma_i\;x_i}{\ms{K}}},]_i
  \end{equation}
  and therefore
  \begin{equation}
    \delta^{-1}_{\Sigma}
    ; [\dnt{\haslb{\Gamma, \bhyp{x_i}{A_i}}{\sigma_i\;x_i}{\ms{K}}},]_i
    = \alpha_{\dnt{\ms{L}}}
    ; \dnt{\lbsubst{\Gamma}{\sigma}{\ms{L}}{\ms{K}}} 
  \end{equation}
  In particular, we have that, given $\forall i, \haslb{\Gamma, \bhyp{x_i}{A_i}}{t_i}{\ms{K}}$,
  \begin{equation}
    \dnt{\lbsubst{\Gamma}{(\wbranch{\ell_i}{x_i}{A_i},)_i)}{\ms{L}}{\ms{K}}}
    = \dnt{\Gamma} \otimes \alpha_{\Sigma_i\dnt{A_i}} 
    ; \delta^{-1}_{\Sigma}
    ; [\dnt{\haslb{\Gamma, \bhyp{x_i}{A_i}}{t_i}{\ms{K}}},]_i
  \end{equation}
  \label{lem:lsubst-distrib}
\end{lemma}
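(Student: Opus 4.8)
The plan is to prove the first (main) equation by induction on the label context $\ms{L}$ --- equivalently, on the structure of the label substitution $\sigma$ --- mirroring exactly the recursion defining $\dnt{\lbsubst{\Gamma}{\sigma}{\ms{L}}{\ms{K}}}$ in Figure~\ref{fig:ssa-subst-sem}, with labels peeled off on the right. The two displayed consequences then follow formally: the second by precomposing the first with the inverse of $\dnt{\Gamma} \otimes \alpha^+$ (which exists and cancels since $\dnt{\Gamma} \otimes -$ is a functor and $\alpha^+$ is an isomorphism by Theorem~\ref{thm:monoidal-coherence} for the coproduct monoidal structure; this $\alpha^+$ is what the statement writes $\alpha_{\Sigma_i\dnt{A_i}}$), and the ``in particular'' clause by instantiating $\sigma = (\wbranch{\ell_i}{x_i}{t_i})_i$, for which $\sigma_i\;x_i = t_i$ holds definitionally.

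For the base case $\ms{L} = \cdot$, both sides are morphisms $\dnt{\Gamma} \otimes \mb{0} \to \dnt{\ms{K}}$. By distributivity $\dnt{\Gamma} \otimes \mb{0}$ is initial, so such a morphism is unique; the left-hand side unfolds to $\tmor{\dnt{\Gamma}} \otimes \mb{0} ; \lambda ; 0_{\ms{K}}$ and the right-hand side to the empty copairing precomposed with the (trivial) coherence, and both necessarily coincide with the zero morphism.

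For the inductive step, write $\ms{L} = \ms{L}', \ell_n(A_n)$ and $\sigma = \sigma', \ell_n(x_n) \mapsto r_n$ with $r_n = \sigma_n\;x_n$. Unfolding the semantics gives $\dnt{\lbsubst{\Gamma}{\sigma}{\ms{L}}{\ms{K}}} = \delta^{-1} ; [\dnt{\lbsubst{\Gamma}{\sigma'}{\ms{L}'}{\ms{K}}}, \dnt{\haslb{\Gamma, \bhyp{x_n}{A_n}}{r_n}{\ms{K}}}]$, where the binary inverse distributor splits $\dnt{\Gamma} \otimes (\dnt{\ms{L}'} + \dnt{A_n})$ as $(\dnt{\Gamma} \otimes \dnt{\ms{L}'}) + (\dnt{\Gamma} \otimes \dnt{A_n})$. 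The induction hypothesis rewrites the first component as $\dnt{\Gamma} \otimes \alpha^+ ; \delta^{-1}_{\Sigma} ; [\dnt{\sigma_i\;x_i}]_{i < n}$. The heart of the argument is a coherence-and-naturality identity describing how the $n$-ary inverse distributor $\delta^{-1}_{\Sigma}$, precomposed with $\dnt{\Gamma} \otimes \alpha^+$, factors as the binary $\delta^{-1}$ followed by the coproduct of (i) the $(n-1)$-ary inverse distributor precomposed with $\dnt{\Gamma} \otimes \alpha^+$ on the left summand, and (ii) the identity on $\dnt{\Gamma} \otimes \dnt{A_n}$ on the right. Substituting this and collapsing the nested copairings into the single $n$-ary copairing $[\dnt{\sigma_i\;x_i}]_i$ yields the claimed right-hand side.

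The main obstacle I expect is precisely this coherence bookkeeping: verifying that the binary distributor, the $n$-ary distributor $\delta^{-1}_{\Sigma}$, and the coproduct associators $\alpha^+$ commute as required, with the indices and the direction of each coherence isomorphism tracked correctly. None of these steps is deep --- each is an instance of naturality of $\delta^{-1}$ together with the uniqueness guaranteed by Theorem~\ref{thm:monoidal-coherence} (and the purity of all distributors and coherences) --- but they are fiddly to state precisely. It is cleanest to isolate the distributor/coherence compatibility as a standalone sublemma, proved by its own short induction on the list $(A_i)_i$ using that $\delta^{-1}_{\Sigma}$ is ``the obvious morphism,'' and then feed it into the main induction, so that the inductive step of the lemma reduces to a one-line rewrite.
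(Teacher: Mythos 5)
Your proposal is correct, but there is nothing in the paper to compare it against: the paper states Lemma~\ref{lem:lsubst-distrib} without any proof at all (it is immediately followed by the Label Substitution Extension lemma, which is likewise unproved), evidently treating it as routine bookkeeping. Your induction on $\ms{L}$, mirroring the recursive definition of $\dnt{\lbsubst{\Gamma}{\sigma}{\ms{L}}{\ms{K}}}$ in Figure~\ref{fig:ssa-subst-sem}, is exactly the argument the paper's conventions invite, and it matches the style of the proof the paper \emph{does} give for the adjacent Label Substitution Injection lemma (Lemma~\ref{lem:lsubst-inj}), which is also an induction peeling labels off $\ms{L}$. Two points of care you handled correctly and are worth keeping explicit: first, the cons-case in Figure~\ref{fig:ssa-subst-sem} is written with $\delta$ but must be read as $\delta^{-1}$ (the domain is $\dnt{\Gamma} \otimes (\dnt{\ms{L}} + \dnt{A})$, so one needs the inverse distributor before copairing); second, your base case relies on $\dnt{\Gamma} \otimes \mb{0}$ being initial, which is not a formal consequence of the paper's definition of distributivity (only binary distributors are required invertible) but is implicit in the paper's convention that $\delta^{-1}_{\Sigma}$ exists for \emph{any} finite coproduct, including the empty one --- the statement of the lemma itself only typechecks under that convention, so this is a gap in the paper's definitions rather than in your proof. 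Your suggestion to isolate the compatibility of the binary distributor, the $n$-ary distributor, and the coproduct coherences $\alpha^+$ as a standalone sublemma is the right way to organize the inductive step; note that its proof uses naturality of $\delta^{-1}$ with respect to the pure (hence central) coherence maps, so no premonoidal subtleties arise.
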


\begin{lemma}[Label Substitution Extension]
  Given $\sigma = \sigma_l, \sigma_r$,
    $\lbsubst{\Gamma}{\sigma}{\ms{L}, \ms{R}}{\ms{L}', \ms{R}'}$,
    $\lbsubst{\Gamma}{\sigma_l}{\ms{L}}{\ms{L}'}$,
    $\lbsubst{\Gamma}{\sigma_r}{\ms{R}}{\ms{R}'}$, we have
  \begin{equation}
    \dnt{\lbsubst{\Gamma}{\sigma}{\ms{L}, \ms{R}}{\ms{L}', \ms{R}'}}
    = \dnt{\Gamma} \otimes \alpha_{\dnt{\ms{L}} + \dnt{\ms{R}}} 
    ; \dnt{\lbsubst{\Gamma}{\sigma_l}{\ms{L}}{\ms{L}'}} 
    + \dnt{\lbsubst{\Gamma}{\sigma_r}{\ms{R}}{\ms{R}'}}
    ; \alpha_{\dnt{\ms{L}', \ms{R}'}}
  \end{equation}
  In particular, for $\lbsubst{\Gamma}{\sigma}{\ms{L}}{\ms{K}}$, we have
  \begin{equation}
    \dnt{\lbsubst{\Gamma}{\lupg{\sigma}}{\ms{R}, \ms{L}}{\ms{R}, \ms{K}}}
    = \dnt{\Gamma} \otimes \alpha^+_{\dnt{\ms{R}} + \dnt{\ms{L}}} 
      ; \delta^{-1} 
      ; \pi_r + \dnt{\lbsubst{\Gamma}{\sigma}{\ms{L}}{\ms{K}}}
      ; \alpha^+_{\dnt{\ms{R}, \ms{K}}}
  \end{equation}
  \begin{equation}
    \dnt{\lbsubst{\Gamma}{\rupg{\sigma}}{\ms{L}, \ms{R}}{\ms{K}, \ms{R}}}
    = \dnt{\Gamma} \otimes \alpha^+_{\dnt{\ms{L}} + \dnt{\ms{R}}} 
      ; \delta^{-1} 
      ; \dnt{\lbsubst{\Gamma}{\sigma}{\ms{L}}{\ms{K}}} + \pi_r
      ; \alpha^+_{\dnt{\ms{K}, \ms{R}}}
  \end{equation}
  since
  \begin{equation}
    \lbsubst{\Gamma}{\ms{id}}{\ms{R}}{\ms{R}} = \pi_r
  \end{equation}
\end{lemma}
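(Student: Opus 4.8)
The plan is to prove the identity by induction on the right label context $\ms{R}$ (equivalently, on the list structure of $\sigma_r$), since the semantics of label-substitution in Figure~\ref{fig:ssa-subst-sem} is defined by a recursion that peels labels off the right of the context being substituted. Throughout, I would rely on the fact that every reshaping morphism in play — the coproduct coherence isomorphisms $\alpha^+$ (Theorem~\ref{thm:monoidal-coherence}), the injections $\iota_l, \iota_r$, the projection $\pi_r$, and the inverse distributor $\delta^{-1}$ — is pure, hence central, so that the premonoidal subtleties do not obstruct the rearrangements and $\dnt{\Gamma} \otimes (-)$ acts functorially on them. The two bridging tools will be label-substitution weakening (Lemma~\ref{lem:wk}~(\ref{itm:lbsubstwk})) and region weakening (Lemma~\ref{lem:wk}~(\ref{itm:regwk})), which let me pass between a region or substitution targeting $\ms{L}'$ (resp.\ $\ms{R}'$) and the same data regarded as targeting the larger context $\ms{L}', \ms{R}'$ via the injection $\dnt{\ms{L}' \leq \ms{L}', \ms{R}'}$ (resp.\ $\dnt{\ms{R}' \leq \ms{L}', \ms{R}'}$).

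For the base case $\ms{R} = \cdot$, the substitution $\sigma_r$ is empty, $\dnt{\ms{R}} = \mb{0}$, and the right-hand summand $\dnt{\lbsubst{\Gamma}{\sigma_r}{\ms{R}}{\ms{R}'}}$ is the unique morphism out of $\dnt{\Gamma} \otimes \mb{0} \cong \mb{0}$. The claim then collapses, via initiality of $\mb{0}$ and the coherence isomorphisms absorbing the $\mb{0}$ summand, to the assertion that $\dnt{\lbsubst{\Gamma}{\sigma_l}{\ms{L}}{\ms{L}', \ms{R}'}}$ equals $\dnt{\lbsubst{\Gamma}{\sigma_l}{\ms{L}}{\ms{L}'}}$ followed by the injection into $\dnt{\ms{L}', \ms{R}'}$ — which is precisely Lemma~\ref{lem:wk}~(\ref{itm:lbsubstwk}) specialized to $\Gamma = \Gamma'$, $\ms{K} = \ms{L}', \ms{R}'$, $\ms{K}' = \ms{L}'$, using that $\iota_l$ into $\dnt{\ms{L}'} + \dnt{\ms{R}'}$ post-composed with the coherence map into $\dnt{\ms{L}', \ms{R}'}$ is exactly $\dnt{\ms{L}' \leq \ms{L}', \ms{R}'}$.

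For the inductive step $\ms{R} = \ms{R}_0, \kappa(B)$, I would write $\sigma_r = \sigma_{r,0}, \kappa(x) \mapsto s$ and unfold the \brle{ls-cons} clause of Figure~\ref{fig:ssa-subst-sem} to obtain $\dnt{\lbsubst{\Gamma}{\sigma}{\ms{L}, \ms{R}}{\ms{L}', \ms{R}'}} = \delta^{-1} ; [\dnt{\lbsubst{\Gamma}{\sigma_l, \sigma_{r,0}}{\ms{L}, \ms{R}_0}{\ms{L}', \ms{R}'}}, \dnt{\haslb{\Gamma, \bhyp{x}{B}}{s}{\ms{L}', \ms{R}'}}]$. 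Applying the induction hypothesis to the first copairing component rewrites it as the block sum over $\ms{L}$ and $\ms{R}_0$; the remaining work is to reassemble this together with the new branch $s$ into the block sum $\delta^{-1} ; (\dnt{\sigma_l} + \dnt{\sigma_r})$ over $\ms{L}$ and $\ms{R}$, which is pure coproduct bookkeeping: repeated use of naturality of $\delta^{-1}$ and of the coherence isomorphisms relating the left-nested coproducts $\dnt{\ms{L}, \ms{R}}$ and $\dnt{\ms{L}', \ms{R}'}$ to $\dnt{\ms{L}} + \dnt{\ms{R}}$ and $\dnt{\ms{L}'} + \dnt{\ms{R}'}$. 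I expect this associator/distributor reshuffling to be the main obstacle — not because any step is deep, but because it is easy to misplace a coherence morphism; the cleanest way to discharge it is to first rewrite both sides into the flat distributor–copairing normal form supplied by the Label Substitution Splitting lemma (Lemma~\ref{lem:lsubst-distrib}), reducing the equation to a single instance of coproduct coherence over the concatenated index set of $\ms{L}, \ms{R}$.

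Finally, the two displayed special cases for $\lupg{\sigma}$ and $\rupg{\sigma}$ follow as corollaries: I would instantiate the main identity with the appropriate component being an identity label-substitution (the $\ms{R}$-part for $\lupg$, the $\ms{R}$-part for $\rupg$), and then substitute the auxiliary fact $\dnt{\lbsubst{\Gamma}{\ms{id}}{\ms{R}}{\ms{R}}} = \pi_r$ stated at the end of the lemma. That auxiliary fact is itself a one-line induction on $\ms{R}$: the \brle{ls-cons} clause produces $\delta^{-1} ; [\pi_r ; \iota_l, \pi_r ; \iota_r]$ — using the induction hypothesis weakened by $\iota_l$, and the denotation $\pi_r ; \iota_r$ of the identity branch $\brb{\kappa}{x}$ obtained from Lemma~\ref{lem:lsubst-inj} — which equals $\pi_r$ by the standard distributive identity $\delta^{-1} ; (\pi_r + \pi_r) = \pi_r$.
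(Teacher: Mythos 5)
The paper never actually proves this lemma: like the Splitting lemma (Lemma~\ref{lem:lsubst-distrib}) just before it, it is stated as an auxiliary fact and then used directly in the soundness proofs for label substitution and the equational theory, so there is no proof of record to compare yours against. Judged on its own, your argument is correct and would fill this gap. Induction on $\ms{R}$ is the right spine, because the semantics of label substitutions in Figure~\ref{fig:ssa-subst-sem} recurses by peeling labels off the \emph{right} of the source context, which is exactly where $\ms{R}$ sits. Your base case is precisely Lemma~\ref{lem:wk}~(\ref{itm:lbsubstwk}) (with the identity context weakening) together with the identification of $\dnt{\ms{L}' \leq \ms{L}', \ms{R}'}$ with the canonical injection $\iota_l ; \alpha^+$, and your inductive step is indeed only distributor naturality plus coproduct coherence once the induction hypothesis and region weakening (Lemma~\ref{lem:wk}~(\ref{itm:regwk})) have been applied to the two components of the copairing. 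The auxiliary identity $\dnt{\lbsubst{\Gamma}{\ms{id}}{\ms{R}}{\ms{R}}} = \pi_r$ is also proved correctly: it rests on $\delta^{-1};(\pi_r + \pi_r) = \pi_r$, which follows from $\delta ; \pi_r = [\dnt{\Gamma} \otimes \iota_l ; \pi_r, \dnt{\Gamma} \otimes \iota_r ; \pi_r] = \pi_r + \pi_r$.

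Three small corrections. First, the first displayed equation of the lemma omits the $\delta^{-1}$ that appears in both special cases; you silently (and correctly) reinstate it — without it the composite does not typecheck, so treat this as a typo in the statement rather than something to prove. Second, the denotation $\pi_r ; \iota_r$ of the identity branch $\brb{\kappa}{x}$ comes directly from the region semantics ($\dnt{\brb{\ell}{a}} = \dnt{\hasty{\Gamma}{\bot}{a}{A}} ; \iota_{\ms{L}, \ell}$ with $\dnt{x} = \pi_r$), not from Lemma~\ref{lem:lsubst-inj}, which concerns precomposing a substitution's denotation with an injection; the citation is harmless but misplaced. Third, be cautious about invoking the Splitting lemma to put both sides of the inductive step into flat normal form: that lemma is itself stated without proof in the paper, and its natural proof is the very same right-to-left induction you are performing, so you should either prove it first by that induction or carry out the associator/distributor reassembly directly, to avoid a circular dependency.
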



\soundnesslsubst*

\label{proof:soundness-lsubst}

\begin{proof}
  Fix $\lbsubst{\Gamma}{\sigma}{\ms{L}}{\ms{K}}$. We begin by proving the soundness of label
  substitution for regions as follows: assuming $\haslb{\Gamma}{r}{\ms{L}}$, we proceed by induction
  on $r$ as follows:
  \begin{itemize}[leftmargin=*]
    \item If $r = \brb{\ell}{a}$, then by Lemma~\ref{lem:lsubst-inj} we have that
    \begin{equation}
      \begin{aligned}
        \dnt{\haslb{\Gamma}{[\sigma](\brb{\ell}{a})}{\ms{K}}}
        & = \dnt{\haslb{\Gamma}{[a/x][\sigma](\brb{\ell}{x})}{\ms{K}}} \\
        & = \lmor{\dnt{\hasty{\Gamma}{\bot}{a}{A}}}
          ; \dnt{\haslb{\Gamma, \bhyp{x}{A}}{[\sigma](\brb{\ell}{x})}{\ms{K}}} \\
        & = \lmor{\dnt{\hasty{\Gamma}{\bot}{a}{A}}}
          ; \dnt{\Gamma} \otimes \iota_{\ms{L}, \ell} 
          ; \dnt{\lbsubst{\Gamma}{\sigma}{\ms{L}}{\ms{K}}}
          \\
        & = \lmor{\dnt{\hasty{\Gamma}{\bot}{a}{A}} ; \iota_{\ms{L}, \ell}}
          ; \dnt{\lbsubst{\Gamma}{\sigma}{\ms{L}}{\ms{K}}} \\
        & = \lmor{\dnt{\haslb{\Gamma}{\brb{\ell}{a}}{\ms{L}}}}
          ; \dnt{\lbsubst{\Gamma}{\sigma}{\ms{L}}{\ms{K}}}
      \end{aligned}
    \end{equation}
    as desired.
    \item If $r = (\letstmt{x}{a}{t})$, then we have by induction that
    \begin{equation}
      \begin{aligned}
        & \dnt{\haslb{\Gamma}{[\sigma](\letstmt{x}{a}{t})}{\ms{K}}} 
          = \dnt{\haslb{\Gamma}{\letstmt{x}{a}{[\sigma]t}}{\ms{K}}} \\
        & = \lmor{\dnt{\hasty{\Gamma}{\epsilon}{a}{A}}}
          ; \dnt{\haslb{\Gamma, \bhyp{x}{A}}{[\sigma]t}{\ms{K}}} \\
        & = \lmor{\dnt{\hasty{\Gamma}{\epsilon}{a}{A}}}
          ; \lmor{\dnt{\haslb{\Gamma, \bhyp{x}{A}}{t}{\ms{L}}}}
          ; \dnt{\lbsubst{\Gamma, \bhyp{x}{A}}{\sigma}{\ms{L}}{\ms{K}}} \\
        & = \lmor{\dnt{\hasty{\Gamma}{\epsilon}{a}{A}}}
          ; \lmor{\dnt{\haslb{\Gamma, \bhyp{x}{A}}{t}{\ms{L}}}}
          ; \pi_l \otimes \dnt{\ms{L}} 
          ; \dnt{\lbsubst{\Gamma}{\sigma}{\ms{L}}{\ms{K}}} \\
        & = \lmor{
            \lmor{\dnt{\hasty{\Gamma}{\epsilon}{a}{A}}}
            ; \dnt{\haslb{\Gamma, \bhyp{x}{A}}{t}{\ms{L}}}
          }
          ; \dnt{\lbsubst{\Gamma}{\sigma}{\ms{L}}{\ms{K}}} \\
        & = \lmor{\dnt{\haslb{\Gamma}{\letstmt{x}{a}{t}}{\ms{L}}}}
          ; \dnt{\lbsubst{\Gamma}{\sigma}{\ms{L}}{\ms{K}}} \\
      \end{aligned}
    \end{equation}
    as desired.
    \item If $r = (\letstmt{(x, y)}{a}{t})$, then we have by induction that
    \begin{equation}
      \begin{aligned}
        & \dnt{\haslb{\Gamma}{[\sigma](\letstmt{(x, y)}{a}{t})}{\ms{K}}} 
          = \dnt{\haslb{\Gamma}{\letstmt{(x, y)}{a}{[\sigma]t}}{\ms{K}}} \\
        & = \lmor{\dnt{\hasty{\Gamma}{\epsilon}{a}{A \otimes B}}} ; \alpha
          ; \dnt{\haslb{\Gamma, \bhyp{x}{A}, \bhyp{y}{B}}{[\sigma]t}{\ms{K}}} \\
        & = \lmor{\dnt{\hasty{\Gamma}{\epsilon}{a}{A \otimes B}}} ; \alpha
          ; \lmor{\dnt{\haslb{\Gamma, \bhyp{x}{A}, \bhyp{y}{B}}{t}{\ms{L}}}}
          ; \dnt{\lbsubst{\Gamma, \bhyp{x}{A}, \bhyp{y}{B}}{\sigma}{\ms{L}}{\ms{K}}} \\
          & = \lmor{\dnt{\hasty{\Gamma}{\epsilon}{a}{A \otimes B}}} ; \alpha
            ; \lmor{\dnt{\haslb{\Gamma, \bhyp{x}{A}, \bhyp{y}{B}}{t}{\ms{L}}}}
            ; (\pi_l ; \pi_l) \otimes \dnt{\ms{L}}
            ; \dnt{\lbsubst{\Gamma}{\sigma}{\ms{L}}{\ms{K}}} \\
        & = \lmor{\lmor{\dnt{\hasty{\Gamma}{\epsilon}{a}{A \otimes B}}} ; \alpha 
            ; \dnt{\haslb{\Gamma, \bhyp{x}{A}, \bhyp{y}{B}}{t}{\ms{L}}}}
          ; \dnt{\lbsubst{\Gamma}{\sigma}{\ms{L}}{\ms{K}}} \\
        & = \lmor{\dnt{\haslb{\Gamma}{\letstmt{(x, y)}{a}{t}}{\ms{L}}}}
          ; \dnt{\lbsubst{\Gamma}{\sigma}{\ms{L}}{\ms{K}}}
      \end{aligned}
    \end{equation}
    as desired, since 
    \item If $r = \casestmt{a}{x}{s}{y}{t}$, then we have by induction that
    \begin{equation}
      \begin{aligned}
        & \dnt{\haslb{\Gamma}{[\sigma](\casestmt{a}{x}{s}{y}{t})}{\ms{K}}} 
          = \dnt{\haslb{\Gamma}{\casestmt{a}{x}{[\sigma]s}{y}{[\sigma]t}}{\ms{K}}} \\
        & = \lmor{\dnt{\hasty{\Gamma}{\epsilon}{a}{A + B}}} ; \delta^{-1}
          ; [\dnt{\haslb{\Gamma, \bhyp{x}{A}}{[\sigma]s}{\ms{K}}}, 
              \dnt{\haslb{\Gamma, \bhyp{y}{B}}{[\sigma]t}{\ms{K}}}] \\
        & = \lmor{\dnt{\hasty{\Gamma}{\epsilon}{a}{A + B}}} ; \delta^{-1} ; [
        \\ & \qquad 
            \lmor{\dnt{\haslb{\Gamma, \bhyp{x}{A}}{s}{\ms{L}}}} 
              ; \dnt{\lbsubst{\Gamma, \bhyp{x}{A}}{\sigma}{\ms{L}}{\ms{K}}},
        \\ & \qquad
            \lmor{\dnt{\haslb{\Gamma, \bhyp{y}{B}}{t}{\ms{L}}}}
              ; \dnt{\lbsubst{\Gamma, \bhyp{y}{B}}{\sigma}{\ms{L}}{\ms{K}}}
          ] \\
        & = \lmor{\dnt{\hasty{\Gamma}{\epsilon}{a}{A + B}}} ; \delta^{-1} ;
        \\ & \qquad [
              \lmor{\dnt{\haslb{\Gamma, \bhyp{x}{A}}{s}{\ms{L}}}} ; \pi_l \otimes \dnt{\ms{L}}, 
              \lmor{\dnt{\haslb{\Gamma, \bhyp{y}{B}}{t}{\ms{L}}}} ; \pi_l \otimes \dnt{\ms{L}}] 
          ; \dnt{\lbsubst{\Gamma}{\sigma}{\ms{L}}{\ms{K}}} \\
        & = \lmor{
            \lmor{\dnt{\hasty{\Gamma}{\epsilon}{a}{A + B}}} 
            ; \delta^{-1} 
            ; [\dnt{\haslb{\Gamma, \bhyp{x}{A}}{s}{\ms{L}}}, 
              \dnt{\haslb{\Gamma, \bhyp{y}{B}}{t}{\ms{L}}}]}
          ; \dnt{\lbsubst{\Gamma}{\sigma}{\ms{L}}{\ms{K}}} \\
        & = \lmor{\dnt{\haslb{\Gamma}{\casestmt{a}{x}{s}{y}{t}}{\ms{L}}}}
          ; \dnt{\lbsubst{\Gamma}{\sigma}{\ms{L}}{\ms{K}}}
      \end{aligned}
    \end{equation}
    as desired.
    \item If $r = \where{s}{(\wbranch{\ell_i}{x_i}{t_i},)_i}$, then we have by induction,
    taking $\ms{R} = (\ell_i(A_i),)_i$,
    \begin{equation}
      \begin{aligned}
        & \entrymor{\Gamma}{[\rupg{\sigma}]s}{\ms{K}} 
          = \dnt{\haslb{\Gamma}{[\rupg{\sigma}]s}{\ms{K}, \ms{R}}}
          ; \alpha^+_{\dnt{\ms{L}} + \Sigma_i \dnt{A_i}} \\
        & = \lmor{\dnt{\haslb{\Gamma}{s}{\ms{L}, \ms{R}}}}
          ; \dnt{\lbsubst{\Gamma}{\rupg{\sigma}}{\ms{L}, \ms{R}}{\ms{K}, \ms{R}}}
          ; \alpha^+_{\dnt{\ms{K}} + \Sigma_i \dnt{A_i}} \\
        & = \lmor{\dnt{\haslb{\Gamma}{s}{\ms{L}, \ms{R}}}}
          ; \dnt{\Gamma} \otimes \alpha^+_{\dnt{\ms{L}} + \dnt{\ms{R}}} ; \delta^{-1} 
          ; \dnt{\lbsubst{\Gamma}{\sigma}{\ms{L}}{\ms{K}}} + \pi_r
          ; \alpha^+_{\dnt{\ms{L}} + \Sigma_i \dnt{A_i}} \\
        & = \lmor{\dnt{\haslb{\Gamma}{s}{\ms{L}, \ms{R}}}}
          ; \dnt{\Gamma} \otimes \alpha^+_{\dnt{\ms{L}} + \Sigma_i \dnt{A_i}} ; \delta^{-1} 
          ; \dnt{\lbsubst{\Gamma}{\sigma}{\ms{L}}{\ms{K}}} + \pi_r\\
        & = \lmor{\entrymor{\Gamma}{s}{\ms{L}}} 
          ; \delta^{-1} ; \dnt{\lbsubst{\Gamma}{\sigma}{\ms{L}}{\ms{K}}} + \pi_r
      \end{aligned}
    \end{equation}
    and
    \begin{equation}
      \begin{aligned}       
        & \loopmor{\Gamma}{(\wbranch{\ell_i}{x_i}{[\rupg{\sigma}]t_i},)_i}{\ms{K}} 
          = \delta^{-1}_{\Sigma} 
          ; [ \dnt{\haslb{\Gamma, \bhyp{x_i}{A_i}}{[\rupg{\sigma}]t_i}{\ms{K}, \ms{R}}} ]_i
          ; \alpha^+_{\dnt{\ms{K}} + \Sigma_i \dnt{A_i}} \\
        & = \delta^{-1}_{\Sigma} 
        ; [ 
            \lmor{\dnt{\haslb{\Gamma, \bhyp{x_i}{A_i}}{t_i}{\ms{L}, \ms{R}}}} 
            ; \dnt{\lbsubst{\Gamma, \bhyp{x_i}{A_i}}{\rupg{\sigma}}{\ms{L}, \ms{R}}{\ms{K}, \ms{R}}}
          ]_i
        ; \alpha^+_{\dnt{\ms{K}} + \Sigma_i \dnt{A_i}} \\
        & = \delta^{-1}_{\Sigma} 
        ; [ 
            \lmor{\dnt{\haslb{\Gamma, \bhyp{x_i}{A_i}}{t_i}{\ms{L}, \ms{R}}}} 
            ; \pi_l \otimes \dnt{\ms{L}, \ms{R}}
          ]_i
        ; \dnt{\lbsubst{\Gamma}{\rupg{\sigma}}{\ms{L}, \ms{R}}{\ms{K}, \ms{R}}}
        ; \alpha^+_{\dnt{\ms{K}} + \Sigma_i \dnt{A_i}} \\
        & = \delta^{-1}_{\Sigma} 
        ; [ 
            \lmor{\dnt{\haslb{\Gamma, \bhyp{x_i}{A_i}}{t_i}{\ms{L}, \ms{R}}}} 
            ; \pi_l \otimes \alpha^+_{\dnt{\ms{L}} + \Sigma_j \dnt{A_j}}
          ]_i
        ; \delta^{-1}
        ; \dnt{\lbsubst{\Gamma}{\sigma}{\ms{L}}{\ms{K}}} + \pi_r \\
        & = \lmor{
          \delta^{-1}_{\Sigma}
            ; [\dnt{\haslb{\Gamma, \bhyp{x_i}{A_i}}{t_i}{\ms{L}, \ms{R}}} 
                ; \alpha^+_{\dnt{\ms{L}} + \Sigma_j \dnt{A_j}},]_i }
          ; \pi_l \otimes -
          ; \delta^{-1} 
          ; \dnt{\lbsubst{\Gamma}{\sigma}{\ms{L}}{\ms{K}}} + \pi_r \\
        & = \lmor{
          \delta^{-1}_{\Sigma}
            ; [\dnt{\haslb{\Gamma, \bhyp{x_i}{A_i}}{t_i}{\ms{L}, \ms{R}}},]_i 
            ; \alpha^+_{\dnt{\ms{L}} + \Sigma_i \dnt{A_i}}
            } 
          ; \pi_l \otimes -
          ; \delta^{-1} 
          ; \dnt{\lbsubst{\Gamma}{\sigma}{\ms{L}}{\ms{K}}} + \pi_r \\
        & = \lmor{\loopmor{\Gamma}{(\wbranch{\ell_i}{x_i}{t_i},)_i}{\ms{L}}} 
          ; \pi_l \otimes -
          ; \delta^{-1} 
          ; \dnt{\lbsubst{\Gamma}{\sigma}{\ms{L}}{\ms{K}}} + \pi_r
      \end{aligned}
    \end{equation}
    Letting $L = \loopmor{\Gamma}{(\wbranch{\ell_i}{x_i}{t_i},)_i}{\ms{L}}$ and
    $S = \dnt{\lbsubst{\Gamma}{\sigma}{\ms{L}}{\ms{K}}}$, we have that
    \begin{equation}
      \begin{aligned}
        & \rcase{\loopmor{\Gamma}{(\wbranch{\ell_i}{x_i}{[\rupg{\sigma}]t_i},)_i}{\ms{K}}} \\
        & = \lmor{
            \lmor{L} 
            ; \pi_l \otimes (\dnt{\ms{L}} + \Sigma_i\dnt{A_i})
            ; \delta^{-1} ; S + \pi_r}
          ; \pi_l \otimes (\dnt{\ms{K}} + \Sigma_i\dnt{A_i}) ; \delta^{-1}
          \\
        & = \lmor{\lmor{L}} 
          ; (\dnt{\Gamma} \otimes \Sigma_i\dnt{A_i}) 
            \otimes (\pi_l \otimes (\dnt{\ms{L}} + \Sigma_i\dnt{A_i}) ; \delta^{-1} ; S + \pi_r)
          ; \pi_l \otimes (\dnt{\ms{K}} + \Sigma_i\dnt{A_i}) ; \delta^{-1} \\
        & = \lmor{L}
          ; \dmor{\dnt{\Gamma} \otimes \Sigma_i\dnt{A_i}} \otimes -
          ; \alpha
          ; (\dnt{\Gamma} \otimes \Sigma_i\dnt{A_i}) 
          \otimes (\pi_l \otimes (\dnt{\ms{L}} + \Sigma_i\dnt{A_i}) ; \delta^{-1} ; S + \pi_r)
          ; \pi_l \otimes - ; \delta^{-1} \\
        & = \lmor{L}
          ; (\pi_l ; \dmor{\dnt{\Gamma}}) \otimes (\dnt{\ms{L}} + \Sigma_i\dnt{A_i})
          ; \alpha
          ; \dnt{\Gamma}
          \otimes (\delta^{-1} ; S + \pi_r)
          ; \delta^{-1} \\
        & = \lmor{L}
          ; (\pi_l ; \dmor{\dnt{\Gamma}}) \otimes (\dnt{\ms{L}} + \Sigma_i\dnt{A_i})
          ; \alpha
          ; \dnt{\Gamma}
          \otimes \delta^{-1}
          ; \delta^{-1} 
          ; (\dnt{\Gamma} \otimes \dnt{\Gamma}) \otimes S 
            + (\dnt{\Gamma} \otimes \dnt{\Gamma}) \otimes \pi_r \\
        & = \lmor{L} ; \pi_l \otimes (\dnt{\ms{L}} + \Sigma_i\dnt{A_i})
          ; \delta^{-1}
          ; (\dmor{\dnt{\Gamma}} \otimes \dnt{\ms{L}} ; \alpha ; \dnt{\Gamma} \otimes S) 
          + (\dmor{\dnt{\Gamma}} \otimes \Sigma_i\dnt{A_i}
              ; \alpha ; \dnt{\Gamma} \otimes \pi_r) \\
          & = \rcase{L}
            ; \dmor{\dnt{\Gamma}} \otimes \dnt{\ms{L}} ; \alpha ; \dnt{\Gamma} \otimes S
            + \dnt{\Gamma} \otimes \Sigma_i\dnt{A_i}
      \end{aligned}
    \end{equation}
    implying by naturality that
    \begin{equation}
      \begin{aligned}
        \rfix{\loopmor{\Gamma}{(\wbranch{\ell_i}{x_i}{[\rupg{\sigma}]t_i},)_i}{\ms{K}}}
        & = (\rcase{L}
        ; \dmor{\dnt{\Gamma}} \otimes \dnt{\ms{L}} ; \alpha ; \dnt{\Gamma} \otimes S
            + \dnt{\Gamma} \otimes \Sigma_i\dnt{A_i})^\dagger ; \pi_r \\
        & = (\rcase{L})^\dagger 
          ; \dmor{\dnt{\Gamma}} \otimes \dnt{\ms{L}} ; \alpha ; \dnt{\Gamma} \otimes S ; \pi_r \\
        & = (\rcase{L})^\dagger ; S
      \end{aligned}
    \end{equation}
    Hence, we have that
    \begin{equation}
      \begin{aligned}
        & \dnt{\haslb{\Gamma}{[\sigma](\where{s}{(\wbranch{\ell_i}{x_i}{t_i},)_i})}{\ms{K}}}
          = \dnt{\haslb{\Gamma}
                  {\where{[\rupg{\sigma}]s}
                  {(\wbranch{\ell_i}{x_i}{[\rupg{\sigma}]t_i},)_i}}{\ms{K}}} \\
        & = \lmor{\entrymor{\Gamma}{[\rupg{\sigma}]s}{\ms{K}}} ; \delta^{-1} 
          ; [\pi_r , \rfix{\loopmor{\Gamma}{(\wbranch{\ell_i}{x_i}{[\rupg{\sigma}]t_i},)_i}{\ms{K}}}] \\
        & = \lmor{\lmor{\entrymor{\Gamma}{s}{\ms{L}}} ; \delta^{-1} ; S + \pi_r}
          ; \delta^{-1}
          ; [\pi_r, (\rcase{L})^\dagger ; S] \\
        & = \lmor{\lmor{\entrymor{\Gamma}{s}{\ms{L}}}} ; \dnt{\Gamma} \otimes (\delta^{-1} ; S + \pi_r)
          ; \delta^{-1}
          ; [\pi_r, (\rcase{L})^\dagger ; S] \\
        & = \lmor{\lmor{\entrymor{\Gamma}{s}{\ms{L}}}} ; \dnt{\Gamma} \otimes \delta^{-1}
          ; \delta^{-1}
          ; [
              \dnt{\Gamma} \otimes S ; \pi_r, 
              \dnt{\Gamma} \otimes \pi_r ; (\rcase{L})^\dagger ; S
            ] \\
          & = \lmor{\lmor{\entrymor{\Gamma}{s}{\ms{L}}}} 
            ; \dnt{\Gamma} \otimes \delta^{-1} ; \delta^{-1}
            ; [\pi_r, \dnt{\Gamma} \otimes \pi_r ; (\rcase{L})^\dagger]
            ; S \\
        & = \lmor{\entrymor{\Gamma}{s}{\ms{L}}} 
          ; \dmor{\dnt{\Gamma}} \otimes (\dnt{\ms{L}} + \Sigma_i\dnt{A_i}) ; \alpha 
          ; \dnt{\Gamma} \otimes \delta^{-1} ; \delta^{-1}
          ; [\pi_r, \dnt{\Gamma} \otimes \pi_r ; (\rcase{L})^\dagger] ; S 
          \\
        & = \lmor{\entrymor{\Gamma}{s}{\ms{L}}} ; \delta^{-1} ; [
              \Delta_{\dnt{\Gamma}} \otimes \dnt{\ms{L}} ; \alpha ; \pi_r, 
              \Delta_{\dnt{\Gamma}} \otimes \Sigma_i\dnt{A_i}  
                ; \alpha ; \dnt{\Gamma} \otimes \pi_r ; (\rcase{L})^\dagger
            ]
          ; S \\
        & = \lmor{\entrymor{\Gamma}{s}{\ms{L}}} ; \delta^{-1} ; [
              \Delta_{\dnt{\Gamma}} \otimes \dnt{\ms{L}} ; \alpha ; \dnt{\Gamma} \otimes \pi_r, 
              (\rcase{L})^\dagger
            ]
          ; S \\
        & = \lmor{\entrymor{\Gamma}{s}{\ms{L}}} ; \delta^{-1} ; [
              \Delta_{\dnt{\Gamma}} \otimes \dnt{\ms{L}} ; \alpha ; \dnt{\Gamma} \otimes \pi_r, 
              \lmor{\rfix{L}} ; \pi_l \otimes (\dnt{\ms{L}} + \Sigma_i\dnt{A_i})
            ]
          ; S \\
        & = \lmor{\entrymor{\Gamma}{s}{\ms{L}}} ; \delta^{-1} ; [
              \Delta_{\dnt{\Gamma}} \otimes \dnt{\ms{L}} ; \alpha ; \dnt{\Gamma} \otimes \pi_r, 
              \Delta_{\dnt{\Gamma}} \otimes \Sigma_i\dnt{A_i} ; \alpha 
                ; \dnt{\Gamma} \otimes \rfix{L}
            ]
          ; S \\
        & = \lmor{\entrymor{\Gamma}{s}{\ms{L}}} 
          ; \Delta_{\dnt{\Gamma}} \otimes (\dnt{\ms{L}} + \Sigma_i\dnt{A_i}) ; \alpha
          ; \dnt{\Gamma} \otimes \delta^{-1} ; \delta^{-1}
          ; [
              \dnt{\Gamma} \otimes \pi_r, 
              \dnt{\Gamma} \otimes \rfix{L}
            ]
          ; S \\
        & = \lmor{\entrymor{\Gamma}{s}{\ms{L}}} 
          ; \Delta_{\dnt{\Gamma}} \otimes (\dnt{\ms{L}} + \Sigma_i\dnt{A_i}) ; \alpha
          ; \dnt{\Gamma} \otimes (\delta^{-1} ; [\pi_r, \rfix{L}])
          ; S \\
        & = \lmor{\lmor{\entrymor{\Gamma}{s}{\ms{L}}}}
          ; \dnt{\Gamma} \otimes (\delta^{-1} ; [\pi_r, \rfix{L}])
          ; S \\
        & = \lmor{\lmor{\entrymor{\Gamma}{s}{\ms{L}}} ; \delta^{-1} ; [\pi_r, \rfix{L}] }
          ; S \\
        & = \lmor{\dnt{\haslb{\Gamma}{\where{s}{(\wbranch{\ell_i}{x_i}{t_i},)_i}}{\ms{L}}}}
          ; \dnt{\lbsubst{\Gamma}{\sigma}{\ms{L}}{\ms{K}}}
      \end{aligned}
    \end{equation}
    as desired.
  \end{itemize}
  Composition of label substitutions then follows by a trivial induction.
\end{proof}

\subsection{Equational Theory}

\soundnesseqn*

\label{proof:soundness-eqn}

\begin{proof}
  We begin our proof by showing soundness of the equational theory for expressions, i.e.
  \ref{itm:eqn-sound-expr}, by rule induction.
  \begin{itemize}[leftmargin=*]
    \item \emph{Congruence}: these follow trivially by induction
    \item \brle{initial}, \brle{terminal}: both of these follow trivially from the universal
    property of the initial/terminal object, respectively.
    \item \brle{let$_1$-$\beta$}: this follows directly from Corollary~\ref{corr:single-subst}
    \item \brle{let$_1$-$\eta$}: we have
    \begin{align*}
      & \dnt{\hasty{\Gamma}{\epsilon}{\letexpr{x}{a}{x}}{A}} \\
      & = \dmor{\dnt{\Gamma}} 
        ; \dnt{\Gamma} \otimes \dnt{\hasty{\Gamma}{\epsilon}{a}{A}}
        ; \dnt{\hasty{\Gamma, \bhyp{x}{A}}{\epsilon}{x}{A}} \\
      & = \dmor{\dnt{\Gamma}} 
      ; \dnt{\Gamma} \otimes \dnt{\hasty{\Gamma}{\epsilon}{a}{A}}
      ; \pi_r \\
      &= \dnt{\hasty{\Gamma}{\epsilon}{a}{A}}
    \end{align*}
    as desired.
    \item \brle{let$_1$-op}: we have
    \begin{align*}
      & \dnt{\hasty{\Gamma}{\epsilon}{\letexpr{x}{a}{\letexpr{y}{f\;x}{c}}}{C}} \\
      & = \lmor{\dnt{\hasty{\Gamma}{\epsilon}{a}{A}}}
        ; \lmor{\dnt{\hasty{\Gamma, \bhyp{x}{A}}{\epsilon}{f\;x}{B}}}
        ; \dnt{\hasty{\Gamma, \bhyp{x}{A}, \bhyp{y}{B}}{\epsilon}{c}{C}} \\
      & = \lmor{\dnt{\hasty{\Gamma}{\epsilon}{a}{A}}}
        ; \lmor{\pi_r ; \dnt{f}}
        ; \pi_l \otimes \dnt{B}
        ; \dnt{\hasty{\Gamma, \bhyp{y}{B}}{\epsilon}{c}{C}} \\
      & = \lmor{
            \lmor{\dnt{\hasty{\Gamma}{\epsilon}{a}{A}}} ; \pi_r ; \dnt{f}
          } ; \dnt{\hasty{\Gamma, \bhyp{y}{B}}{\epsilon}{c}{C}} \\
      & = \lmor{\dnt{\hasty{\Gamma}{\epsilon}{a}{A}} ; \dnt{f}}
        ; \dnt{\hasty{\Gamma, \bhyp{y}{B}}{\epsilon}{c}{C}} \\
      & = \dnt{\hasty{\Gamma}{\epsilon}{\letexpr{y}{f\;a}{c}}{C}}
    \end{align*}
    as desired. The \brle{let$_1$-abort} case is analogous.
    \item \brle{let$_1$-let$_1$}: we have that
    \begin{align*}
      & \dnt{\hasty{\Gamma}{\epsilon}{\letexpr{y}{(\letexpr{x}{a}{b}}{c})}{C}} \\
      & = \lmor{\lmor{\dnt{\hasty{\Gamma}{\epsilon}{a}{A}}} 
        ; \dnt{\hasty{\Gamma, \bhyp{x}{A}}{\epsilon}{b}{B}}}
        ; \dnt{\hasty{\Gamma, \bhyp{y}{B}}{\epsilon}{c}{C}} \\
      & = \lmor{\dnt{\hasty{\Gamma}{\epsilon}{a}{A}}}
        ; \lmor{\dnt{\hasty{\Gamma, \bhyp{x}{A}}{\epsilon}{b}{B}}}
        ; \pi_l \otimes \dnt{B}
        ; \dnt{\hasty{\Gamma, \bhyp{y}{B}}{\epsilon}{c}{C}} \\
      & = \lmor{\dnt{\hasty{\Gamma}{\epsilon}{a}{A}}}
      ; \lmor{\dnt{\hasty{\Gamma, \bhyp{x}{A}}{\epsilon}{b}{B}}}
      ; \dnt{\hasty{\Gamma, \bhyp{x}{A}, \bhyp{y}{B}}{\epsilon}{c}{C}} \\
      & = \dnt{\hasty{\Gamma}{\epsilon}{\letexpr{x}{a}{\letexpr{y}{b}{c}}}{C}}
    \end{align*}
    as desired.
    \item \brle{let$_1$-let$_2$}: we have that
    \begin{align*}
      & \dnt{\hasty{\Gamma}{\epsilon}{\letexpr{z}{(\letexpr{(x, y)}{e}{c})}{d}}{D}} \\
      & = \lmor{\lmor{\dnt{\hasty{\Gamma}{\epsilon}{e}{A \otimes B}}}
          ; \alpha
          ; \dnt{\hasty{\Gamma, \bhyp{x}{A}, \bhyp{y}{B}}{\epsilon}{c}{C}}}
        ; \dnt{\hasty{\Gamma, \bhyp{z}{C}}{\epsilon}{d}{D}} \\
      & = \lmor{\hasty{\Gamma}{\epsilon}{e}{A \otimes B}}
        ; \alpha
        ; \lmor{\dnt{\hasty{\Gamma, \bhyp{x}{A}, \bhyp{y}{B}}{\epsilon}{c}{C}}}
        ; (\pi_l ; \pi_l) \otimes \dnt{\ms{C}}
        ; \dnt{\hasty{\Gamma, \bhyp{z}{C}}{\epsilon}{d}{D}} \\
      & = \lmor{\hasty{\Gamma}{\epsilon}{e}{A \otimes B}}
      ; \alpha
      ; \lmor{\dnt{\hasty{\Gamma, \bhyp{x}{A}, \bhyp{y}{B}}{\epsilon}{c}{C}}}
      ; \dnt{\hasty{\Gamma, \bhyp{x}{A}, \bhyp{y}{B}, \bhyp{z}{C}}{\epsilon}{d}{D}} \\
      & = \dnt{\hasty{\Gamma}{\epsilon}{\letexpr{(x, y)}{e}{\letexpr{z}{c}{d}}}{D}}
    \end{align*}
    as desired.
    \item \brle{let$_1$-case}: follows from the properties of the coproduct; in particular, we have
    that
    \begin{align*}
      & \dnt{\hasty{\Gamma}{\epsilon}{\caseexpr{e}{x}{\letexpr{z}{a}{d}}{y}{\letexpr{z}{b}{d}}}{D}}
      \\ &= \Delta ; \dnt{\Gamma} \otimes \dnt{\hasty{\Gamma}{\epsilon}{e}{A + B}} ; \delta^{-1} ; [
      \\ & \qquad \Delta
            ; \dnt{\Gamma, \bhyp{x}{A}} \otimes \dnt{\hasty{\Gamma, \bhyp{x}{A}}{\epsilon}{a}{C}}
            ; \dnt{\hasty{\Gamma, \bhyp{x}{A}, \bhyp{z}{C}}{\epsilon}{d}{D}}
            ,
      \\  & \qquad \Delta
            ; \dnt{\Gamma, \bhyp{y}{B}} \otimes \dnt{\hasty{\Gamma, \bhyp{y}{B}}{\epsilon}{b}{C}}
            ; \dnt{\hasty{\Gamma, \bhyp{y}{B}, \bhyp{z}{C}}{\epsilon}{d}{D}}
        ]
      \\ &= \Delta ; \dnt{\Gamma} \otimes \dnt{\hasty{\Gamma}{\epsilon}{e}{A + B}} ; \delta^{-1} ; [
      \\ & \qquad \Delta \otimes \dnt{A} ; \alpha
            ; \dnt{\Gamma} \otimes \dnt{\hasty{\Gamma, \bhyp{x}{A}}{\epsilon}{a}{C}}
            ; \dnt{\hasty{\Gamma, \bhyp{z}{C}}{\epsilon}{d}{D}}
            ,
      \\  & \qquad \Delta \otimes \dnt{B} ; \alpha
            ; \dnt{\Gamma} \otimes \dnt{\hasty{\Gamma, \bhyp{y}{B}}{\epsilon}{b}{C}}
            ; \dnt{\hasty{\Gamma, \bhyp{z}{C}}{\epsilon}{d}{D}}
        ]
      \\ &= \Delta ; \dnt{\Gamma} \otimes \dnt{\hasty{\Gamma}{\epsilon}{e}{A + B}} ; \delta^{-1} ; [
      \\ & \qquad \Delta \otimes \dnt{A} ; \alpha
            ; \dnt{\Gamma} \otimes \dnt{\hasty{\Gamma, \bhyp{x}{A}}{\epsilon}{a}{C}}
            ,
      \\  & \qquad \Delta \otimes \dnt{B} ; \alpha
            ; \dnt{\Gamma} \otimes \dnt{\hasty{\Gamma, \bhyp{y}{B}}{\epsilon}{b}{C}}
        ] ; \dnt{\hasty{\Gamma, \bhyp{z}{C}}{\epsilon}{d}{D}}
      \\ &= \Delta ; \Delta \otimes \dnt{\hasty{\Gamma}{\epsilon}{e}{A + B}} ; \delta^{-1} ; 
      \\ & \qquad [
            \alpha ; \dnt{\Gamma} \otimes \dnt{\hasty{\Gamma, \bhyp{x}{A}}{\epsilon}{a}{C}},
            \alpha ; \dnt{\Gamma} \otimes \dnt{\hasty{\Gamma, \bhyp{y}{B}}{\epsilon}{b}{C}}
        ] ; \dnt{\hasty{\Gamma, \bhyp{z}{C}}{\epsilon}{d}{D}}
      \\ &= \Delta ; \Delta \otimes \dnt{\hasty{\Gamma}{\epsilon}{e}{A + B}}  
                   ; \alpha ; \dnt{\Gamma} \otimes (\delta^{-1} ; [
                        \dnt{\hasty{\Gamma, \bhyp{x}{A}}{\epsilon}{a}{C}},
                        \dnt{\hasty{\Gamma, \bhyp{y}{B}}{\epsilon}{b}{C}}
                    ]) ;
      \\ & \qquad \dnt{\hasty{\Gamma, \bhyp{z}{C}}{\epsilon}{d}{D}}
      \\ &= \Delta ; \dnt{\Gamma} \otimes (
                  \Delta ; \dnt{\Gamma} \otimes \dnt{\hasty{\Gamma}{\epsilon}{e}{A + B}} ;
                  \delta^{-1} ; [
                        \dnt{\hasty{\Gamma, \bhyp{x}{A}}{\epsilon}{a}{C}},
                        \dnt{\hasty{\Gamma, \bhyp{y}{B}}{\epsilon}{b}{C}}
                    ]) ;
      \\ & \qquad \dnt{\hasty{\Gamma, \bhyp{z}{C}}{\epsilon}{d}{D}}
      \\ &= \dnt{\hasty{\Gamma}{\epsilon}{\letexpr{z}{(\caseexpr{e}{x}{a}{y}{b})}{d}}{D}}
    \end{align*}
    \item \brle{let$_2$-bind}: we have
    \begin{align*}
      & \dnt{\hasty{\Gamma}{\epsilon}{\letexpr{z}{e}{\letexpr{(x, y)}{z}{c}}}{C}} \\
      & = \lmor{\dnt{\hasty{\Gamma}{\epsilon}{e}{A \otimes B}}}
        ; \lmor{\dnt{\hasty{\Gamma, \bhyp{z}{A \otimes B}}{\epsilon}{z}{A \otimes B}}}
        ; \alpha
        ; \dnt{\hasty{\Gamma, \bhyp{z}{A \otimes B}, \bhyp{x}{A}, \bhyp{y}{B}}{\epsilon}{c}{C}} \\
      & = \lmor{\dnt{\hasty{\Gamma}{\epsilon}{e}{A \otimes B}}}
        ; \lmor{\pi_r}
        ; \pi_l \otimes (\dnt{A} \otimes \dnt{B})
        ; \alpha
        ; \dnt{\hasty{\Gamma, \bhyp{x}{A}, \bhyp{y}{B}}{\epsilon}{c}{C}} \\
      & = \lmor{\lmor{\dnt{\hasty{\Gamma}{\epsilon}{e}{A \otimes B}}} ; \pi_r}
        ; \alpha
        ; \dnt{\hasty{\Gamma, \bhyp{x}{A}, \bhyp{y}{B}}{\epsilon}{c}{C}} \\
      & = \lmor{\dnt{\hasty{\Gamma}{\epsilon}{e}{A \otimes B}}}
        ; \alpha
        ; \dnt{\hasty{\Gamma, \bhyp{x}{A}, \bhyp{y}{B}}{\epsilon}{c}{C}} \\
      & = \dnt{\hasty{\Gamma}{\epsilon}{\letexpr{(x, y)}{e}{c}}{C}}
    \end{align*}
    \item \brle{let$_2$-$\eta$}: follows from the properties of the product; in particular,
    we have that
    \begin{align*}
      &\dnt{\hasty{\Gamma}{\epsilon}{\letexpr{(x, y)}{e}{(x, y)}}{A \otimes B}} \\
      &= \Delta ; \dnt{\Gamma} \otimes \dnt{\hasty{\Gamma}{\epsilon}{e}{A \otimes B}}
                ; \Delta ; 
                \dnt{\hasty{\Gamma, \bhyp{x}{A}, \bhyp{y}{B}}{\bot}{x}{A}} \otimes
                \dnt{\hasty{\Gamma, \bhyp{x}{A}, \bhyp{y}{B}}{\bot}{y}{B}} \\
      &= \Delta ; \dnt{\Gamma} \otimes \dnt{\hasty{\Gamma}{\epsilon}{e}{A \otimes B}}
                ; \Delta ; (\pi_l ; \pi_r) \otimes \pi_r
       = \dnt{\hasty{\Gamma}{\epsilon}{e}{A \otimes B}}
    \end{align*}
    \item \brle{case-inl}: follows from the properties of the coproduct and inverse distributor; in
    particular, we have that
    \begin{align*}
      & \dnt{\hasty{\Gamma}{\epsilon}{\caseexpr{\linl{a}}{x}{c}{y}{d}}{C}}
      \\ &= \Delta 
      ; \dnt{\Gamma} \otimes (\dnt{\hasty{\Gamma}{\epsilon}{a}{A}} ; \iota_l)
      ; \delta^{-1} ; [
        \dnt{\hasty{\Gamma, \bhyp{x}{A}}{\epsilon}{c}{C}}, 
        \dnt{\hasty{\Gamma, \bhyp{y}{B}}{\epsilon}{d}{C}}
      ]
      \\ &= \Delta 
      ; \dnt{\Gamma} \otimes \dnt{\hasty{\Gamma}{\epsilon}{a}{A}}
      ; \iota_l ; [
        \dnt{\hasty{\Gamma, \bhyp{x}{A}}{\epsilon}{c}{C}}, 
        \dnt{\hasty{\Gamma, \bhyp{y}{B}}{\epsilon}{d}{C}}
      ]
      \\ &= \Delta 
      ; \dnt{\Gamma} \otimes \dnt{\hasty{\Gamma}{\epsilon}{a}{A}}
      ; \iota_l ; \dnt{\hasty{\Gamma, \bhyp{x}{A}}{\epsilon}{c}{C}}
      \\ &= \dnt{\hasty{\Gamma}{\epsilon}{\letexpr{x}{a}{c}}{C}}
    \end{align*}
    We can validate \brle{case-inr} analogously
    \item \brle{case-$\eta$}: follows from the properties of the coproduct and distributor; in
    particular, we have
    \begin{align*}
      & \dnt{\hasty{\Gamma}{\epsilon}{\caseexpr{e}{x}{\linl{x}}{y}{\linr{y}}}{A + B}} \\
      &= \Delta ; \dnt{\Gamma} \otimes \dnt{\hasty{\Gamma}{\epsilon}{e}{A + B}} ; \delta^{-1} ; [
        \dnt{\hasty{\Gamma, \bhyp{x}{A}}{\epsilon}{x}{A}};\iota_l,
        \dnt{\hasty{\Gamma, \bhyp{y}{B}}{\epsilon}{y}{B};\iota_r}
      ] \\
      &= \Delta ; \dnt{\Gamma} \otimes \dnt{\hasty{\Gamma}{\epsilon}{e}{A + B}} 
                ; \delta^{-1} ; (\pi_r + \pi_r)
      = \dnt{\hasty{\Gamma}{\epsilon}{e}{A + B}} 
    \end{align*}
  \end{itemize}
  We now proceed to tackle the equational theory for regions $r$ in the same manner. In particular,
  we proceed by rule induction as follows:
  \begin{itemize}[leftmargin=*]
    \item \brle{cfg-$\beta_1$}: 
    Define $P = \dnt{\hasty{\Gamma}{\bot}{a}{A_k}}$, %
           $L = \loopmor{\Gamma}{(\wbranch{\ell_i}{x_i}{t_i},)_i}{\ms{L}}$ and %
           $\ms{R} = (\ell_i(A_i),)_i$. %
    We have that
    \begin{equation}
      \begin{aligned}
        & \dnt{\haslb{\Gamma}{\where{\brb{\ell_k}{a}}{(\wbranch{\ell_i}{x_i}{t_i},)_i}}{\ms{L}}} \\
        & = \lmor{
              P ; \iota_{(\ms{L}, \ms{R}), \ell_k} ;
              \alpha^+_{\dnt{\ms{L}} + \Sigma_i \dnt{A_i}}
          } ; \delta^{-1} ; [\pi_r, \rfix{L}] \\
        & = \lmor{P ; \iota_k} ; \dnt{\Gamma} \otimes \iota_r ; \delta^{-1} ; [\pi_r, \rfix{L}] \\
        & = \lmor{P ; \iota_k} ; \rfix{L} \\
        & = \lmor{P ; \iota_k} ; \rcase{L} ; [\pi_r, \rfix{L}] \\
        & = \lmor{P} ; \rcase{\dnt{\Gamma} \otimes \iota_k ; L} ; [\pi_r, \rfix{L}] \\
        & = \lmor{P} 
          ; \rcase{\dnt{\haslb{\Gamma, \bhyp{x_k}{A_k}}{t_k}{\ms{L}, \ms{R}}} 
                    ; \alpha^+_{\dnt{L} + \Sigma_i\dnt{A_i}}} 
          ; [\pi_r, \rfix{L}] \\
        & = \lmor{P} 
          ; \rlmor{\dnt{\haslb{\Gamma, \bhyp{x_k}{A_k}}{t_k}{\ms{L}, \ms{R}}} 
                    ; \alpha^+_{\dnt{L} + \Sigma_i\dnt{A_i}}} 
          ; \delta^{-1}
          ; [\pi_r, \rfix{L}] \\
        & = \lmor{\lmor{P}  
              ; \dnt{\haslb{\Gamma, \bhyp{x_k}{A_k}}{t_k}{\ms{L}, \ms{R}}} 
              ; \alpha^+_{\dnt{L} + \Sigma_i\dnt{A_i}}}
          ; \delta^{-1}
          ; [\pi_r, \rfix{L}] \\
        & = \lmor{
            \dnt{\haslb{\Gamma}{\letstmt{x_k}{a}{t_k}}{\ms{L}, \ms{R}}} 
            ; \alpha^+_{\dnt{L} + \Sigma_i\dnt{A_i}}}
          ; \delta^{-1}
          ; [\pi_r, \rfix{L}] \\
        & = \dnt{
          \haslb{\Gamma}{\where{\letstmt{x_k}{a}{t_k}}{(\wbranch{\ell_i}{x_i}{A_i},)_i}}{\ms{L}}
        }
      \end{aligned}
    \end{equation}
    as desired.
    \item \brle{cfg-$\beta_2$}: 
    Define $P = \dnt{\hasty{\Gamma}{\bot}{b}{B}}$,
            $L = \loopmor{\Gamma}{(\wbranch{\ell_i}{x_i}{t_i},)_i}{\ms{L}}$ and %
            $\ms{R} = (\ell_i(A_i),)_i$. %
    We have that
    \begin{equation}
      \begin{aligned}
        & \dnt{\haslb{\Gamma}{\where{\brb{\kappa}{b}}{(\wbranch{\ell_i}{x_i}{t_i},)_i}}{\ms{L}}} \\
        & = \lmor{P 
            ; \iota_{(\ms{L}, \ms{R}), \kappa} 
            ; \alpha^+_{\dnt{\ms{L}} + \Sigma_i \dnt{A_i}}
          } ; \delta^{-1} ; [\pi_r, \rfix{L}] \\
        & = \lmor{P ; \iota_{\ms{L}, \kappa}} 
          ; \dnt{\Gamma} \otimes \iota_l ; \delta^{-1} ; [\pi_r, \rfix{L}] \\
        & = \lmor{P ; \iota_{\ms{L}, \kappa}} ; \pi_r
          = P ; \iota_{\ms{L}, \kappa}
          = \dnt{\haslb{\Gamma}{\brb{\kappa}{b}}{\ms{L}}}
      \end{aligned}
    \end{equation}
    as desired.
    \item \brle{cfg-$\eta$}: 
    Let $\ms{L} = (\kappa_j(B_j),)_j$, and define %
      $\ms{R} = (\ell_i(A_i),)_i$ and %
      $L = \loopmor{\Gamma}{(\wbranch{\ell_i}{x_i}{t_i},)_i}{\ms{L}}$. %
    Using the \brle{cfg-$\beta_1$} and \brle{cfg-$\beta_2$} cases proved above, we have that
    \begin{equation}
      \begin{aligned}
        & \dnt{
            \lbsubst{\Gamma}
              {\cfgsubst{(\wbranch{\ell_i}{x_i}{t_i},)_i}}{\ms{L}, \ms{R}}{\ms{L}}
          } \\
        & = \dnt{\Gamma} \otimes \alpha_{\dnt{\ms{L}} + \Sigma_i\dnt{A_i}} ; \delta^{-1}
          ; [ \\ & \qquad
            \dnt{
              \lbsubst{\Gamma}
                {\cfgsubst{(\wbranch{\ell_i}{x_i}{t_i},)_i}}{\ms{L}}{\ms{L}}
            }, \\ & \qquad
            \dnt{
              \lbsubst{\Gamma}
                {\cfgsubst{(\wbranch{\ell_i}{x_i}{t_i},)_i}}{\ms{R}}{\ms{L}}
            }
          ] \\
        & = \dnt{\Gamma} \otimes \alpha_{\Sigma_i\dnt{B_i} + \Sigma_i\dnt{A_i}}; \delta^{-1}
          ; [ \\ & \qquad
            \delta^{-1}_{\Sigma} 
            ; [\dnt{\haslb{\Gamma, \bhyp{y_j}{B_j}}
                {\where{\brb{\kappa_j}{y_j}}{(\wbranch{\ell_i}{x_i}{t_i},)_i}}{\ms{L}}},]_j, 
                \\ & \qquad
            \delta^{-1}_{\Sigma} 
            ; [\dnt{\haslb{\Gamma, \bhyp{x_j}{A_j}}
            {\where{\brb{\ell_j}{x_j}}{(\wbranch{\ell_i}{x_i}{t_i},)_i}}{\ms{L}}},]_j
          ] \\
        & = \dnt{\Gamma} \otimes \alpha_{\Sigma_i\dnt{B_i} + \Sigma_i\dnt{A_i}}; \delta^{-1}
          ; [ \\ & \qquad
            \delta^{-1}_{\Sigma} 
            ; [\dnt{\hasty{\Gamma, \bhyp{y_j}{B_j}}{\bot}{y_j}{B_j}} ; \iota_{\ms{L}, \kappa_j},]_j, 
                \\ & \qquad
            \delta^{-1}_{\Sigma} 
            ; [
              \lmor{\dnt{\hasty{\Gamma, \bhyp{x_j}{A_j}}{\bot}{x_j}{A_j}} ; \iota_j} 
              ; \rfix{\pi_l \otimes \Sigma_k\dnt{A_k} ; L}
            ]_j
          ] \\
        & = \dnt{\Gamma} \otimes \alpha_{\Sigma_i\dnt{B_i} + \Sigma_i\dnt{A_i}}; \delta^{-1}
          ; [
            \delta^{-1}_{\Sigma} ; [\pi_r ; \iota_{\ms{L}, \kappa_j},]_j, 
            \delta^{-1}_{\Sigma} ; [\lmor{\pi_r ; \iota_j} ; \pi_l \otimes \Sigma_k\dnt{A_k}]_j 
            ; \rfix{L}
          ] \\
        & = \dnt{\Gamma} \otimes \alpha_{\Sigma_i\dnt{B_i} + \Sigma_i\dnt{A_i}}; \delta^{-1}
          ; [
            \pi_r ; \alpha_{\ms{L}}, 
            \rlmor{\delta^{-1}_{\Sigma} ; [\pi_r ; \iota_j]_j}
            ; \rfix{L}
          ] \\
        & = \dnt{\Gamma} \otimes \alpha_{\ms{L} + \Sigma_i\dnt{A_i}}; \delta^{-1}
          ; [
            \pi_r, 
            \rlmor{\pi_r} ; \rfix{L}
          ] \\
        & = \dnt{\Gamma} \otimes \alpha_{\ms{L} + \Sigma_i\dnt{A_i}}; \delta^{-1} 
          ; [\pi_r, \rfix{L}]
      \end{aligned}
    \end{equation}
    It follows by label-substitution that that
    \begin{equation}
      \begin{aligned}
        & \dnt{\haslb{\Gamma}{[\cfgsubst{(\wbranch{\ell_i}{x_i}{t_i},)_i}]r}{\ms{L}}}  \\
        & = \lmor{\dnt{\haslb{\Gamma}{r}{\ms{L}, \ms{R}}}}
          ; \dnt{
            \lbsubst{\Gamma}
              {\cfgsubst{(\wbranch{\ell_i}{x_i}{t_i},)_i}}{\ms{L}, \ms{R}}{\ms{L}}
          } \\
        & = \lmor{\dnt{\haslb{\Gamma}{r}{\ms{L}, \ms{R}}}}
          ; \dnt{\Gamma} \otimes \alpha_{\ms{L} + \Sigma_i\dnt{A_i}}; \delta^{-1} 
          ; [\pi_r, \rfix{L}] \\
        & = \entrymor{\Gamma}{r}{\ms{L}}; \delta^{-1} ; [\pi_r, \rfix{L}] \\
        & \dnt{\haslb{\Gamma}{\where{r}{(\wbranch{\ell_i}{x_i}{t_i},)_i}}{\ms{L}}}
      \end{aligned}
    \end{equation}
    as desired.
    \item \brle{codiag}: 
    Define $R = \dnt{\haslb{\Gamma}{r}{\ms{L}, \ell(A)}}$, and %
           $S = \dnt{\haslb{\Gamma, \bhyp{y}{A}}{s}{\ms{L}, \ell(A), \kappa(A)}}$ %
    We have that
    \begin{equation}
      \begin{aligned}
      & \dnt{\haslb{\Gamma}
        {\where{r}{\wbranch{\ell}{x}{\where{\brb{\kappa}{x}}{\wbranch{\kappa}{y}{s}}}}}{\ms{L}}} \\
      & = \lmor{R} ; \delta^{-1} ; [\ms{id}, 
        \rfix{\dnt{\haslb{\Gamma, \bhyp{x}{A}}
          {\where{\brb{\kappa}{x}}{\wbranch{\kappa}{y}{s}}}{\ms{L}, \ell(A)}}}] \\
      & = \lmor{R} ; \delta^{-1} ; [\ms{id}, 
        \rfix{\lmor{\pi_r ; \iota_r} ; \delta^{-1} 
            ; [\pi_r, 
            \rfix{\dnt{\haslb{\Gamma, \bhyp{x}{A}, \bhyp{y}{A}}{s}{\ms{L}, \ell(A), \kappa(A)}}}]
          }] \\
      & = \lmor{R} ; \delta^{-1} ; [\ms{id}, 
        \rfix{\lmor{\pi_r ; \iota_r} ; \delta^{-1} 
            ; [\pi_l \otimes A ; \pi_r, 
            \rfix{\pi_l \otimes \dnt{A} ; S}]
          }] \\
      & = \lmor{R} ; \delta^{-1} ; [\ms{id}, 
        \rfix{\lmor{\pi_r ; \iota_r} ; \pi_l \otimes \dnt{A} ; \delta^{-1} ; [\pi_r, \rfix{S}]
          }] \\
      & = \lmor{R} ; \delta^{-1} ; [\ms{id}, 
          \rfix{\rseq{}{\envinr{\dnt{\Gamma}}}{\envcop{\dnt{\Gamma}}{\pi_r}{\rfix{S}}}}] \\
      & = \lmor{R} ; \delta^{-1} ; [\ms{id}, \rfix{\rfix{S}}]
        = \lmor{R} ; \delta^{-1} ; 
          [\ms{id}, \rfix{\rseq{}{S}{\envcop{\dnt{\Gamma}}{\pi_r}{\envinr{\dnt{\Gamma}}}}}] \\
      & = \lmor{R} ; \delta^{-1} ; 
          [\ms{id}, \rfix{\lmor{S} ; \delta^{-1} ; [\pi_r, \pi_r ; \iota_r]}]
        = \lmor{R} ; \delta^{-1} ; 
          [\ms{id}, \rfix{S ; [\ms{id}, \iota_r]}] \\
      & = \lmor{R} ; \delta^{-1} ; 
      [\ms{id}, \rfix{\dnt{\haslb{\Gamma, \bhyp{y}{A}}{[\ell/\kappa]s}{\ms{L}, \ell(A)}}}]
        = \dnt{\haslb{\Gamma}{\where{r}{\wbranch{\ell}{y}{[\ell/\kappa]s}}}{\ms{L}}}
      \end{aligned}
    \end{equation}
    as desired.

    \item \brle{uni}: 
    Define $R = \dnt{\haslb{\Gamma}{r}{\ms{L}, \ell(A)}}$, %
           $E = \dnt{\hasty{\Gamma, \bhyp{x}{A}}{\bot}{e}{B}}$, %
           $S = \dnt{\haslb{\Gamma, \bhyp{y}{B}}{s}{\ms{L}, \kappa(B)}}$, and %
           $T = \dnt{\haslb{\Gamma, \bhyp{x}{A}}{s}{\ms{L}, \ell(A)}}$. %
    We have by induction that
    \begin{equation}
      \begin{aligned}
        \dnt{\haslb{\Gamma, \bhyp{x}{A}}{\letstmt{y}{e}{s}}{\ms{L}, \kappa(B)}}
        &= \rlmor{E} ; S = \\
        \dnt{\haslb{\Gamma, \bhyp{x}{A}}
          {\where{t}{\wbranch{\ell}{x}{\brb{\kappa}{e}}}}{\ms{L}, \kappa(B)}}
        &= \rcase{T} ; \dnt{\ms{L}} + E  
      \end{aligned}
    \end{equation}
    It follows in particular that
    \begin{equation}
      \rlmor{E} ; \rfix{S} = \rfix{T}
    \end{equation}
    and hence that
    \begin{equation}
      \begin{aligned}
        & \dnt{
          \haslb{\Gamma}
          {\where{(\where{r}{\wbranch{\ell}{x}{\brb{\kappa}{e}}})}{\wbranch{\kappa}{y}{s}}}
          {\ms{L}}} \\
        & = \lmor{\lmor{R} ; \delta^{-1} ; \pi_r + E}
          ; \delta^{-1} ; [\pi_r, \rfix{S}] \\
        & = \lmor{R} ; \delta^{-1} 
          ; (\dnt{\Gamma} \otimes \dnt{\ms{L}}) + \rlmor{E} 
          ; [\pi_r, \rfix{S}]
          = \lmor{R} ; \delta^{-1}
          ; [\pi_r, \rlmor{E} ; \rfix{S}] \\
        & = \lmor{R} ; \delta^{-1}
          ; [\pi_r, \rfix{T}]
          = \dnt{\haslb{\Gamma}{\where{r}{\wbranch{\ell}{x}{t}}}{\ms{L}}} \\
      \end{aligned}
    \end{equation}
    as desired.
    
    \item \brle{dinat}: 
    We define 
      $\ms{R} = (\ell_i(A_i),)_i$, $\ms{R}' = (\kappa_j(B_j),)_j$, %
      $S = \dnt{\lbsubst{\Gamma}{\sigma}{\ms{R}}{\ms{R}'}}$, and %
      $S' = \dnt{\Gamma} \otimes \alpha^+_{\dnt{R}} ; S ; \alpha^+_{\Sigma_j\dnt{B_j}}$. %
    We have that
    \begin{equation}
      \begin{aligned}
        & \entrymor{\Gamma}{[\lupg{\sigma}]r}{\ms{L}}
          = \dnt{\haslb{\Gamma}{[\lupg{\sigma}]r}{\ms{L}, \ms{R}'}}
          ; \alpha^+_{\dnt{\ms{L}} + \Sigma_j \dnt{B_j}} \\
        & = \lmor{\dnt{\haslb{\Gamma}{r}{\ms{L}, \ms{R}}}}
          ; \dnt{\Gamma} \otimes \alpha^+_{\dnt{\ms{L}} + \dnt{\ms{R}}}
          ; \delta^{-1}
          ; \pi_r 
            + (S ; \alpha^+_{\Sigma_j\dnt{B_j}}) \\
        & = \lmor{\entrymor{\Gamma}{r}{\ms{L}}} ; \delta^{-1} ; \pi_r + S' \\
      \end{aligned}
    \end{equation}
    and, writing $L = \loopmor{\Gamma}{(\wbranch{\ell_i}{x_i}{t_i},)_i}{\ms{L}}$,
    \begin{equation}
      \begin{aligned}
        & \lmor{(\wbranch{\kappa_i}{x_i}{[\lupg{\sigma}]t_i},)_i}
          = \delta^{-1}_\Sigma
          ; [ \dnt{\haslb{\Gamma, \bhyp{x_i}{B_i}}{[\lupg{\sigma}]t_i}{\ms{L}, \ms{R}'}}, ]_i
          ; \alpha^+_{\dnt{\ms{L}} + \Sigma_j \dnt{B_j}} \\
        & = \delta^{-1}_\Sigma
          ; [ \lmor{\dnt{\haslb{\Gamma, \bhyp{x_i}{B_i}}{t_i}{\ms{L}, \ms{R}}}} 
            ; \dnt{\lbsubst{\Gamma, \bhyp{x_i}{B_i}}
                    {\lupg{\sigma}}{\ms{L}, \ms{R}}{\ms{L}, \ms{R}'}},
          ]_i
          ; \alpha^+_{\dnt{\ms{L}} + \Sigma_j \dnt{B_j}} \\
        & = \delta^{-1}_\Sigma
          ; [ \rlmor{\dnt{\haslb{\Gamma, \bhyp{x_i}{B_i}}{t_i}{\ms{L}, \ms{R}}}},
          ]_i
          ; \dnt{\lbsubst{\Gamma}{\lupg{\sigma}}{\ms{L}, \ms{R}}{\ms{L}, \ms{R}'}}
          ; \alpha^+_{\dnt{\ms{L}} + \Sigma_j \dnt{B_j}} \\
        & = \rlmor{\delta^{-1}_\Sigma
          ; [ \dnt{\haslb{\Gamma, \bhyp{x_i}{B_i}}{t_i}{\ms{L}, \ms{R}}}
          ]_i}
          ; \dnt{\Gamma} \otimes \alpha^+_{\dnt{\ms{L}} + \dnt{\ms{R}}}
          ; \delta^{-1}
          ; \pi_r + (
            S ; \alpha^+_{\Sigma_j\dnt{B_j}}
          )
           \\
        & = \rlmor{
            \delta^{-1}_\Sigma
            ; [ \dnt{\haslb{\Gamma, \bhyp{x_i}{B_i}}{t_i}{\ms{L}, \ms{R}}},]_i
            ; \alpha^+_{\dnt{\ms{L}} + \Sigma_j\dnt{B_j}}}
          ; \delta^{-1} ; \pi_r + S'
          \\
        & = \rcase{L} ; \pi_r + S'
          \\
      \end{aligned}
    \end{equation}
    Furthermore, we have that, letting 
    $G = \dnt{\lbsubst{\Gamma}{(\kappa_j(x_j) \mapsto t_j,)_j}{\ms{R}'}{\ms{L}, \ms{R}}}$,
    \begin{equation}
      \begin{aligned}
        & \dnt{\haslb
          {\Gamma, \bhyp{x_i}{A_i}}{[\lupg{(\kappa_j(x_j) \mapsto t_j,)_j}](\sigma_i\;x_i)}
          {\ms{L}, \ms{R}}} \\
        & = \lmor{\dnt{\haslb{\Gamma, \bhyp{x_i}{A_i}}{\sigma_i\;x_i}{\ms{L}, \ms{R}'}}}
          ; \dnt{\lbsubst{\Gamma, \bhyp{x_i}{A_i}}{\lupg{(\kappa_j(x_j) \mapsto t_j,)_j}}
            {\ms{L}, \ms{R}'}{\ms{L}, \ms{R}}} \\
        & = \lmor{\dnt{\haslb{\Gamma, \bhyp{x_i}{A_i}}{\sigma_i\;x_i}{\ms{L}, \ms{R}'}}}
          ; \pi_l \otimes \dnt{\ms{L}, \ms{R'}} 
          ; \dnt{\lbsubst{\Gamma}{\lupg{(\kappa_j(x_j) \mapsto t_j,)_j}}
            {\ms{L}, \ms{R}'}{\ms{L}, \ms{R}}} \\
        & = \rlmor{\dnt{\haslb{\Gamma, \bhyp{x_i}{A_i}}{\sigma_i\;x_i}{\ms{L}, \ms{R}'}}}
        ; \dnt{\Gamma} \otimes \alpha^+_{\dnt{\ms{L}} + \dnt{\ms{R}'}}
        ; \delta^{-1}
        ; [\pi_r ; \iota_l ; \alpha^+_{\dnt{\ms{L}, \ms{R}}} , G]
         \\
        & = \rlmor{
          \dnt{\haslb{\Gamma, \bhyp{x_i}{A_i}}{\sigma_i\;x_i}{\ms{L}, \ms{R}'}}
          ; \alpha^+_{\dnt{\ms{L}} + \dnt{\ms{R}'}}
        } ; \delta^{-1} 
        ; [\pi_r ; \iota_l ; \alpha^+_{\dnt{\ms{L}, \ms{R}}} , G] \\
        & = \rcase{
          \dnt{\haslb{\Gamma, \bhyp{x_i}{A_i}}{\sigma_i\;x_i}{\ms{L}, \ms{R}'}}
          ; \alpha^+_{\dnt{\ms{L}} + \dnt{\ms{R}'}}
        }
        ; [\pi_r ; \iota_l ; \alpha^+_{\dnt{\ms{L}, \ms{R}}} , G] \\
        & = \rcase{
          \dnt{\haslb{\Gamma, \bhyp{x_i}{A_i}}{\sigma_i\;x_i}{\ms{R}'}}
          ; \iota_r
        }
        ; [\pi_r ; \iota_l ; \alpha^+_{\dnt{\ms{L}, \ms{R}}} , G] \\
        & = \rlmor{
          \dnt{\haslb{\Gamma, \bhyp{x_i}{A_i}}{\sigma_i\;x_i}{\ms{R}'}}
        } ; G \\
      \end{aligned}
    \end{equation}
    It follows that
    \begin{equation}
      \begin{aligned}
        & \loopmor{\Gamma}
          {(\wbranch{\ell_i}{x_i}{[\lupg{(\kappa_j(x_j) \mapsto t_j,)}](\sigma_i\;x_i)},)_i}
          {\ms{L}}
        \\
        & = \delta^{-1}_\Sigma 
          ; [ \dnt{\haslb{\Gamma, \bhyp{x_i}{A_i}}
                  {[\lupg{(\kappa_j(x_j) \mapsto t_j,)_j}](\sigma_i\;x_i)}{\ms{L}, \ms{R}}}, ]_i
          ; \alpha^+_{\dnt{\ms{L}} + \Sigma_i \dnt{A_i}}   \\
        & = \delta^{-1}_\Sigma 
          ; [ \rlmor{
                \dnt{\haslb{\Gamma, \bhyp{x_i}{A_i}}{\sigma_i\;x_i}{\ms{R}'}}
              } ; G ]_i
          ; \alpha^+_{\dnt{\ms{L}} + \Sigma_i \dnt{A_i}}   \\
        & = \delta^{-1}_\Sigma 
          ; [ \rlmor{
                \dnt{\haslb{\Gamma, \bhyp{x_i}{A_i}}{\sigma_i\;x_i}{\ms{R}'}}
              } ; G ]_i
          ; \alpha^+_{\dnt{\ms{L}} + \Sigma_i \dnt{A_i}}   \\
        & = \delta^{-1}_\Sigma 
          ; [\rlmor{
                \dnt{\haslb{\Gamma, \bhyp{x_i}{A_i}}{\sigma_i\;x_i}{\ms{R}'}}
              }]_i ; G ; \alpha^+_{\dnt{\ms{L}} + \Sigma_i \dnt{A_i}}  \\
        & = \rlmor{\delta^{-1}_\Sigma 
          ; [
                \dnt{\haslb{\Gamma, \bhyp{x_i}{A_i}}{\sigma_i\;x_i}{\ms{R}'}}
            ]_i} ; G ; \alpha^+_{\dnt{\ms{L}} + \Sigma_i \dnt{A_i}}   \\
        & = \rlmor{\delta^{-1}_\Sigma 
          ; [
                \dnt{\haslb{\Gamma, \bhyp{x_i}{A_i}}{\sigma_i\;x_i}{\ms{R}'}}
            ]_i} ; G ; \alpha^+_{\dnt{\ms{L}} + \Sigma_i \dnt{A_i}}   \\
        & = \rlmor{\dnt{\Gamma} \otimes \alpha_{\dnt{\ms{R}}} ; S} ; G 
          ; \alpha^+_{\dnt{\ms{L}} + \Sigma_i \dnt{A_i}} \\
        & = \rlmor{S'} ; \dnt{\Gamma} \otimes \alpha_{\dnt{\ms{R'}}} 
          ; \dnt{\lbsubst{\Gamma}{(\kappa_j(x_j) \mapsto t_j,)_j}{\ms{R}'}{\ms{L}, \ms{R}}}
          ; \alpha^+_{\dnt{\ms{L}} + \Sigma_i \dnt{A_i}} \\
        & = \rlmor{S'} ; \delta^{-1}_{\Sigma}
          ; [\dnt{\haslb{\Gamma, \bhyp{x_i}{A_i}}{t_i}{\ms{R}'}{\ms{L}, \ms{R}}},]_i
          ; \alpha^+_{\dnt{\ms{L}} + \Sigma_i \dnt{A_i}} \\
        & = \rlmor{S'} ; L \\
      \end{aligned}
    \end{equation}
    We therefore have
    \begin{equation}
      \begin{aligned}
        & \dnt{\haslb{\Gamma}
          {\where{[\lupg{\sigma}]r}{(\wbranch{\kappa_i}{x_i}{[\lupg{\sigma}]t_i},)_i}}
          {\ms{L}}} \\
        & = \lmor{\entrymor{\Gamma}{[\lupg{\sigma}]r}{\ms{L}}} ; \delta^{-1} ; [\pi_r,
          \rfix{\loopmor{\Gamma}{(\wbranch{\kappa_i}{x_i}{[\lupg{\sigma}]t_i},)_i}{\ms{L}}}] \\
        & = \lmor{\lmor{\entrymor{\Gamma}{r}{\ms{L}}} ; \delta^{-1} ; \pi_r + S'} ; \delta^{-1}
          ; [
              \pi_r,
              \rfix{\rcase{L} ; \pi_r + S'}
          ] \\
        & = \lmor{\entrymor{\Gamma}{r}{\ms{L}}} ; \delta^{-1} 
          ; \dnt{\Gamma} \otimes \dnt{\ms{L}} + \rlmor{S}
          ; [
            \pi_r,
            \rfix{\rcase{L} ; \pi_r + S'}
          ] \\
        & = \lmor{\entrymor{\Gamma}{r}{\ms{L}}} ; \delta^{-1} 
        ; [
          \pi_r,
          \rlmor{S'}
          ; \rfix{\rcase{L} ; \pi_r + S'}
        ] \\
        & = \lmor{\entrymor{\Gamma}{r}{\ms{L}}} ; \delta^{-1} 
        ; [
          \pi_r,
          \rfix{\rlmor{S'} ; L}
        ] \\
        & = \lmor{\entrymor{\Gamma}{r}{\ms{L}}} ; \delta^{-1} 
        ; [
          \pi_r,
          \rfix{
            \loopmor{\Gamma}{(\wbranch{\ell_i}{x_i}
                    {[\lupg{(\kappa_j(x_j) \mapsto t_j,)}](\sigma_i\;x_i)},)_i}{\ms{L}}}
        ] \\
        & = \dnt{\haslb{\Gamma}
          {\where{r}{[\lupg{(\kappa_j(x_j) \mapsto t_j,)}](\sigma_i\;x_i)},)_i}{\ms{L}}}
      \end{aligned}
    \end{equation}
    as desired.
  \end{itemize}
  All other cases are analogous to those for expressions, and so omitted.
\end{proof}

\begin{lemma}[\ms{where}-fusion]
  The rewrite rules \brle{cfg-fuse$_1$} (Eqn.~\ref{eqn:where-fusion-1}) and \brle{cfg-fuse$_2$}
  (Eqn.~\ref{eqn:where-fusion-2}) are sound.
  \label{lem:where-fusion}
\end{lemma}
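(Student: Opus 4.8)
The plan is to argue semantically rather than by direct syntactic rewriting, as the surrounding text advises. By completeness for regions (Theorem~\ref{thm:complete-reg}), for a pure context two regions are provably equivalent precisely when their denotations coincide in the initial model $\ms{Th}(\cdot,\cdot)$; since that model is itself an \isotopessa{} model, it suffices to check that the two sides of each rule denote equal morphisms in \emph{every} distributive strong Elgot category. The effect annotations on $\Gamma$ play no role in either fusion identity beyond the well-typedness of the premises, so we may take $\Gamma$ pure without loss of generality, and the general context then follows because the semantics is compositional and the surrounding context only pre- and post-composes with pure structural maps. This reduces each rule to a single categorical identity about the iteration operator.

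First I would unfold the denotation of each side using the \ms{where}-block semantics of Figure~\ref{fig:ssa-reg-sem}, i.e. in terms of $\entrymor{\Gamma}{\cdot}{\ms{L}}$, $\loopmor{\Gamma}{\cdot}{\ms{L}}$, and $\rfix{\cdot}$. Writing $R = \dnt{\Gamma}$, the flat right-hand \ms{where} of \brle{cfg-fuse} denotes a single $\rfix{\cdot}$ over the combined loop object $\Sigma_i\dnt{B_i} + \Sigma_j\dnt{A_j}$, whereas the nested left-hand \ms{where} denotes an \emph{outer} $\rfix{\cdot}$ over $\Sigma_j\dnt{A_j}$ whose loop body itself contains an \emph{inner} $\rfix{\cdot}$ over $\Sigma_i\dnt{B_i}$ coming from the inner \ms{where}. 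The typing hypothesis that each $t_j$ targets only the $\ell$-labels and never the $\kappa$-labels is the crucial ingredient: denotationally it forces the $\Sigma_j\dnt{A_j}$-feedback of the combined loop morphism to factor through injections independent of the $\Sigma_i\dnt{B_i}$-component, yielding a \emph{triangular} dependency between the two loops.

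Second, I would recognise this as an instance of the Bek\v{i}c (pairing) identity for the iteration operator: a fixpoint over a coproduct $X + Y$ whose $Y$-feedback is independent of $X$ may be computed as the $Y$-fixpoint of a body that internally takes the $X$-fixpoint. Because the paper's operator is a Conway operator, satisfying the fixpoint, naturality, dinaturality, and codiagonal axioms (dinaturality being automatic by the cited Proposition of \citet{goncharov-18-guarded-traced}), the Bek\v{i}c identity is a standard consequence of those four axioms together with the coherence of $\alpha^+$ and $\delta$. Rather than re-derive it in $\mc{C}$ and then thread the read-only environment $R$ by hand, I would apply it inside the co-Kleisli category $\envcom{\mc{C}}{R}$: by the lemmas of Appendix~\ref{apx:environment} this is again a distributive strong Elgot category whose fixpoint operator is exactly $\rfix{\cdot}$, so the abstract identity transports directly and the $\rcase{\cdot}$/$\rfix{\cdot}$ calculus developed there (its compatibility with $\delta^{-1}$ and the coproduct injections) discharges most of the bookkeeping cleanly.

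I expect the main obstacle to be precisely that bookkeeping: connecting the unfolded nested-$\rfix{\cdot}$ expression to the canonical shape to which Bek\v{i}c applies, matching the re-associators $\alpha^+$ and inverse distributors $\delta^{-1}$ appearing on each side, and verifying that the triangularity condition really is the denotational shadow of ``$t_j$ does not mention $\kappa$'' (made visible as a factorization through a projection/injection). Once \brle{cfg-fuse} is established, I would obtain \brle{cfg-fuse$_2$} either as the special case of a single outer label combined with the congruence rules and Equation~\ref{eqn:pull-where}, or by rerunning the same Bek\v{i}c argument specialised to one $\kappa$-label whose body carries the inner \ms{where}, choosing whichever requires the least re-indexing.
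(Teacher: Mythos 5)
Your core argument is essentially the paper's own proof: the paper also justifies both rules denotationally, unfolds the \ms{where}-block semantics into $\rfix{\cdot}$ expressions, uses the typing constraint that the $t_i$ never target the $\kappa$-labels (recorded there as the fact that $\dnt{\ms{R} \leq \ms{R}, \ms{K}}$ is, up to coherence, the left injection $\iota_l$), and proves the resulting triangular fixpoint-pairing identity (Equation~\ref{eqn:dgdt}) inside the co-Kleisli category of the environment comonad. The only difference is presentational: where you would invoke the Beki\v{c} identity as an abstract consequence of the Conway axioms, the paper proves that instance directly by the string-diagram computation of Figure~\ref{fig:string-diagram-fusion} (isotopy, associativity of the codiagonal, and duplication of the traced subdiagram via the coproduct's universal property); and \brle{cfg-fuse$_2$} is obtained, as in your proposal, by repeated application of \brle{cfg-fuse$_1$}.

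One caution about your opening reduction. The detour through Theorem~\ref{thm:complete-reg} and the ``take $\Gamma$ pure without loss of generality'' step is both unnecessary and, as stated, not sound. The lemma claims \emph{soundness}, i.e.\ that the two sides have equal denotations in every \isotopessa{} model for arbitrary $\Gamma$ --- which is exactly what your core argument (and the paper's proof) establishes, with no purity hypothesis and no appeal to completeness. If you instead wanted syntactic derivability via completeness, the purity restriction could not be discharged the way you suggest: by \brle{wk-cons}, the pure-annotated context weakens the impure-annotated one ($\Gamma_{\bot} \leq \Gamma$), so the rewriting-weakening lemma transfers equivalences \emph{from} impure-annotated contexts \emph{to} pure ones, not conversely; equivalence under a pure context is the strictly weaker statement, and ``compositionality'' does not bridge that gap.
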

\begin{proof}
  We begin with \brle{cfg-fuse$_1$}. Let %
  $\ms{R} = (\ell_j(A_j),)_j$, $\ms{K} = (\kappa_k(B_k),)_k$, %
  $S = (\wbranch{\kappa_i}{y_i}{s_i},)_i$,
  $T = (\wbranch{\ell_i}{x_i}{t_i},)_j$, and %
  $D_S = \loopmor{\Gamma}{S}{\ms{L}, \ms{R}}$, %
  $D_T = \loopmor{\Gamma}{T}{\ms{L}}$,
  $D_G = \loopmor{\Gamma}{S, T}{\ms{L}}$ %
  We have that
  \begin{equation}
    \begin{aligned}
      & \dnt{\haslb{\Gamma}{\where{(\where{r}{S})}{T}}{\ms{L}}} \\
      & = \lmor{
        \lmor{\entrymor{\Gamma}{r}{(\ms{L}, \ms{R})}} ; \delta^{-1} ;
        [\pi_r, \rfix{D_S}] ;
        \alpha^+_{\dnt{\ms{L}} + \Sigma_j\dnt{A_j}}
      } ; \delta^{-1} ; [\pi_r, \rfix{D_T}] \\
      & = \lmor{\entrymor{\Gamma}{r}{(\ms{L}, \ms{R})}}
        ; \rlmor{\delta^{-1} ; [\pi_r, \rfix{D_S}] ; \alpha^+_{\dnt{\ms{L}} + \Sigma_j\dnt{A_j}}}
        ; \delta^{-1} ; [\pi_r, \rfix{D_T}] \\
      & = \lmor{\entrymor{\Gamma}{r}{(\ms{L}, \ms{R})}}
        ; \delta^{-1}
        ; [
          \rlmor{\pi_r ; \alpha^+_{\dnt{\ms{L}} + \Sigma_j\dnt{A_j}}},
          \rlmor{\rfix{D_S} ; \alpha^+_{\dnt{\ms{L}} + \Sigma_j\dnt{A_j}}}
        ]
        ; \delta^{-1} ; [\pi_r, \rfix{D_T}] \\
      & = \lmor{\entrymor{\Gamma}{r}{(\ms{L}, \ms{R})}}
        ; \envcop{\dnt{\Gamma}}
            { \\ & \qquad
              \rseq{}
                {\pi_r ; \alpha^+_{\dnt{\ms{L}} + \Sigma_j\dnt{A_j}}}
                {\envcop{\dnt{\Gamma}}{\pi_r}{\rfix{D_T}}}
            }
            { \\ & \qquad
              \rseq{}
                {\rfix{D_S} ; \alpha^+_{\dnt{\ms{L}} + \Sigma_j\dnt{A_j}}}
                {\envcop{\dnt{\Gamma}}{\pi_r}{\rfix{D_T}}}
            } \\
      & = \lmor{\entrymor{\Gamma}{r}{(\ms{L}, \ms{R})}}
        ; \envcop{\dnt{\Gamma}}
            { \\ & \qquad
              \rseq{}
                {\alpha^{R+}_{\dnt{\ms{L}} + \Sigma_j\dnt{A_j}}}
                {\envcop{\dnt{\Gamma}}{\pi_r}{\rfix{D_T}}}
            }
            { \\ & \qquad
              \rseq{}
                {\rfix{D_S} ; \alpha^+_{\dnt{\ms{L}} + \Sigma_j\dnt{A_j}}}
                {\envcop{\dnt{\Gamma}}{\pi_r}{\rfix{D_T}}}
            } \\
      & = \lmor{\haslb{\Gamma}{r}{\ms{L}, \ms{R}, \ms{K}}}
        ; \rseq{}
          {\alpha^{\ms{R}+}_{\dnt{\ms{L}} + (\dnt{R} + \Sigma_j\dnt{A_j})}}{
            \envcop{\dnt{\Gamma}}{\pi_r}{ \\ & \qquad
              \envcop{\dnt{\Gamma}}{
                \rfix{D_T}
              }{
                \rseq{}
                  {\rfix{D_S} ; \alpha^+_{\dnt{\ms{L}} + \Sigma_j\dnt{A_j}}}
                  {\envcop{\dnt{\Gamma}}{\pi_r}{\rfix{D_T}}}
              }
            }
          } \\
      & = \lmor{
          \haslb{\Gamma}{r}{\ms{L}, \ms{R}, \ms{K}} 
          ; \alpha^{\ms{R}+}_{\dnt{\ms{L}} + (\dnt{R} + \Sigma_j\dnt{A_j})}}
        ; \delta^{-1}
        ; \\ & \qquad [\pi_r, 
            \envcop{\dnt{\Gamma}}{
              \rfix{D_T}
            }{
              \rseq{}
                {\rfix{D_S} ; \alpha^+_{\dnt{\ms{L}} + \Sigma_j\dnt{A_j}}}
                {\envcop{\dnt{\Gamma}}{\pi_r}{\rfix{D_T}}}
            }
        ] \\
        & = \lmor{\entrymor{\Gamma}{r}{\ms{L}}}
        ; \delta^{-1}
        ; [\pi_r, 
            \envcop{\dnt{\Gamma}}{
              \rfix{D_T}
            }{
              \rseq{}
                {\rfix{D_S} ; \alpha^+_{\dnt{\ms{L}} + \Sigma_j\dnt{A_j}}}
                {\envcop{\dnt{\Gamma}}{\pi_r}{\rfix{D_T}}}
            }
        ] \\
    \end{aligned}
  \end{equation}
  For this to be equal to $\dnt{\haslb{\Gamma}{\where{r}{S, T}}{\ms{L}}}$, it therefore suffices to
  show that
  \begin{equation}
    \begin{aligned}
      \rfix{D_G} &= \envcop{\dnt{\Gamma}}{
        \rfix{D_T}
      }{
        \rseq{}
          {\rfix{D_S} ; \alpha^+_{\dnt{\ms{L}} + \Sigma_j\dnt{A_j}}}
          {\envcop{\dnt{\Gamma}}{\pi_r}{\rfix{D_T}}}
      }
    \end{aligned}
    \label{eqn:dgdt}
  \end{equation}
  We note that, by re-association and weakening, we have that
  \begin{equation}
    \begin{aligned}
      D_G 
      & = \dnt{\Gamma} \otimes \alpha^+_{\dnt{\ms{R}} + \dnt{\ms{K}}} ; \delta^{-1} ; [
        \delta^{-1}_\Sigma ; [
          \dnt{\haslb{\Gamma, \bhyp{x_i}{A_i}}{t_i}{\ms{L}, \ms{R}, \ms{K}}}
          ; \alpha^+_{\dnt{\ms{L}} + \Sigma_j\dnt{C_j}},
        ]_i,
        D_S
      ] \\
      & = \dnt{\Gamma} \otimes \alpha^+_{\dnt{\ms{R}} + \dnt{\ms{K}}} ; \delta^{-1} ; [
        D_T ; \dnt{\ms{L}} 
          + (\alpha^+_{\ms{R}} ; \dnt{\ms{R} \leq \ms{R, K}} ; \alpha^+_{\Sigma_i\dnt{C_i}}),
        D_S
      ] \\
      & = \dnt{\Gamma} \otimes \alpha^+_{\dnt{\ms{R}} + \dnt{\ms{K}}} ; \delta^{-1} ; [
        D_T ; \dnt{\ms{L}} 
          + (\iota_{l, \Sigma_i\dnt{A_i} + \Sigma_i\dnt{B_i}} ; \alpha^+_{\Sigma_i\dnt{C_i}}),
        D_S
      ] \\
      & = \rseq{}{\alpha^{+\dnt{\Gamma}}_{\ms{R} + \ms{K}}}{\envcop{\dnt{\Gamma}}
        {\rseq{}{D_T}{\dnt{\ms{L}} +_R (\iota_l^{\dnt{\Gamma}} ; \alpha^+_{\Sigma_i\dnt{C_i}})}}
        {}}
    \end{aligned}
    \label{eqn:dg-rhs}
  \end{equation}
  where $C_{1..k} = A_{1..k}$ and $C_{k+1..n} = B_{1..n-k}$. We note in particular that $\dnt{\ms{R}
  \leq \ms{R, K}}$ is up to isomorphism the left injection $\iota_l : \Sigma_i\dnt{A_i} \to
  \Sigma_i\dnt{A_i} + \Sigma_i\dnt{B_i}$. We can now derive Equation~\ref{eqn:dgdt}, and hence the
  soundness of \brle{cfg-fuse$_1$}, via the string-diagrams in
  Figure~\ref{fig:string-diagram-fusion}, which are drawn in the co-Kleisli category inducted by
  $\dnt{\Gamma}$. \brle{cfg-fuse$_2$} then follows by repeated application of \brle{cfg-fuse$_1$}, as
  desired.
\end{proof}

\begin{figure} 
  \begin{subfigure}[t]{.45\textwidth}
    \centering
    \begin{tikzpicture}
      \node[] (R) at (0, 0) {$\dnt{\ms{R}}$};
      \node[] (K) at (1, 0) {$\dnt{\ms{K}}$};
      \node[dot] (dR) at (0, -0.75) {}; 
      \node[dot] (dK) at (1, -1.25) {}; 
      \node[box=1/0/2/0] (DT) at (0, -2.5) {\quad D_T \quad};
      \node[box=1/0/3/0] (DS) at (2, -2.5) {\quad D_S \quad};
      \node[dot] (dL) at (0.25, -3.5) {};
      \node[] (L) at (1, -6) {$\dnt{\ms{L}}$};
      \node[dot] (dR2) at (1.25, -3.5) {};
      \coordinate[] (cupS) at (3, -3.5) {};
      \coordinate[] (capS) at (3, -1) {};
      \coordinate[] (cupT) at (3, -4) {};
      \coordinate[] (capT) at (3, -0.5) {};
      \wires{
        R = { south = dR.north },
        K = { south = dK.north },
        dR = { south = DT.north.1 },
        dK = { south = DS.north.1 },
        DT = { south.1 = dL.west, south.2 = dR2.west },
        DS = { south.1 = dL.east, south.2 = dR2.east, south.3 = cupS.west },
        dL = { south = L.north },
        dR2 = { south = cupT.west },
        cupS = { east = capS.east },
        capS = { west = dK.east },
        cupT = { east = capT.east },
        capT = { west = dR.east },
      }{}
    \end{tikzpicture}
    \caption{$\rfix{D_G}$, with $D_G$ drawn as per the right of Eqn.~\ref{eqn:dg-rhs}}
    \label{fig:rfix-dg}
  \end{subfigure}
  \quad
  \begin{subfigure}[t]{.45\textwidth}
    \centering
    \begin{tikzpicture}
      \node[] (R) at (0, 0) {$\dnt{\ms{R}}$};
      \node[] (K) at (1, 0) {$\dnt{\ms{K}}$};
      \node[dot] (dR) at (0.75, -2.5) {}; 
      \coordinate[] (cR) at (0.25, -3) {};
      \node[dot] (dR2) at (0.25, -3.25) {};
      \node[dot] (dK) at (1.5, -1) {}; 
      \node[box=1/0/2/0] (DT) at (0.25, -3.75) {\quad D_T \quad};
      \node[box=1/0/3/0] (DS) at (1.5, -1.5) {\quad D_S \quad};
      \coordinate[] (cL) at (2, -3.5) {};
      \node[dot] (dL) at (1, -5.25) {};
      \node[] (L) at (1, -6) {$\dnt{\ms{L}}$};
      \coordinate[] (cupS) at (2.25, -2.05) {};
      \coordinate[] (capS) at (2.25, -0.95) {};
      \coordinate[] (cupT) at (1, -4.3) {};
      \coordinate[] (capT) at (1, -3.2) {};
      \wires{
        R = { south = dR.west },
        K = { south = dK.north },
        dK = { south = DS.north.1 },
        DS = { south.1 = cL.north, south.2 = dR.east, south.3 = cupS.west },
        dR = { south = cR.north },
        cR = { south = dR2.north },
        dR2 = { south = DT.north.1 },
        DT = { south.1 = dL.west, south.2 = cupT.west },
        cL = { south = dL.east },
        dL = { south = L.north },
        cupS = { east = capS.east },
        capS = { west = dK.east },
        cupT = { east = capT.east },
        capT = { west = dR2.east },
      }{} 
      \coordinate (dt0) at (-0.8, -3) {};
      \coordinate (dt1) at (1.5, -3) {};
      \coordinate (dt2) at (1.5, -4.5) {};
      \coordinate (dt3) at (-0.8, -4.5) {};
      \draw[red] (dt0) -- (dt1) -- (dt2) -- (dt3) -- (dt0);
    \end{tikzpicture}
    \caption{
      Equivalent to \ref{fig:rfix-dg} by isotopy and associativity of the codiagonal. 
      We highlight $\rfix{D_T}$, which is used in the next step.
    }
    \label{fig:rfix-isotopy}
  \end{subfigure}
  \begin{subfigure}[t]{\textwidth}
    \centering
    \begin{tikzpicture}
      \node[] (R) at (0, 0) {$\dnt{\ms{R}}$};
      \node[] (K) at (1, 0) {$\dnt{\ms{K}}$};
      \node[dot] (dT1) at (0.25, -3.25) {};
      \node[dot] (dK) at (1.5, -1) {}; 
      \node[box=1/0/3/0] (DS) at (1.5, -1.5) {\quad D_S \quad};
      \node[box=1/0/2/0] (DT1) at (0.25, -3.75) {\quad D_T \quad};
      \node[box=1/0/2/0] (DT2) at (2.5, -3.25) {\quad D_T \quad};
      \coordinate[] (cL) at (1.5, -3.5) {};
      \node[dot] (dL1) at (0.5, -4.75) {};
      \node[dot] (dL2) at (1, -5.25) {};
      \node[] (L) at (1, -6) {$\dnt{\ms{L}}$};
      \coordinate[] (cupS) at (2.25, -2.05) {};
      \coordinate[] (capS) at (2.25, -0.95) {};
      \coordinate[] (cupT1) at (1, -4.3) {};
      \coordinate[] (capT1) at (1, -3.2) {};
      \coordinate[] (cupT2) at (3.2, -3.8) {};
      \coordinate[] (capT2) at (3.2, -2.7) {};
      \node[dot] (dT2) at (2.5, -2.75) {};
      \wires{
        R = { south = dT1.north },
        K = { south = dK.north },
        dK = { south = DS.north.1 },
        DS = { south.1 = cL.north, south.2 = dT2.north, south.3 = cupS.west },
        dT1 = { south = DT1.north },
        DT1 = { south.1 = dL1.west, south.2 = cupT1.west },
        cL = { south = dL2.east },
        dL1 = { south = dL2.west },
        dL2 = { south = L.north },
        cupS = { east = capS.east },
        capS = { west = dK.east },
        cupT1 = { east = capT1.east },
        capT1 = { west = dR2.east },
        dT2 = { south = DT2.north },
        DT2 = { south.1 = dL1.east, south.2 = cupT2.west },
        cupT2 = { east = capT2.east },
        capT2 = { west = dT2.east },
      }{} 
      \coordinate (dt10) at (-0.8, -3) {};
      \coordinate (dt11) at (1.5, -3) {};
      \coordinate (dt12) at (1.5, -4.5) {};
      \coordinate (dt13) at (-0.8, -4.5) {};
      \coordinate (dt20) at (1.5, -2.5) {};
      \coordinate (dt21) at (3.8, -2.5) {};
      \coordinate (dt22) at (3.8, -4) {};
      \coordinate (dt23) at (1.5, -4) {};
      \draw[red] (dt10) -- (dt11) -- (dt12) -- (dt13) -- (dt10);
      \draw[red] (dt20) -- (dt21) -- (dt22) -- (dt23) -- (dt20);
    \end{tikzpicture}
    \caption{
      $\envcop{\dnt{\Gamma}}{
        \rfix{D_T}
      }{
        \rseq{}
          {\rfix{D_S} ; \alpha^+_{\dnt{\ms{L}} + \Sigma_j\dnt{A_j}}}
          {\envcop{\dnt{\Gamma}}{\pi_r}{\rfix{D_T}}}
      }$. \\
      Equivalent to \ref{fig:rfix-isotopy} by duplication of $\rfix{D_T}$.
    }
    \label{fig:rfix-lhs}
  \end{subfigure}%
  \caption{
    String diagrams validating the soundness of \brle{cfg-fuse$_1$}
    (Eqn.~\ref{eqn:where-fusion-1}), drawn in the co-Kleisli category induced by $\dnt{\Gamma}$
  } 
  \Description{}
  \label{fig:string-diagram-fusion}
\end{figure}

\subsection{Completeness}

\completenessexpr*

\label{proof:complete-expr}

\begin{proof}
  We begin by showing $\ms{Th}^\otimes(\Gamma)$ is an \isotopessa{} expression model by validating
  all the equations for a distributive Freyd category. These are formalized as follows:
  \begin{itemize}
    \item $\ms{Th}^\otimes(\Gamma)$ is a category: formalized in
    \texttt{Rewrite/Term/Compose/Seq.lean}
    \item $\ms{Th}^\otimes(\Gamma)$ is a Freyd category: formalized in
    \texttt{Rewrite/Term/Compose/Product.lean}
    \item $\ms{Th}^\otimes(\Gamma)$ has coproducts: formalized in
    \texttt{Rewrite/Term/Compose/Sum.lean}
    \item $\ms{Th}^\otimes(\Gamma)$ is distributive: formalized in
    \texttt{Rewrite/Term/Compose/Distrib.lean}
  \end{itemize}
  We then prove that the packing of each type constructor is equivalent to its denotational
  semantics in $\ms{Th}^\otimes(\Gamma)$ in \texttt{Rewrite/Term/Compose/Completeness.lean}.
  Finally, we show that packing and unpacking are mutual inverses for $\Gamma$ pure in
  \texttt{Term.Eqv.packed_unpacked} and \texttt{Term.Eqv.unpacked_packed} in
  \texttt{Rewrite/Term/Structural/Product.lean}, completing the proof of initiality.
\end{proof}

\completenessregions*

\label{proof:complete-reg}

\begin{proof}
  We begin by showing $\ms{Th}(\Gamma, \ms{L})$ is an \isotopessa{} region model by validating all
  the equations for a distributive Elgot category. These are formalized as follows:
  \begin{itemize}
    \item $\ms{Th}(\Gamma, \ms{L})$ is a category: formalized in
    \texttt{Rewrite/Region/Compose/Seq.lean}
    \item $\ms{Th}(\Gamma, \ms{L})$ is a Freyd category: formalized in
    \texttt{Rewrite/Region/Compose/Product.lean}
    \item $\ms{Th}(\Gamma, \ms{L})$ has coproducts: formalized in
    \texttt{Rewrite/Region/Compose/Sum.lean}
    \item $\ms{Th}(\Gamma, \ms{L})$ is distributive: formalized in
    \texttt{Rewrite/Region/Compose/Distrib.lean}
    \item $\ms{Th}(\Gamma, \ms{L})$ is a strong Elgot category: formalized in
    \texttt{Rewrite/Region/Compose/Elgot.lean}
  \end{itemize}
  We then prove that the packing of each type constructor is equivalent to its denotational
  semantics in $\ms{Th}(\Gamma, \ms{L})$ in \texttt{Rewrite/Region/Compose/Completeness.lean}.
  Finally, we show that packing and unpacking are mutual inverses for $\Gamma$ pure in
  \texttt{Region.Eqv.packed_unpacked} and \texttt{Region.Eqv.unpacked_packed} in
  \texttt{Rewrite/Region/Structural.lean}, completing the proof of initiality.
\end{proof}

\end{document}